\newif\ifabstract
\newif\iffull
\newcommand{\myparskip}{3pt}
\newenvironment{proofof}[1]{\noindent{\bf Proof of #1.}}%
        {\hspace*{\fill}$\Box$\par\vspace{4mm}}
\newcommand{\badevent}{\mathcal{\event}_{\mathsf{bad}}}
\newcommand{\yi}{{\sc Yes-Instance}\xspace}
\renewcommand{\ni}{{\sc No-Instance}\xspace}
\newcommand{\YI}{{\sc Yes-Instance}\xspace}
\newcommand{\NI}{{\sc No-Instance}\xspace}
\newcommand{\yis}{{\sc Yes-Instances}\xspace}
\newcommand{\nis}{{\sc No-Instances}\xspace}
\newcommand{\NDPwC}{{\sf NDPwC}\xspace}
\newcommand{\EDPwC}{{\sf EDPwC}\xspace}
\newcommand{\NDP}{{\sf NDP}\xspace}
\newcommand{\EDP}{{\sf EDP}\xspace}
\newcommand{\EDPwall}{{\sf EDP-Wall}\xspace}
\newcommand{\NDPgrid}{{\sf NDP-Grid}\xspace}
\newcommand{\NDPwall}{{\sf NDP-Wall}\xspace}
\newcommand{\NDPplanar}{{\sf NDP-Planar}\xspace}
\newcommand{\WGP}{{\sf (r,h)-GP}\xspace}
\newcommand{\WGPwB}{{\sf (r,h)-GPwB}\xspace}
\newcommand{\DkS}{{\sf DkS}\xspace}
\newcommand{\cyi}{c_{\mbox{\textup{\tiny{YI}}}}}
\renewcommand{\H}{{\mathbf{H}}}
\newcommand{\block}{K}
\newcommand{\Y}{\Upsilon}
\newcommand{\ceil}[1]{\ensuremath{\left\lceil#1\right\rceil}}
\newcommand{\floor}[1]{\ensuremath{\left\lfloor#1\right\rfloor}}
\newcommand{\event}{{\cal{E}}}
\newcommand{\row}{\operatorname{row}}
\newcommand{\col}{\operatorname{col}}
\newcommand{\NP}{\mbox{\sf NP}}
\newcommand{\APX}{\mbox{\sf APX}}
\newcommand{\ZPTIME}{\mbox{\sf ZPTIME}}
\newcommand{\RTIME}{\mbox{\sf RTIME}}
\newcommand{\DTIME}{\mbox{\sf DTIME}}
\newcommand{\opt}{\mathsf{OPT}}
\newcommand{\set}[1]{\left\{ #1 \right\}}
\newcommand{\defAsSet}[2]{\set{ #1 \> | \> #2}}
\newcommand{\tset}{{\mathcal T}}
\newcommand{\iset}{{\mathcal{I}}}
\newcommand{\hiset}{\hat{\mathcal{I}}}
\newcommand{\hG}{\hat G}
\newcommand{\pset}{{\mathcal{P}}}
\newcommand{\qset}{{\mathcal{Q}}}
\newcommand{\lset}{{\mathcal{L}}}
\newcommand{\dset}{{\mathcal{D}}}
\newcommand{\bset}{{\mathcal{B}}}
\newcommand{\aset}{{\mathcal{A}}}
\newcommand{\cset}{{\mathcal{C}}}
\newcommand{\fset}{{\mathcal{F}}}
\newcommand{\mset}{{\mathcal M}}
\newcommand{\tG}{\tilde G}
\newcommand{\gset}{{\mathcal G}}
\newcommand{\wset}{{\mathcal{W}}}
\newcommand{\nset}{{\mathcal N}}
\newcommand{\uset}{{\mathcal U}}
\newcommand{\yset}{{\mathcal{Y}}}
\newcommand{\rset}{{\mathcal{R}}}
\newcommand{\hset}{{\mathcal{H}}}
\newcommand{\sset}{{\mathcal{S}}}
\newcommand{\be}{\begin{enumerate}}
\newcommand{\ee}{\end{enumerate}}
\newcommand{\bd}{\begin{description}}
\newcommand{\ed}{\end{description}}
\newcommand{\bi}{\begin{itemize}}
\newcommand{\ei}{\end{itemize}}
\newtheorem{theorem}{Theorem}[section]
\newtheorem{lemma}[theorem]{Lemma}
\newtheorem{observation}[theorem]{Observation}
\newtheorem{corollary}[theorem]{Corollary}
\newtheorem{claim}[theorem]{Claim}
\newtheorem{definition}{Definition}
\newenvironment{proof}{\par \smallskip{\bf Proof:}}{\hfill\stopproof}
\def\stopproof{\square}
\def\square{\vbox{\hrule height.2pt\hbox{\vrule width.2pt height5pt \kern5pt
\vrule width.2pt} \hrule height.2pt}}
\newenvironment{prog}[1]{
\begin{minipage}{5.8 in}
\begin{center}
{\sc #1}
\end{center}
}
{
\end{minipage}}
\newcommand{\program}[2]{\fbox{\vspace{2mm}\begin{prog}{#1} #2 \end{prog}\vspace{2mm}}}
\renewcommand{\phi}{\varphi}
\newcommand{\eps}{\epsilon}
\newcommand{\half}{\ensuremath{\frac{1}{2}}}
\newcommand{\poly}{\operatorname{poly}}
\newcommand{\expect}[2][]{\text{\bf E}_{#1}\left [#2\right]}
\newcommand{\prob}[2][]{\text{\bf Pr}_{#1}\left [#2\right]}
\renewcommand{\xi}{\ell}
\newcommand{\cro}{\operatorname{cr}}
\newcommand{\constantForSizeOfBlocks}{1024}
\newcommand{\twiceConstantForSizeOfBlocks}{2048}
\begin{document}

\title{Almost Polynomial Hardness of Node-Disjoint Paths in Grids}
\author{Julia Chuzhoy\thanks{Toyota Technological Institute at Chicago. Email: {\tt cjulia@ttic.edu}. Supported in part by NSF grants CCF-1318242 and CCF-1616584.}\and David H. K. Kim\thanks{Computer Science Department, University of Chicago. Email: {\tt hongk@cs.uchicago.edu}. Supported in part by NSF grant CCF-1318242 and CCF-1616584.} \and Rachit Nimavat\thanks{Toyota Technological Institute at Chicago. Email: {\tt nimavat@ttic.edu}. Supported in part by NSF grant CCF-1318242.}}

\begin{titlepage}
\maketitle

\thispagestyle{empty}

\begin{abstract}
In the classical Node-Disjoint Paths (\NDP) problem,  we are given an $n$-vertex graph $G=(V,E)$, and a collection $\mset=\set{(s_1,t_1),\ldots,(s_k,t_k)}$ of pairs of its vertices, called source-destination, or demand pairs. The goal is to route as many of the demand pairs as possible, where to route a pair we need to select a path connecting it, so that all selected paths are disjoint in their vertices.
The best current algorithm for \NDP achieves an $O(\sqrt{n})$-approximation, while, until recently, the best negative result was a factor $\Omega(\log^{1/2-\eps}n)$-hardness of approximation, for any constant $\eps$, unless $\NP \subseteq \ZPTIME(n^{\poly \log n})$. In a recent work, the authors have shown an improved $2^{\Omega(\sqrt{\log n})}$-hardness of approximation for \NDP, unless $\NP\subseteq \DTIME(n^{O(\log n)})$, even if the underlying graph is a {\bf subgraph} of a grid graph, and all source vertices lie on the boundary of the grid. Unfortunately, this result does not extend to grid graphs.

The approximability of the \NDP problem on grid graphs has remained a tantalizing open question, with the best current upper bound of $\tilde{O}(n^{1/4})$, and the best current lower bound of \APX-hardness. In a recent work, the authors showed a $2^{\tilde{O}(\sqrt{\log n})}$-approximation algorithm for \NDP in grid graphs, if all source vertices lie on the boundary of the grid -- a result that can be seen as suggesting that a sub-polynomial approximation may be achievable for \NDP in grids. 
In this paper we show that this is unlikely to be the case, and come close to resolving the approximability of \NDP in general, and \NDP in grids in particular. Our main result is that \NDP is $2^{\Omega(\log^{1-\eps} n)}$-hard to approximate for any constant $\eps$, assuming that $\NP\nsubseteq\RTIME(n^{\poly\log n})$, and that it is $n^{\Omega (1/(\log \log n)^2)}$-hard to approximate, assuming that for some constant $\delta>0$, $\NP \not \subseteq \RTIME(2^{n^{\delta}})$. These results hold even for grid graphs and wall graphs, and extend to the closely related Edge-Disjoint Paths problem, even in wall graphs.

Our hardness proof performs a reduction from the 3COL(5) problem to \NDP, using a new graph partitioning problem as a proxy. Unlike the more standard approach of employing Karp reductions to prove hardness of approximation, our proof  is a Cook-type reduction, where, given an input instance of 3COL(5), we produce a large number of instances of \NDP, and apply an approximation algorithm for \NDP to each of them. The construction of each new instance of \NDP crucially depends on  the solutions to the previous instances that were found by the approximation algorithm.
\end{abstract}

\end{titlepage}

\label{--------------------------------------sec: intro---------------------------------------}
\section{Introduction}\label{sec: intro}

We study the Node-Disjoint Paths (\NDP) problem: given an undirected $n$-vertex graph $G$ and a collection $\mset=\set{(s_1,t_1),\ldots,(s_k,t_k)}$ of pairs of  its vertices, called \emph{source-destination}, or \emph{demand} pairs,  we are interested in routing the demand pairs, where in order to route a pair $(s_i,t_i)$, we need to select a path connecting $s_i$ to $t_i$. The goal is to route as many of the pairs as possible, subject to the constraint that the selected routing paths are mutually disjoint in their vertices and their edges. We let $S=\set{s_1,\ldots,s_k}$ be the set of the source vertices, $T=\set{t_1,\ldots,t_k}$ the set of the destination vertices, and we refer to the vertices of $S\cup T$ as \emph{terminals}. We denote by \NDPplanar the special case of the problem where the graph $G$ is planar; by \NDPgrid the special case where $G$ is a square grid; and by \NDPwall the special case where $G$ is a wall (see Figure~\ref{fig: wall} for an illustration of a wall and Section~\ref{sec: prelims} for its formal definition).

\NDP is a fundamental problem in the area of graph routing, that has been studied extensively. 
Robertson and Seymour~\cite{RobertsonS,flat-wall-RS} showed, as part of their famous Graph Minors Series, an efficient algorithm for solving the problem if the number $k$ of the demand pairs is bounded by a constant. However, when $k$ is a part of the input, the problem becomes \NP-hard~\cite{Karp-NDP-hardness,EDP-hardness}, and  it remains \NP-hard even for planar graphs~\cite{npc_planar}, and for grid graphs~\cite{npc_grid}. 
The best current upper bound on the approximability of \NDP is $O(\sqrt{n})$, obtained by a simple greedy algorithm~\cite{KolliopoulosS}. Until recently, the best known lower bound was an $\Omega(\log^{1/2-\eps}n)$-hardness of approximation for any constant $\eps$, unless $\NP\subseteq\ZPTIME(n^{\poly\log n})$~\cite{AZ-undir-EDP,ACGKTZ}, and \APX-hardness for the special cases of  \NDPplanar and \NDPgrid~\cite{NDP-grids}. In a recent paper~\cite{NDPhardness2017}, the authors have shown an improved $2^{\Omega(\sqrt{\log n})}$-hardness of approximation for \NDP, assuming that $\NP\nsubseteq \DTIME(n^{O(\log n)})$. This result holds even for planar graphs with maximum vertex degree $3$, where all source vertices lie on the boundary of a single face, and for {\bf sub-graphs} of grid graphs, with all source vertices lying on the boundary of the grid. We note that for general planar graphs, the $O(\sqrt{n})$-approximation algorithm of~\cite{KolliopoulosS} was recently slightly improved to an $\tilde O(n^{9/19})$-approximation~\cite{NDP-planar}.

The approximability status of \NDPgrid --- the special case of \NDP where the underlying graph is a square grid --- remained a tantalizing open question. The study of this problem dates back to the 70's, and was initially motivated by applications to VLSI design. As grid graphs are extremely well-structured, one would expect that good approximation algorithms can be designed for them, or that, at the very least, they should be easy to understand. However, establishing the approximability of \NDPgrid has been elusive so far.
The simple greedy  $O(\sqrt{n})$-approximation algorithm of~\cite{KolliopoulosS} was only recently improved to a $\tilde O(n^{1/4})$-approximation for \NDPgrid~\cite{NDP-grids}, while on the negative side only \APX-hardness is known. In a very recent paper~\cite{NDPSourcesOnTop}, the authors designed a $2^{O(\sqrt{\log n}\cdot\log\log n})$-approximation algorithm for a special case of \NDPgrid, where the source vertices appear on the grid boundary. This result can be seen as complementing the $2^{\Omega(\sqrt{\log n})}$-hardness of approximation of \NDP on sub-graphs of grids with all sources lying on the grid boundary~\cite{NDPhardness2017}\footnote{Note that the results are not strictly complementary: the algorithm only works on grid graphs, while the hardness result is only valid for sub-graphs of grids.}. Furthermore, this result can be seen as suggesting that sub-polynomial approximation algorithms may be achievable for \NDPgrid.

In this paper we show that this is unlikely to be the case, and come close to resolving the approximability status of \NDPgrid, and of \NDP in general, by showing that \NDPgrid  is $2^{\Omega(\log^{1-\eps} n)}$-hard to approximate for any constant $\eps$, unless $\NP\subseteq\RTIME(n^{\poly\log n})$. We further show that it is $n^{\Omega(1/(\log\log n)^2)}$-hard to approximate, assuming that for some constant $\delta>0$, $\NP \not \subseteq \RTIME(2^{n^{\delta}})$. The same hardness results also extend to \NDPwall. These hardness results are stronger than the best currently known hardness for the general \NDP problem, and should be contrasted with the $2^{O(\sqrt{\log n}\cdot\log\log n})$-approximation algorithm for \NDPgrid with all sources lying on the grid boundary~\cite{NDPSourcesOnTop}.

Another basic routing problem that is closely related to \NDP is Edge-Disjoint Paths (\EDP). The input to this problem is the same as before: an undirected graph $G=(V,E)$ and a set $\mset=\set{(s_1,t_1),\ldots,(s_k,t_k)}$ of demand pairs. The goal, as before, is to route the largest number of the demand pairs via paths. However, we now allow the paths to share vertices, and only require that they are mutually edge-disjoint. In general, it is easy to see that  \EDP is a special case of \NDP. Indeed, given an \EDP instance $(G,\mset)$, computing the line graph of the input graph $G$ transforms it into an equivalent instance of \NDP. However, this transformation may inflate the number of the graph vertices, and so approximation factors that depend on $|V(G)|$ may no longer be preserved. Moreover, this transformation does not preserve planarity, and no such relationship is known between \NDP and \EDP in planar graphs. The approximability status of \EDP is very similar to that of \NDP: the best current approximation algorithm achieves an $O(\sqrt{n})$-approximation factor~\cite{EDP-alg}, and the recent $2^{\Omega(\sqrt{\log n})}$-hardness of approximation of~\cite{NDPhardness2017}, under the assumption that $\NP\not\subseteq \DTIME(n^{O(\log n)})$, extends to \EDP. Interestingly, \EDP appears to be relatively easy on grid graphs, and has a constant-factor approximation for this special case~\cite{grids1, grids3,grids4}. 
The analogue of the grid graph in the setting of \EDP seems to be the wall graph (see Figure~\ref{fig: wall}): the approximability status of \EDP on wall graphs is similar to that of \NDP on grid graphs, with the best current upper bound of $\tilde O(n^{1/4})$, and the best lower bound of APX-hardness~\cite{NDP-grids}. 
The results of~\cite{NDPhardness2017} extend to a $2^{\Omega(\sqrt{\log n})}$-hardness of approximation for \EDP on sub-graphs of wall graphs, with all source vertices lying on the wall boundary, under the same complexity assumption. We denote by \EDPwall the special case of the \EDP problem where the underlying graph is a wall. We show that our new almost polynomial hardness of approximation results also hold for \EDPwall and for \NDPwall.

\begin{figure}[h]
\centering
\scalebox{0.4}{\includegraphics{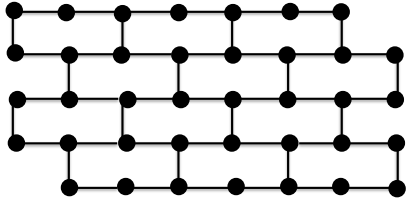}}
\caption{A wall graph.\label{fig: wall}}
\end{figure}

\paragraph*{Other Related Work.} 
Several other special cases of \EDP are known to have reasonably good approximation algorithms. For example, for the special case of Eulerian planar graphs, Kleinberg~\cite{Kleinberg-planar} showed an $O(\log^2 n)$-approximation algorithm, while Kawarabayashi and Kobayashi~\cite{KK-planar} provide an improved $O(\log n)$-approximation for both Eulerian and $4$-connected planar graphs. Polylogarithmic approximation algorithms are also known for bounded-degree expander graphs~\cite{LeightonRao99,BFU92,BroderFSU94,KleinbergRubinfeld96,Frieze2000}, and constant-factor approximation algorithms are known for trees~\cite{apx_tree,ChekuriMS07}, and grids and grid-like graphs~\cite{grids1,AwerbuchGLR94,grids3,grids4}. Rao and Zhou~\cite{RaoZhou} showed an efficient randomized $O(\poly\log n)$-approximation  algorithm for the special case of \EDP where the value of the global minimum cut in the input graph is $\Omega(\log^5 n)$.
Recently, Fleszar et al.~\cite{fleszar_et_al} designed an  $O(\sqrt{r}\cdot \log^{1.5} (kr))$-approximation algorithm for \EDP, where $r$ is the feedback vertex set number of the input graph $G=(V,E)$ --- the smallest number of vertices that need to be deleted from $G$ in order to turn it into a forest. 

A natural variation of \NDP and \EDP that relaxes the disjointness constraint by allowing a small vertex- or edge-congestion has been a subject of extensive study. In the \NDP with Congestion (\NDPwC) problem, the input consists of an undirected graph and a set of demand pairs as before, and additionally a non-negative integer $c$. The goal is to route a maximum number of the demand pairs with congestion $c$, that is, each vertex may participate in at most $c$ paths in the solution. The \EDP with Congestion problem (\EDPwC) is defined similarly, except that now the congestion is measured on the graph edges and not vertices. The famous result of Raghavan and Thompson~\cite{RaghavanT}, that introduced the randomized LP-rounding technique, obtained a constant-factor approximation for \NDPwC and \EDPwC, for a congestion value $c=\Theta(\log n/\log\log n)$. A long sequence of work~\cite{CKS,Raecke,Andrews,RaoZhou,Chuzhoy11,ChuzhoyL12,ChekuriE13,NDPwC2} has led to an $O(\poly\log k)$-approximation for \NDPwC and \EDPwC with congestion bound $c=2$. This result is essentially optimal, since it is known that for every constant $\eps$, and for  every congestion value $c=o(\log\log n/\log\log\log n)$, both problems are hard to approximate to within a factor $\Omega((\log n)^{\frac{1-\eps}{c+1}})$, unless $\NP \subseteq \ZPTIME(n^{\poly \log n})$~\cite{ACGKTZ}. When the input graph is planar, Seguin-Charbonneau and Shepherd~\cite{EDP-planar-c2}, improving on the result of Chekuri, Khanna and Shepherd~\cite{EDP-planar-constant-cong}, have shown a constant-factor approximation for \EDPwC with congestion 2.

\paragraph*{Our Results and Techniques.} Our main result is the proof of the following two theorems.

\begin{theorem} \label{thm: master NDP}
For every constant $\eps > 0$, there is no $2^{O(\log^{1-\eps} n)}$-approximation algorithm for \NDP, assuming that $\NP\nsubseteq\RTIME(n^{\poly\log n})$. Moreover, there is no $n^{O(1/(\log\log n)^2)}$-approximation algorithm for \NDP, assuming that  for some constant $\delta>0$, $\NP\nsubseteq \RTIME(2^{n^\delta})$. These results hold even when the input graph is a grid graph or a wall graph.
\end{theorem}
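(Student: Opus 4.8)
The plan is to build on the $2^{\Omega(\sqrt{\log n})}$-hardness construction of~\cite{NDPhardness2017} and to amplify it by a recursive, Cook-type composition. The starting point is \GAPCOL: it is \NP-hard to distinguish a $3$-colorable constant-degree graph, in which every edge participates in exactly $5$ of a prescribed collection of constraints, from a graph in which every coloring violates a constant fraction of the constraints. I would first encode such an instance into an intermediate graph-partitioning problem --- a variant of \WGPwB, the $(r,h)$-graph-partitioning-with-bundles problem --- so that a proper $3$-coloring yields a partition of large value, while any coloring far from proper forces every partition to have small value. The underlying PCP constraints, and the consistency requirements linking them, are precisely what the block-and-bundle structure of the partitioning instance is built to record; the gap between the two cases is a constant (possibly a slowly growing function of $n$) at this stage.

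Second, I would realize an instance of the partitioning problem as an instance of \NDPgrid. The core of this step is a family of rigid routing gadgets --- ``spiral''- or ``snake''-type sub-grids --- that pin down the trajectory of any path entering them, so that the only way to route many demand pairs at once is to follow a global pattern consistent with a single high-value partition, hence, by the previous step, with a near-proper coloring. Completeness (a good partition gives many routable pairs) is by direct construction; the substance is soundness: any routing of $k'$ pairs must, after discarding a controlled fraction of them, project onto a partition of value $\Omega(k')$. All source terminals are placed on the grid boundary and the maximum vertex degree is kept bounded, so that the same construction can be adapted to a wall graph; and since walls are subcubic, \NDP and \EDP essentially agree on them, which yields the claimed \NDPwall and \EDPwall hardness as well.

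Third --- and this is the genuinely new ingredient --- I would amplify the modest gap produced by one application of the construction above to $2^{\Omega(\log^{1-\eps}n)}$ (respectively $n^{\Omega(1/(\log\log n)^2)}$) by composing the reduction with itself through $L$ recursive levels, where $L$ is chosen as large as the complexity assumption allows: one can take it polylogarithmic in $n$ in the first regime and larger in the second. A naive recursive Karp reduction fails here: each application raises the instance size to a polynomial power, so a depth-$L$ substitution would blow the size up to $n^{c^{L}}$, and, more fundamentally, a deterministic procedure cannot supply the inner copies of the reduction with the intermediate partitions they need. Instead I would use the approximation algorithm itself as an oracle. From the top \GAPCOL instance I would produce a (quasi-polynomially large) collection of \NDP sub-instances, organized into $L$ levels and built bottom-up, run the assumed $\gamma$-approximation algorithm on the lower-level instances, and use the set of demand pairs it actually routes in each sub-instance to decide which sub-partition to hard-wire into the instance one level up. Level by level the $\gamma$-approximation then contracts to a gap of $\gamma^{\Theta(L)}$ at the top, while the overall instance size stays $n^{\poly\log n}$ (respectively $2^{n^{\delta}}$); hence a too-good approximation algorithm for \NDPgrid would decide \GAPCOL within the forbidden time bound. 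Randomization --- and therefore $\RTIME$ rather than $\DTIME$ --- enters through randomized steps in the construction, such as random sampling among the sub-instances and randomized choices inside the routing gadgets, needed to keep the error from accumulating across the $L$ levels.

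The step I expect to be the main obstacle is the soundness analysis of this adaptive composition. In an ordinary hardness-of-approximation argument one reasons about a single reduction against an arbitrary fixed solution; here the approximation algorithm is an adversary that sees our lower-level instances and whose (possibly pathological) answers are fed forward into the construction of the higher-level instances, so one must show that no such adversary can assemble a large top-level routing out of a \NI of \GAPCOL --- that ``local cheating'' in the sub-instances cannot aggregate into a global violation. Doing this while simultaneously keeping each level's size blow-up polynomial and each level's gap multiplicative is the delicate part. A secondary difficulty, already present at the base of the recursion, is that the entire construction must live inside an honest square grid, not merely a planar graph or a subgraph of a grid as in~\cite{NDPhardness2017}, which forces the rigidity gadgets and the boundary placement of the terminals to be substantially more careful.
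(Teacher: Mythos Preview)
Your proposal captures the broad outline --- 3COL(5), a graph-partitioning proxy, a reduction to \NDPgrid, and a Cook-type use of the approximation algorithm --- but the amplification mechanism you describe is not the one the paper uses, and the one you describe has a genuine gap.

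You propose to start from a constant-gap base reduction and amplify by \emph{recursive composition through $L$ levels}, feeding the oracle's routing at level $i$ into the construction at level $i{+}1$, so that the gap becomes $\gamma^{\Theta(L)}$. The paper does something structurally different. The entire hardness gap comes from \emph{parallel repetition} applied once, up front, to the 2-prover game for 3COL(5): with $\ell$ repetitions the \ni\ value drops to $2^{-\gamma\ell}$. There is no tower of nested \NDP instances. The Cook-type iteration serves a different purpose: it is a \emph{soundness-repair} device for the single reduction $\iset(H)\to\hat\iset(H)$. The difficulty is that a large \NDPgrid\ solution yields a large \WGPwB\ solution, but a large \WGPwB\ solution need \emph{not} yield a good prover strategy --- a ``cheating'' partition can dump all of $S(Q)$ into one cluster. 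The paper's key lemma (Theorem~\ref{thm: good strategy or partition}) shows that any such cheating solution can instead be used to \emph{shatter the constraint graph} $H'$ into disjoint subgraphs, each a $2^{\Omega(\ell)}$-factor smaller, while retaining a $1/\poly(\alpha)$ fraction of the edges. One then re-runs the reduction on each piece. After $O(\log n)$ such phases the pieces are too small to cheat, and the surviving edges still number at least $|\rset|/2^{\gamma\ell/4}$, which forces a contradiction in the \ni. The iteration count is $O(\log n)$ regardless of $\ell$; it is not the source of the gap.

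Your scheme, by contrast, would need a gadget-substitution that embeds an \NDPgrid\ instance inside another while (i) keeping the graph an honest grid, (ii) multiplying the gap, and (iii) keeping per-level blow-up polynomial. You acknowledge the blow-up issue but do not supply the substitution; and the ``use the oracle's routing to decide what to hard-wire one level up'' step is exactly where one must prove that an adversarial oracle cannot manufacture consistency --- you flag this as the main obstacle, but the paper sidesteps it entirely via the shatter-or-strategy dichotomy. Two smaller points: the sources in the paper's grid construction are \emph{not} on the boundary (indeed, placing them there admits a $2^{\tilde O(\sqrt{\log n})}$-approximation, which would kill the result); and the soundness of the \WGPwB$\to$\NDPgrid\ step is not via rigidity gadgets but via a crossing-number argument --- a large routing gives a drawing of the \WGPwB\ graph with few crossings, hence a small balanced edge-separator, and iterating this yields the partition.
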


\begin{theorem} \label{thm: master EDP}
For every constant $\eps > 0$, there is no $2^{O(\log^{1-\eps} n)}$-approximation algorithm for \EDP, assuming that $\NP\nsubseteq\RTIME(n^{\poly\log n})$. Moreover, there is no $n^{O(1/(\log\log n)^2)}$-approximation algorithm for \EDP, assuming that  for some constant $\delta>0$, $\NP\nsubseteq \RTIME(2^{n^\delta})$. These results hold even when the input graph is a wall graph.
\end{theorem}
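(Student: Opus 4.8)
The plan is to obtain Theorem~\ref{thm: master EDP} as an essentially immediate consequence of the \NDPwall part of Theorem~\ref{thm: master NDP}, using the single structural fact that wall graphs have maximum vertex degree $3$. Let $(G,\mset)$ range over the \NDPwall instances produced by the reduction behind Theorem~\ref{thm: master NDP}; thus $G$ is a wall, every vertex of $G$ has degree at most $3$, and --- a property I would carry along explicitly, since it holds by construction --- the $2|\mset|$ terminals are pairwise distinct. I would reinterpret each such $(G,\mset)$ verbatim as an instance of \EDPwall. Since every node-disjoint routing is in particular edge-disjoint, $\mathsf{OPT}_{\mathsf{NDP}}(G,\mset)\le \mathsf{OPT}_{\mathsf{EDP}}(G,\mset)$, so the whole argument reduces to proving the reverse inequality up to a constant factor on max-degree-$3$ graphs; once that is done, the $2^{\Omega(\log^{1-\eps}n)}$ and $n^{\Omega(1/(\log\log n)^2)}$ gaps of Theorem~\ref{thm: master NDP} transfer to \EDPwall under the same complexity assumptions, the constant being absorbed.

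The one real step is the following claim: if $G$ has maximum degree $3$ and the terminals of $\mset$ are distinct, then $\mathsf{OPT}_{\mathsf{EDP}}(G,\mset)\le 5\cdot \mathsf{OPT}_{\mathsf{NDP}}(G,\mset)$, and moreover a node-disjoint routing of $\Omega(\ell)$ pairs can be extracted in polynomial time from any edge-disjoint routing of $\ell$ pairs. To see this, take an edge-disjoint routing of $\mset'\subseteq\mset$, $|\mset'|=\ell$, by paths $\set{P_i}$, and note that two edge-disjoint paths in a degree-$3$ graph can meet at a vertex $v$ only if $v$ is an endpoint of at least one of them, since an internal vertex of a path consumes two of its at most three incident edges. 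Form the conflict graph $H$ on $\mset'$ with $i\sim j$ iff $P_i\cap P_j\neq\emptyset$. For each edge $ij$ of $H$ pick a witness $v\in P_i\cap P_j$; by distinctness of terminals, $v$ is an endpoint of exactly one of $P_i,P_j$, say of $P_i$, so $P_j$ passes through $v$ internally; charge $ij$ to the pair $(i,v)$. Because $P_i$ uses exactly one edge at its endpoint $v$ while a path through $v$ internally uses two more, and $\deg(v)\le 3$, at most one edge of $H$ is charged to each $(i,v)$ with $v\in\set{s_i,t_i}$; hence $|E(H)|\le 2\ell$. The same bound holds for every vertex-induced subgraph of $H$ (each is the conflict graph of the corresponding sub-routing), so $H$ is $4$-degenerate, hence $5$-colorable, hence has an independent set of size at least $\ell/5$; such a set, found greedily by repeatedly deleting a minimum-degree vertex together with its neighbors, corresponds to a node-disjoint routing.

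Finally I would assemble the reduction. A hypothetical $\beta(n)$-approximation algorithm for \EDPwall, composed with the extraction above, yields a $5\beta(n)$-approximation for \NDPwall; plugging this into the (Cook-type) reduction of Theorem~\ref{thm: master NDP}, a value $\beta(n)=2^{o(\log^{1-\eps}n)}$ (respectively $\beta(n)=n^{o(1/(\log\log n)^2)}$) would contradict $\NP\nsubseteq\RTIME(n^{\poly\log n})$ (respectively $\NP\nsubseteq\RTIME(2^{n^\delta})$), since $5\beta(n)$ is of the same order. The \NDPwall statement of the theorem is identical with the extraction step omitted. The main obstacle is nothing more than the key claim above, which is routine; its only delicate point is the bookkeeping around terminals, and should the \NDPwall construction ever create coincident terminals, the same charging goes through with a slightly larger constant, because a degree-$3$ vertex can be an endpoint of at most three of the routing paths.
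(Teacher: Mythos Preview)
Your approach is essentially the paper's: reinterpret each \NDPwall instance as an \EDPwall instance, observe that in a max-degree-$3$ graph two edge-disjoint paths can meet only at an endpoint of one of them, build a bounded-degeneracy conflict graph on the routed pairs, and extract a large independent set to recover a node-disjoint routing within a constant factor; the hardness gaps of Theorem~\ref{thm: master NDP} then transfer verbatim. The paper phrases the conflict-graph step slightly differently (a directed graph with out-degree at most $4$, hence $8$-degenerate when viewed undirected), but the content is the same.

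One correction: your assertion that the $2|\mset|$ terminals are pairwise distinct is \emph{not} true in this construction --- the bundle-unification step in Section~\ref{subsec: the construction} deliberately makes several demand pairs share a common source (or destination), so an edge-disjoint routing may well route two pairs with the same endpoint. Your main charging argument (which relies on a shared vertex being the endpoint of exactly one of the two paths) therefore does not apply directly. Your fallback remark is the right fix, and is precisely what the paper does: since any vertex of degree $3$ is touched by at most three of the edge-disjoint paths, each path's two endpoints account for at most four conflicts \emph{outgoing} from that path, so the (directed) conflict graph has out-degree at most $4$ and the greedy argument goes through with a slightly worse constant.
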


We now provide a high-level overview of our techniques.
The starting point of our hardness of approximation proof is 3COL(5) --- a special case of the $3$-coloring problem, where the underlying graph is $5$-regular. We define a new graph partitioning problem, that we refer to as $(r,h)$-Graph Partitioning, and denote by \WGP. In this problem, we are given a bipartite graph $\tilde G = (V_1,V_2,E)$ and two integral parameters $r,h > 0$. A solution to the problem is a partition $(W_1,W_2,\ldots,W_r)$ of $V_1\cup V_2$ into $r$ subsets, and for each $1\le i\le r$, a subset $E_i\subseteq E(W_i)$ of edges, so that $|E_i|\le h$ holds, and the goal is to maximize $\sum_{i=1}^r |E_i|$. A convenient intuitive way to think about this problem is that we would like to partition $\tilde G$ into a large number of subgraphs, in a roughly balanced way (with respect to the number of edges), so as to preserve as many of the edges as possible.
We show that \NDPgrid is at least as hard as the \WGP problem (to within polylogarithmic factors). Our reduction exploits the fact that routing in grids is closely related to graph drawing, and that graphs with small crossing number have small balanced separators.  The \WGP problem itself appears similar in flavor to the Densest $k$-Subgraph problem (\DkS). In the \DkS problem, we are given a graph $G=(V,E)$ and a parameter $k$, and the goal is to find a subset $U\subseteq V$ of $k$ vertices, that maximizes the number of edges in the induced graph $G[U]$. Intuitively, in the \WGP problem, the goal is to partition the graph into many dense subgraphs, and so in order to prove that \WGP is hard to approximate, it is natural to employ techniques used in hardness of approximation proofs for \DkS. 
The best current approximation algorithm for \DkS achieves a $n^{1/4+\eps}$-approximation for any constant $\eps$~\cite{dks10}. Even though the problem appears to be very hard to approximate, its hardness of approximation proof has been elusive until recently: only constant-factor hardness results were known for \DkS under various worst-case complexity assumptions, and $2^{\Omega(\log^{2/3}n)}$-hardness under average-case assumptions~\cite{Feige02,dks_average_hardness,Khot04,RaghavendraSteurer10}. 
In a recent breakthrough, Manurangsi~\cite{Manurangsi16} has shown that for some constant $c$, \DkS is hard to approximate to within a factor $n^{1/(\log\log n)^c}$,  under the Exponential Time Hypothesis. Despite our feeling that \WGP is somewhat similar to \DkS, we were unable to extend the techniques of~\cite{Manurangsi16} to this problem, or to prove its hardness of approximation via other techniques.

We overcome this difficulty as follows. First, we define a graph partitioning problem that is slightly more general than \WGP. The definition of this problem is somewhat technical and is deferred to Section~\ref{secL WGP}. This problem is specifically designed so that the reduction to \NDPgrid still goes through, but it is somewhat easier to control its solutions and prove hardness of approximation for it. Furthermore, instead of employing a standard Karp-type reduction (where an instance of 3COL(5) is reduced to a single instance of \NDPgrid, while using the graph partitioning problem as a proxy), we employ a sequential Cook-type reduction. We assume for contradiction that an $\alpha$-approximation algorithm $\aset$ for \NDPgrid exists, where $\alpha$ is the hardness of approximation factor we are trying to prove. Our reduction is iterative. In every iteration $j$, we reduce the 3COL(5) instance to a collection $\iset_j$ of instances of \NDPgrid, and apply the algorithm $\aset$ to each of them. If the 3COL(5) instance is a \YI, then we are guaranteed that each resulting instance of \NDPgrid has a large solution, and so all solutions returned by $\aset$ are large. If the 3COL(5) instance is a \NI, then unfortunately it is still possible that the resulting instances of \NDPgrid will have large solutions. However, we can use these resulting solutions in order to further refine our reduction, and construct a new collection $\iset_{j+1}$ of instances of \NDPgrid. While in the \YI case we will continue to obtain large solutions to all \NDPgrid instances that we construct, we can show that in the \NI case, in some iteration of the algorithm, we will fail to find such a large solution. Our reduction is crucially  sequential, and we exploit the solutions returned by algorithm $\aset$ in previous iterations in order to construct new instances of \NDPgrid for the subsequent iterations.
It is interesting whether these techniques may be helpful in obtaining new hardness of approximation results for \DkS.

We note that our approach is completely different from the previous hardness of approximation proof of~\cite{NDPhardness2017}. The proof in~\cite{NDPhardness2017} proceeded by performing a reduction from 3SAT(5). Initially, a simple reduction from 3SAT(5) to the \NDP problem on a sub-graph of the grid graph is used in order to produce a constant hardness gap. The resulting instance of \NDP is called a level-$1$ instance. The reduction then employs a boosting technique, that, in order to obtain a level-$i$ instance, combines a number of level-$(i-1)$ instances with a single level-$1$ instance. The hardness gap grows by a constant factor from iteration to iteration, until the desired hardness of approximation bound is achieved. All source vertices in the constructed instances appear on the grid boundary, and a large number of vertices are carefully removed from the grid in order to create obstructions to routing, and to force the routing paths to behave in a prescribed way.
The reduction itself is a Karp-type reduction, and eventually produces a single instance of \NDP with a large gap between the \YI and \NI solutions.

\paragraph{Organization.}
We start with Preliminaries in Section~\ref{sec: prelims}, and introduce the new graph partitioning problems in Section~\ref{secL WGP}. The hardness of approximation proof for \NDPgrid appears in Section~\ref{sec: the hardness proof}, with the reduction from the graph partitioning problem to \NDPgrid deferred to Section~\ref{sec: from WGP to NDP}. Finally, we extend our hardness results to \NDPwall and \EDPwall in Section~\ref{sec: from NDP to EDP}.

\label{--------------------------------------sec: prelims---------------------------------------}
\section{Preliminaries}\label{sec: prelims}

We use standard graph-theoretic notation. Given a graph $G$ and a subset $W\subseteq V(G)$ of its vertices, $E(W)$ denotes the set of all edges of $G$ that have both their endpoints in $W$. Given a path $P$ and a subset $U$ of vertices of $G$, we say that $P$ is \emph{internally disjoint} from $U$ iff every vertex in $P\cap U$ is an endpoint of $P$. Similarly, $P$ is internally disjoint from a subgraph $G'$ of $G$ iff $P$ is internally disjoint from $V(G')$. Given a subset $\mset'\subseteq\mset$ of the demand pairs in $G$, we denote by $S(\mset')$ and $T(\mset')$ the sets of the source and the destination vertices of the demand pairs in $\mset'$, respectively. We let $\tset(\mset')=S(\mset')\cup T(\mset')$ denote the set of all terminals participating as a source or a destination in $\mset'$. All logarithms in this paper are to the base of 2.

\paragraph{Grid Graphs.}
For a pair  $h,\ell> 0$ of integers, we let $G^{h,\ell}$ denote the grid of height $h$ and length $\ell$.
The set of its vertices is $V(G^{h,\ell})=\set{v(i,j)\mid 1\leq i\leq h, 1\leq j\leq \ell}$, and the set of its edges is the union of two subsets: the set $E^H=\set{(v_{i,j},v_{i,j+1})\mid 1\leq i\leq h, 1\leq j<\ell}$ of horizontal edges  and the set $E^{V}=\set{(v_{i,j},v_{i+1,j})\mid 1\leq i< h, 1\leq j\leq\ell}$ of vertical edges. The subgraph of $G^{h,\ell}$ induced by the edges of $E^H$ consists of $h$ paths, that we call the \emph{rows} of the grid; for $1\leq i\leq h$, the $i$th row $R_i$ is the row containing the vertex $v(i,1)$. Similarly, the subgraph induced by the edges of $E^V$ consists of $\ell$ paths that we call the \emph{columns} of the grid, and for $1\leq j\leq \ell$, the $j$th column $W_j$ is the column containing $v(1,j)$. We think of the rows  as ordered from top to bottom and the columns as ordered from left to right. Given a vertex $v=v(i,j)$ of the grid, we denote by $\row(v)$ and $\col(v)$ the row and the column of the grid, respectively, that contain $v$. We say that $G^{h,\ell}$ is a \emph{square grid} iff $h=\ell$. The \emph{boundary of the grid} is $R_1\cup R_h\cup W_1\cup W_{\ell}$. We sometimes refer to $R_1$ and $R_h$ as the top and the bottom boundary edges of the grid respectively, and to $W_1$ and $W_{\ell}$ as the left and the right boundary edges of the grid. 

Given a subset $\rset'$ of consecutive rows of $G$ and a subset $\wset'$ of consecutive columns of $G$, the \emph{sub-grid of $G$ spanned by the rows in $\rset'$ and the columns in $\wset'$} is the sub-graph of $G$ induced by the set $\set{v \mid \row(v)\in \rset', \col(v)\in \wset'}$ of its vertices.

Given two vertices $u=v(i,j)$ and $u'=v(i',j')$ of a grid $G$, the shortest-path distance between them is denoted by $d(u,u')$. Given two vertex subsets $X,Y\subseteq V(G)$, the distance between them is $d(X,Y)=\min_{u\in X,u'\in Y}\set{d(u,u')}$. When $H, H'$ are subgraphs of $G$, we use $d(H,H')$ to denote $d(V(H),V(H'))$.

\paragraph{Wall Graphs.}
Let $G=G^{\ell,h}$ be a grid of length $\ell$ and height $h$. Assume that $\ell>0$ is an even integer, and that $h>0$.
For every column $W_j$ of the grid, let $e^j_1,\ldots,e^j_{h-1}$ be the edges of $W_j$ indexed in their top-to-bottom order. Let $E^*(G)\subseteq E(G)$ contain all edges $e^j_z$, where $z\neq j \mod 2$, and let $\hat G$ be the graph obtained from $G\setminus E^*(G)$, by deleting all degree-$1$ vertices from it.
Graph $\hat G$ is called a \emph{wall of length $\ell/2$ and height $h$} (see Figure~\ref{fig: wall}).
Consider the subgraph of $\hat G$ induced by all horizontal edges of the grid $G$ that belong to $\hat G$. This graph is a collection of $h$ node-disjoint paths, that we refer to as the \emph{rows} of $\hat G$, and denote them by $R_1,\ldots,R_h$ in this top-to-bottom order; notice that $R_j$ is also the $j$th row of the grid $G$ for all $j$. Graph $\hat G$ contains a unique collection $\wset$ of $\ell/2$ node-disjoint paths that connect vertices of $R_1$ to vertices of $R_h$ and are internally disjoint from $R_1$ and $R_h$. We refer to the paths in $\wset$ as the \emph{columns} of $\hat G$, and denote them by $W_1,\ldots,W_{\ell/2}$ in this left-to-right order. Paths $W_1, W_{\ell/2},R_1$ and $R_h$ are called the left, right, top and bottom boundary edges of $\hat G$, respectively, and their union is the boundary of $\hat G$.

\paragraph{The 3COL(5) problem.}
The starting point of our reduction is the 3COL(5) problem. In this problem, we are given a $5$-regular graph $G=(V,E)$. Note that, if $n=|V|$ and $m=|E|$, then $m=5n/2$.  We are also given a set $\cset=\set{r,b,g}$ of $3$ colors. A \emph{coloring} $\chi: V\rightarrow \cset$ is an assignment of a color in $\cset$ to every vertex  in $V$. 
We say that an edge $e=(u,v)$ is \emph{satisfied} by the coloring $\chi$ iff $\chi(u)\neq \chi(v)$. 
The coloring $\chi$ is \emph{valid} iff it satisfies every edge. We say that $G$ is a \yi iff there is a valid coloring $\chi: V\rightarrow \cset$. We say that it is a \ni with respect to some given parameter $\eps$, iff for every coloring $\chi: V\rightarrow \cset$, at most a $(1-\eps)$-fraction of the edges are satisfied by $\chi$. We use the following theorem of Feige et al.~\cite{Feige3COL5}:

\begin{theorem} \label{thm: GAPCOL is hard}[Proposition 15  in \cite{Feige3COL5}]
There is some constant $\eps$, such that distinguishing between the \yis and the  \nis (with respect to $\eps$) of 3COL(5) is NP-hard.
\end{theorem}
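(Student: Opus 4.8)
This is a well-known result, attributed to Feige et al.~\cite{Feige3COL5}, and since the statement defers to that reference, the "proof" I would present is really a sketch of how one obtains a constant-gap hardness for $3$-coloring on $5$-regular graphs from the PCP theorem. The plan is to start from the basic fact that $\mathsf{NP} = \mathsf{PCP}(O(\log n), O(1))$, or equivalently that there is some constant $\gamma > 0$ and a polynomial-time reduction from $3$-SAT to $3$-SAT instances in which either all clauses are satisfiable or at most a $(1-\gamma)$-fraction are; by a standard reduction one can further assume each variable appears in a bounded number of clauses, i.e.\ we have \GAPSAT, gap $3$-SAT with bounded occurrences. From there I would apply the classical clause-gadget reduction from $3$-SAT to $3$-coloring (the one using an "OR-gadget" per clause together with a palette triangle), and observe that it is gap-preserving: a satisfying assignment yields a proper $3$-coloring, while any $3$-coloring that violates few edges can be decoded into an assignment satisfying a correspondingly large fraction of clauses, so that a $(1-\gamma)$-fraction clause bound translates into a $(1-\eps_0)$-fraction edge bound for some constant $\eps_0 > 0$.

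The remaining step is the degree-regularization: converting the bounded-degree $3$-colorability gap instance into a genuinely $5$-regular one while preserving a constant gap. The standard device is to replace each vertex by a small "equality gadget" (an expander-like gadget on copies of the vertex, all of which are forced to take the same color in any near-proper coloring), padding degrees up to exactly $5$; because the gadget has constant size and good expansion, a coloring of the new graph violating few edges must be "consistent" on almost all gadgets, hence decodes to a coloring of the original graph violating few edges. Tracking the constants through this transformation gives the final constant $\eps$ in the statement. Alternatively — and this is closer to what Feige et al.\ actually do — one can derive the $5$-regular $3$-coloring gap directly from a two-prover one-round game / Raz's parallel repetition flavored construction, but for a self-contained exposition the gadget route above is cleaner.

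The main obstacle, and the only genuinely delicate point, is the degree-regularization step: one must ensure that forcing $5$-regularity does not destroy the gap, which requires the equality gadgets to have the property that \emph{any} coloring (not just a proper one) that is far from constant on a gadget incurs many violated edges — this is where expansion of the gadget is used. The clause-to-coloring gadget reduction and the bounded-occurrence normalization are routine and gap-preserving by inspection. Since the statement in the excerpt simply cites \cite[Proposition 15]{Feige3COL5}, in the actual writeup I would not reproduce these details but merely invoke that reference; the sketch above is how I would reconstruct it if needed.
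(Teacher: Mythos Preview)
Your proposal is correct and matches the paper's approach exactly: the paper does not prove this theorem at all but simply cites \cite[Proposition~15]{Feige3COL5} and moves on. Your sketch of the underlying argument (PCP theorem $\to$ gap-3SAT with bounded occurrences $\to$ gap-preserving gadget reduction to 3-coloring $\to$ degree regularization via expander-type equality gadgets) is a reasonable reconstruction of the cited result, and you correctly note that in the actual writeup one would simply invoke the reference rather than reproduce these details.
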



\paragraph{A Two-Prover Protocol.}
We use the following two-prover protocol. The two provers are called an edge-prover and a vertex-prover. Given a $5$-regular graph $G$, the verifier selects an edge $e=(u,v)$ of $G$ uniformly at random, and then selects a random endpoint (say $v$) of this edge. It then sends $e$ to the edge-prover and $v$ to the vertex-prover. The edge-prover must return an assignment of colors from $\cset$ to $u$ and $v$, such that the two colors are distinct; the vertex-prover must return an assignment of a color from $\cset$ to $v$. The verifier accepts iff both provers assign the same color to $v$. Given a $2$-prover game $\gset$, its \emph{value} is the maximum acceptance probability of the verifier over all possible strategies of the provers.  

Notice that, if $G$ is a \yi, then there is a strategy for both provers that guarantees acceptance with probability $1$: the provers fix a valid coloring $\chi: V\rightarrow \cset$ of $G$ and respond to the queries according to this coloring. 

We claim that if $G$ is a \ni, then for any strategy of the two provers, the verifier accepts with probability at most $(1-\eps/2)$. Note first that we can assume without loss of generality that the strategies of the provers are deterministic. Indeed,  if the provers have a probability distribution over the answers to each query $Q$, then the edge-prover, given a query $Q'$, can return an answer that maximizes the acceptance probability of the verifier under the random strategy of the vertex-prover. This defines a deterministic strategy for the edge-prover that does not decrease the acceptance probability of the verifier. The vertex-prover in turn, given any query $Q$, can return an answer that maximizes the acceptance probability of the verifier, under the new deterministic strategy of the edge-prover. The acceptance probability of this final deterministic strategy of the two provers is at least as high as that of the original randomized strategy. The deterministic strategy of  the vertex-prover defines a coloring of the vertices of $G$. This coloring must dissatisfy at least $\eps m$ edges. The probability that the verifier chooses one of these edge is at least $\eps$. The response of the edge-prover on such an edge must differ from the response of the vertex-prover on at least one endpoint of the edge. The verifier chooses this endpoint with probability at least $\half$, and so overall the verifier rejects with probability at least $\eps/2$. 
Therefore, if $G$ is a \yi, then the value of the corresponding game is $1$, and if it is a \ni, then the value of the game is at most $(1-\eps/2)$.

\paragraph{Parallel Repetition.}
We perform $\ell$ rounds of parallel repetition of the above protocol, for some integer $\ell>0$, that may depend on $n=|V(G)|$. Specifically, the verifier chooses a sequence $(e_1,\ldots,e_{\ell})$ of $\ell$ edges, where each edge $e_i$ is selected independently uniformly at random from $E(G)$. For each chosen edge $e_i$, one of its endpoints $v_i$ is then chosen independently at random. The verifier sends $(e_1,\ldots,e_{\ell})$ to the edge-prover, and $(v_1,\ldots,v_{\ell})$ to the vertex-prover. The edge-prover returns a coloring of both endpoints of each edge $e_i$. This coloring must satisfy the edge (so the two endpoints must be assigned different colors), but it need not be consistent across different edges. In other words, if two edges $e_i$ and $e_j$ share the same endpoint $v$, the edge-prover may assign different colors to each occurrence of $v$. The vertex-prover returns a coloring of the vertices in $(v_1,\ldots,v_{\ell})$. Again, if some vertex $v_i$ repeats twice, the coloring of the two occurrences need not be consistent. The verifier accepts iff for each $1\leq i\leq \ell$, the coloring of the vertex $v_i$ returned by both provers is consistent. (No verification of consistency is performed across different $i$'s. So, for example, if $v_i$ is an endpoint of $e_j$ for $i\neq j$, then it is possible that the two colorings do not agree and the verifier still accepts). We say that a pair $(A,A')$ of answers to the two queries $(e_1,\ldots,e_{\ell})$ and $(v_1,\ldots,v_{\ell})$  is \emph{matching}, or \emph{consistent}, iff it causes the verifier to accept. We let $\gset^\ell$ denote this 2-prover protocol with $\ell$ repetitions.

\begin{theorem}[Parallel Repetition]\cite{RazParallelRep,HolensteinParallelRep,RaoParallelRep}
There is some constant $0<\gamma'<1$, such that for each $2$-prover game $\tilde \gset$, if the value of $\tilde \gset$ is $x$, then the value of the game $\tilde\gset^\ell$, obtained from $\ell>0$ parallel repetitions of $\tilde\gset$, is $x^{\gamma'\ell}$. 
\end{theorem}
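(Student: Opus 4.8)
The stated result is Raz's Parallel Repetition Theorem \cite{RazParallelRep}, with the proof later simplified by Holenstein \cite{HolensteinParallelRep} and the bound improved by Rao \cite{RaoParallelRep}; since we use it entirely as a black box, I would not reprove it, but only sketch the information-theoretic argument (following Holenstein's version of Raz's proof). The plan is to fix a $2$-prover game $\tilde\gset$ of value $x<1$ with answers of bounded length, pass to $\tilde\gset^{\ell}$, and --- arguing as in the single-shot analysis above that we may assume the provers are deterministic --- fix a deterministic strategy achieving acceptance probability $v$. The goal is to show $v\le x^{\gamma'\ell}$ for some universal constant $\gamma'\in(0,1)$.

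The main step I would isolate is a lemma of the form ``winning in a few coordinates does not help win a fresh one''. Letting $\event_i$ be the event that coordinate $i$'s acceptance test passes, so that $v=\prob{\bigcap_{i=1}^{\ell}\event_i}$, the lemma states: there is a constant $\gamma'>0$ such that for every $S\subseteq\{1,\dots,\ell\}$ with $\prob{\bigcap_{i\in S}\event_i}\ge 2^{-\gamma'\ell}$, there is a coordinate $j\notin S$ with $\prob{\event_j \mid \bigcap_{i\in S}\event_i}\le x+o(1)$. To prove it I would set $E=\bigcap_{i\in S}\event_i$, note that $\prob{E}\ge 2^{-\gamma'\ell}$ forces the KL-divergence of the $E$-conditioned query distribution from the original product query distribution to be at most $\gamma'\ell$, and then use the product structure over the $\ell$ coordinates to conclude that for an average $j\notin S$, the marginal distribution of coordinate $j$'s queries under $E$ is statistically close to the unconditioned (product) one. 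This lets the two provers, using shared randomness, sample the queries in all coordinates other than $j$ conditioned on $E$ and on coordinate $j$'s queries equalling their actual received pair, then apply the fixed $\ell$-fold strategy and output their coordinate-$j$ answers; the resulting randomized single-copy strategy for $\tilde\gset$ has value at least $\prob{\event_j\mid E}-o(1)$, and since the value of $\tilde\gset$ is $x$, the lemma follows.

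Given the lemma, I would finish by a peeling argument: starting from $S=\emptyset$, repeatedly add the coordinate $j$ promised by the lemma while $\prob{\bigcap_{i\in S}\event_i}\ge 2^{-\gamma'\ell}$; each step multiplies the intersection probability by a factor at most $x+o(1)<1$, so after $O(\ell)$ steps it drops below $2^{-\gamma'\ell}$, and since $v\le\prob{\bigcap_{i\in S}\event_i}$ for every set we build, $v\le 2^{-\gamma'\ell}$; absorbing the $o(1)$ slack and the constant factor $\log(1/x)$ into the exponent (using that $x$ is a fixed constant $<1$) gives $v\le x^{\gamma'\ell}$. The hard part --- and the reason the naive bound $v\le x^{\ell}$ is simply false --- is precisely that the provers may correlate their behavior across the $\ell$ coordinates; all the difficulty lies in the information-theoretic machinery (the divergence bound on the conditional query distribution and the correlated-sampling step) showing that such correlations can help only by a bounded amount, so that exponential decay at \emph{some} rate still survives. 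One also needs the bounded answer length of $\tilde\gset$ for the correlated-sampling step --- which holds in our application, where $\tilde\gset=\gset$ has constant answer size --- and this is exactly where Raz's bound carries its alphabet-size dependence.
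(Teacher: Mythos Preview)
The paper does not prove this theorem at all: it is stated with citations to \cite{RazParallelRep,HolensteinParallelRep,RaoParallelRep} and used purely as a black box, exactly as you anticipated in your first sentence. Your sketch of the Raz--Holenstein information-theoretic argument is a reasonable high-level outline of the standard proof, but no such sketch appears in the paper, and none is expected.
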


\begin{corollary}\label{cor: parallel repetition}
There is some constant $0<\gamma<1$, such that, if $G$ is a \yi, then $\gset^\ell$ has value $1$, and if $G$ is a \ni, then $\gset^\ell$ has value at most  $2^{-\gamma\ell}$.
\end{corollary}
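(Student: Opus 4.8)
\textbf{Proof proposal for Corollary~\ref{cor: parallel repetition}.}
The plan is to combine the value bounds for the single-round protocol $\gset$, established in the paragraphs on the two-prover protocol above, with the Parallel Repetition theorem. Recall that we showed: if $G$ is a \yi, then $\gset$ has value $1$ (the provers fix a valid coloring and answer consistently), and if $G$ is a \ni with respect to the constant $\eps$ of Theorem~\ref{thm: GAPCOL is hard}, then $\gset$ has value at most $1-\eps/2$. So it suffices to track what happens to each of these two bounds under $\ell$ rounds of parallel repetition.

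For the \yi case, I would simply apply the Parallel Repetition theorem to $\gset$ with $x=1$: the value of $\gset^\ell$ is $1^{\gamma'\ell}=1$, as claimed.

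For the \ni case, let $x_0$ denote the value of $\gset$, so $x_0\le 1-\eps/2<1$. By the Parallel Repetition theorem the value of $\gset^\ell$ is $x_0^{\gamma'\ell}$, and since the map $x\mapsto x^{\gamma'\ell}$ is nondecreasing on $[0,1]$, this is at most $(1-\eps/2)^{\gamma'\ell}$. Now I would rewrite this as $2^{-\gamma\ell}$ by setting $\gamma=\gamma'\cdot\log_2\!\paren{\frac{1}{1-\eps/2}}$; this is a fixed positive constant because $\gamma'$ is the absolute constant from the Parallel Repetition theorem and $\eps$ is the fixed constant from Theorem~\ref{thm: GAPCOL is hard}. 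If the resulting $\gamma$ happens to be at least $1$, I would simply replace it by, say, $1/2$, which only weakens the upper bound $2^{-\gamma\ell}$ and hence keeps it valid while ensuring $0<\gamma<1$.

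There is essentially no real obstacle here: the corollary is a direct consequence of the stated theorem together with the already-proved value gap for $\gset$. The only points requiring a line of care are the monotonicity argument used to pass from ``value at most $1-\eps/2$'' to ``value of $\gset^\ell$ at most $(1-\eps/2)^{\gamma'\ell}$'' (the Parallel Repetition theorem is phrased for a game of a given value, so one invokes it for the game $\gset$ of value $x_0$ and then compares $x_0^{\gamma'\ell}$ with $(1-\eps/2)^{\gamma'\ell}$), and the bookkeeping to verify that $\gamma$ can be chosen as an absolute constant in $(0,1)$ independent of $n$ and $\ell$.
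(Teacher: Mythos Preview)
Your proposal is correct and matches the paper's intent: the corollary is stated without proof in the paper, as an immediate consequence of the Parallel Repetition theorem together with the already-established value bounds for the single-round game $\gset$. One small stylistic point: for the \yi case it is slightly more natural to argue directly that playing the valid-coloring strategy independently in each of the $\ell$ rounds yields a perfect strategy for $\gset^\ell$, rather than invoking the Parallel Repetition theorem at $x=1$; but your argument is also fine given how the theorem is phrased here.
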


We now summarize the parameters and introduce some basic notation:

\begin{itemize}
\item Let $\qset^E$ denote the set of all possible queries to the edge-prover, so each query is an $\ell$-tuple of edges. Then $|\qset^E|=m^{\ell}=(5n/2)^{\ell}$. Each query has $6^{\ell}$ possible answers -- $6$ colorings per edge. The set of feasible answers is the same for each edge-query, and we denote it by $\aset^E$.

\item Let $\qset^V$ denote the set of all possible queries to the vertex-prover, so each query is an $\ell$-tuple of vertices. Then $|\qset^V|=n^{\ell}$. Each query has $3^{\ell}$ feasible answers -- $3$ colorings per vertex. The set of feasible answers is the same for each vertex-query, and we denote it by $\aset^V$.

\item We think about the verifier as choosing a number of random bits, that determine the choices of the queries $Q\in \qset^E$ and $Q'\in \qset^V$ that it sends to the provers. We sometimes call each such random choice a ``random string''. The set of all such choices is denoted by $\rset$, where for each $R\in \rset$, we denote $R=(Q,Q')$, with $Q\in \qset^E$, $Q'\in \qset^V$ --- the two queries sent to the two provers when the verifier chooses $R$. Then $|\rset|=(2m)^{\ell}=(5n)^{\ell}$, and each random string $R\in \rset$ is chosen with the same probability.

\item It is important to note that each query $Q=(e_1,\ldots,e_{\ell})\in \qset^E$ of the edge-prover participates in exactly $2^{\ell}$ random strings (one random string for each choice of one endpoint per edge of $\set{e_1,\ldots,e_{\ell}}$), while each query $Q'=(v_1,\ldots,v_{\ell})$ of the vertex-prover participates in exactly $5^{\ell}$ random strings (one random string for each choice of an edge incident to each of $v_1,\ldots,v_{\ell}$).
\end{itemize}

A function $f: \qset^E\cup \qset^V\rightarrow \aset^E\cup \aset^V$ is called a \emph{global assignment of answers to queries} iff for every query $Q\in \qset^E$ to the edge-prover, $f(Q)\in \aset^E$, and for every query $Q'\in\qset^V$ to the vertex-prover, $f(Q')\in \aset^V$.  We say that $f$ is a \emph{perfect global assignment} iff for every random string $R=(Q^E,Q^V)$, $(f(Q^E),f(Q^V))$ is a matching pair of answers. The following simple theorem, whose proof appears in Section~\ref{appdx: proof of yi-partition-of-answers thm} of the Appendix, shows that in the \yi, there are many perfect global assignments, that neatly partition all answers to the edge-queries.

\begin{theorem} \label{thm: yi-partition-of-answers}
Assume that $G$ is a \yi. Then there are $6^{\ell}$ perfect global assignments $f_1,\ldots,f_{6^{\ell}}$ of answers to queries, such  that:

\begin{itemize}
\item for each query $Q\in \qset^E$ to the edge-prover, for each possible answer $A\in \aset^E$, there is exactly one index $1\leq i\leq 6^{\ell}$ with $f_i(Q)=A$; and

\item for each query $Q'\in \qset^V$ to the vertex-prover, for each possible answer $A'\in \aset^V$ to $Q'$, there are exactly $2^{\ell}$ indices $1\leq i\leq 6^{\ell}$, for which $f_i(Q')=A'$.
\end{itemize}
\end{theorem}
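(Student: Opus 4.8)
The plan is to build the $6^\ell$ perfect global assignments explicitly, starting from a single valid coloring and composing it, coordinate by coordinate, with tuples of permutations of the color set. Concretely: since $G$ is a \yi, fix once and for all a valid coloring $\chi\colon V\to\cset$, so that every edge of $G$ has endpoints of distinct colors under $\chi$. Let $\Pi=(\pi_1,\dots,\pi_\ell)$ range over $\ell$-tuples of permutations of the three-element set $\cset$; there are exactly $6^\ell=|\aset^E|$ such tuples, and I will assign one global assignment $f_\Pi$ to each. For an edge-query $Q=(e_1,\dots,e_\ell)\in\qset^E$ with $e_i=(u_i,v_i)$, let $f_\Pi(Q)$ be the answer whose $i$-th coordinate assigns color $\pi_i(\chi(u_i))$ to $u_i$ and color $\pi_i(\chi(v_i))$ to $v_i$; since $\chi$ is valid and $\pi_i$ is a bijection these two colors differ, so $f_\Pi(Q)\in\aset^E$. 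For a vertex-query $Q'=(w_1,\dots,w_\ell)\in\qset^V$, let $f_\Pi(Q')$ have $i$-th coordinate $\pi_i(\chi(w_i))$, which lies in $\aset^V$ trivially. Enumerating the $6^\ell$ tuples $\Pi$ in an arbitrary order yields the desired $f_1,\dots,f_{6^\ell}$.

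The first thing to verify is that each $f_\Pi$ is perfect. Take any random string $R=(Q,Q')$ with $Q=(e_1,\dots,e_\ell)$ and $Q'=(v_1,\dots,v_\ell)$, where for each $i$ the vertex $v_i$ is one of the two endpoints of $e_i$. By construction, the color assigned to $v_i$ by the $i$-th coordinate of $f_\Pi(Q)$ is $\pi_i(\chi(v_i))$, which is exactly the $i$-th coordinate of $f_\Pi(Q')$; hence $(f_\Pi(Q),f_\Pi(Q'))$ is a matching pair. As this holds for every $R$, $f_\Pi$ is a perfect global assignment.

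The two counting properties then reduce to two elementary facts about the symmetric group on three elements: (i) given distinct colors $a,b$ and distinct colors $c,d$, there is a \emph{unique} permutation of $\cset$ with $a\mapsto c$ and $b\mapsto d$ (the image of the third color is forced); and (ii) given a color $a$ and a target color $c$, there are \emph{exactly two} permutations of $\cset$ with $a\mapsto c$. For a fixed edge-query $Q=(e_1,\dots,e_\ell)$, $e_i=(u_i,v_i)$, and a fixed answer $A=(A_1,\dots,A_\ell)\in\aset^E$: each $A_i$ prescribes distinct colors to $u_i,v_i$ and $\chi(u_i)\neq\chi(v_i)$, so fact (i) pins down a unique $\pi_i$ realizing $A_i$, hence a unique tuple $\Pi$, i.e.\ exactly one index $i$ with $f_i(Q)=A$. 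Likewise, for a fixed vertex-query $Q'=(w_1,\dots,w_\ell)$ and answer $A'=(c_1,\dots,c_\ell)\in\aset^V$, fact (ii) gives exactly two choices of $\pi_i$ with $\pi_i(\chi(w_i))=c_i$, independently over $i$, so exactly $2^\ell$ tuples $\Pi$ and thus exactly $2^\ell$ indices $i$ with $f_i(Q')=A'$. I do not expect a real obstacle here; the single point meriting care is that $\Pi\mapsto f_\Pi(Q)$ maps the $6^\ell$ tuples to the set $\aset^E$ of the same cardinality $6^\ell$, so the injectivity furnished by fact (i) already yields the ``exactly one'' statement (and incidentally shows the $f_i$ are pairwise distinct).
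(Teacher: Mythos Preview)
Your proposal is correct and follows essentially the same approach as the paper: fix a valid coloring $\chi$, index the global assignments by $\ell$-tuples of permutations of $\cset$ (the paper equivalently indexes by vectors in $\{1,\dots,6\}^\ell$), define $f_\Pi$ coordinate-wise via $\pi_i\circ\chi$, and verify perfectness and the two counting claims using the elementary facts (i) and (ii) about permutations of a three-element set. The argument and its structure match the paper's proof in Appendix~\ref{appdx: proof of yi-partition-of-answers thm}.
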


\paragraph{Two Graphs.}
Given a 3COL(5) instance $G$ with $|V(G)|=n$, and an integer $\ell>0$, we associate a graph $H$, that we call the \emph{constraint graph}, with it. For every query $Q\in \qset^E\cup \qset^V$, there is a vertex $v(Q)$ in $H$, while for each random string $R=(Q,Q')$, there is an edge $e(R)=(v(Q),v(Q'))$. Notice that $H$ is a bipartite graph. We denote by $U^E$ the set of its vertices corresponding to the edge-queries, and by $U^V$ the set of its vertices corresponding to the vertex-queries. Recall that $|U^E|=(5n/2)^{\ell}$, $|U^V|=n^{\ell}$; the degree of every vertex in $U^E$ is $2^{\ell}$; the degree of every vertex in $U^V$ is $5^{\ell}$, and $|E(H)|=|\rset|=(5n)^{\ell}$.

Assume now that we are given some subgraph $H'\subseteq H$ of the constraint graph. We build a bipartite graph $L(H')$ associated with it (this graph takes into account the answers to the queries; it may be convenient  for now  to think that $H'=H$, but later we will use smaller sub-graphs of $H$). The vertices of $L(H')$ are partitioned into two subsets:

\begin{itemize}
\item For each edge-query $Q\in \qset^E$ with $v(Q)\in H'$, for each possible answer $A\in \aset^E$ to $Q$, we introduce a vertex $v(Q,A)$. We denote by $S(Q)$ the set of these $6^{\ell}$ vertices corresponding to $Q$, and we call them a \emph{group representing $Q$}. We denote by $\hat U^E$ the resulting set of vertices:

\[\hat U^E=\set{v(Q,A)\mid (Q\in \qset^E\mbox{ and } v(Q)\in H'), A\in \aset^E}.\]

\item For each vertex-query $Q'\in \qset^V$ with $v(Q')\in H'$, for each possible answer $A'\in \aset^V$ to $Q'$, we introduce $2^{\ell}$ vertices $v_1(Q',A'),\ldots,v_{2^{\ell}}(Q',A')$. We call all these vertices \emph{the copies of answer $A'$ to query $Q'$}. We denote by $S(Q')$ the set of all vertices corresponding to $Q'$:

\[S(Q')=\set{v_i(Q',A')\mid A'\in \aset^V, 1\leq i\leq 2^{\ell}},\]
 
 so $|S(Q')|=6^{\ell}$. We call $S(Q')$ \emph{the group representing $Q'$}. We denote by $\hat U^V$ the resulting set of vertices:
 
 \[\hat U^V=\set{v_i(Q',A')\mid  (Q'\in \qset^V\mbox{ and } v(Q')\in H'), A'\in \aset^V, 1\leq i\leq 2^{\ell}}.\]
 
\end{itemize}

The final set of vertices of $L(H')$ is $\hat U^E\cup \hat U^V$. We define the set of edges of $L(H')$ as follows. For each random string $R=(Q^E,Q^V)$ whose corresponding edge $e(R)$ belongs to $H'$, for every answer $A\in \aset^E$ to $Q^E$, let $A'\in \aset^V$ be the unique answer to $Q^V$ consistent with $A$. For each copy $v_i(Q^V,A')$ of answer $A'$ to query $Q^V$, we add an edge $(v(Q^E,A),v_i(Q^V,A'))$. Let 

\[E(R)=\set{(v(Q^E,A),v_i(Q^V,A'))\mid A\in \aset^E, A'\in \aset^V, \mbox{ $A$ and $A'$ are consistent answers to $R$}, 1\leq i\leq 2^{\ell}}\]

be the set of the resulting edges, so $|E(R)|=6^{\ell}\cdot 2^{\ell}=12^{\ell}$. We denote by $\hat E$ the set of all edges of $L(H')$ --- the union of the sets $E(R)$ for all random strings $R$ with $e(R)\in H'$.

Recall that we have defined a partition of the set $\hat U^E$ of vertices into groups $S(Q)$ --- one group for each query $Q\in \qset^E$ with $v(Q)\in H'$. We denote this partition by $\uset_1$. Similarly, we have defined a partition of $\hat U^V$ into groups, that we denote by $\uset_2=\set{S(Q')\mid Q'\in \qset^V\mbox{ and } v(Q')\in H'}$.
Recall that for each group $U\in \uset_1\cup\uset_2$, $|U|=6^{\ell}$.

Finally, we need to define bundles of edges in graph $L(H')$. 
For every vertex $v\in \hat U^E\cup \hat U^V$, we define a partition $\bset(v)$ of the set of all edges incident to $v$ in $L(H')$ into bundles, as follows. Fix some group $U\in \uset_1\cup \uset_2$ that we have defined. If there is at least one edge of $L(H')$ connecting $v$ to the vertices of $U$, then we define a bundle containing all edges connecting $v$ to the vertices of $U$, and add this bundle to $\bset(v)$. Therefore, if $v\in S(Q)$, then for each random string $R$ in which $Q$ participates, with $e(R)\in H'$, we have defined one bundle of edges in $\bset(v)$. For each vertex $v\in \hat U^E\cup \hat U^V$, the set of all edges incident to $v$ is thus partitioned into a collection of bundles, that we denote by $\bset(v)$, and we denote $\beta(v)=|\bset(v)|$. Note that, if $v\in S(Q)$ for some query $Q\in \qset^E\cup \qset^V$, then $\beta(v)$ is exactly the degree of the vertex $v(Q)$ in graph $H'$. Note also that $\bigcup_{v\in V(H')}\bset(v)$ does not define a partition of the edges of $\hat E$, as each such edge belongs to two bundles. However, each of $\bigcup_{v\in \hat U^E}\bset(v)$ and $\bigcup_{v\in \hat U^V}\bset(v)$ does define a partition of $\hat E$.
It is easy to verify that every bundle that we have defined contains exactly $2^{\ell}$ edges.

\label{--------------------------------------sec: WGP---------------------------------------}
\section{The $(r,h)$-Graph Partitioning Problem}\label{secL WGP}

We will use a graph partitioning problem as a proxy in order to reduce the 3COL(5) problem to \NDPgrid. The specific graph partitioning problem is somewhat complex. We first define a simpler variant of this problem, and then provide the intuition and the motivation for the more complex variant that we eventually use. 

In the basic $(r,h)$-Graph Partitioning problem, that we denote by \WGP, we are given a bipartite graph $\tG=(V_1,V_2,E)$ and two integral parameters $h,r>0$. A solution consists of a partition $(W_1,\ldots,W_r)$ of $V_1\cup V_2$ into $r$ subsets, and for each $1\leq i\leq r$, a subset $E_i\subseteq E(W_i)$ of edges, such that $|E_i|\leq h$. The goal is to maximize $\sum_i|E_i|$.

One intuitive way to think about the \WGP problem is that we would like to partition the vertices of $\tG$ into $r$ clusters, that are roughly balanced (in terms of the number of edges in each cluster). However, unlike the standard balanced partitioning problems, that attempt to minimize the number of edges connecting the different clusters, our goal is to maximize the total number of edges that remain in the clusters. We suspect that the \WGP problem is very hard to approximate; in particular it appears to be somewhat similar to the Densest $k$-Subgraph problem (\DkS). Like in the \DkS problem, we are looking for dense subgraphs of $\tG$ (the subgraphs $\tG[W_i]$), but unlike \DkS, where we only need to find one such dense subgraph, we would like to partition all vertices of $\tG$ into a prescribed number of dense subgraphs. We can prove that \NDPgrid is at least as hard as \WGP (to within polylogarithmic factors; see below), but unfortunately we could not prove strong hardness of approximation results for \WGP. In particular, known hardness proofs for \DkS do not seem to generalize to this problem. To overcome this difficulty,  we define a slightly more general problem, and then use it as a proxy in our reduction.
Before defining the more general problem, we start with intuition.

\paragraph{Intuition:} Given a 3COL(5) instance $G$, we can construct the graph  $H$, and the graph $L(H)$, as described above. We  can then view $L(H)$ as an instance of \WGP, with $r=6^{\ell}$ and $h=|\rset|$. Assume that $G$ is a \yi. Then we can use the perfect global assignments $f_1,\ldots,f_r$ of answers to the queries, given by Theorem~\ref{thm: yi-partition-of-answers}, in order to partition the vertices of $L(H)$ into $r=6^{\ell}$ clusters $W_1,\ldots,W_r$, as follows. Fix some $1\leq i\leq r$. For each query $Q\in \qset^E$ to the edge-prover, set $W_i$ contains a single vertex $v(Q,A)\in S(Q)$, where $A=f_i(Q)$. For each query $Q'\in \qset^V$ to the vertex-prover, set $W_i$ contains a single vertex $v_j(Q',A')$, where $A'=f_i(Q')$, and the indices $j$ are chosen so that every vertex $v_j(Q',A')$ participates in exactly one cluster $W_i$. From the construction of the graph $L(H)$ and the properties of the assignments $f_i$ guaranteed by Theorem~\ref{thm: yi-partition-of-answers}, we indeed obtain a partition $W_1,\ldots,W_r$ of the vertices of $L(H)$. For each $1\leq i\leq r$, we then set $E_i=E(W_i)$. Notice that for every query $Q\in \qset^E\cup \qset^V$, exactly one vertex of $S(Q)$ participates in each cluster $W_i$. Therefore, for each group $U\in \uset_1\cup \uset_2$, each cluster $W_i$ contains exactly one vertex from this group. It is easy to verify that for each $1\leq i\leq r$, for each random string $R\in \rset$, set $E_i$ contains exactly one edge of $E(R)$, and so $|E_i|=|\rset|=h$, and the solution value is $h\cdot r$. Unfortunately, in the \ni, we may still obtain a solution of a high value, as follows: instead of distributing, for each query $Q\in \qset^E\cup \qset^V$, the vertices of $S(Q)$ to different clusters $W_i$, we may put all vertices of $S(Q)$ into a single cluster. While in our intended solution to the \WGP problem instance each cluster can be interpreted as an assignment of answers to the queries, and the number of edges in each cluster is bounded by the number of random strings satisfied by this assignment, we may no longer use this interpretation with this new type of solutions\footnote{We note that a similar problem arises if one attempts to design naive hardness of approximation proofs for \DkS.}. Moreover, unlike in the \yi solutions,
if we now consider some cluster $W_i$, and some random string $R\in \rset$, we may add several edges of $E(R)$ to $E_i$, which will further allow us to accumulate a high solution value. One way to get around this problem is to impose additional restrictions on the feasible solutions to the \WGP problem, which are consistent with our \yi solution, and thereby obtain a more general (and hopefully more difficult) problem. But while doing so we still need to ensure that we can prove that \NDPgrid remains at least as hard as the newly defined problem. Recall the definition of bundles in graph $L(H)$. It is easy to verify that in our intended solution to the \yi, every bundle contributes at most one edge to the solution. This motivates our definition of a slight generalization of the \WGP problem, that we call $(r,h)$-Graph Partitioning with Bundles, or \WGPwB.


The input to \WGPwB problem is almost the same as before: we are given a bipartite graph $\tG=(V_1,V_2,E)$, and two integral parameters $h,r>0$. Additionally, we are given a partition $\uset_1$ of $V_1$ into groups, and a partition $\uset_2$ of $V_2$ into groups, so that for each $U\in \uset_1\cup \uset_2$, $|U|=r$.
Using these groups, we define bundles of edges as follows: for every vertex $v\in V_1$, for each group $U\in \uset_2$, such that some edge of $E$ connects $v$ to a vertex of $U$, the set of all edges that connect $v$ to the vertices of $U$ defines a single bundle. Similarly, for every vertex $v\in V_2$, for each group $U\in \uset_1$, all edges that connect $v$ to the vertices of $U$ define a bundle. We denote, for each vertex $v\in V_1\cup V_2$, by $\bset(v)$ the set of all bundles into which the edges incident to $v$ are partitioned, and we denote by $\beta(v)=|\bset(v)|$ the number of such  bundles. We also denote by $\bset=\bigcup_{v\in V_1\cup V_2}\bset(v)$ -- the set of all bundles. Note that as before, $\bset$ is not a partition of $E$, but every edge of $E$ belongs to exactly two bundles: one bundle in $\bigcup_{v\in V_1}\bset(v)$, and one bundle in $\bigcup_{v\in V_2}\bset(v)$.
As before, we need to compute a partition $(W_1,\ldots,W_r)$ of $V_1\cup V_2$ into $r$ subsets, and for each $1\leq i\leq r$, select a subset $E_i\subseteq E(W_i)$ of edges, such that $|E_i|\leq h$.  But now there is an additional restriction: we require that for each $1\leq i\leq r$, for every bundle $B\in \bset$, $E_i$ contains at most one edge $e\in B$. As before, the goal is to maximize $\sum_i|E_i|$.


\paragraph{Valid Instances and Perfect Solutions.}

Given an instance $\iset=(\tG=(V_1,V_2,E), \uset_1,\uset_2,h,r)$ of \WGPwB, let $\beta^*(\iset)=\sum_{v\in V_1}\beta(v)$. Note that for any solution to $\iset$, the 
solution value must be bounded by $\beta^*(\iset)$, since for every vertex $v\in V_1$, for every bundle $B\in \bset(v)$, at most one edge from the bundle may contribute to the solution value. In all instances of \WGPwB that we consider, we always set $h=\beta^*(\iset)/r$. Next, we define valid instances; they are defined so that the instances that we obtain when reducing from 3COL(5) are always valid, as we show later.

\begin{definition}
We say that instance $\iset$ of \WGPwB is \emph{valid} iff $h=\beta^*(\iset)/r$ and $h\geq \max_{v\in V_1\cup V_2}\set{\beta(v)}$.
\end{definition}

Recall that for every group $U\in \uset_1\cup \uset_2$, $|U|=r$. We now define perfect solutions to the \WGPwB problem. We will ensure that our intended solutions in the \yi are always perfect, as we show later.

\begin{definition}
We say that a solution $((W_1,\ldots,W_r),(E_1,\ldots,E_r))$ to a valid \WGPwB instance $\iset$ is \emph{perfect} iff:

\begin{itemize}
\item For each group $U\in \uset_1\cup \uset_2$, exactly one vertex of $U$ belongs to each cluster $W_i$;  and
\item For each $1\leq i\leq r$, $|E_i|=h$.
\end{itemize}
\end{definition}

Note that the value of a perfect solution to a valid instance $\iset$ is $h\cdot r=\beta^*(\iset)$, and this is the largest value that any solution can achieve.


\subsection{From 3COL(5) to \WGPwB}

 Suppose we are given an instance $G$ of the  3COL(5) problem, and an integral parameter $\ell>0$ (the number of repetitions). Consider the corresponding constraint graph $H$, and suppose we are given some subgraph $H'\subseteq H$. We define an instance $\iset(H')$ of \WGPwB, as follows.

\begin{itemize}
\item The underlying graph is $L(H')=(\hat U^E,\hat U^V,\hat E)$;

\item The parameters are $r=6^{\ell}$ and $h=|E(H')|$;

\item The partition $\uset_1$ of $\hat U^E$ is the same as before: the vertices of $\hat U^E$ are partitioned into groups $S(Q)$ --- one group for each query $Q\in \qset^E$ with $v(Q)\in V(H')$. Similarly, the partition $\uset_2$ of $\hat U^V$ into groups is also defined exactly as before, and contains, for each query $Q'\in \qset^V$ with $v(Q')\in V(H')$, a group $S(Q')$. (Recall that for all $Q\in \qset^E\cup \qset^V$ with $v(Q)\in H'$, $|S(Q)|=6^{\ell}$).
\end{itemize}


\begin{claim}\label{claim: yi of 3col gives perfect solution to WGP}
Let $G$ be an instance of the 3COL(5) problem, $\ell>0$ an integral parameter, and $H'\subseteq H$ a subgraph of the corresponding constraint graph. Consider the corresponding instance $\iset(H')$ of \WGPwB.  Then $\iset(H')$ is a valid instance, and moreover, if $G$ is a \yi, then there is a perfect solution to $\iset(H')$.
\end{claim}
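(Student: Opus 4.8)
First I would verify that $\iset(H')$ is a valid instance. Recall that validity requires $h = \beta^*(\iset(H'))/r$ and $h \geq \max_{v \in \hat U^E \cup \hat U^V} \beta(v)$. For the first condition, observe that $\beta^*(\iset(H')) = \sum_{v \in \hat U^E} \beta(v)$, and that $\beta(v)$ for $v \in S(Q)$ equals the degree of $v(Q)$ in $H'$, by the construction of bundles. Since each edge-query vertex $v(Q)$ has $6^\ell$ copies in $\hat U^E$ (one per answer $A \in \aset^E$), we get $\beta^*(\iset(H')) = 6^\ell \cdot \sum_{Q : v(Q) \in H'} \deg_{H'}(v(Q)) = 6^\ell \cdot |E(H')|$. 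Dividing by $r = 6^\ell$ gives exactly $|E(H')| = h$, as required. For the second condition, I need $h = |E(H')| \geq \beta(v)$ for every vertex $v$; since $\beta(v) = \deg_{H'}(v(Q)) \leq |E(H')|$ trivially (the degree of any vertex in $H'$ is at most the total number of edges of $H'$), this holds. This establishes validity.

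Next I would construct the perfect solution assuming $G$ is a \yi. The plan is to invoke Theorem~\ref{thm: yi-partition-of-answers}, which gives $6^\ell$ perfect global assignments $f_1, \ldots, f_{6^\ell}$ with the stated covering properties. For each $1 \leq i \leq r = 6^\ell$, I define cluster $W_i$ exactly as in the \emph{Intuition} paragraph preceding the claim: for each edge-query $Q$ with $v(Q) \in H'$, put into $W_i$ the single vertex $v(Q, f_i(Q))$; for each vertex-query $Q'$ with $v(Q') \in H'$, put into $W_i$ one copy $v_j(Q', f_i(Q'))$, where the copy indices are assigned so that across the $6^\ell$ clusters, each of the $6^\ell$ vertices in $S(Q')$ is used exactly once. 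This last assignment is possible precisely because, by the second bullet of Theorem~\ref{thm: yi-partition-of-answers}, each answer $A' \in \aset^V$ is the value $f_i(Q')$ for exactly $2^\ell$ indices $i$, and there are exactly $2^\ell$ copies $v_1(Q',A'),\ldots,v_{2^\ell}(Q',A')$ of that answer; by the first bullet, the edge-query side automatically uses each vertex of $S(Q)$ exactly once. Hence $(W_1, \ldots, W_r)$ is genuinely a partition of $V(L(H')) = \hat U^E \cup \hat U^V$, and each group $U \in \uset_1 \cup \uset_2$ contributes exactly one vertex to each $W_i$ — the first condition of perfectness.

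Finally I would set $E_i = E(W_i)$ for each $i$ and check that $|E_i| = h = |E(H')|$ and that the bundle constraint is satisfied. For the bundle constraint: a bundle in $\bset(v)$ for $v \in S(Q^E)$ consists of all edges from $v$ to a single group $S(Q^V)$, corresponding to a random string $R = (Q^E, Q^V)$; since $W_i$ contains exactly one vertex of $S(Q^V)$, at most one edge of that bundle lies in $E(W_i)$, and similarly for bundles on the vertex side. For the count: fix $i$ and a random string $R = (Q^E, Q^V)$ with $e(R) \in H'$. Since $f_i$ is a perfect global assignment, $(f_i(Q^E), f_i(Q^V))$ is a matching pair, so the vertex $v(Q^E, f_i(Q^E)) \in W_i$ is joined to the copy $v_j(Q^V, f_i(Q^V)) \in W_i$ by an edge of $E(R)$; this is the unique edge of $E(R)$ with both endpoints in $W_i$, because $W_i$ meets $S(Q^E)$ and $S(Q^V)$ in exactly one vertex each and $E(R)$ only joins vertices of $S(Q^E)$ to vertices of $S(Q^V)$. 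Thus $E(W_i)$ contains exactly one edge per random string $R$ with $e(R) \in H'$, giving $|E_i| = |E(H')| = h$, the second condition of perfectness. The main obstacle here is purely bookkeeping — correctly matching the multiplicities in Theorem~\ref{thm: yi-partition-of-answers} (the $2^\ell$ copies versus the $2^\ell$ indices) to the copy-index assignment on the vertex-prover side — but no genuine difficulty arises, since the graph $L(H')$ and the assignments $f_i$ were engineered precisely so that this matching works out.
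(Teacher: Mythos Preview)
Your proposal is correct and follows essentially the same approach as the paper's own proof: both verify validity by computing $\beta^*$ via degrees in $H'$, and both construct the perfect solution by invoking Theorem~\ref{thm: yi-partition-of-answers} to define the clusters $W_i$ from the perfect global assignments $f_i$, then set $E_i=E(W_i)$ and check the bundle and cardinality conditions exactly as you do. Your write-up is in fact slightly more explicit than the paper's about why the copy-index assignment on the vertex-prover side works and why each random string contributes exactly one edge to $E_i$, but the underlying argument is identical.
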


\begin{proof}
We first verify that $\iset(H')$ is a valid instance of \WGPwB. Recall that for a query $Q\in \qset^E$ to the edge-prover and an answer $A\in \aset^E$, the number of bundles incident to vertex $v(Q,A)$ in $L(H')$ is exactly the degree of the vertex $v(Q)$ in graph $H'$. The total number of bundles incident to the vertices of $S(Q)$ is then the degree of $v(Q)$ in $H'$ times $|\aset^E|$. Therefore, $\beta^*(\iset)=\sum_{v(Q,A)\in \hat U^E}|\beta(v)|=|E(H')|\cdot |\aset^E|=h\cdot r$. It is now immediate to verify that $h=\beta^*(\iset)/r$. Similarly, for a vertex $v=v_j(Q',A')\in U^V$, the number of bundles incident to $v$ is exactly the degree of $v$ in $H'$. Since $h=|E(H')|$, we get that $h\geq \max_{v\in V_1\cup V_2}\set{\beta(v)}$, and so $\iset(H')$ is a valid instance. 

Assume now that $G$ is a \yi. We  define a perfect solution $((W_1,\ldots,W_r),(E_1,\ldots,E_r))$ to this instance.
Let $\set{f_1,f_2,\ldots,f_{6^\ell}}$ be the collection of perfect global assignments of answers to the queries, given by Theorem~\ref{thm: yi-partition-of-answers}. Recall that $\uset_1=\set{S(Q)\mid Q\in\qset^E, v(Q)\in H'}$ and $\uset_2=\set{S(Q')\mid Q'\in\qset^V, v(Q')\in H'}$, where each group in $\uset_1\cup \uset_2$ has cardinality $r=6^{\ell}$. We now fix some $1\leq i\leq r$, and define the set $W_i$ of vertices. For each query $Q\in \qset^E$ to the edge-prover, if $A=f_i(Q)$, then we add the vertex $v(Q,A)$ to $W_i$. For each query $Q'\in \qset^V$ to the vertex-prover, if $A'=f_i(Q')$, then we select some index $1\leq j\leq 2^{\ell}$, and add the vertex $v_j(Q',A')$ to $W_i$. The indices $j$ are chosen so that every vertex $v_j(Q',A')$ participates in at most one cluster $W_i$. From the construction of the graph $L(H')$ and the properties of the assignments $f_i$ guaranteed by Theorem~\ref{thm: yi-partition-of-answers}, it is easy to verify that $W_1,\ldots,W_r$ partition the vertices of $L(H')$, and moreover, for each group $S(Q)\in \uset_1\cup \uset_2$, each set $W_i$ contains exactly one vertex of $S(Q)$. 

Finally, for each $1\leq i\leq r$, we set $E_i=E(W_i)$. We claim that for each bundle $B\in \bset$, set $E_i$ may contain at most one edge of $B$. Indeed, let $v\in W_i$ be some vertex, let $U\in \uset_1\cup \uset_2$ be some group, and let $B$ be the bundle containing all edges that connect $v$ to the vertices of $U$. Since $W_i$ contains 
 exactly one vertex of $U$, at most one edge of $B$ may belong to $E_i$. 


It now remains to show that $|E_i|=h$ for all $i$. Fix some $1\leq i\leq r$. It is easy to verify that for each random string $R=(Q,Q')$ with $e(R)\in H'$, set $W_i$ contains a pair of vertices $v(Q,A)$, $v_j(Q',A')$, where $A$ and $A'$ are matching answers to $Q$ and $Q'$ respectively, and so the corresponding edge connecting this pair of vertices in $L(H')$ belongs to $E_i$. Therefore, $|E_i|=|E(H')|=h$.
\end{proof}

\subsection{From \WGPwB to \NDP} 

The following definition will be useful for us later, when we extend our results to \NDP and \EDP on wall graphs.

\begin{definition}
Let $\pset$ be a set of paths in a grid $\hat G$.
We say that $\pset$ is a \emph{spaced-out} set iff for each pair $P,P' \in \pset$ of paths, $d(V(P),V( P')) \geq 2$, and all paths in $\pset$ are internally disjoint from the boundaries of the grid $\hat G$.    
\end{definition}

Note that if $\pset$ is a set of paths that is spaced-out, then all paths in $\pset$ are mutually node-disjoint.
The following theorem is central to our hardness of approximation proof.

\begin{theorem}\label{thm: from WGP to NDP}
There  is a constant $c^*>0$, and there is an efficient randomized algorithm, that, given a valid instance $\iset=(\tilde G, \uset_1,\uset_2,h,r)$ of \WGPwB with $|E(\tilde G)|=M$, constructs an instance $\hat{\iset}=(\hat G,\mset)$ of \NDPgrid with $|V(\hat G)|=O(M^4\log^2M)$, such that the following hold:

\begin{itemize}
\item If $\iset$ has a perfect solution (of value $\beta^*=\beta^*(\iset)$), then with probability at least $\half$ over the construction of $\hat \iset$, instance $\hat\iset$ has a solution $\pset$ that routes at least $\frac{\beta^*}{c^*\log^3M}$ demand pairs, such that the paths in $\pset$ are spaced-out; and

\item There is a deterministic efficient algorithm, that, given a solution $\pset^*$ to the \NDPgrid problem instance $\hat{\iset}$, constructs a solution to the \WGPwB instance $\iset$, of value at least $\frac{|\pset^*|}{c^*\cdot \log^3M}$.
\end{itemize}
\end{theorem}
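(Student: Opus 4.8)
The plan is to realize the bipartite graph $\tG$ of the \WGPwB instance inside a grid $\hat G$ by means of a carefully randomized drawing, and then to exploit the fact that in a grid, disjoint routing paths behave like a planar drawing of $\tG$ with small crossing number, so that a large routing forces a dense subgraph that can be cut into $r$ roughly-balanced pieces. Concretely, the first step is to construct $\hat G$: I would take a square grid of side length $\Theta(M^2\log M)$ (so $|V(\hat G)|=O(M^4\log^2M)$), partition it into $\Theta(M^2)$ "boxes", and associate one box to each vertex of $\tG$. Inside the box for a vertex $v$ I place $\beta(v)$ demand-pair endpoints, one for each bundle incident to $v$; the two endpoints of a demand pair corresponding to a bundle live in the two boxes of the bundle's endpoints. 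The random part of the construction is the placement of boxes on a grid-of-boxes and (if needed) a random permutation of row/column positions, so that with probability at least $1/2$ the "box-routing graph" has a good expansion-type property: any small set of demand pairs routed in $\hat G$ can be captured by a small set of grid vertices/edges whose deletion disconnects the routed pairs.

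The second step is the "perfect solution $\Rightarrow$ good routing" direction. Given a perfect solution $((W_1,\dots,W_r),(E_1,\dots,E_r))$ to $\iset$, I would use each cluster $W_i$ — which by perfectness contains exactly one vertex from each group — to define an "assignment" and route, for each $i$, the $|E_i|=h$ demand pairs corresponding to the edges of $E_i$; since at most one edge per bundle is selected, the endpoints used for cluster $i$ are all distinct. The routing inside the grid is done cluster-by-cluster using disjoint horizontal/vertical "channels" reserved in the grid, losing only a $\mathrm{polylog}(M)$ factor because the grid has $\Theta(\log M)$ slack in its dimensions; a standard argument (routing in a sub-grid via monotone staircase paths, using that the number of pairs per cluster is $h=\beta^*/r$) lets us route $\ge \beta^*/(c^*\log^3 M)$ pairs total, and the reserved channels guarantee the paths are spaced-out (pairwise distance $\ge 2$ and internally disjoint from the boundary). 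The third step is the converse: given any routing $\pset^*$ in $\hat G$, interpret the routing paths as a drawing of a subgraph of $\tG$ (each routed demand pair = one edge); node-disjointness in the grid bounds the number of pairwise crossings, hence by the planar separator / crossing-number machinery the drawn subgraph has a balanced separator of size $\mathrm{polylog}(M)\cdot\sqrt{|\pset^*|}$ (or, more carefully, we iterate balanced separators to get a partition into $r$ parts). Recursively separating yields a partition $(W_1,\dots,W_r)$ of $V_1\cup V_2$ such that the number of edges of $\tG$ surviving inside the parts, with at most one per bundle per part, is $\ge |\pset^*|/(c^*\log^3 M)$; we then truncate each $E_i$ to size $\le h$, which is lossless because the total was already concentrated and $h=\beta^*/r$ with the valid-instance guarantee $h\ge\max_v\beta(v)$.

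The main obstacle I expect is the converse direction — extracting a genuine \WGPwB solution, i.e. a partition into exactly $r$ parts each respecting the bundle constraint and the cap $h$, out of an arbitrary grid routing. Two subtleties must be handled: (i) controlling the recursion depth of the balanced-separator argument so that we split into $r=6^\ell$ parts while losing only $\mathrm{polylog}(M)$ overall rather than $\mathrm{polylog}(M)^{\log r}$ — this is where the randomized box placement and the slack in the grid dimensions are used, to ensure separators are genuinely tiny relative to the number of routed pairs they separate; and (ii) ensuring that when we assign each grid vertex (hence each group $S(Q)$'s representative) to a single part, the bundle structure is respected, so that no part accumulates two edges of the same bundle. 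I would manage (ii) by making the grid gadget for each group $S(Q)$ "long and thin" so that any spaced-out routing can use at most one endpoint-vertex of the group per separator-region, which translates the grid's geometric disjointness directly into the combinatorial one-edge-per-bundle constraint; and I would manage (i) by the standard trick of charging the separator cost against edges and using that each separator is applied to a region with proportionally many routed pairs, so the geometric series of losses telescopes to a single $\mathrm{polylog}(M)$ factor, absorbed into $c^*$.
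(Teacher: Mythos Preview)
Your high-level architecture is right, and your backward direction (routing $\Rightarrow$ partition) is essentially the paper's approach: interpret the routed pairs as a drawing of a subgraph of $\tG$ with few crossings (bounded by the block sizes), then recursively apply a planar-separator theorem to obtain $1/32$-edge-balanced cuts of small value until all pieces have $O(h\,\mathrm{polylog}\,M)$ edges. Two of your stated worries there are unfounded, though. For (ii), the bundle constraint is automatic: two edges in the same bundle correspond to demand pairs sharing a source or a destination, so node-disjointness of $\pset^*$ already guarantees that the routed edge set $E^*$ has at most one edge per bundle. No gadget is needed. For (i), the recursion depth is only $O(\log M)$ (each cut is $1/32$-balanced), and the separator value is at most $|E(\H)|/(64\log M)$, so the total fraction of edges lost over all levels is at most a constant; there is no $\mathrm{polylog}(M)^{\log r}$ blowup to worry about, and the randomized placement plays no role here.

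The real gap is in your forward direction. Your plan is to route ``cluster-by-cluster using disjoint horizontal/vertical channels reserved in the grid.'' This does not work with the actual layout. In the construction, each vertex of $\tG$ gets its own block on a single row, and vertices of the same group are placed in consecutive blocks. A perfect solution picks \emph{one vertex from each group} for cluster $W_i$, so the sources (and destinations) for cluster $W_i$ are scattered along the entire row, interleaved with the sources for all other clusters. There is no way to reserve spatially disjoint channels per cluster, because the endpoints of different clusters are not spatially separated. The paper's actual argument is quite different and is where the randomness is used: one defines a global ordering $\sigma_{\mset'}$ of destinations (grouped by cluster), places an intermediate row $R$ of ``waypoints'' $x_1,\dots,x_{M'}$ in that order, and routes sources to waypoints and waypoints to destinations separately. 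The crux is a \emph{distance property}: for consecutive sources $s,s'$ in $S(\mset')$, the number of destinations between their mates in $\sigma_{\mset'}$ is at most $d(s,s')/4$. This is proved by a probabilistic argument using the random ordering $\rho'$ of the groups in $\uset_2$ (this is the only randomness in the construction), together with two regularization steps and a concentration bound on random permutations. Without this distance-property argument you cannot route in the top half of the grid, and your ``channels'' idea does not substitute for it.
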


We note that the theorem is slightly stronger than what is needed in order to prove hardness of approximation of \NDPgrid: if $\iset$ has a perfect solution, then it is sufficient to ensure that the set $\pset$ of paths in the corresponding \NDPgrid instance $\hat \iset$ is node-disjoint. But we will use the stronger guarantee that it is spaced-out when extending our results to \NDP and \EDP in wall graphs. Note that in the second assertion we are only guaranteed that the paths in $\pset^*$ are node-disjoint.
The proof of the theorem is somewhat technical and is deferred to Section~\ref{sec: from WGP to NDP}.

Assume now that we are given an instance $G$ of 3COL(5), an integral parameter $\ell>0$, and a subgraph $H'\subseteq H$ of the corresponding constraint graph. Recall that we have constructed a corresponding instance $\iset(H')$ of \WGPwB. We can then use Theorem~\ref{thm: from WGP to NDP} to construct a (random) instance of \NDPgrid, that we denote by $\hat{\iset}(H')$.
Note that $|E(L(H'))|\leq 2^{O(\ell)}\cdot |E(H)|\leq n^{O(\ell)}$. Let $\hat c$ be a constant, such that $|E(L(H'))|\leq n^{\hat c\ell}$ for all $H'\subseteq H$; we can assume w.l.o.g. that $\hat c>1$. We can also assume w.l.o.g. that $c^*\geq 1$, where $c^*$ is the constant from Theorem~\ref{thm: from WGP to NDP}, and we denote $\cyi=(\hat c\cdot c^*)^3$.
We obtain the following immediate corollary of Theorem~\ref{thm: from WGP to NDP}:

\begin{corollary}\label{cor: YI for 3COL has good routing}
Suppose we are given 3COL(5) instance $G$ that is a \yi, an integer $\ell>0$, and a subgraph $H'\subseteq H$ of the corresponding constraint graph. Then with probability at least $\half$, instance $\hat{\iset}(H')$ of \NDPgrid
has a solution of value at least $\frac{|E(H')|\cdot 6^{\ell}}{\cyi\ell^3\log^3n}$, where $n=|V(G)|$. (The probability is over the random construction of $\hat\iset(H')$).
\end{corollary}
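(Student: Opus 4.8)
The plan is to simply chain together the two results that precede the corollary: Claim~\ref{claim: yi of 3col gives perfect solution to WGP} and Theorem~\ref{thm: from WGP to NDP}. First, I would invoke Claim~\ref{claim: yi of 3col gives perfect solution to WGP}, which tells us that $\iset(H')$ is a valid instance of \WGPwB, and, since $G$ is a \yi, that it admits a perfect solution. By the definition of a perfect solution to a valid instance, the value of this solution is $\beta^*(\iset(H'))=h\cdot r$, where by construction $h=|E(H')|=|E(L(H'))|$\footnote{Recall that $|E(L(H'))|$ counts the edges of the bipartite graph underlying the \WGPwB instance, while $h$ is set to $|E(H')|$ in the reduction; the relevant quantity for Theorem~\ref{thm: from WGP to NDP} is $M=|E(\tilde G)|=|E(L(H'))|$.} and $r=6^{\ell}$. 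Hence $\iset(H')$ has a perfect solution of value $\beta^*=|E(H')|\cdot 6^{\ell}$.

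Next I would apply Theorem~\ref{thm: from WGP to NDP} to $\iset(H')$, setting $M=|E(L(H'))|$. The theorem produces the (random) instance $\hat\iset(H')$ of \NDPgrid, and guarantees that, with probability at least $\frac12$ over the construction, $\hat\iset(H')$ has a (spaced-out, hence node-disjoint) solution routing at least $\frac{\beta^*}{c^*\log^3 M}$ demand pairs. Substituting $\beta^*=|E(H')|\cdot 6^{\ell}$ gives a routing of value at least $\frac{|E(H')|\cdot 6^{\ell}}{c^*\log^3 M}$.

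It remains to bound $M$ from above in the terms appearing in the corollary statement. Using the stated bound $M=|E(L(H'))|\le n^{\hat c\ell}$, we get $\log M\le \hat c\,\ell\log n$, so $\log^3 M\le \hat c^3\ell^3\log^3 n$, and therefore $c^*\log^3 M\le c^*\hat c^3\ell^3\log^3 n=\cyi\,\ell^3\log^3 n$ by the definition $\cyi=(\hat c\cdot c^*)^3$ (together with $c^*,\hat c\ge 1$). Plugging this in yields that with probability at least $\frac12$, $\hat\iset(H')$ has a solution of value at least $\frac{|E(H')|\cdot 6^{\ell}}{\cyi\,\ell^3\log^3 n}$, as claimed. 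There is no real obstacle here: the corollary is a bookkeeping consequence of the two prior results, and the only mild subtlety is being careful about which quantity plays the role of $M$ in Theorem~\ref{thm: from WGP to NDP} and that the probability $\frac12$ is inherited verbatim from that theorem.
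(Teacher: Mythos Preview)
Your proposal is correct and follows essentially the same approach as the paper: invoke Claim~\ref{claim: yi of 3col gives perfect solution to WGP} to get a perfect solution of value $\beta^*=|E(H')|\cdot 6^{\ell}$, apply Theorem~\ref{thm: from WGP to NDP}, and bound $\log M\le \hat c\,\ell\log n$ to absorb the constants into $\cyi$. One small slip: the equality ``$h=|E(H')|=|E(L(H'))|$'' is false (each edge of $H'$ spawns $12^{\ell}$ edges in $L(H')$), but since you correctly take $M=|E(L(H'))|$ afterward and never actually use $h=M$, this does not affect the argument.
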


\begin{proof}
From Claim~\ref{claim: yi of 3col gives perfect solution to WGP}, instance $\iset(H')=(L(H'), \uset_1,\uset_2,r,h)$ of \WGPwB is a valid instance, and it has a perfect solution, whose value must be $\beta^*=\beta^*(\iset(H'))=h\cdot r=|E(H')|\cdot 6^{\ell}$. From Theorem~\ref{thm: from WGP to NDP}, with probability at least $1/2$, instance $\hat \iset(H')$ of \NDPgrid has a solution of value at least 
$\frac{|E(H')|\cdot 6^{\ell}}{c^*\log^3M}$, where $M=|E(L(H'))|$. Since $\log M\leq \hat c \ell\log n$, and the corollary follows.
\end{proof}

\label{--------------------------------------sec: hardness proof---------------------------------------}
\section{The Hardness Proof}\label{sec: the hardness proof}
Let $G$ be an input instance of 3COL(5). 
Recall that $\gamma$ is the absolute constant from the Parallel Repetition Theorem (Corollary~\ref{cor: parallel repetition}).
We will set the value of the parameter $\ell$ later, ensuring that $\ell>\log^2 n$, where $n=|V(G)|$.
Let $\alpha^*=2^{\Theta(\ell/\log n)}$ be the hardness of approximation factor that we are trying to achieve.

Given the tools developed in the previous sections, a standard way to prove hardness of \NDPgrid would work as follows. Given an instance $G$ of 3COL(5) and the chosen parameter $\ell$, construct the corresponding graph $H$ (the constraint graph), together with the graph $L(H)$. We then construct an  instance $\iset(H)$ of \WGPwB as described in the previous section, and convert it into an instance $\hat{\iset}(H)$ of \NDPgrid. 

We note that, if $G$ is a \yi, then from Corollary~\ref{cor: YI for 3COL has good routing}, with constant probability there is a solution to $\hat \iset(H)$ of value $\frac{|\rset|\cdot 6^{\ell}}{\cyi\ell^3\log^3n}$. 
Assume now that $G$ is a \ni. If we could show that any solution to the corresponding \WGPwB instance $\iset(H)$ has value less than $\frac{|\rset|\cdot 6^{\ell}}{\cyi^2\cdot \alpha^*\ell^6\log^6n}$, we would be done. Indeed, in such a case,  from Theorem~\ref{thm: from WGP to NDP}, every solution to the \NDPgrid instance $\hat\iset(H)$ routes fewer than $\frac{ |\rset|\cdot 6^{\ell}}{\cyi\alpha^*\ell^3\log^3n}$ demand pairs. If we assume for contradiction that an $\alpha^*$-approximation algorithm exists for \NDPgrid, then, if $G$ is a \yi, the algorithm would have to return a solution to $\hat\iset(H)$ routing at least $\frac{|\rset|\cdot 6^{\ell}}{\cyi\alpha^*\ell^3\log^3n}$ demand pairs, while, if $G$ is a \ni, no such solution would exist. Therefore, we could use the $\alpha^*$-approximation algorithm for \NDPgrid to distinguish between the \yis and the \nis of 3COL(5).

Unfortunately, we are unable to prove this directly. Our intended solution to the \WGPwB instance $\iset(H)$, defined over the graph $L(H)$, for each query $Q\in \qset^E\cup\qset^V$, places every vertex of $S(Q)$ into a distinct cluster. Any such solution will indeed have a low value in the \ni. But a cheating solution may place many vertices from the same set $S(Q)$ into some cluster $W_j$. Such a solution may end up having a high value, but it may not translate into a good strategy for the two provers, that satisfies a large fraction of the random strings. In an extreme case, for each query $Q$, we may place all vertices of $S(Q)$ into a single cluster $W_i$. The main idea in our reduction is to overcome this difficulty by noting that such a cheating solution can be used to compute a partition of the constraint graph $H$. The graph is partitioned into as many as $6^{\ell}$ pieces, each of which is significantly smaller than the original graph $H$. At the same time, a large fraction of the edges of $H$ will survive the partitioning procedure. Intuitively, if we now restrict ourselves to only those random strings $R\in \rset$, whose corresponding edges have survived the partitioning procedure, then the problem does not become significantly easier, and we can recursively apply the same argument to the resulting subgraphs of $H$. We make a significant progress in each such iteration, since the sizes of the resulting sub-graphs of $H$ decrease very fast.
The main tool that allows us to execute this plan is the following theorem.

\begin{theorem}\label{thm: good strategy or partition}
Suppose we are given an instance $G$ of the 3COL(5) problem with $|V(G)|=n$, and an integral parameter $\ell>\log^2 n$, together with some subgraph $H'\subseteq H$ of the corresponding constraint graph $H$, and a parameter $P>1$. Consider the corresponding instance $\iset(H')$ of \WGPwB, and assume that we are given a solution to this instance of value at least $|E(H')|\cdot 6^{\ell}/\alpha$, where $\alpha=\cyi^2\cdot \alpha^*\cdot \ell^6\log^6n$. Then there is a  randomized algorithm whose running time is $O\left (n^{O(\ell)}\cdot  \log P\right )$, that returns one of the following:

\begin{itemize}
\item Either a randomized strategy for the two provers that satisfies, in expectation, more than a $2^{-\gamma \ell/2}$-fraction of the constraints $R\in \rset$ with $e(R)\in E(H')$; or

\item A collection $\hset$ of disjoint sub-graphs of $H'$, such that for each $H''\in \hset$, $|E(H'')|\leq |E(H')|/2^{\gamma\ell/16}$, and with probability at least $(1-1/P)$, $\sum_{H''\in \hset}|E(H'')|\geq \frac{c'|E(H')|}{\ell^2 \alpha^2}$, for some universal constant $c'$.
\end{itemize}
\end{theorem}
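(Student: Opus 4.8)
The plan is to analyze the given solution $((W_1,\ldots,W_r),(E_1,\ldots,E_r))$ to $\iset(H')$ of value at least $|E(H')|\cdot 6^\ell/\alpha$ and dichotomize according to how "spread out" the vertices of each group $S(Q)$ are across the clusters. First I would set up, for each query vertex $v(Q)\in V(H')$, the relevant statistic: since $|S(Q)|=6^\ell=r$, a perfect-type solution would place one vertex of $S(Q)$ in each cluster, but a cheating solution may concentrate them. For each cluster $W_i$ and each group $U=S(Q)$, let $n_i(Q)=|W_i\cap S(Q)|$. The key observation is that an edge $e\in E_i$ that lies in the bundle between some $v\in W_i$ and group $S(Q)$ "charges" the pair $(i,Q)$; because $E_i$ takes at most one edge per bundle, the total contribution of cluster $W_i$ is at most $\sum_{Q:\,v(Q)\in H'} (\text{number of bundles of }W_i\text{ touching }S(Q))$. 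I would partition the clusters (or the charge) into a "good" part — where within a cluster each group contributes essentially one vertex, which I can convert into a randomized prover strategy — and a "bad" part, where groups are concentrated, which I convert into a partition of $H'$ into small pieces.

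Concretely, the approach for the first bullet: if a constant fraction of the total solution value comes from clusters $W_i$ in which, for most contributing queries $Q$, $W_i$ contains exactly one vertex of $S(Q)$, then sampling a random such cluster $W_i$ and reading off, for each $Q$ with $|W_i\cap S(Q)|=1$, the answer $A$ with $v(Q,A)\in W_i$ (and an analogous decoding on the vertex side), gives a randomized assignment of answers to queries. An edge $e(R)\in E_i$ with $R=(Q^E,Q^V)$ and $e(R)\in E(H')$ certifies that the decoded answers $A^E=f(Q^E)$, $A^V=f(Q^V)$ are consistent, i.e. $R$ is satisfied. Since $|E_i|$ can be as large as $\Theta(|E(H')|\cdot 6^\ell/\alpha)/r = \Theta(|E(H')|/\alpha)$ on average over the good clusters, a random good cluster satisfies in expectation an $\Omega(1/\alpha)$-fraction of the constraints in $E(H')$ — and here is where I must be careful: I need this to exceed $2^{-\gamma\ell/2}$, which holds precisely because $\alpha = \cyi^2\alpha^*\ell^6\log^6 n = 2^{\Theta(\ell/\log n)}\cdot\mathrm{poly}(\ell,\log n)$ and $\ell>\log^2 n$, so $1/\alpha \gg 2^{-\gamma\ell/2}$. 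Averaging over the $\log P$ independent repetitions stated in the running time boosts the success probability; the expectation statement is what we actually output.

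For the second bullet: if instead most of the solution value is "concentrated," I build $\hset$ as follows. Assign each query vertex $v(Q)\in V(H')$ to the cluster index $i(Q)=\arg\max_i |W_i\cap S(Q)|$ (breaking ties arbitrarily), and for each $i$ let $V_i = \{v(Q): i(Q)=i\}$, $H_i = H'[V_i]$; these are disjoint subgraphs of $H'$. An edge $e(R)=(v(Q^E),v(Q^V))$ survives (lies in some $H_i$) iff $i(Q^E)=i(Q^V)$. The concentration regime gives that clusters are "large" for their assigned queries, which via a counting argument (each cluster $W_i$ has $\sum_Q n_i(Q)$ vertices, and $\sum_i\sum_Q n_i(Q) = \sum_Q |S(Q)| = r\cdot|V(H')|$, while the per-bundle constraint caps $|E_i|$) forces that a cluster contributing $t$ edges must have its heavy groups concentrated enough that the assigned edges among them number $\Omega(t^2/(\text{something}))$; summing and using Cauchy–Schwarz against the value lower bound $|E(H')|\cdot 6^\ell/\alpha = |E(H')|r/\alpha$ yields $\sum_i |E(H_i)| \ge \Omega(|E(H')|/(\ell^2\alpha^2))$, matching the claimed $c'|E(H')|/(\ell^2\alpha^2)$. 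The size bound $|E(H_i)|\le |E(H')|/2^{\gamma\ell/16}$ I would enforce by a cleanup step: any $H_i$ that is too large is itself split (e.g. arbitrarily into $2^{\gamma\ell/16}$ nearly-equal edge-pieces, or recursively), which only loses a constant fraction; I should double-check the arithmetic that $2^{\gamma\ell/16}$ pieces suffice and that $\ell>\log^2 n$ keeps all error terms negligible. The randomness and the $(1-1/P)$ guarantee come from the sampling needed to estimate which regime we are in and to certify the bound, repeated $O(\log P)$ times.

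The main obstacle I anticipate is the quantitative bookkeeping in the concentration case: correctly relating (i) the number of edges a cluster $W_i$ contributes to the solution, (ii) the distribution of $W_i$'s vertices among the groups $S(Q)$, and (iii) the number of edges of $H'$ that get "captured" when queries are assigned to their heavy clusters — all while respecting the at-most-one-edge-per-bundle constraint, which couples the edge side and vertex side of $L(H')$. Getting the exact exponents $\ell^2\alpha^2$ in the denominator and $2^{\gamma\ell/16}$ in the size bound (rather than, say, $\alpha$ or $\alpha^3$) is the delicate part, and is presumably why the statement carries those specific constants; everything else is a relatively routine averaging/repetition argument.
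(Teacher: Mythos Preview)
Your high-level dichotomy is right in spirit, but there are three genuine gaps that would prevent the argument from closing.

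\textbf{The size bound in Case 2 cannot be obtained by ``cleanup splitting.''} If some $H_i$ has more than $|E(H')|/2^{\gamma\ell/16}$ edges and you split its vertex set arbitrarily into $2^{\gamma\ell/16}$ pieces, you do \emph{not} lose only a constant fraction of edges: you may lose essentially all of them (think of a clique-like graph). In the paper the size bound is intrinsic to the construction: a threshold $z=2^{\gamma\ell/8}$ is fixed, and only ``heavy'' pairs $(R,W_i)$ with $|W_i\cap S(Q^E)|,|W_i\cap S(Q^V)|>z$ contribute to Case~2. Heaviness plus the not-terrible condition forces $|E_i\cap E(R)|\ge z/(8\alpha)$, so each edge $e(R)$ placed into $H_i$ is witnessed by at least $z/(8\alpha)$ distinct edges of $E_i$; since $|E_i|\le h=|E(H')|$, this gives $|E(H_i)|\le 8\alpha h/z \le |E(H')|/2^{\gamma\ell/16}$ automatically. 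Your proposal has no analogue of $z$, and without it there is no mechanism producing small pieces.

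\textbf{The $\arg\max$ assignment does not make both endpoints of an edge land in the same piece.} For an edge $e(R)=(v(Q^E),v(Q^V))$ to survive into some $H_i$ you need $i(Q^E)=i(Q^V)$. There is no reason the heaviest cluster for $S(Q^E)$ coincides with the heaviest cluster for $S(Q^V)$; the Cauchy--Schwarz sketch you give does not address this coupling at all. The paper handles it by choosing a \emph{random ordering} of the clusters and assigning $v(Q)$ to the first cluster in the ordering whose intersection with $S(Q)$ exceeds a common dyadic threshold $2^{j^*-1}$. A counting argument then shows that for a fixed good heavy $R$, the set of indices that are simultaneously heavy for both $Q^E$ and $Q^V$ and lie in the chosen dyadic class is an $\Omega(1/(\ell\alpha))$-fraction of all indices heavy for either side, so the first such index captures $e(R)$ with probability $\Omega(1/(\ell\alpha))$. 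This random-ordering step is also where the randomness (and the $\log P$ repetitions for the $(1-1/P)$ guarantee) actually enter; it is not used for ``estimating which regime.''

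\textbf{The Case~1 threshold is too tight.} Requiring ``exactly one vertex of $S(Q)$ in $W_i$'' is not what makes the prover strategy work; the paper allows up to $z$ vertices (the light case), picks a random answer among those present, and uses the not-terrible condition to bound the \emph{other} side by $8\alpha z$. This is what yields satisfaction probability $\Omega(1/(\alpha z^2))$ per light constraint, which beats $2^{-\gamma\ell/2}$ precisely because $z=2^{\gamma\ell/8}$ and $\alpha\le 2^{\gamma\ell/32}$. With your ``exactly one'' criterion you would need to argue that this stringent case still carries an $\Omega(1/\alpha)$ fraction of the value, which is not clear and is not how the paper proceeds.
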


We postpone the proof of the theorem to the following subsection, after we complete the hardness proof for \NDPgrid.
We assume for contradiction that we are given a factor-$\alpha^*$ approximation algorithm $\aset$ for \NDPgrid (recall that $\alpha^*=2^{\Theta(\ell/\log n)}$). We will use this algorithm to distinguish between the \yis and the \nis of 3COL(5). Suppose we are given an instance $G$ of 3COL(5).

For an integral parameter $\ell>\log^2 n$, let $H$ be the constraint graph corresponding to $G$ and $\ell$. We next show a randomized algorithm, that uses $\aset$ as a subroutine, in order to determine whether $G$ is a \yi or a \ni. The running time of the algorithm is $n^{O(\ell)}$. 

Throughout the algorithm, we maintain a collection $\hset$ of sub-graphs of $H$, that we sometimes call clusters. Set $\hset$ is in turn partitioned into two subsets: set $\hset_1$ of active clusters, and set $\hset_2$ of inactive clusters. Consider now some inactive cluster $H'\in \hset_2$. This cluster defines a $2$-prover game $\gset(H')$, where the queries to the two provers are $\set{Q^E\in \qset^E \mid v(Q^E)\in V(H')}$, and $\set{Q^V\in \qset^V \mid v(Q^V)\in V(H')}$ respectively, and the constraints of the verifier are $\rset(H')=\set{R\in \rset\mid e(R)\in E(H')}$. For each inactive cluster $H'\in \hset_2$, we will store a (possibly randomized) strategy of the two provers for game $\gset(H')$, that satisfies at least a  $2^{-\gamma\ell/2}$-fraction of the constraints in $\rset(H')$.

At the beginning, $\hset$ contains a single cluster -- the graph $H$, which is active. The algorithm is executed while $\hset_1\neq \emptyset$, and its execution is partitioned into phases. In every phase, we process each of the clusters that belongs to $\hset_1$ at the beginning of the phase. Each phase is then in turn is partitioned into iterations, where in every iteration we process a distinct active cluster $H'\in \hset_1$. We describe an iteration when an active cluster $H'\in \hset_1$ is processed in Figure~\ref{fig: iteration} (see also the flowchart in Figure~\ref{fig: flowchart}).

\begin{figure}[h]
\program{Iteration for Processing a Cluster $H'\in \hset_1$}{
\begin{enumerate}
\item Construct an instance $\iset(H')$ of \WGPwB. 

\item Use Theorem~\ref{thm: from WGP to NDP} to independently construct $n^{4\ell}$ instances $\hat{\iset}(H')$ of the \NDPgrid problem.
\item Run the $\alpha^*$-approximation algorithm $\aset$ on each such instance $\hat{\iset}(H')$. If the resulting solution, for each of these instances, routes fewer than $\frac{|E(H')|6^{\ell}}{\cyi \alpha^*\ell^3\log^3n}$ demand pairs, halt and return ``$G$ is a \ni''. \label{alg returns NI 1}

\item Otherwise, fix any instance $\hat\iset(H')$ for which the algorithm returned a solution routing at least $\frac{|E(H')|6^{\ell}}{\cyi \alpha^*\ell^3\log^3n}$ demand pairs. Denote $|E(L(H'))|=M$, and recall that $M\leq n^{\hat c \ell}$. Use Theorem~\ref{thm: from WGP to NDP} to compute a solution $((W_1,\ldots,W_r),(E_1,\ldots,E_r))$ to the instance $\iset(H')$ of the \WGPwB problem, of value at least:

\[\frac{|E(H')|\cdot 6^{\ell}}{(\cyi  \alpha^*\ell^3\log^3n)(c^*\log^3M)}\geq \frac{|E(H')|\cdot 6^{\ell}}{\cyi c^* \hat c^3 \alpha^*\ell^6\log^6n}\geq  \frac{|E(H')|\cdot 6^{\ell}}{\cyi^2  \alpha^*\ell^6\log^6n}= \frac{|E(H')|\cdot 6^{\ell}}{\alpha}.\]

\item Apply the algorithm from Theorem~\ref{thm: good strategy or partition} to this solution, with the parameter $P=n^{c\ell}$, for a sufficiently large constant $c$.
 
 \begin{enumerate}
 \item If the outcome is a strategy for the provers satisfying more than a $2^{-\gamma\ell/2}$-fraction of constraints $R\in \rset$ with $e(R)\in E(H')$, then declare cluster $H'$  inactive and move it from $\hset_1$ to $\hset_2$. Store the resulting strategy of the provers. 
 
 \item Otherwise, let $\tilde \hset$ be the collection of sub-graphs of $H'$ returned by the algorithm. If $\sum_{H''\in \hset}|E(H'')|< \frac{c'|E(H')|}{\ell^2 \alpha^2}$, then  return ``$G$ is a \ni''. Otherwise, remove $H'$ from $\hset_1$ and add all graphs of $\tilde \hset$ to $\hset_1$. \label{alg returns NI2}
 \end{enumerate}
 \end{enumerate}
}

\caption{An iteration description\label{fig: iteration}}
\end{figure}
 
 If the algorithm terminates with $\hset$ containing only inactive clusters, then we return ``$G$ is a \yi''.

\begin{figure}
\center
\includegraphics[width=17cm]{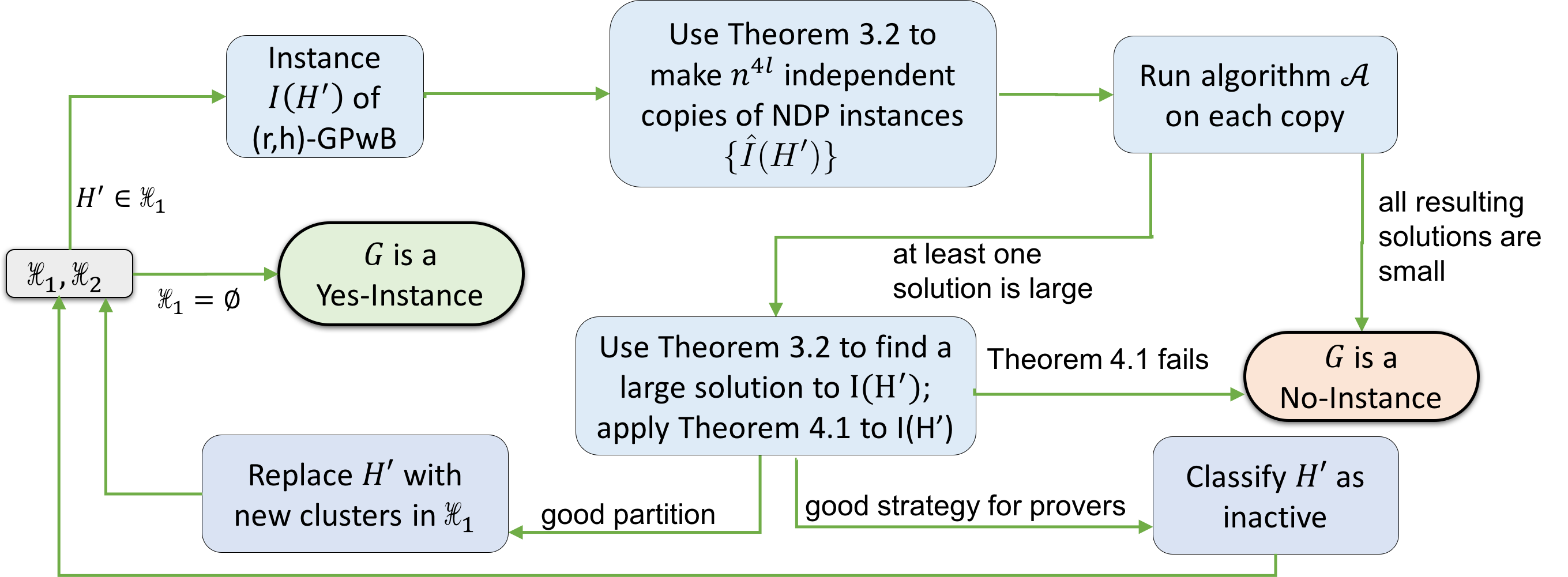}

\caption{Flowchart for an iteration execution} \label{fig: flowchart}
\end{figure}

 \paragraph{Correctness.}
 We establish the correctness of the algorithm in the following two lemmas.
 
 \begin{lemma}\label{lem: correctness soundness}
 If $G$ is a \yi, then with high probability, the algorithm returns ``$G$ is a \yi''.
 \end{lemma}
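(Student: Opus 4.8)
The plan is to trace through the algorithm in the \yi case and argue that none of the three ways the algorithm can declare ``$G$ is a \ni'' will trigger, except possibly with small probability, so that the algorithm terminates with all clusters inactive and correctly outputs ``$G$ is a \yi''. There are exactly three places where ``\ni'' is returned: Step~\ref{alg returns NI 1} (the approximation algorithm $\aset$ fails to route enough pairs on \emph{every} one of the $n^{4\ell}$ instances $\hat\iset(H')$), Step~\ref{alg returns NI2} (the partitioning produced by Theorem~\ref{thm: good strategy or partition} preserves too few edges), and implicitly the possibility that the algorithm runs forever / never empties $\hset_1$. I would handle these in turn.

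First, the key invariant: whenever a cluster $H'$ is processed, it is a subgraph of $H$, so $\iset(H')$ is a valid \WGPwB instance by Claim~\ref{claim: yi of 3col gives perfect solution to WGP}, and since $G$ is a \yi it has a perfect solution of value $|E(H')|\cdot 6^\ell$. By Corollary~\ref{cor: YI for 3COL has good routing}, each independently constructed instance $\hat\iset(H')$ has, with probability at least $\half$, a solution routing at least $\frac{|E(H')|\cdot 6^\ell}{\cyi\ell^3\log^3 n}$ demand pairs. Since $\aset$ is an $\alpha^*$-approximation algorithm, on each such ``good'' instance it returns a solution routing at least $\frac{|E(H')|\cdot 6^\ell}{\cyi\,\alpha^*\ell^3\log^3 n}$ pairs. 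We construct $n^{4\ell}$ independent instances, so the probability that \emph{all} of them fail to be good is at most $2^{-n^{4\ell}}$, which is negligible; hence Step~\ref{alg returns NI 1} does not trigger except with doubly-exponentially small probability per iteration. Next, whenever we reach Step~5 in the \yi case, we have a \WGPwB solution of value at least $|E(H')|\cdot 6^\ell/\alpha$, so Theorem~\ref{thm: good strategy or partition} applies: with probability at least $1-1/P = 1-n^{-c\ell}$ its second outcome (if it occurs) satisfies $\sum_{H''\in\tilde\hset}|E(H'')|\ge \frac{c'|E(H')|}{\ell^2\alpha^2}$, exactly the bound checked in Step~\ref{alg returns NI2}. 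So Step~\ref{alg returns NI2} also does not trigger except with probability $n^{-c\ell}$ per iteration.

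It remains to bound the total number of iterations and take a union bound. Each application of Theorem~\ref{thm: good strategy or partition} that outputs the partition replaces $H'$ by subgraphs $H''$ with $|E(H'')|\le |E(H')|/2^{\gamma\ell/16}$; thus the ``edge-size'' of clusters shrinks by a factor $2^{\gamma\ell/16}\ge 2$ from phase to phase, and since $|E(H)|=(5n)^\ell$, after $O(\log_{2^{\gamma\ell/16}}|E(H)|)=O(\log n/\ell\cdot\ell)=O(\log n)$ phases — more carefully, $O\!\left(\tfrac{\ell\log n}{\gamma\ell}\right)=O(\log n/\gamma)$ phases — every surviving active cluster would have fewer than one edge, hence no active clusters remain. (A cluster with no edges contributes nothing and has a trivial strategy, so it gets turned inactive.) Within each phase the number of clusters processed is at most $|E(H)|\le n^{O(\ell)}$, so the total number of iterations over the whole run is $n^{O(\ell)}$; combined with the per-iteration running time $O(n^{O(\ell)}\log P) = n^{O(\ell)}$ this also justifies the claimed $n^{O(\ell)}$ running time. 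Taking a union bound over all $n^{O(\ell)}$ iterations, the total failure probability is at most $n^{O(\ell)}\cdot\big(2^{-n^{4\ell}} + n^{-c\ell}\big)$, which is $o(1)$ (indeed $\le n^{-c\ell/2}$) provided $c$ is chosen a sufficiently large constant relative to the $O(\ell)$ in the iteration count. On this high-probability event, none of the ``\ni'' exits is taken, so the algorithm halts with $\hset_1=\emptyset$ and returns ``$G$ is a \yi''. The main obstacle — and the one point requiring genuine care rather than bookkeeping — is pinning down the phase-count bound and making sure the constant $c$ in $P=n^{c\ell}$ dominates the logarithm of the iteration count $n^{O(\ell)}$, so that the union bound still leaves the overall error probability small; everything else is a routine combination of Corollary~\ref{cor: YI for 3COL has good routing}, the approximation guarantee of $\aset$, and the success probability in Theorem~\ref{thm: good strategy or partition}.
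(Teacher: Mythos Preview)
Your proposal is correct and follows essentially the same approach as the paper's proof: analyze the two places (Steps~\ref{alg returns NI 1} and~\ref{alg returns NI2}) where the algorithm can output ``\ni'', bound the per-iteration failure probability using Corollary~\ref{cor: YI for 3COL has good routing} (with the $n^{4\ell}$ independent constructions) and the $1/P$ bound from Theorem~\ref{thm: good strategy or partition}, then union-bound over the $n^{O(\ell)}$ iterations. You actually spell out the termination/phase-count argument that the paper relegates to ``it is not hard to see,'' which is fine and arguably more complete.
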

 
 \begin{proof}
 Consider an iteration of the algorithm when an active cluster $H'$ is processed. Notice that the algorithm may only determine that $G$ is a \ni in Step~(\ref{alg returns NI 1}) or in Step~(\ref{alg returns NI2}). We now analyze these two steps.
   
 Consider first Step~(\ref{alg returns NI 1}).
 From Corollary~\ref{cor: YI for 3COL has good routing}, with probability at least $1/2$, a random graph $\hat \iset (H')$ has a solution of value at least $\frac{|E(H')|\cdot 6^{\ell}}{\cyi\ell^3\log^3n}$, and our $\alpha^*$-approximation algorithm to \NDPgrid must then return a solution of value at least $\frac{|E(H')|\cdot 6^{\ell}}{\cyi \alpha^*\ell^3\log^3n}$ . Since we use $n^{4\ell}$ independent random constructions of $\hat \iset (H')$, with high probability, for at least one of them, we will obtain a solution of value at least $\frac{|E(H')|6^{\ell}}{\cyi \alpha^*\ell^3\log^3n}$.  Therefore, with high probability our algorithm will not return ``$G$ is a \ni'' due to Step~(\ref{alg returns NI 1}) in this iteration.
 
 Consider now Step~(\ref{alg returns NI2}). The algorithm can classify $G$ as a \ni in this step only if $\sum_{H''\in \hset}|E(H'')|< \frac{c'|E(H')|}{\ell^2 \alpha^2}$. From Theorem~\ref{thm: good strategy or partition}, this happens with probability at most $1/P$, and from our setting of the parameter $P$ to be $n^{c\ell}$ for a large enough constant $c$, with high probability our algorithm will not return ``$G$ is a \ni'' due to Step~(\ref{alg returns NI2}) in this iteration.
 
 It is not hard to see that our algorithm performs $n^{O(\ell)}$ iterations, and so, using the union bound, with high probability, it will classify $G$ as a \yi.
 \end{proof}

 \begin{lemma}\label{lem: correctness completeness}
 If $G$ is a \ni, then the algorithm always returns ``$G$ is a \ni''.
 
 \end{lemma}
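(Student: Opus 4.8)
The plan is to argue by contradiction. Suppose $G$ is a \ni but the algorithm nonetheless returns ``$G$ is a \yi''. By construction this happens only if the main loop exits normally, that is, if the algorithm terminates with $\hset_1=\emptyset$ (so $\hset=\hset_2$ consists entirely of inactive clusters) and never returned ``$G$ is a \ni'' in Steps~(\ref{alg returns NI 1}) or~(\ref{alg returns NI2}). First I would record two invariants maintained throughout the execution. (i) The clusters in $\hset$ are always vertex-disjoint subgraphs of $H$: this follows by induction on the iterations, since partitioning a cluster $H'$ replaces it by vertex-disjoint subgraphs of $H'$ (Theorem~\ref{thm: good strategy or partition}), and the base case $\hset=\set{H}$ is trivial. (ii) For every inactive cluster $H'\in\hset_2$ the algorithm has stored a (possibly randomized) strategy of the two provers for the game $\gset(H')$ that satisfies, in expectation, more than a $2^{-\gamma\ell/2}$-fraction of the constraints in $\rset(H')$; since random strings correspond bijectively to edges of $H$, we have $|\rset(H')|=|E(H')|$, so this strategy satisfies more than $2^{-\gamma\ell/2}|E(H')|$ constraints. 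Finally, because the algorithm never halted with ``$G$ is a \ni'', every partitioning step it performed produced a collection $\tilde\hset$ with $\sum_{H''\in\tilde\hset}|E(H'')|\ge \frac{c'|E(H')|}{\ell^2\alpha^2}$, and, by Theorem~\ref{thm: good strategy or partition}, every such $H''$ has $|E(H'')|\le |E(H')|/2^{\gamma\ell/16}$.

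Next I would bound the recursion depth and account for the surviving edges. A cluster obtained after $d$ successive partitionings has at most $|E(H)|/2^{d\gamma\ell/16}=(5n)^\ell/2^{d\gamma\ell/16}$ edges, which is below $1$ once $d>16\log(5n)/\gamma$; since a cluster with at least one edge cannot be partitioned past this depth without triggering the ``\ni'' output, the recursion tree has depth $D=O(\log n)$, and the algorithm runs for $O(\log n)$ phases. Let $M_t$ denote the total number of edges over all clusters (active and inactive) at the start of phase $t$, so $M_1=|E(H)|=(5n)^\ell$. During a phase, every active cluster is either declared inactive (retaining all of its edges) or partitioned into subgraphs whose edges total at least $\frac{c'}{\ell^2\alpha^2}|E(H')|$; hence $M_{t+1}\ge \frac{c'}{\ell^2\alpha^2}M_t$. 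Iterating over the $O(\log n)$ phases gives, at termination, $\sum_{H'\in\hset_2}|E(H')| = M_{\mathrm{final}}\ge \left(\frac{c'}{\ell^2\alpha^2}\right)^{O(\log n)}|E(H)|$.

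Finally I would assemble a global strategy for the $\ell$-fold game $\gset^\ell$: by invariant (i) every query vertex lies in at most one cluster of $\hset_2$, so we may answer each query according to the stored strategy of its cluster (and arbitrarily for queries in no cluster). For any random string $R$ with $e(R)\in E(H')$, $H'\in\hset_2$, this global strategy agrees with the strategy of $H'$ on both queries of $R$, hence satisfies $R$ whenever that strategy does; thus it satisfies, in expectation, more than $2^{-\gamma\ell/2}\sum_{H'\in\hset_2}|E(H')|$ of the $|E(H)|=|\rset|$ constraints. Fixing the random coins, the value of $\gset^\ell$ is at least $2^{-\gamma\ell/2}\cdot\left(\frac{c'}{\ell^2\alpha^2}\right)^{O(\log n)}$. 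The last step is the parameter check: recall $\alpha=\cyi^2\cdot\alpha^*\ell^6\log^6 n$ with $\cyi=O(1)$, $\log\alpha^*=\Theta(\ell/\log n)$, and $\ell>\log^2 n$, so $\log\ell$ and $\log\log n$ are $o(\ell/\log n)$ and therefore $\log\alpha=\Theta(\ell/\log n)$ and $O(\log n)\cdot\log\!\left(\frac{\ell^2\alpha^2}{c'}\right)=\Theta(\ell)$; choosing the constant in $\alpha^*=2^{\Theta(\ell/\log n)}$ small enough (e.g. $\alpha^*=2^{\gamma^2\ell/(128\log n)}$) makes this quantity strictly less than $\gamma\ell/2$, whence $\left(\frac{c'}{\ell^2\alpha^2}\right)^{O(\log n)}>2^{-\gamma\ell/2}$ and the value of $\gset^\ell$ exceeds $2^{-\gamma\ell}$ --- contradicting Corollary~\ref{cor: parallel repetition} for a \ni. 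The main obstacle is exactly this final estimate: the per-phase ``loss factor'' $\Theta(\ell^2\alpha^2)$, compounded over the $\Theta(\log n)$ phases, must still leave more than a $2^{-\gamma\ell/2}$-fraction of the edges inside inactive clusters, which is what pins the achievable hardness factor $\alpha^*$ down to $2^{O(\ell/\log n)}$ (rather than, say, $\mathrm{poly}(\ell)$) and dictates the choice of $\ell$.
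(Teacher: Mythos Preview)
Your proposal is correct and follows essentially the same approach as the paper: bound the number of phases by $O(\log n)$ using the $2^{\gamma\ell/16}$ per-phase shrinkage of cluster sizes, track the total surviving edges via the per-phase loss factor $\Theta(\ell^2\alpha^2)$, combine the stored strategies of the inactive clusters into a global strategy, and finish with the parameter calculation that pins $\alpha^*$ to $2^{\Theta(\ell/\log n)}$. You spell out a bit more explicitly than the paper why the global strategy is well defined (via the vertex-disjointness invariant), but the argument is otherwise the same.
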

 \begin{proof}
From Corollary~\ref{cor: parallel repetition}, it is enough to show that, whenever the algorithm classifies $G$ as a \yi, there is a strategy for the two provers, that satisfies more than a fraction-$2^{-\gamma\ell}$ of the constraints in $\rset$. 
 
 Note that the original graph $H$ has  at most $n^{\hat c \ell}$ edges. In every phase, the number of edges in each active graph decreases by a factor of at least $2^{\gamma \ell/16}$. Therefore, the number of phases is bounded by $O(\log n)$. If the algorithm classifies $G$ as a \yi, then it must terminate when no active clusters remain. In every phase, the number of edges in $\bigcup_{H'\in \hset}E(H')$ goes down by at most a factor $\ell^2\alpha^2/c'$. Therefore, at the end of the algorithm:
 
 \[\sum_{H'\in \hset_2}|E(H')|\geq \frac{|\rset|}{(\ell^2\alpha^2/c')^{O(\log n)}}=\frac{|\rset|}{(\ell^{14}\cdot (\alpha^*)^2\cdot \log ^{12}n)^{O(\log n)}}=\frac{|\rset|}{(\alpha^*)^{O(\log n)}}.\] 
 
 By appropriately setting $\alpha^*=2^{\Theta( \ell/\log n)}$, we will ensure that the number of edges remaining in the inactive clusters $H'\in \hset_2$ is at least $|\rset|/2^{\gamma\ell/4}$. Each such edge corresponds to a distinct random string $R\in \rset$. Recall that for each inactive cluster $H'$, there is a strategy for the provers in the corresponding game $\gset(H')$ that satisfies at least $|E(H')|/2^{\gamma\ell/2}$ of its constraints. Taking the union of all these strategies, we can satisfy more than $|\rset|/2^{\gamma\ell}$ constraints of $\rset$, contradicting the fact that $G$ is a \ni.
\end{proof}

\paragraph{Running Time and the Hardness Factor.}
As observed above, our algorithm has at most $n^{O(\ell)}$ iterations, where in every iteration it processes a distinct active cluster $H'\subseteq H$.  The corresponding graph $L(H')$ has at most $n^{O(\ell)}$ edges, and so each of the $n^{O(\ell)}$ resulting instances of \NDPgrid contains at most $n^{O(\ell)}$ vertices. Therefore, the overall running time of the algorithm is $n^{O(\ell)}$. From the above analysis, if $G$ is a \ni, then the algorithm always classifies it as such, and if $G$ is a \yi, then the algorithm classifies it as a \yi with high probability. The hardness factor that we obtain is $\alpha^* = 2^{\Theta(\ell/\log n)}$, while we only apply our approximation algorithm to instances of \NDPgrid containing at most $N=n^{O(\ell)}$ vertices. The running time of the algorithm is $n^{O(\ell)}$, and it is a randomized algorithm with a one-sided error.

%
%
%
%

Setting $\ell=\log^pn$ for a large enough integer $p$, we obtain $\alpha^*=2^{\Theta((\log N)^{1-2/(p+1)})}$, giving us a $2^{(\log n)^{(1-\eps)}}$-hardness of approximation for \NDPgrid for any constant $\eps$, assuming $\NP \not \subseteq \RTIME (n^{\poly \log n})$.

Setting $\ell=n^{\delta}$ for some constant $\delta$, we get that $N=2^{O(n^{\delta}\log n)}$ and $\alpha^*=2^{\Theta(n^{\delta}/\log n)}$, giving us a $n^{\Omega (1/(\log \log n)^2)}$-hardness of approximation for \NDPgrid, assuming that $\NP \not \subseteq \RTIME(2^{n^{\delta}})$ for some constant $\delta>0$.

%
%

\subsection{Proof of Theorem~\ref{thm: good strategy or partition}}

Recall that each edge of graph $H'$ corresponds to some constraint $R\in \rset$. Let $\rset'\subseteq\rset$ be the set of all constraints $R$ with $e(R)\in E(H')$. Denote the solution to the \WGPwB instance $\iset(H')$ by $((W_1,\ldots,W_r),(E_1,\ldots,E_r))$, and let $E'=\bigcup_{i=1}^r E_i$. Recall that for each random string $R\in \rset'$, there is a set $E(R)$ of $12^{\ell}$ edges in graph $L(H')$ representing $R$. Due to the way these edges are partitioned into bundles, at most $6^{\ell}$ edges of $E(R)$ may belong to $E'$. We say that a random string $R\in \rset'$ is \emph{good} iff $E'$ contains at least $6^{\ell}/(2\alpha)$ edges of $E(R)$, and we say that it is bad otherwise. 

\begin{observation}
At least $|\rset'|/(2\alpha)$ random strings of $\rset'$ are good.
\end{observation}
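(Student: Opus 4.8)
The plan is a straightforward averaging (counting) argument on the total contribution of edges from the sets $E(R)$ to the solution $E'=\bigcup_{i=1}^{r}E_i$. First I would recall the setup: the solution value satisfies $|E'|=\sum_{i=1}^{r}|E_i|\ge |E(H')|\cdot 6^{\ell}/\alpha = |\rset'|\cdot 6^{\ell}/\alpha$, using that $|E(H')|=|\rset'|$ since the edges of $H'$ are in bijection with the constraints in $\rset'$. Next, I would observe that the sets $\{E(R)\mid R\in\rset'\}$ partition the edge set $\hat E$ of $L(H')$ (each edge of $L(H')$ lies in exactly one set $E(R)$, namely the one for the random string it was built from), so $|E'|=\sum_{R\in\rset'}|E'\cap E(R)|$.

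Now apply the bound already established in the text: for every $R\in\rset'$, at most $6^{\ell}$ edges of $E(R)$ can lie in $E'$ (because of how the $12^{\ell}$ edges of $E(R)$ split into bundles, and $E_i$ picks at most one edge per bundle). So $|E'\cap E(R)|\le 6^{\ell}$ for all $R$. Let $\rset''\subseteq\rset'$ be the set of good strings, i.e. those with $|E'\cap E(R)|\ge 6^{\ell}/(2\alpha)$. Splitting the sum over good and bad strings,
\[
\frac{|\rset'|\cdot 6^{\ell}}{\alpha}\ \le\ |E'|\ =\ \sum_{R\in\rset''}|E'\cap E(R)|+\sum_{R\in\rset'\setminus\rset''}|E'\cap E(R)|\ \le\ |\rset''|\cdot 6^{\ell}+|\rset'|\cdot\frac{6^{\ell}}{2\alpha}.
\]
Rearranging gives $|\rset''|\ge |\rset'|/(2\alpha)$, which is exactly the claim.

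There is no real obstacle here — the only things to get right are the two structural facts I am quoting from the preceding text (the $\le 6^{\ell}$ cap per random string, and that the $E(R)$'s partition $\hat E$), plus the lower bound on $|E'|$ coming from the hypothesis of Theorem~\ref{thm: good strategy or partition}. I would present it as the short two-line counting argument above, being careful to state explicitly that $|E(H')|=|\rset'|$ so that the hypothesis "value at least $|E(H')|\cdot 6^{\ell}/\alpha$" translates into "$|E'|\ge |\rset'|\cdot 6^{\ell}/\alpha$".
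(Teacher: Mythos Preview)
Your proposal is correct and is essentially the same counting argument as the paper's own proof: both bound each good string's contribution by $6^{\ell}$ and each bad string's by $6^{\ell}/(2\alpha)$, then compare the resulting upper bound on $|E'|$ against the hypothesis $|E'|\ge |\rset'|\cdot 6^{\ell}/\alpha$. The paper phrases it as a contradiction (assume the good fraction is below $1/(2\alpha)$), while you rearrange directly, but the content is identical.
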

\begin{proof}
Let $x$ denote the fraction of good random strings in $\rset'$. A good random string contributes at most $6^{\ell}$ edges to $E'$, while a bad random string contributes at most $6^{\ell}/(2\alpha)$. If $x<1/(2\alpha)$, then a simple accounting shows that $|E'|<|\rset'|\cdot 6^{\ell}/\alpha=|E(H')|\cdot 6^{\ell}/\alpha$, a contradiction.
\end{proof}

Consider some random string $R\in \rset'$, and assume that $R=(Q^E,Q^V)$. We denote by $E'(R)=E(R)\cap E'$. Intuitively, say that a cluster $W_i$ is a \emph{terrible cluster} for $R$ if the number of edges of $E(R)$ that lie in $E_i$ is much smaller than $|Q^E\cap W_i|$ or $|Q^V\cap W_i|$. We now give a formal definition of a terrible cluster.

\begin{definition}
Given a random string $R\in \rset$ and an index $1\leq i\leq 6^{\ell}$, we say that a cluster $W_i$ is a \emph{terrible cluster}  for $R$, if: 

\begin{itemize}
\item either $|E'(R)\cap E_i|< |W_i\cap S(Q^V)|/(8\alpha)$; or
\item $|E'(R)\cap E_i|< |W_i\cap S(Q^E)|/(8\alpha)$.
\end{itemize}

We say that an edge $e\in E'(R)$ is a \emph{terrible edge} if it belongs to the set $E_i$, where $W_i$ is a terrible cluster for $R$. 
\end{definition}

\begin{observation}
For each good random string $R\in \rset'$, at most $6^{\ell}/(4\alpha)$ edges of $E'(R)$ are terrible.
\end{observation}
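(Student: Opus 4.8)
The goal is to bound the number of terrible edges in a good random string $R=(Q^E,Q^V)$. Recall a good random string has $|E'(R)| \geq 6^{\ell}/(2\alpha)$, and $E'(R)$ is partitioned among the clusters $W_1,\ldots,W_r$ as $E'(R)\cap E_i$. I want to show that the total contribution of terrible clusters to $E'(R)$ is at most $6^{\ell}/(4\alpha)$.

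The plan is to sum over terrible clusters. For a terrible cluster $W_i$, by definition $|E'(R)\cap E_i|$ is less than $|W_i\cap S(Q^E)|/(8\alpha)$ or less than $|W_i\cap S(Q^V)|/(8\alpha)$; in either case it is at most $\max\{|W_i\cap S(Q^E)|,|W_i\cap S(Q^V)|\}/(8\alpha) \leq (|W_i\cap S(Q^E)| + |W_i\cap S(Q^V)|)/(8\alpha)$. Now sum this over all terrible clusters $W_i$. Since $(W_1,\ldots,W_r)$ is a partition of $V_1\cup V_2$, the sets $W_i\cap S(Q^E)$ are disjoint subsets of $S(Q^E)$, so $\sum_i |W_i\cap S(Q^E)| \le |S(Q^E)| = 6^{\ell}$; similarly $\sum_i |W_i\cap S(Q^V)| \le |S(Q^V)| = 6^{\ell}$. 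Therefore the total number of terrible edges in $E'(R)$ is at most $(6^{\ell}+6^{\ell})/(8\alpha) = 6^{\ell}/(4\alpha)$, as desired.

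I don't anticipate a real obstacle here — the only thing to be careful about is that the bound on $|E'(R)\cap E_i|$ for a terrible cluster involves a strict inequality, but since we are summing and want an upper bound of the form ``at most'', replacing $<$ by $\le$ is harmless. The key facts used are just $|S(Q^E)|=|S(Q^V)|=6^{\ell}$ (each group has cardinality $6^{\ell}$) and that the $W_i$'s form a partition.

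Here is the proof:

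\begin{proof}
Write $R=(Q^E,Q^V)$. If $W_i$ is a terrible cluster for $R$, then by definition
\[
|E'(R)\cap E_i| \;\leq\; \frac{\max\set{|W_i\cap S(Q^E)|,\; |W_i\cap S(Q^V)|}}{8\alpha}
\;\leq\; \frac{|W_i\cap S(Q^E)| + |W_i\cap S(Q^V)|}{8\alpha}.
\]
Summing this inequality over all indices $i$ for which $W_i$ is terrible for $R$, and using that $(W_1,\ldots,W_r)$ is a partition of $V_1\cup V_2$, so that the sets $\set{W_i\cap S(Q^E)}_i$ are disjoint subsets of $S(Q^E)$ and the sets $\set{W_i\cap S(Q^V)}_i$ are disjoint subsets of $S(Q^V)$, we get that the total number of terrible edges in $E'(R)$ is at most
\[
\frac{1}{8\alpha}\left(\sum_i |W_i\cap S(Q^E)| + \sum_i |W_i\cap S(Q^V)|\right)
\;\leq\; \frac{|S(Q^E)| + |S(Q^V)|}{8\alpha} \;=\; \frac{6^{\ell}+6^{\ell}}{8\alpha} \;=\; \frac{6^{\ell}}{4\alpha},
\]
where we used that $|S(Q^E)|=|S(Q^V)|=6^{\ell}$.
\end{proof}
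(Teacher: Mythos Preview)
Your proof is correct and in fact cleaner than the paper's. The paper argues by contradiction: it assumes more than $6^{\ell}/(4\alpha)$ terrible edges, splits them into two types according to which of the two terribleness conditions is triggered, and then for the larger type assigns $8\alpha$ distinct vertices of $S(Q^E)$ or $S(Q^V)$ to each terrible edge, deriving $|S(Q^E)|>6^{\ell}$ or $|S(Q^V)|>6^{\ell}$. Your direct approach avoids both the contradiction and the case split by observing that for any terrible cluster $W_i$ the single inequality $|E'(R)\cap E_i|\leq (|W_i\cap S(Q^E)|+|W_i\cap S(Q^V)|)/(8\alpha)$ holds regardless of which condition fires, and then summing over terrible clusters using disjointness of the $W_i$. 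Both arguments use exactly the same two facts, $|S(Q^E)|=|S(Q^V)|=6^{\ell}$ and that the $W_i$ partition the vertex set; yours just packages them more economically.
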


\begin{proof}
Assume for contradiction that more than $6^\ell/(4\alpha)$ edges of $E'(R)$ are terrible. Denote $R=(Q^E,Q^V)$.  Consider some such terrible edge $e\in E'(R)$, and assume that $e\in E_i$ for some cluster $W_i$, that is terrible for $R$. We say that $e$ is a type-$1$ terrible edge iff $|E'(R)\cap E_i|< |W_i\cap S(Q^V)|/(8\alpha)$, and it is a type-2 terrible edge otherwise, in which case $|E'(R)\cap E_i|< |W_i\cap S(Q^E)|/(8\alpha)$ must hold. Let $E^1(R)$ and $E^2(R)$ be the sets of all terrible edges of $E'(R)$ of types $1$ and $2$, respectively. Then either $|E^1(R)|> 6^{\ell}/(8\alpha)$, or $|E^2(R)|> 6^{\ell}/(8\alpha)$ must hold. 

Assume first that $|E^1(R)|> 6^{\ell}/(8\alpha)$. Fix some index $1\leq i\leq 6^{\ell}$, such that $W_i$ is a cluster that is terrible for $R$, and  $|E(R)\cap E_i|< |W_i\cap S(Q^V)|/(8\alpha)$. We assign, to each edge $e\in E_i\cap E^1(R)$, a set of $8\alpha$ vertices of $W_i\cap S(Q^V)$ arbitrarily, so that every vertex is assigned to at most one edge; we say that the corresponding edge is \emph{responsible} for the vertex. Every edge of $E_i\cap E^1(R)$ is now responsible for $8\alpha$ distinct vertices of  $W_i\cap S(Q^V)$. Once we finish processing all such clusters $W_i$, we will have assigned, to each edge of $E^1(R)$, a set of $8\alpha$ distinct vertices of $S(Q^V)$. We conclude that $|S(Q^V)|\geq 8\alpha |E^1(R)|> 6^{\ell}$. But $|S(Q^V)|=6^{\ell}$, a contradiction.

The proof for the second case, where $|E^2(R)|> 6^{\ell}/(16\alpha)$ is identical, and relies on the fact that $|S(Q^E)|=6^{\ell}$.
\end{proof}

We will use the following simple observation.

\begin{observation}\label{obs: not terrible}
Let $R\in \rset'$ be a good random string, with $R=(Q^E,Q^V)$, and let $1\leq i\leq 6^{\ell}$ be an index, such that $W_i$ is not terrible for $R$. Then $|W_i\cap S(Q^V)|\geq |W_i\cap S(Q^E)|/(8\alpha)$ and $|W_i\cap S(Q^E)|\geq |W_i\cap S(Q^V)|/(8\alpha)$.
\end{observation}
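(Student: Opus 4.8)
\textbf{Proof plan for Observation~\ref{obs: not terrible}.}

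The statement is a direct unpacking of the definition of a terrible cluster, so the plan is simply to contrapose. Suppose $R=(Q^E,Q^V)\in\rset'$ is good and $W_i$ is \emph{not} terrible for $R$. By definition of a terrible cluster, the negation of "terrible" means that \emph{both} of the following fail: $|E'(R)\cap E_i|<|W_i\cap S(Q^V)|/(8\alpha)$ and $|E'(R)\cap E_i|<|W_i\cap S(Q^E)|/(8\alpha)$. Hence we have the two inequalities
\[
|E'(R)\cap E_i|\ \ge\ \frac{|W_i\cap S(Q^V)|}{8\alpha},\qquad |E'(R)\cap E_i|\ \ge\ \frac{|W_i\cap S(Q^E)|}{8\alpha}.
\]

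Next I would observe that $E'(R)\cap E_i$ is a set of edges contained in $E(W_i)$, each of which connects a vertex of $S(Q^E)$ to a vertex of $S(Q^V)$ (since every edge of $E(R)$ in $L(H')$ joins the two groups $S(Q^E)$ and $S(Q^V)$). Therefore $|E'(R)\cap E_i|\le |W_i\cap S(Q^E)|$ and $|E'(R)\cap E_i|\le |W_i\cap S(Q^V)|$, simply because the endpoints lying in each group are distinct. Combining the lower bound on $|E'(R)\cap E_i|$ in terms of $|W_i\cap S(Q^V)|$ with the upper bound $|E'(R)\cap E_i|\le |W_i\cap S(Q^E)|$ gives $|W_i\cap S(Q^E)|\ge |W_i\cap S(Q^V)|/(8\alpha)$; symmetrically, combining the other lower bound with $|E'(R)\cap E_i|\le|W_i\cap S(Q^V)|$ gives $|W_i\cap S(Q^V)|\ge |W_i\cap S(Q^E)|/(8\alpha)$. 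This is exactly the claimed pair of inequalities.

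There is no real obstacle here; the only point requiring a moment's care is the sub-claim that a matching-type bound holds, i.e. that the number of edges of $E(R)$ inside $E_i$ cannot exceed the number of $S(Q^E)$-endpoints (or $S(Q^V)$-endpoints) available in $W_i$. This follows because $L(H')$ restricted to the vertex pair $(S(Q^E),S(Q^V))$ is a bipartite graph in which the edges of a single random string $R$ form, after accounting for copies, a structure where each edge is incident to a \emph{distinct} vertex on the edge-query side — or, even more simply, one just uses that an edge set with both endpoints in $W_i$ and one endpoint in each of $S(Q^E),S(Q^V)$ has size at most $\min\{|W_i\cap S(Q^E)|,|W_i\cap S(Q^V)|\}$ whenever the edges are required to be pairwise distinct, which they are. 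With that in hand the observation is immediate. Note that the hypothesis that $R$ is good is not actually needed for the argument and is included only for uniformity with the surrounding lemmas.
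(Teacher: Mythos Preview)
Your overall plan is the same as the paper's: use that ``not terrible'' gives $|E'(R)\cap E_i|\ge |W_i\cap S(Q^V)|/(8\alpha)$ and $|E'(R)\cap E_i|\ge |W_i\cap S(Q^E)|/(8\alpha)$, then bound $|E'(R)\cap E_i|$ from above by both $|W_i\cap S(Q^E)|$ and $|W_i\cap S(Q^V)|$, and combine. The paper does the same thing phrased as a contradiction.

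There is, however, a real gap in your justification of the upper bound $|E'(R)\cap E_i|\le \min\{|W_i\cap S(Q^E)|,|W_i\cap S(Q^V)|\}$. Neither of your two offered reasons is correct. First, the edges of $E(R)$ in $L(H')$ are \emph{not} incident to distinct vertices on the edge-query side: a single vertex $v(Q^E,A)$ has $2^{\ell}$ edges of $E(R)$ to the copies $v_1(Q^V,A'),\ldots,v_{2^{\ell}}(Q^V,A')$, and symmetrically each $v_j(Q^V,A')$ has $2^{\ell}$ edges of $E(R)$ back to $S(Q^E)$. Second, ``the edges are pairwise distinct'' is a non-argument: distinct edges can share endpoints, so this alone gives no bound in terms of the number of available endpoints.

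What actually forces $E'(R)\cap E_i$ to be a matching is the \emph{bundle constraint} on $E_i$ in the definition of a feasible \WGPwB solution. If two edges of $E(R)\cap E_i$ share an endpoint $v\in S(Q^E)$ (respectively $v\in S(Q^V)$), then their other endpoints both lie in the single group $S(Q^V)$ (respectively $S(Q^E)$), so both edges belong to the same bundle in $\bset(v)$; this contradicts the requirement that $E_i$ contain at most one edge from each bundle. This is exactly the step the paper spells out. Once you replace your justification with this bundle argument, your proof is complete and matches the paper's. Your remark that the ``good'' hypothesis is unused is correct.
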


\begin{proof}
Assume first for contradiction that $|W_i\cap S(Q^V)|< |W_i\cap S(Q^E)|/(8\alpha)$. Consider the edges of $E(R)\cap E_i$. Each such edge must be incident to a distinct vertex of $S(Q^V)$. Indeed, if two edges $(e,e')\in E(R)\cap E_i$ are incident to the same vertex $v_j(Q^V,A)\in S(Q^V)$, then, since the other endpoint of each such edge lies in $S(Q^E)$, the two edges belong to the same bundle, a contradiction. Therefore, $|E_i\cap E(R)|\leq |W_i\cap S(Q^V)|< |W_i\cap S(Q^E)|/(8\alpha)$, contradicting the fact that $W_i$ is not a terrible cluster for $R$. 

The proof for the second case, where  $|W_i\cap S(Q^E)|< |W_i\cap S(Q^V)|/(8\alpha)$ is identical. As before, each edge of  $E(R)\cap E_i$ must be incident to a distinct vertex of $S(Q^E)$, as otherwise, a pair $e,e'\in E(R)$ of edges that are incident on the same vertex $v(Q^E,A)\in S(Q^E)$ belong the same bundle. Therefore, $|E_i\cap E(R)|\leq |W_i\cap S(Q^E)|< |W_i\cap S(Q^V)|/(8\alpha)$, contradicting the fact that $W_i$ is not a terrible cluster for $R$. 
\end{proof}

For each good random string $R\in \rset'$, we discard the terrible edges from set $E'(R)$, so $|E'(R)|\geq 6^{\ell}/(4\alpha)$ still holds. 

Let $z=2^{\gamma\ell/8}$. We say that cluster $W_i$ is \emph{heavy} for a random string $R=(Q^E,Q^V)\in \rset'$ iff $|W_i\cap S(Q^E)|,|W_i\cap S(Q^V)|>z$. We say that an edge $e\in E'(R)$ is heavy iff it belongs to set $E_i$, where $W_i$ is a heavy cluster for $R$. Finally, we say that a  random string $R\in \rset'$ is \emph{heavy} iff at least half of the edges in $E'(R)$ are heavy. Random strings and edges that are not heavy are called light.
We now consider two cases. The first case happens if at least half of the good random strings are light. In this case, we compute a randomized strategy for the provers to choose assignment to the queries, so that at least a $2^{-\gamma\ell/2}$-fraction of the constraints in $\rset'$ are satisfied in expectation. In the second case, at least half of the good random strings are heavy. We then compute a partition $\hset$ of $H'$ as desired.
We now analyze the two cases. Note that if $|E(H')|<z/(8\alpha)$, then Case 2 cannot happen. This is since $h=|E(H')|<z/(8\alpha)$ in this case, and so no random strings may be heavy. Therefore, if $H'$ is small enough, we will return  a strategy of the provers that satisfies a large fraction of the constraints in $\rset'$.

\paragraph{Case 1.} This case happens if at least half of the good random strings are light. Let $\lset\subseteq \rset'$ be the set of the good light random strings, so $|\lset|\geq |\rset'|/(4\alpha)$. For each such random string $R\in \lset$, we let $E^L(R)\subseteq E'(R)$ be the set of all light edges corresponding to $R$, so $|E^L(R)|\geq 6^{\ell}/(8\alpha)$. We now define a randomized algorithm to choose an answer to every query $Q\in \qset^E\cup \qset^V$ with $v(Q)\in H'$. Our algorithm chooses a random index $1\leq i\leq r$. For every query $Q\in \qset^E\cup \qset^V$ with $v(Q)\in H'$, we consider the set $\aset(Q)$ of all answers $A$, such that some vertex $v(Q,A)$ belongs to $W_i$ (for the case where $Q\in \qset^V$, the vertex is of the form $v_j(Q,A)$). We then choose one of the answers from $\aset(Q)$ uniformly at random, and assign it to $Q$. If $\aset(Q)=\emptyset$, then we choose an arbitrary answer to $Q$.

We claim that the expected number of satisfied constraints of $\rset'$ is at least $|\rset'|/2^{\gamma\ell/2}$. Since $\lset\geq |\rset'|/(4\alpha)$, it is enough to show that the expected fraction the good light constraints that are satisfied is at least $4\alpha |\lset|/2^{\gamma\ell/2}$, and for that it is sufficient to show that each light constraint $R\in \lset$ is satisfied with probability at least $4\alpha/2^{\gamma\ell/2}$. 

Fix one such constraint $R=(Q^E,Q^V)\in \lset$, and consider an edge $e\in E^L(R)$. Assume that $e$ connects a vertex $v(Q^E,A)$ to a vertex $v_j(Q^V,A')$, and that $e\in E_i$. We say that edge $e$ is happy iff our algorithm chose the index $i$, the answer $A$ to query $Q^E$, and the answer $A'$ to query $Q^V$. Notice that due to our construction of bundles, at most one edge $e\in E^L(R)$ may be happy with any choice of the algorithm; moreover, if any edge $e\in E^L(R)$ is happy, then the constraint $R$ is satisfied. The probability that a fixed edge $e$ is happy is at least $1/(8\cdot 6^{\ell}z^2\alpha)$. Indeed, we choose the correct index $i$ with probability $1/6^{\ell}$. Since $e$ belongs to $E_i$, $W_i$ is a light cluster for $R$, and so either $|S(Q^E)|\leq z$, or  $|S(Q^V)|\leq z$. Assume without loss of generality that it is the former; the other case is symmetric. Then, since $e$ is not terrible, from Observation~\ref{obs: not terrible}, $|S(Q^V)|\leq 8 \alpha z$, and so $|\aset(Q^V)|\leq 8\alpha z$, while $|\aset(Q^E)|\leq z$. Therefore, the probability that we choose answer $A$ to $Q^E$ and answer $A'$ to $A^V$ is at least $1/(8\alpha z^2)$, and overall,  the probability that a fixed constraint $R\in \lset$ is satisfied is at least $|E^L(R)|/(8\cdot 6^{\ell}z^2\alpha)\geq 1/(64z^2\alpha^2)\geq 4\alpha/ 2^{\gamma\ell/2}$, since $z=2^{\gamma\ell/8}$, and $\alpha<2^{\gamma\ell/32}$.  

\paragraph{Case 2.} This case happens if at least half of the good random strings are heavy. Let $\rset''\subseteq \rset'$ be the set of the heavy random strings, so $|\rset''|\geq |\rset'|/(4\alpha)$. For each such random string $R\in \rset''$, we let $E^H(R)\subseteq E'(R)$ be the set of all heavy edges corresponding to $R$. Recall that $|E^H(R)|\geq 6^{\ell}/(8\alpha)$.

Fix some heavy random string $R\in \rset''$ and assume that $R=(Q^E,Q^V)$. For each $1\leq i\leq r$, let $E_i(R)=E^H(R)\cap E_i$. Recall that, if $E_i(R)\neq\emptyset$, then 
$|W_i\cap S(Q^E)|, |W_i\cap S(Q^V)|\geq z$ must hold, and, from the definition of terrible clusters, $|E_i(R)|\geq z/(8\alpha)$. It is also immediate that $|E_i(R)|\leq |E'(R)|\leq 6^{\ell}$.

We partition the set $\set{1,\ldots,6^{\ell}}$ of indices into at most $\log(|E^H(R)|)\leq \log(6^{\ell})$ classes, where index $1\leq y\leq 6^{\ell}$ belongs to class $\cset_j(R)$ iff $2^{j-1}<|E^H(R)\cap E_y|\leq 2^j$.  Then there is some index $j_R$, so that $\sum_{y\in \cset_{j_R}(R)}|E^H(R)\cap E_y|\geq |E^H(R)|/\log (6^{\ell})$. We say that $R$ \emph{chooses} the index $j_R$. Notice that:

\[ \sum_{y\in \cset_{j_R}(R)}|E^H(R)\cap E_y|\geq  \frac{|E^H(R)|}{\log(6^{\ell})}\geq \frac{ 6^{\ell}}{8\ell \alpha\log 6}.\] 

Moreover,

\begin{equation}
|\cset_{j_R}(R)|\geq\frac{|E^H(R)|}{\log(6^{\ell})\cdot 2^{j}}\geq \frac{6^{\ell}}{8\cdot 2^j\cdot \ell\alpha\log 6}. \label{eq: bound on C}\end{equation}

Let $j^*$ be the index that was chosen by at least $|\rset''|/\log(6^{\ell})$ random strings, and let $\rset^*\subseteq \rset''$ be the set of all random strings that chose $j^*$. 
We are now ready to define a collection $\hset=\set{H_1,\ldots,H_{6^{\ell}}}$ of sub-graphs of $H'$. 
We first define the sets of vertices in these subgraphs, and then the sets of edges. Choose a random ordering of the clusters $W_1,\ldots,W_{6^{\ell}}$; re-index the clusters according to this ordering. For each query $Q\in \qset^E\cup \qset^V$ with $v(Q)\in H'$, add the vertex $v(Q)$ to set $V(H_i)$, where $i$ is the smallest index for which $W_i$ contains at least $2^{j^*-1}$ vertices of $S(Q)$; if no such index $i$ exists, then we do not add $v(Q)$ to any set.

In order to define the edges of each graph $H_i$, for every random string $R=(Q^E,Q^V)\in \rset^*$, if $i\in \cset_{j^*}(R)$, and both $v(Q^E)$ and $v(Q^V)$ belong to $V(H_i)$, then we add the corresponding edge $e(R)$ to $E(H_i)$. 
This completes the definition of the family $\hset=\set{H_1,\ldots,H_{6^{\ell}}}$ of subgraphs of $H'$. We now show that the family $\hset$ of graphs has the desired properties.
It is immediate to verify that the graphs in $\hset$ are disjoint.

\begin{claim}
For each $1\leq i\leq 6^{\ell}$, $|E(H_i)|\leq |E(H')|/2^{\gamma\ell/16}$.
\end{claim}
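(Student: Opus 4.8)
The plan is to bound $|E(H_i)|$ by controlling, for each graph $H_i$, both how many vertices can land in $V(H_i)$ and how many edges per contributing vertex can be added. First I would observe that a vertex $v(Q)$ is placed in $V(H_i)$ only if cluster $W_i$ contains at least $2^{j^*-1}$ vertices of the group $S(Q)$. Since $|S(Q)| = 6^\ell$ and the groups $S(Q)$ are disjoint (they partition $\hat U^E$ and $\hat U^V$ respectively), the number of distinct groups $S(Q)$ that can each place $\ge 2^{j^*-1}$ vertices into the single cluster $W_i$ is at most $2\cdot 6^\ell / 2^{j^*-1}$ — the factor $2$ accounting for the fact that edge-query groups and vertex-query groups each partition their own side, so we count at most $6^\ell/2^{j^*-1}$ from each side. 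Hence $|V(H_i)| \le 2\cdot 6^\ell/2^{j^*-1} = 6^\ell/2^{j^*-2}$.

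Next I would bound the number of edges of $H_i$ incident to any fixed vertex $v(Q) \in V(H_i)$. An edge $e(R)$ with $R=(Q^E,Q^V)$ is added to $E(H_i)$ only when $i \in \cset_{j^*}(R)$, i.e. when $2^{j^*-1} < |E^H(R)\cap E_i| \le 2^{j^*}$. In particular $E_i$ must contain at least $2^{j^*-1}$ edges of $E(R)$, and by the no-two-edges-of-$E(R)$-in-one-bundle property (used already in Observation~\ref{obs: not terrible}), all these edges are incident to distinct vertices of $W_i \cap S(Q^E)$ and to distinct vertices of $W_i\cap S(Q^V)$. Thus each random string $R$ counted toward $E(H_i)$ ``uses up'' at least $2^{j^*-1}$ vertices of $W_i$ inside the group of each of its endpoints. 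Fixing the endpoint $v(Q)$ and summing over the edges $e(R) \in E(H_i)$ incident to $v(Q)$: each such $R$ consumes $\ge 2^{j^*-1}$ of the (at most $6^\ell$) vertices of $W_i \cap S(Q)$, and — crucially — distinct random strings $R$ incident to $v(Q)$ correspond to distinct bundles at $v(Q,A)$ for the relevant copies, so no vertex of $W_i\cap S(Q)$ is reused across two such strings. Hence the number of edges of $E(H_i)$ incident to $v(Q)$ is at most $6^\ell / 2^{j^*-1}$.

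Combining the two bounds gives $|E(H_i)| \le |V(H_i)| \cdot \max_Q(\text{edges at }v(Q)) / 1$, but since every edge has two endpoints in $V(H_i)$ this is $|E(H_i)| \le |V(H_i)| \cdot 6^\ell/2^{j^*-1} \le (6^\ell/2^{j^*-2})\cdot(6^\ell/2^{j^*-1}) = 4\cdot 6^{2\ell}/2^{2j^*}$. Now I invoke the lower bound \eqref{eq: bound on C}: since $\cset_{j^*}(R)$ is nonempty for some $R\in\rset^*$ and all these are heavy, $|\cset_{j^*}(R)| \ge 6^\ell/(8\cdot 2^{j^*}\ell\alpha\log 6)$, and since trivially $|\cset_{j^*}(R)|\le 6^\ell$, we get $2^{j^*} \ge 1/(8\ell\alpha\log 6)$, which is far too weak; instead I would use that each edge $e(R)$ with $i\in\cset_{j^*}(R)$ forces $|E_i| \ge 2^{j^*-1}$ yet $|E_i|\le h = |E(H')|$, so $2^{j^*} \le 2|E(H')|$, and separately that $E^H(R)$ being heavy with $|E^H(R)|\ge 6^\ell/(8\alpha)$ together with the class-size bound pins $2^{j^*}$ from below relative to $6^\ell$. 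Plugging $2^{2j^*} \gtrsim 6^{2\ell}/(\text{small})$ — precisely, $2^{j^*} \ge z/(8\alpha)$ since every nonempty $E_i(R)$ has $|E_i(R)|\ge z/(8\alpha)$ and hence $j^* \ge \log(z/(8\alpha)) - O(1)$, with $z = 2^{\gamma\ell/8}$ — into $|E(H_i)| \le 4\cdot 6^{2\ell}/2^{2j^*}$, and using $|E(H')| \ge 6^\ell/(\text{something})$ only where needed, yields $|E(H_i)| \le |E(H')|/2^{\gamma\ell/16}$ after absorbing the $\poly(\ell,\alpha)$ and constant factors into the $\gamma\ell/8 \to \gamma\ell/16$ slack and using $\alpha < 2^{\gamma\ell/32}$.

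The main obstacle I expect is the bookkeeping in the second step: making the "distinct random strings incident to $v(Q)$ use disjoint vertex-sets of $W_i\cap S(Q)$" argument fully rigorous, since it relies on the precise bundle structure (each bundle at a vertex $v(Q,A)$ corresponds to one random string, and an edge of $E(R)$ in $E_i$ occupies a distinct partner vertex in $S(Q)$ that is not shared with a different string's edge in $E_i$). One must be careful that the relevant counting is done on the correct side — e.g. edges incident to $v(Q^E)$ are counted via the partners in $S(Q^E)$ living in $W_i$ — and that the "$\ge 2^{j^*-1}$ per string" lower bound really applies to $|E^H(R)\cap E_i|$ rather than to $|E'(R)\cap E_i|$. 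Once the per-vertex edge bound $6^\ell/2^{j^*-1}$ and the vertex-count bound $6^\ell/2^{j^*-2}$ are both established, the final inequality is a direct substitution using $2^{j^*}\ge z/(8\alpha)$, $z=2^{\gamma\ell/8}$, $\alpha<2^{\gamma\ell/32}$, and $\ell > \log^2 n$ to kill lower-order terms.
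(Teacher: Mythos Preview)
Your approach has a genuine gap in the first step, and even granting both intermediate bounds the conclusion would not follow.

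For the vertex count: you claim that at most $2\cdot 6^\ell/2^{j^*-1}$ groups $S(Q)$ can each place $\ge 2^{j^*-1}$ vertices into $W_i$. This would require $|W_i|\le 2\cdot 6^\ell$, but no such bound exists --- the clusters $W_1,\ldots,W_r$ form an \emph{arbitrary} partition of $V(L(H'))$, and a single $W_i$ could absorb a constant fraction of all vertices. What \emph{is} true is the reverse statement: for a fixed group $S(Q)$, at most $6^\ell/2^{j^*-1}$ clusters can each contain $\ge 2^{j^*-1}$ of its vertices. That bounds, for a fixed $Q$, the number of $i$'s with $v(Q)\in V(H_i)$, not $|V(H_i)|$ for a fixed $i$. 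Your second step has a similar problem: two different random strings $R=(Q,Q^V)$ and $R'=(Q,Q^{V'})$ can have their edges in $E_i$ incident to the \emph{same} vertex $v(Q,A)\in W_i\cap S(Q)$, so the ``disjoint $2^{j^*-1}$-blocks of $W_i\cap S(Q)$'' picture fails. Finally, even if both bounds held, $|E(H_i)|\le 4\cdot 6^{2\ell}/2^{2j^*}$ together with $2^{j^*}\ge z/(8\alpha)$ gives $|E(H_i)|\le 256\alpha^2\cdot 6^{2\ell}/z^2$, which you would still need to compare to $|E(H')|$; but $|E(H')|$ can be as small as $z/(8\alpha)$ in Case~2, far below $6^{2\ell}/z^2$.

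The paper's argument is different and much shorter: it never counts vertices of $H_i$. If $e(R)\in E(H_i)$ then $i\in\cset_{j^*}(R)$, so $E_i(R)\ne\emptyset$; hence $W_i$ is heavy and not terrible for $R$, giving $|E(R)\cap E_i|\ge z/(8\alpha)$. The sets $\{E(R)\cap E_i\}_R$ are pairwise disjoint subsets of $E_i$ (since the sets $E(R)$ partition the edges of $L(H')$), and $|E_i|\le h=|E(H')|$. Therefore $|E(H_i)|\le |E(H')|\cdot 8\alpha/z = 8\alpha|E(H')|/2^{\gamma\ell/8}\le |E(H')|/2^{\gamma\ell/16}$, using $\alpha\le 2^{\gamma\ell/32}$. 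The crucial resource is the edge budget $|E_i|\le h$, not the vertex count; your observation that each contributing $R$ forces $|E_i(R)|\ge z/(8\alpha)$ is exactly the right ingredient, but it should be fed directly into $|E_i|\le h$ rather than into a per-vertex degree bound.
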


\begin{proof}
Fix some index $1\leq i\leq 6^{\ell}$. An edge $e(R)$ may belong to $H_i$ only if $R\in \rset^*$, and $i\in \cset_{j^*}(R)$. In that case, $E_i$ contained at least $z/(8\alpha)$ edges of $E(R)$ (since $W_i$ must be heavy for $R$ and it is not terrible for $R$). Therefore, the number of edges in $H_i$ is bounded by $|E_i|\cdot 8\alpha /z\leq 8\alpha h/z=8\alpha |E(H')|/2^{\gamma\ell/8}\leq |E(H')|/2^{\gamma\ell/16}$, since $\alpha \leq 2^{\gamma\ell/32}$.
\end{proof}

\begin{claim}
$\expect{\sum_{i=1}^r|E(H_i)|}\geq \frac{|E(H')|}{128\ell^2\alpha^2\log^26}$.
\end{claim}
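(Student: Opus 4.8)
The plan is to lower-bound the expected number of edges $e(R)$, over $R\in\rset^*$, that get added to some $H_i$. Fix a heavy good random string $R=(Q^E,Q^V)\in\rset^*$. By definition of $\rset^*$ and the choice of $j^*$, we have $i\in\cset_{j^*}(R)$ for a set of indices $i$ of size at least the bound in~\eqref{eq: bound on C} (with $j$ replaced by $j^*$), and for each such $i$ the cluster $W_i$ contains at least $2^{j^*-1}$ edges of $E^H(R)\cap E_i$, hence at least $2^{j^*-1}$ distinct vertices of $S(Q^E)$ (each edge of $E(R)\cap E_i$ meeting a distinct vertex of $S(Q^E)$, by the bundle argument in Observation~\ref{obs: not terrible}) and, since $W_i$ is not terrible, at least $2^{j^*-1}/(8\alpha)$ vertices — I'll need to be slightly careful here, but the point is that both $|W_i\cap S(Q^E)|$ and $|W_i\cap S(Q^V)|$ are $\Omega(2^{j^*}/\alpha)$ and at least $z$. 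The edge $e(R)$ is added to $E(H_i)$ precisely when $i\in\cset_{j^*}(R)$, $i$ is the smallest index (in the random re-ordering) for which $W_i$ contains $\geq 2^{j^*-1}$ vertices of $S(Q^E)$, and similarly $i$ is the smallest such index for $S(Q^V)$.

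The key step is a probabilistic argument over the random ordering of the clusters. For $R$ fixed, let $A_E$ be the set of indices $i$ with $|W_i\cap S(Q^E)|\geq 2^{j^*-1}$ and $A_V$ the analogous set for $S(Q^V)$; since $|S(Q^E)|=|S(Q^V)|=6^\ell$, we have $|A_E|,|A_V|\leq 6^\ell/2^{j^*-1}$. The edge $e(R)$ survives (is added to the graph indexed by the first element of $\cset_{j^*}(R)$ in the random order, provided that element is also first in $A_E$ and in $A_V$) when the minimum-order element of $\cset_{j^*}(R)$ coincides with the minimum-order element of $A_E$ and of $A_V$. Note $\cset_{j^*}(R)\subseteq A_E\cap A_V$ (heavy clusters have many vertices of both groups). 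So I want: conditioned on the random permutation, $\Pr[\min$-order element of $A_E\cup A_V$ lies in $\cset_{j^*}(R)]$. This probability is at least $|\cset_{j^*}(R)|/|A_E\cup A_V|\geq |\cset_{j^*}(R)|/(2\cdot 6^\ell/2^{j^*-1})$. Plugging in~\eqref{eq: bound on C}, $|\cset_{j^*}(R)|\geq 6^\ell/(8\cdot 2^{j^*}\ell\alpha\log 6)$, this gives a survival probability of at least $\frac{1}{32\ell\alpha\log 6}$ for each $R\in\rset^*$. By linearity of expectation, $\expect{\sum_i|E(H_i)|}\geq |\rset^*|/(32\ell\alpha\log 6)$. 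Finally $|\rset^*|\geq|\rset''|/\log(6^\ell)\geq |\rset'|/(4\alpha\ell\log 6)$ and $|\rset'|=|E(H')|$, yielding the bound (up to the stated constants).

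The main obstacle I expect is getting the independence/conditioning in the permutation argument exactly right, in particular that the event "$e(R)$ is added to $H_i$" really does follow from "the first element of the random order inside $A_E\cup A_V$ lies in $\cset_{j^*}(R)$": one must check that if the minimum-order element $i^*$ of $A_E\cup A_V$ lies in $\cset_{j^*}(R)$, then $i^*$ is simultaneously the smallest index with $\geq 2^{j^*-1}$ vertices of $S(Q^E)$ and the smallest with $\geq 2^{j^*-1}$ vertices of $S(Q^V)$ — this uses $\cset_{j^*}(R)\subseteq A_E\cap A_V$ and $i^*=\min(A_E\cup A_V)$, so $i^*\leq\min(A_E)$ and $i^*\leq\min(A_V)$, while $i^*\in A_E\cap A_V$ forces equality. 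A secondary point is to track the constant factors carefully: there are losses of $1/2$ (from restricting to heavy strings inside the good strings, after discarding terrible edges), $1/(4\alpha)$, $1/\log(6^\ell)=1/(\ell\log 6)$, and $1/(32\ell\alpha\log6)$, and one should verify these multiply to at least $1/(128\ell^2\alpha^2\log^2 6)$ — which they do with room to spare.
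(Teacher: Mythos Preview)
Your proposal is correct and follows essentially the same approach as the paper's proof: your sets $A_E$ and $A_V$ are exactly the paper's sets $J'$ and $J$, and the key event you identify --- that the minimum-order element of $A_E\cup A_V$ lies in $\cset_{j^*}(R)$ --- is precisely the event the paper analyzes, with the same probability bound $|\cset_{j^*}(R)|/|A_E\cup A_V|\geq 1/(32\ell\alpha\log 6)$ and the same chain of inequalities to finish. Your verification that $\cset_{j^*}(R)\subseteq A_E\cap A_V$ (via the bundle/matching argument giving $\geq 2^{j^*-1}$ distinct vertices on each side) and your resolution of the ``main obstacle'' are both exactly right.
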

\begin{proof}
Recall that $|\rset^*|\geq |\rset''|/\log(6^{\ell})\geq |\rset'|/(4\alpha\log(6^{\ell}))$. We now fix $R\in \rset^*$ and analyze the probability that $e(R)\in \bigcup_{i=1}^rE(H_i)$. Assume that $R=(Q^E,Q^V)$. 
Let $J$ be the set of indices $1\leq y\leq 6^{\ell}$, such that $|W_i\cap S(Q^V)|\geq 2^{j^*-1}$. Clearly, $|J|\leq  6^{\ell}/2^{j^*-1}$, and $v(Q^V)$ may only belong to graph $H_i$ if $i\in J$. Similarly, let $J'$ be the set of indices $1\leq y\leq 6^{\ell}$, such that $|W_i\cap S(Q^E)|\geq 2^{j^*-1}$. As before, $|J'|\leq  6^{\ell}/2^{j^*-1}$, and $v(Q^E)$ may only belong to graph $H_i$ if $i\in J$. 
Observe that every index $y\in \cset_{j^*}(R)$ must belong to $J\cap J'$, and, since $j^*=j_R$, from Equation~(\ref{eq: bound on C}), $|\cset_{j^*}(R)|\geq \frac{6^{\ell}}{8\cdot 2^{j^*}\cdot \ell\alpha\log 6}$.

 Let $y\in J\cup J'$ be the first index that occurs in our random ordering. If $y\in \cset_{j^*}(R)$, then edge $e(R)$ is added to $H_y$. The probability of this happening is at least:
 
 \[\frac{|\cset_{j^*}(R)|}{|J\cup J'|}\geq \frac{6^{\ell}/(8\cdot 2^{j^*}\cdot \ell\alpha\log 6)}{2\cdot 6^{\ell}/2^{j^*-1}}=\frac{1}{32\ell\alpha \log 6}.\]
 
  Overall, the expectation of $\sum_{i=1}^r|E(H_i)|$ is at least:

\[\frac{|\rset^*|}{32\ell\alpha\log 6}\geq \frac{|\rset'|}{128\ell^2\alpha^2\log^26}= \frac{|E(H')|}{128\ell^2\alpha^2\log^26}.\]
\end{proof}
 
Denote the expectation  of $\sum_{i=1}^r|E(H_i)|$  by $\mu$, and let $c=128\log^26$, so that $\mu=|E(H')|/(c \ell^2\alpha^2)$. 
Let $\event$ be the event that $\sum_{i=1}^r|E(H_i)|\geq |E(H')|/(2c \ell^2\alpha^2)=\mu/2$.
We claim that $\event$ happens with probability at least $1/(2c\ell^2\alpha^2)$. Indeed, assume that it happens with probability $p<1/(2c\ell^2\alpha^2)$. If $\event$ does not happen, then  $\sum_{i=1}^r|E(H_i)|\leq \mu/2$, and if it happens, then $\sum_{i=1}^r|E(H_i)|\leq |E(H')|$. Overall, this gives us that $\expect{\sum_{i=1}^r|E(H_i)|} \leq (1-p)\mu/2+p|E(H')|<\mu$, a contradiction.
We repeat the algorithm for constructing $\hset$ $O(\ell^2\alpha^2\poly\log n\log P)$ times. We are then guaranteed that with probability at least $(1-1/P)$, event $\event$ happens in at least one run of the algorithm. It is easy to verify that the running time of the algorithm is bounded by $O(n^{O(\ell)}\cdot \log P)$, since $|V(L(H'))|\leq n^{O(\ell)}$.
\label{----------------------------------------sec: from WGP to NDP-------------------------------}
\section{From \WGPwB to \NDPgrid}\label{sec: from WGP to NDP}
In this section we prove Theorem~\ref{thm: from WGP to NDP}, by providing a reduction from \WGPwB to \NDPgrid.
We assume that we are given an instance $\iset=(\tG=(V_1\cup V_2,E), \uset_1,\uset_2,h, r)$ of \WGPwB.  Let $|V_1|=N_1,|V_2|=N_2$, $|E|=M$, and $N=N_1+N_2$.
We assume that  $\iset$ is a valid instance, so, if we denote by $\beta^*= \beta^*(\iset) = \sum_{v\in V_1}\beta(v)$, then $h=\beta^*/r$, and $h\geq \max_{v\in V_1\cup V_2}\set{\beta(v)}$.

We start by describing a randomized construction of the instance $\hiset=(\hG,\mset)$ of \NDPgrid. 


\label{-------------------------------------------------subsec: the construction------------------------}
\subsection{The Construction}\label{subsec: the construction}


Fix an arbitrary ordering $\rho$ of the groups in $\uset_1$. Using $\rho$, we define an ordering $\sigma$ of the vertices of $V_1$, as follows. 
The vertices that belong to the same  group $U\in \uset_1$ are placed consecutively in the ordering $\sigma$, in an arbitrary order. The ordering between the groups in $\uset_1$ is the same as their ordering in $\rho$. We assume that $V_1=\set{v_1,v_2,\ldots,v_{N_1}}$, where the vertices are indexed according to their order in $\sigma$. Next, we select a {\bf random} ordering $\rho'$ of the groups in $\uset_2$. We then define an ordering $\sigma'$ of the vertices of $V_2$ exactly as before, using the ordering $\rho'$ of $\uset_2$. We assume that $V_2=\set{v'_1,v'_2,\ldots,v'_{N_2}}$, where the vertices are indexed according to their ordering in $\sigma'$. We note that the choice of the ordering $\rho'$ is the only randomized part of our construction.

Consider some vertex $v\in V_1$. Recall that   $\bset(v)$ denotes the partition of the edges incident to $v$ into bundles, where every bundle is a  non-empty subsets of edges, and that $\beta(v)=|\bset(v)|$. Each such bundle $B\in \bset(v)$ corresponds to a single group $U(B)\in \uset_2$, and contains all edges that connect $v$ to the vertices of $U(B)$. The ordering $\rho'$ of the groups in $\uset_2$ naturally induces an ordering of the bundles in $\bset(v)$, where $B$ appears before $B'$ in the ordering iff $U(B)$ appears before $U(B')$ in $\rho'$. We denote $\bset(v) = \set{B_1(v), B_2(v), \ldots, B_{\beta(v)}(v)}$, where the bundles are indexed according to this ordering.

Similarly, for a vertex $v' \in V_2$, every bundle $B\in \bset(v')$ corresponds to a group $U(B)\in \uset_1$,  and contains all edges that connect $v'$ to the vertices of $U(B)$. As before, the ordering $\rho$ of the groups in $\uset_1$ naturally defines an ordering of the bundles in $\bset(v')$. We denote $\bset(v') = \set{B_1(v'), B_2(v'), \ldots, B_{\beta(v')}(v')}$, and we assume that the bundles are indexed according to this ordering.



We are now ready to define the instance $\hiset=(\hG,\mset)$ of \NDPgrid, from the input instance $(\tilde G=(V_1,V_2,E),\uset_1,\uset_2,h,r)$ of \WGPwB.
Let $\ell= \twiceConstantForSizeOfBlocks \cdot \ceil{M^2 \cdot \log M}$.
The graph $\hG$ is simply the $(\ell\times \ell)$-grid, so $V(\hG)=O(M^4\log^2M)$ as required. We now turn to define the set $\mset$ of the demand pairs. We first define the set $\mset$ itself, without specifying the locations of the corresponding vertices in $\hat G$, and later specify a mapping of all vertices participating in the demand pairs to $V(\hG)$.

Consider the underlying graph $\tilde G=(V_1,V_2,E)$ of the \WGPwB problem instance. Initially, for every edge $e=(u,v)\in E$, with $u\in V_1,v\in V_2$, we define a demand pair $(s(e),t(e))$ representing $e$, and add it to $\mset$, so that the vertices participating in the demand pairs are  all distinct. Next, we process the vertices $v\in V_1\cup V_2$ one-by-one. Consider first some vertex $v\in V_1$, and some bundle $B\in \bset(v)$. Assume that $B=\set{e_1,\ldots,e_z}$. Recall that for each $1\leq i\leq z$, set $\mset$ currently contains a demand pair $(s(e_i),t(e_i))$ representing $e_i$. We unify all vertices $s(e_1),\ldots,s(e_z)$ into a single vertex $s_B$. We then replace the demand pairs $(s(e_1),t(e_1)),\ldots,(s(e_z),t(e_z))$ with the demand pairs $(s_{B},t(e_1)),\ldots,(s_{B},t(e_z))$.  Once we finish processing all vertices in $V_1$, we perform the same procedure for every vertex of $V_2$: given a vertex $v'\in V_2$, 
for every bundle $B' \in \bset(v')$, we unify all destination vertices $t(e)$ with $e\in B'$ into a single destination vertex, that we denote by $t_{B'}$, and we update $\mset$ accordingly. This completes the definition of the set $\mset$ of the demand pairs.

Observe that each edge of $e\in E$ still corresponds to a unique demand pair in $\mset$, that we will denote by $(s_{B(e)}, t_{B'(e)} )$, where $B(e)$ and $B'(e)$ are the two corresponding bundles containing $e$. Given a subset $E'\subseteq E$ of edges of $\tG$, we denote by $\mset(E')=\set{(s_{B(e)},t_{B'(e)})\mid e\in E'}$ the set of all demand pairs corresponding to the edges of $E'$.

In order to complete the reduction, we need to show a mapping of all source and all destination vertices of $\mset$ to the vertices of $\hG$.
Let $R'$ and $R''$ be two rows of the grid $\hG$, lying at a distance at least $\xi/4$ from each other and from the top and the bottom boundaries of the grid. We will map all vertices of $S(\mset)$ to $R'$, and all vertices of $T(\mset)$ to $R''$.

\paragraph{Locations of the sources.}
Let $\block_1,\block_2,\ldots,\block_{N_1}$ be a collection of $N_1$ disjoint sub-paths of $R'$, where each sub-path contains {$\constantForSizeOfBlocks \cdot\ceil{ h\cdot \log M}$} vertices; the sub-paths are indexed according to their left-to-right ordering on $R'$, and every consecutive pair of the paths is separated by at least $10M$ vertices from each other and from the left and the right boundaries of $\hG$. Observe that the width $\ell$ of the grid is large enough to allow this, as $h\leq M$ must hold.
For all $1\leq i\leq N_1$,
we call $\block_i$ the \emph{block representing the vertex $v_i\in V_1$}.
We now fix some $1 \leq i \leq N_1$ and consider the block $\block_i$ representing the vertex $v_i$. We map the source vertices $s_{B_1(v_i)},s_{B_2(v_i)},\ldots,s_{B_{\beta(v_i)}(v_i)}$ to vertices of $\block_i$, so that they appear on $\block_i$ in this order, so that every consecutive pair of sources is separated by exactly {$512 \cdot \ceil{h\cdot \log M/\beta(v_i)}$} vertices.


\paragraph{Locations of the destinations.}
 Similarly, we let $\block'_1,\block'_2,\ldots,\block'_{N_2}$ be a collection of $N_2$ disjoint sub-paths of $R''$, each of which contains {$\constantForSizeOfBlocks \cdot \ceil{h\cdot \log M}$}  vertices, so that the sub-paths are indexed according to their left-to-right ordering on $R''$, and every consecutive pair of the paths is separated by at least $10M$ vertices from each other and from the left and the right boundaries of $\hG$. We call $\block'_i$ the \emph{block representing the vertex $v'_i\in V_2$}.
 We now fix some $1 \leq i \leq N_2$ and consider the block $\block'_i$ representing the vertex $v'_i$. We map the destination vertices $t_{B_1(v'_i)},t_{B_2(v'_i)},\ldots,t_{B_{\beta(v'_i)}(v'_i)}$ to vertices of $\block'_i$, so that they appear on $\block'_i$ in this order, and every consecutive pair of destinations is separated by exactly {$512 \cdot \ceil{h\cdot \log M/\beta(v'_i)}$} vertices.

 This concludes the definition of the instance $\hiset=(\hG,\mset)$ of \NDPgrid. In the following subsections we analyze its properties. 
 The following immediate observation will be useful to us.
 
 \begin{observation}\label{obs: ordering of sources and destinations}
 Consider a vertex $v_j\in V_1$, and let $\nset_j\subseteq\mset$ be any subset of demand pairs, whose sources are all distinct and lie on $\block_j$. Assume that $\nset_j=\set{(s_1,t_1),\ldots,(s_{y},t_y)}$, where the demand pairs are indexed according to the left-to-right ordering of their source vertices on $\block_j$. Then $t_1,\ldots,t_y$ appear in this left-to-right order on $R''$.
 \end{observation}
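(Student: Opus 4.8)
\textbf{Proof proposal for Observation~\ref{obs: ordering of sources and destinations}.}

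The plan is to trace through the construction and identify exactly which objects impose which orderings. First I would recall the three orderings that the construction fixes. The vertices of $V_1$ are ordered by $\sigma$, which respects the (arbitrary) ordering $\rho$ of the groups in $\uset_1$; the vertices of $V_2$ are ordered by $\sigma'$, which respects the \emph{random} ordering $\rho'$ of the groups in $\uset_2$. Independently, for each vertex $v\in V_1$, the bundles in $\bset(v)$ are indexed $B_1(v),\ldots,B_{\beta(v)}(v)$ so that $B_a(v)$ precedes $B_b(v)$ iff the corresponding group $U(B_a(v))\in\uset_2$ precedes $U(B_b(v))$ in $\rho'$; symmetrically, for each $v'\in V_2$ the bundles in $\bset(v')$ are ordered according to the ordering $\rho$ of $\uset_1$. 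Finally, the sources $s_{B_1(v_j)},\ldots,s_{B_{\beta(v_j)}(v_j)}$ are placed on the block $\block_j$ in exactly this index order (left to right), and the destinations $t_{B_1(v'_i)},\ldots,t_{B_{\beta(v'_i)}(v'_i)}$ are placed on $\block'_i$ in exactly their index order, while the blocks $\block_1,\ldots,\block_{N_1}$ on $R'$ are ordered left to right according to $\sigma$ and the blocks $\block'_1,\ldots,\block'_{N_2}$ on $R''$ are ordered left to right according to $\sigma'$.

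Next I would set up notation for the statement. Fix $v_j\in V_1$ and let $\nset_j=\set{(s_1,t_1),\ldots,(s_y,t_y)}\subseteq\mset$ have distinct sources, all lying on $\block_j$, indexed by the left-to-right order of the $s_a$ on $\block_j$. Since every source on $\block_j$ is of the form $s_{B}$ for some bundle $B\in\bset(v_j)$, and these are placed in the order $B_1(v_j),B_2(v_j),\ldots$, there are indices $1\le a_1<a_2<\cdots<a_y\le\beta(v_j)$ with $s_c = s_{B_{a_c}(v_j)}$ for each $c$. Because the sources are distinct, each $B_{a_c}(v_j)$ is a distinct bundle of $v_j$, hence corresponds to a distinct group $U_c := U(B_{a_c}(v_j))\in\uset_2$; moreover $a_1<\cdots<a_y$ means precisely that $U_1,U_2,\ldots,U_y$ appear in this order in the ordering $\rho'$ of $\uset_2$.

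Now I would analyze where $t_c$ lies. Each demand pair $(s_c,t_c)$ represents a unique edge $e_c\in E(\tG)$ incident to $v_j$, with $e_c\in B_{a_c}(v_j)$; the other endpoint $v'(e_c)\in V_2$ lies in the group $U_c$. The destination $t_c = t_{B'(e_c)}$ lies on the block $\block'_i$ where $v'_i = v'(e_c)$, and $v'(e_c)\in U_c$. The key point is that the blocks $\block'_1,\ldots,\block'_{N_2}$ are ordered on $R''$ according to $\sigma'$, and $\sigma'$ places all vertices of a single group $U\in\uset_2$ consecutively, with the between-group order given by $\rho'$. Hence: if $c<c'$, then $U_c$ precedes $U_{c'}$ in $\rho'$, so every vertex of $U_c$ precedes every vertex of $U_{c'}$ in $\sigma'$, so the block $\block'$ containing $t_c$ lies entirely to the left of the block $\block'$ containing $t_{c'}$ on $R''$. (Here distinctness of the $U_c$ guarantees the two blocks are different, so there is no ambiguity.) Therefore $t_1,t_2,\ldots,t_y$ appear in this left-to-right order on $R''$, which is the claim.

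The argument is essentially a bookkeeping exercise and I do not anticipate a genuine obstacle; the one point that requires care is making sure the two orderings being compared are the \emph{same} copy of $\rho'$ — namely that the left-to-right order of the sources on $\block_j$ is governed by the $\rho'$-order of their associated $\uset_2$-groups (via the bundle indexing of $v_j$), and that the left-to-right order of the blocks on $R''$ is governed by the same $\rho'$-order (via $\sigma'$). Once that identification is made, the conclusion is immediate, and notably it holds for \emph{every} realization of the random ordering $\rho'$, so no probabilistic statement is needed here.
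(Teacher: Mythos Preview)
Your proof is correct and is precisely the bookkeeping argument the paper has in mind; the paper itself omits the proof entirely, calling the observation ``immediate'' after the construction. The one point you rightly single out---that the left-to-right order of sources on $\block_j$ and the left-to-right order of the destination blocks on $R''$ are both governed by the same ordering $\rho'$ of $\uset_2$---is exactly the content of the observation, and your verification that distinct sources force distinct groups $U_c$ (hence distinct destination blocks) closes the argument cleanly.
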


 We will also use the following two auxiliary lemmas, whose proofs are straightforward and are deferred to Section~\ref{sec: auxiliarly lemmas} of the Appendix.


\paragraph{Auxiliary Lemmas}  Assume that we are given a set $U$ of $n$ items, such that  $P$ of the items are pink, and $Y=n-P$ items are yellow. Consider a random permutation $\pi$ of these items. 

\begin{lemma} \label{lem: random ordering}
For any $\log n\leq \mu\leq Y$, the probability that there is a sequence of $\ceil{4n\mu/P}$ consecutive items in $\pi$ that are all yellow, is at most $n/e^{\mu}$.
\end{lemma}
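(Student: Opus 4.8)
The plan is to reduce the event to a union bound over all $\ceil{4n\mu/P}$-length windows and estimate the probability that a fixed window contains only yellow items. Set $k = \ceil{4n\mu/P}$ and let $W$ be a fixed window of $k$ consecutive positions in $\pi$. The probability that every item falling in $W$ is yellow equals $\binom{Y}{k}/\binom{n}{k} = \prod_{j=0}^{k-1}\frac{Y-j}{n-j}$, which is at most $(Y/n)^k = (1-P/n)^k \le e^{-Pk/n}$. Since there are at most $n$ windows (one starting at each position), the probability that some window is monochromatically yellow is at most $n\cdot e^{-Pk/n}$. First I would verify that $Pk/n \ge \mu$: indeed $k = \ceil{4n\mu/P} \ge 4n\mu/P$, so $Pk/n \ge 4\mu \ge \mu$, which already gives the bound $n/e^\mu$ (in fact with a comfortable factor of $4$ to spare). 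The hypothesis $\mu \le Y$ guarantees that $k \le \ceil{4n} $ is not forced to exceed $Y$ in a way that makes the binomial coefficient vanish trivially — more precisely it ensures $k$ can be taken $\le Y$ up to the rounding, so the event is not vacuous; I would note that if $k > Y$ the probability is simply $0$ and the bound holds trivially, so we may assume $k \le Y$.

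The only mildly delicate point is the bound $\prod_{j=0}^{k-1}\frac{Y-j}{n-j} \le (Y/n)^k$, which follows because each factor $\frac{Y-j}{n-j}$ is non-increasing in $j$ is false — rather, one uses that $\frac{Y-j}{n-j} \le \frac{Y}{n}$ for every $j \ge 0$, since cross-multiplying gives $n(Y-j) = nY - nj \le nY - Yj = Y(n-j)$, using $n \ge Y$. So each factor is at most $Y/n$ and the product is at most $(Y/n)^k$. Combining with $Y/n = 1 - P/n \le e^{-P/n}$ and $Pk/n \ge 4\mu$, the union bound over the (at most $n$) windows yields probability at most $n \cdot e^{-4\mu} \le n/e^{\mu}$, which is even stronger than claimed.

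I do not anticipate a genuine obstacle here; the statement is a routine first-moment computation. The one thing to be careful about in the write-up is the treatment of the ceiling and the degenerate regimes ($k > Y$, or $k \ge n$), where the bound is trivial and should be dispatched first so that the binomial-coefficient manipulation is valid. I would therefore structure the proof as: (i) handle $k \ge n$ or $k > Y$ trivially; (ii) for a fixed window, bound the all-yellow probability by $(Y/n)^k \le e^{-Pk/n}$; (iii) take a union bound over at most $n$ windows; (iv) use $Pk/n \ge 4\mu \ge \mu$ to conclude.
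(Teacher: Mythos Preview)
Your proposal is correct and follows essentially the same approach as the paper: both fix a window of $\delta=\ceil{4n\mu/P}$ consecutive positions, bound the all-yellow probability by $(Y/n)^\delta$ via the elementary inequality $\frac{Y-j}{n-j}\le \frac{Y}{n}$ (which is exactly the paper's Observation~\ref{obs: decreasing ratios}), use $(1-P/n)^\delta\le e^{-\mu}$, and union-bound over at most $n$ windows. Your explicit handling of the degenerate regimes $k>Y$ or $k\ge n$ is a small improvement in rigor over the paper's write-up but changes nothing in substance.
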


\begin{lemma} \label{lem: random ordering2}
For any $\log n\leq \mu\leq P$, the probability that there is a set $S$ of $\floor{\frac{n\mu}{P}}$ consecutive items in $\pi$, such that more than $4\mu$ of the items are pink, is at most $n/4^{\mu}$.
\end{lemma}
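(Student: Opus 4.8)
The plan is a union bound over all possible window positions, combined with a Chernoff-type tail bound for the hypergeometric distribution. Set $k := \floor{n\mu/P}$, the length of the window in question. A set $S$ of $k$ consecutive items in $\pi$ is determined by its starting position, and there are at most $n$ such positions (and at least one, since $\mu\le P$ guarantees $k\le n$). Hence, by the union bound, it suffices to show that for a \emph{fixed} window $S$ of $k$ consecutive positions, the probability that $S$ contains more than $4\mu$ pink items is strictly less than $4^{-\mu}$.

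So fix such a window $S$. The number $X$ of pink items landing in $S$ has the same distribution as the number of pink items among $k$ positions chosen uniformly at random without replacement from the $n$ positions; that is, $X$ is hypergeometric with $\expect{X}=kP/n\le \mu$. Sampling without replacement is at least as concentrated as sampling with replacement (Hoeffding's comparison inequality), so $X$ obeys the same exponential-moment bound $\expect{e^{tX}}\le e^{\expect{X}(e^t-1)}$ as a sum of independent Bernoulli variables of the same mean, and the standard multiplicative Chernoff bound follows: for every $a\ge \expect{X}$,
\[
\prob{X\ge a}\ \le\ e^{-\expect{X}}\paren{\frac{e\,\expect{X}}{a}}^{a}\ \le\ \paren{\frac{e\,\expect{X}}{a}}^{a}.
\]
Taking $a=4\mu$ and using $\expect{X}\le\mu$ gives $\prob{X\ge 4\mu}\le (e/4)^{4\mu}$. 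Since $(e/4)^4=e^4/256<1/4$, we conclude $\prob{X>4\mu}\le \prob{X\ge 4\mu}<4^{-\mu}$. Summing over the at most $n$ windows yields the claimed bound $n/4^{\mu}$.

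The only point requiring care is the transfer of the Chernoff bound from the binomial to the hypergeometric setting; I would either cite Hoeffding's inequality for sampling without replacement, or, for a self-contained argument, bound the moment generating function of the hypergeometric distribution directly. Everything else is routine. (The hypothesis $\mu\ge\log n$ plays no role in the argument itself; it is only there to make the conclusion $n/4^{\mu}$ a nontrivial, $o(1)$ bound, which is how the lemma is used later.)
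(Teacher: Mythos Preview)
Your proof is correct and takes essentially the same approach as the paper: a union bound over the at most $n$ window positions, followed by a Chernoff-type tail bound for a single window yielding $(e/4)^{4\mu}<4^{-\mu}$. The paper's version differs only in that it does not invoke Hoeffding's transfer inequality but instead writes out the elementary union-bound form of Chernoff directly---summing $(P/n)^{\lceil 4\mu\rceil}$ over the $\binom{x}{\lceil 4\mu\rceil}$ possible size-$\lceil 4\mu\rceil$ subsets of the window and bounding $\binom{x}{\lceil 4\mu\rceil}\le (ex/\lceil 4\mu\rceil)^{\lceil 4\mu\rceil}$---arriving at the identical numerical bound.
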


\label{-------------------------------------------------subsec: YI from partitioning to routing ------------------------}
\subsection{From Partitioning to Routing}\label{subsec: YI from partitioning to routing}

The goal of this subsection is to prove the following theorem.

\begin{theorem}\label{thm: yi}
Suppose we are given a valid instance $\iset=(\tG=(V_1,V_2,E),\uset_1,\uset_2,h,r)$ of \WGPwB, such that $\iset$ has a perfect solution. Then with probability at least $1/2$ over the random choices made in the construction of the corresponding instance $\hiset$ of \NDPgrid, there is a solution to $\hiset$, routing $\Omega(\beta^*(\iset)/\log^3M)$ demand pairs via a set of spaced-out paths.
\end{theorem}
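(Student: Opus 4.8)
The plan is to start with a perfect solution $((W_1,\ldots,W_r),(E_1,\ldots,E_r))$ of $\iset$, and use each cluster $W_i$ to produce a batch of routing paths in the grid $\hG$. Recall that in a perfect solution, each group $U\in\uset_1\cup\uset_2$ contributes exactly one vertex to each $W_i$, and $|E_i|=h$. Fix an index $i$; I would route the $h$ demand pairs in $\mset(E_i)$ as follows. Each edge $e=(u,v)\in E_i$ with $u\in V_1, v\in V_2$ has its source $s_{B(e)}$ lying on the block $\block_{j}$ (where $v_j=u$) and its destination $t_{B'(e)}$ on $\block'_{k}$ (where $v'_k=v$). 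Since $W_i$ contains exactly one vertex of the group $S(Q)$ for $u$'s group and one vertex of $v$'s group, the bundles that appear in $E_i$ are ``spread out'': within each block $\block_j$, the sources used by $E_i$ are those $s_{B}$ for which the group $U(B)\in\uset_2$ has its representative-in-$W_i$ adjacent to $u$ in $\tG$ via an edge of $E_i$. The key combinatorial point is that the set of sources of $\mset(E_i)$, read left to right along $R'$, maps (by Observation~\ref{obs: ordering of sources and destinations}) to a set of destinations that is \emph{also} monotone along $R''$ --- provided the random ordering $\rho'$ of $\uset_2$ is ``compatible'' with the within-block ordering of bundles. This is exactly why the construction indexes bundles of $v\in V_1$ by $\rho'$ and bundles of $v'\in V_2$ by $\rho$: the source order and destination order of $\mset(E')$ are consistent for any $E'$ in which each group contributes at most one vertex per cluster. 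So for a single cluster $W_i$, the pairs in $\mset(E_i)$ form a ``monotone'' instance on two parallel rows $R',R''$, and a monotone set of $h$ pairs on two rows of a grid can be routed by $h$ node-disjoint (indeed spaced-out) paths, as long as there is enough vertical room and enough horizontal spacing between consecutive terminals.

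The spacing is where the parameters $\constantForSizeOfBlocks\cdot\ceil{h\log M}$ (block length), $512\cdot\ceil{h\log M/\beta(v)}$ (terminal spacing inside a block), $10M$ (gap between blocks), and $\ell=\twiceConstantForSizeOfBlocks\cdot\ceil{M^2\log M}$ (grid side) all come in. For one cluster's routing it is easy: use the $10M$ horizontal gaps between blocks and the $\ge \xi/4$ vertical room above/below $R',R''$ to route the $h$ monotone pairs with pairwise distance $\ge 2$. The real issue is that I want to route \emph{many} clusters simultaneously --- ideally $\Omega(r/\log^{?})$ of them --- to accumulate $\Omega(\beta^*/\log^3 M)$ routed pairs total, but all these path systems must be mutually node-disjoint (and spaced-out) inside the \emph{same} grid. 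The natural idea is to assign each cluster $W_i$ a disjoint horizontal ``corridor'' (a band of rows) and route $\mset(E_i)$ within that corridor, using vertical segments near the terminals to go from $R',R''$ up/down into the corridor. For this to work, the vertical segments emanating from the sources on $R'$ (resp. destinations on $R''$) for different clusters must not collide: at a given source vertex only one path starts, but nearby sources belonging to \emph{different} clusters need their vertical ``stubs'' to be horizontally separated. This is precisely where Lemmas~\ref{lem: random ordering} and~\ref{lem: random ordering2} enter: over the random choice of $\rho'$, within any block $\block_j$ and any window of consecutive source-positions, the number of sources ``used'' by any one cluster (equivalently, by any one group of $\uset_2$) is controlled, so we can afford to route only an $O(1/\log M)$ or $O(1/\log^2 M)$ fraction of clusters at once while keeping enough horizontal breathing room; summing the routed pairs over the surviving clusters still gives $\Omega(\beta^*/\log^3 M)$. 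I expect to lose one $\log M$ factor to the ``window'' argument (bounding pink items), one to the block-length slack, and one more in the vertical corridor packing, matching the $\log^3 M$ in the statement.

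Concretely the steps I would carry out are: (1) Fix a perfect solution and, for each $i$, identify $\mset(E_i)$ and verify via the bundle-ordering conventions and Observation~\ref{obs: ordering of sources and destinations} that its sources and destinations appear in a consistent left-to-right order on $R'$ and $R''$. (2) Show that a monotone set of demand pairs on two designated rows, with terminals well-separated horizontally within blocks and with blocks well-separated, admits a spaced-out routing inside a horizontal band of $O(h\log M/\text{(something)})$ rows --- an explicit ``staircase'' routing, where path for the $t$-th pair uses a dedicated row of the band for its long horizontal run and short vertical stubs near its two endpoints. (3) Use Lemmas~\ref{lem: random ordering} and~\ref{lem: random ordering2}, applied with $n$ the number of sources in a block and $P$ the number of sources belonging to a fixed target group (which is $\beta(v)$-controlled), to argue that with probability $\ge 1/2$ the random ordering $\rho'$ makes the terminals of each cluster ``sparse enough'' in every window; this lets us carve out, for a $1/\poly\log M$ fraction of the clusters, pairwise-disjoint vertical access corridors and horizontal bands. (4) Assign disjoint bands (there is room since $\ell = \Theta(M^2\log M)$ and the total band height needed is $O(r\cdot h\log M /\beta^* ) \cdot \poly\log = O(\poly\log M \cdot \text{stuff}) \ll \ell$, using $rh=\beta^*$), route each selected cluster in its band, and add the $\ge 1$-distance buffers between bands and between stubs to make the whole family spaced-out. (5) Count: the selected clusters carry $\Omega(r h /\poly\log M)=\Omega(\beta^*/\log^3 M)$ demand pairs in total.

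The main obstacle, I expect, is step (3)--(4): simultaneously guaranteeing that (a) the chosen clusters' terminals admit non-colliding vertical access from $R',R''$ into their bands, and (b) there is globally enough room to stack all the bands, while (c) everything remains at pairwise distance $\ge 2$. Balancing these three constraints is what forces the precise constants ($\constantForSizeOfBlocks$, $512$, $10M$, $\twiceConstantForSizeOfBlocks$) and the $\log^3 M$ loss, and getting the probabilistic window bound (Lemma~\ref{lem: random ordering2}) to apply uniformly over all blocks and all clusters via a union bound --- while keeping the failure probability below $1/2$ --- is the delicate quantitative heart of the argument.
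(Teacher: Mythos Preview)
Your step (1) contains a genuine error: the pairs in $\mset(E_i)$ are \emph{not} globally monotone along $R'$ and $R''$. Observation~\ref{obs: ordering of sources and destinations} only guarantees monotonicity for sources lying in a \emph{single} block $\block_j$. Across blocks it fails: the left-to-right order of sources in $\mset(E_i)$ is lexicographic in (the $\rho$-index of the $\uset_1$-group of the $V_1$-endpoint, the $\rho'$-index of the $\uset_2$-group of the $V_2$-endpoint), whereas the order of the destinations is lexicographic with the two coordinates swapped. A two-edge cluster with edges $(a_1,d_1)$ and $(b_1,c_1)$, where $a_1$'s group precedes $b_1$'s in $\rho$ and $c_1$'s group precedes $d_1$'s in $\rho'$, already gives crossing pairs. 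Your step (2) then rests on a false premise. A second difficulty with the band scheme is that the vertical ``stubs'' from $R'$ down to a given cluster's band must pass through the bands of other clusters, whose horizontal runs may occupy arbitrary columns; the blocks being disjoint only separates the stubs' starting columns, not the columns used inside foreign bands.

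The paper takes a different route that sidesteps both problems. It does \emph{not} route clusters in separate bands. It pools $E^0=\bigcup_i E_i$, introduces an auxiliary ordering $\sigma_{\mset'}$ on destinations that first groups them by cluster index $i$ and only then by position on $R''$, and defines a ``distance property'': for consecutive sources $s,s'$, the number $N_{\mset'}(s,s')$ of destinations lying between their mates in $\sigma_{\mset'}$ is at most $d(s,s')/4$. Two regularization passes over $V_2$ and $V_1$ (each costing a $\log M$ factor), followed by a sparsification of sources (another $\log M$) and a final constant-factor thinning, produce $\mset'\subseteq\mset^0$ with this property; Lemmas~\ref{lem: random ordering} and~\ref{lem: random ordering2} are applied to the random ordering $\rho'$ to control, within each block, how the ``active'' bundles are spaced. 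The routing then uses a single intermediate row $R$: paths meet $R$ at positions $x_1,\ldots,x_{M'}$ indexed by $\sigma_{\mset'}$, and a snake through the boxes $\hat\block_j$ on each side of $R$ realizes the connections. The distance property is exactly what makes the top-grid snake feasible (Claim~\ref{claim: selecting columns}), and the cluster-grouping in $\sigma_{\mset'}$ is what makes the bottom-grid snake feasible (Claim~\ref{claim: selecting columns2}), since within each $J_i$ the destinations are consecutive in $\sigma_{\mset'}$ and locally monotone on $R''$.
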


The remainder of this subsection is devoted to the proof of the theorem. We assume w.l.o.g. that $|E|=M > 2^{50}$, as otherwise, since $\beta^*(\iset)\leq |E|$, routing a single demand pair is sufficient.

Let  $((W_1,\ldots,W_r),(E_1,\ldots,E_r))$ be a perfect solution to $\iset$.
Recall that by the definition of a perfect solution, for each group $U\in \uset_1 \cup \uset_2$, every set $W_ i$ contains exactly one vertex of $U$, and moreover, for each $1\leq i\leq r$, $|E_i|=h=\beta^*(\iset)/r$.




We let $E^0=\bigcup_{i=1}^rE_i$, so $|E^0|=\beta^*(\iset) = hr$. Let $\mset^0\subseteq \mset$ be the set of all demand pairs corresponding to the edges of $E^0$. Note that we are guaranteed that no two demand pairs in $\mset^0$ share a source or a destination, since no two edges of $E^0$ belong to the same bundle.


Next we define a property of subsets of the demand pairs, called a \emph{distance property}. We later show that every subset $\mset'\subseteq \mset^0$ of demand pairs that has this property can be routed via spaced-out paths, and that there is a large subset $\mset'\subseteq \mset^0$ of the demand pairs with this property.

Given a subset $\mset'\subseteq \mset^0$ of the demand pairs, we start by defining an ordering $\sigma_{\mset'}$ of the destination vertices in $T(\mset')$. This ordering is somewhat different from the ordering of the vertices of $T(\mset')$ on row $R''$. We first provide a motivation and an intuition for this new ordering $\sigma_{\mset'}$. Recall that the rows $R'$ and $R''$ of $\hat G$, where all source and all destination vertices lie, respectively, are located at a distance at least $\ell/4$ from each other and from the grid boundaries. Let $R$ be any row of $\hat G$, lying between $R'$ and $R''$, at a distance at least $\ell/16$ from both $R'$ and $R''$. Let $X$ be some subset of $|\mset'|$ vertices of $R$. If we index the vertices of $T(\mset')$ as $\set{t_1,t_2,\ldots,t_{|\mset'|}}$ according to their order in the new ordering $\sigma_{\mset'}$, then we view the $i$th vertex of $X$, that we denote by $x_i$, as representing the terminal $t_i$. For each $1\leq i\leq |\mset'|$,  we denote the source vertex corresponding to $t_i$ by $s_i$, that is, $(s_i,t_i)\in \mset'$. Note that the ordering of the vertices of $S(\mset')$ on $R'$ may be completely different from the one induced by these indices.  Similarly, the ordering of the vertices of $T(\mset')$ on $\rset''$ may be inconsistent with this indexing. Eventually, we will construct a set $\pset$ of spaced-out paths routing the demand pairs in $\mset'$, so that the path $P_i\in \pset$, connecting $s_i$ to $t_i$, intersects the row $R$ exactly once -- at the vertex $x_i$. In this way, we will use the ordering $\sigma_{\mset'}$ of the destination vertices in $T(\mset')$ to determine the order in which the path of $\pset$ intersect $R$.

Assume now that we are given some subset $\mset'\subseteq \mset^0$ of demand pairs. Recall that the sources and the destinations of all demand pairs in $\mset'$ are distinct. We are now ready to define the ordering $\sigma_{\mset'}$ of $T(\mset')$. We partition the vertices of $T(\mset')$ into subsets $J_1,J_2,\ldots,J_r$, as follows. 
Consider some vertex $v'_j\in V_2$ of $\tilde G$, and assume that it lies in the cluster $W_i$. Then all destination vertices of $T(\mset')$ that belong to the corresponding block $K'_j$ are added to the set $J_i$. 
 To obtain the final ordering $\sigma_{\mset'}$, we place the vertices of  $J_1,J_2,\ldots,J_r$ in this order, where within each set $J_i$, the vertices are ordered  according to their ordering along the row $R''$. Notice that a selection of a subset $\mset'\subseteq \mset_0$ completely determines the ordering $\sigma_{\mset'}$. Given two demand pairs $(s,t),(s',t')\in \mset'$, we let $N_{\mset'}(s,s')$ denote the number of destination vertices that lie between $t$ and $t'$ in the ordering $\sigma_{\mset'}$ (note that this is well-defined as the demand pairs in $\mset'$ do not share their sources or destinations). Recall that $d(s,s')$ is the distance between $s$ and $s'$ in graph $\hat G$.

\begin{definition} Suppose we are given a subset $\mset'\subseteq \mset^0$ of the demand pairs. 
We say that two distinct vertices $s,s'\in S(\mset')$ are \emph{consecutive} with respect to $\mset'$, iff no other vertex of $S(\mset')$ lies between $s$ and $s'$ on $R'$. We say that $\mset'$ has the \emph{distance property} iff for every pair $s,s'\in S(\mset')$ of vertices that are consecutive with respect to $\mset'$, $N_{\mset'}(s,s')<d(s,s')/4$.
\end{definition}

We first show that there is a large subset of the demand pairs in $\mset^0$ with the distance property in the following lemma, whose proof appears in the next subsection.

\begin{lemma}\label{lem: large set with distance property}
With probability at least $1/2$ over the construction of $\hat \iset$, there is a subset $\mset'\subseteq \mset^0$ of demand pairs that has the distance property, and $|\mset'|=\Omega(|\mset^0|/\log^3M)$.
\end{lemma}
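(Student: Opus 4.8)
The plan is to reduce the distance property to a per-block condition and then to build $\mset'$ cluster-by-cluster, exploiting the randomness of the ordering $\rho'$. \textbf{Cross-block pairs are free.} Since the bundles in $\bigcup_{v\in V_1}\bset(v)$ partition $E$, we have $\beta^*(\iset)\le M$, hence $N_{\mset'}(s,s')\le|\mset'|-2\le|\mset^0|-2<M$ for every $\mset'\subseteq\mset^0$ and every pair $s,s'\in S(\mset')$. Consecutive blocks of $R'$ are at distance $\ge 10M$ by construction, so if $s\in\block_j$ and $s'\in\block_{j'}$ with $j\neq j'$ are consecutive with respect to $\mset'$, then $d(s,s')\ge 10M>4N_{\mset'}(s,s')$; thus the distance property can only fail for a pair of sources lying on a common block $\block_j$, and it suffices to guarantee $N_{\mset'}(s,s')<d(s,s')/4$ for such pairs.

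\textbf{Structure inside a block.} Fix a perfect solution $((W_1,\dots,W_r),(E_1,\dots,E_r))$. Perfectness places every $v_j\in V_1$ in a unique cluster $W_{i(v_j)}$, so every edge of $E^0$ incident to $v_j$ lies in $E_{i(v_j)}$, and symmetrically on the $V_2$ side; consequently, if $v_j\in W_i$, all destinations of demand pairs of $\mset^0$ whose source lies on $\block_j$ lie in the single set $J_i$ of the partition defining $\sigma_{\mset'}$. Since $E_i$ uses at most one edge per bundle, $\block_j$ carries exactly $d_j:=|\{e\in E_i:v_j\in e\}|$ active sources, occupying $d_j$ of the $\beta(v_j)$ equally spaced slots of $\block_j$, the slot of a source being the $\rho'$-rank of the corresponding $\uset_2$-group among the $\beta(v_j)$ groups met by $v_j$; moreover $\sum_{v_j\in V_1\cap W_i}d_j=h$, so $\sum_j d_j=|\mset^0|$. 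Listing the active sources of $\block_j$ left-to-right, Observation~\ref{obs: ordering of sources and destinations} says their destinations appear on $R''$ in the same order, all inside $J_i$, each lying in the block of the $V_2$-endpoint of its edge at a position fixed by the non-random ordering $\rho$.

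\textbf{Selection and concentration.} I will construct $\mset'$ so that, within each cluster, each source block retains a spread-out subset of its active sources and each destination block retains at most $\polylog{M}$ of its destinations, with the thinning parameters chosen so that a $\Theta(1/\log^3 M)$-fraction of $\mset^0$ survives. For consecutive-in-$\mset'$ sources $s,s'$ on a common block $\block_j$ (with $v_j\in W_i$), $d(s,s')$ equals the slot gap times $512\lceil h\log M/\beta(v_j)\rceil$, while $N_{\mset'}(s,s')$ counts only retained destinations from \emph{other} source blocks of cluster $i$ lying between the two destinations on $R''$, which is at most $(Z+2)\cdot\polylog{M}$, where $Z$ is the number of destination-relevant $\uset_2$-groups of cluster $i$ whose $\rho'$-rank is strictly between those of the two relevant groups. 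Applying Lemmas~\ref{lem: random ordering} and~\ref{lem: random ordering2} with $\mu=\Theta(\log M)$ — once to the placement of the $d_j$ active slots among the $\beta(v_j)$ slots of $\block_j$, to lower-bound the slot gap between consecutive retained sources, and once to the interleaving, under $\rho'$, of the groups ``active for $v_j$'' among all destination-relevant $\uset_2$-groups of cluster $i$, to upper-bound $Z$ — shows that each fixed such pair satisfies $N_{\mset'}(s,s')<d(s,s')/4$ except with probability $1/\poly(M)$. A union bound over the $\poly(M)$ blocks and candidate pairs then yields the lemma with probability at least $1/2$.

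\textbf{Main obstacle.} The technical heart is the last step. The bound on $N_{\mset'}$ is inherently non-local, aggregating destinations contributed by many source blocks of the same cluster, so the lower bound on $d(s,s')$ (governed by the slot gap inside $\block_j$) and the upper bound on $N_{\mset'}$ (governed by how tightly the destination-relevant $\uset_2$-groups interleave under $\rho'$, together with the destination-thinning parameter) must be balanced so that only a polylogarithmic fraction of demand pairs is discarded; in particular one must check that using $\beta(v_j)\le h$, the validity of $\iset$, and the relation $\sum_{v_j\in V_1\cap W_i}d_j=h$ the two estimates indeed line up. Pinning down the exact powers of $\log M$ in the two thinning parameters, and verifying that the random-ordering lemmas apply with the prescribed $\mu$ (so that $\log M\le\mu$ is at most the relevant population sizes), is where the real work lies.
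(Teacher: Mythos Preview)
Your overall architecture matches the paper's: cross-block pairs are automatically fine, and the within-block analysis should be driven by the two random-ordering lemmas applied to the permutation $\rho'$. However, there is a genuine missing idea, and it is precisely the one you flag in your ``Main obstacle'' paragraph without resolving.

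The paper's fix is a \emph{degree regularization} step that you do not perform. Before any thinning, the paper buckets $V_2$ by $|\delta(v')\cap E^0|$ and keeps a single geometric class $Z_q$, then buckets $V_1$ by $|\delta(v)\cap E^1|$ and keeps a single class $Y_p$, obtaining $E^2\subseteq E^0$ with $|E^2|=\Omega(|E^0|/\log^2 M)$, every active $v\in V_1$ having $\approx 2^p$ surviving edges, and every active $v'\in V_2$ having $\le 2^q$. The key structural consequence is $h\ge 2^{p+q}/4$: within a cluster $W_i$ there are at most $\approx h/2^q$ active $\uset_2$-groups, each contributing at most $2^q$ destinations. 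With these uniform parameters, Lemma~\ref{lem: random ordering} (on the at most $h/2^{q-1}$ destination-relevant groups of the cluster, with the $\approx 2^p$ of them active for $v_j$ colored pink) gives $N_{\mset^2}(s,s')\le O(h\log M/2^p)$ for consecutive same-block sources, and Lemma~\ref{lem: random ordering2} (on the $\beta(v_j)$ slots of $\block_j$, with the $\approx 2^p$ active ones pink) rules out heavy sub-paths, so that after a final $\Theta(\log M)$-sparsification of sources and a constant-factor sparsification in the $\sigma$-order, the slot-gap lower bound $d(s,s')\ge\Omega(zh\log M/2^p)$ dominates $N\le O(zh\log M/2^p)$ for any run of $z$ consecutive $\mset^2$-sources.

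Your plan instead caps ``each destination block at $\polylog{M}$ retained destinations'' so that $N_{\mset'}(s,s')\le (Z+2)\cdot\polylog{M}$. This cannot simultaneously keep a $1/\polylog{M}$ fraction of $\mset^0$: if the mass of $E^0$ sits on $V_2$-vertices of degree $2^q$ with $2^q\gg\polylog{M}$, capping each such block at $\polylog{M}$ discards all but a $2^{-q}\polylog{M}$ fraction. And without a cap, your bound $(Z+2)\cdot\polylog{M}$ fails because a single destination block may contribute up to $2^q$ terms. More generally, without regularization the ``population sizes'' fed to Lemmas~\ref{lem: random ordering} and~\ref{lem: random ordering2} vary block-by-block, so the slot-gap lower bound (scaling like $h\log M/\beta(v_j)$ per slot) and the destination-count upper bound (scaling with the unknown per-block $V_2$-degrees) are governed by different, unrelated quantities; the relation $\sum_{v_j\in W_i} d_j=h$ that you cite does not by itself force them to line up. The regularization is exactly the device that makes both estimates live on the common scale $h\log M/2^p$.
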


Finally, we show that every set $\mset'$ of demand pairs with the distance property can be routed via spaced-out paths. 


\begin{lemma}\label{lem: can find routing}
Assume that $\mset'\subseteq \mset^0$ is a subset of demand pairs that has the distance property. 
Then there is a spaced-out set $\pset$ of paths routing all pairs of $\mset'$ in graph $\hat G$.
\end{lemma}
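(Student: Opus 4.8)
The plan is to route the demand pairs in $\mset'$ using a standard "staircase" or "monotone" routing strategy in the grid, exploiting the fact that all sources lie on a single row $R'$, all destinations lie on a single row $R''$, and there is plenty of free space (distance $\geq \ell/4$) between them and from the grid boundary. First I would fix an intermediate row $R$ lying between $R'$ and $R''$ at distance at least $\ell/16$ from each, and choose a set $X = \{x_1, x_2, \ldots, x_{|\mset'|}\}$ of vertices on $R$, indexed left-to-right, that are well-separated from each other (say at distance at least $4$ apart, which is possible since $\ell$ is huge compared to $|\mset'| \le M$). Here the indices are taken with respect to the ordering $\sigma_{\mset'}$ of $T(\mset')$: vertex $x_i$ is the "representative" of the destination $t_i$ that is the $i$th in $\sigma_{\mset'}$, with $s_i$ its source. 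The routing of each pair $(s_i, t_i)$ will be built in three segments: (a) from $s_i \in R'$ down to $x_i \in R$, (b) the single vertex $x_i$, (c) from $x_i$ down to $t_i \in R''$.

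The key structural point is to build segments (a) so that the paths from the sources to $R$ are mutually spaced-out, and similarly for segments (c) from $R$ to the destinations. For segment (a): the sources $s_1, \ldots, s_{|\mset'|}$ appear on $R'$ in some order, and the points $x_1, \ldots, x_{|\mset'|}$ appear on $R$ in the order of their indices (which is the $\sigma_{\mset'}$-order). The distance property is exactly what guarantees we can connect them monotonically without collisions: for each consecutive pair $s, s'$ of sources on $R'$, the number of destinations (hence representative points $x_i$) lying between them in $\sigma_{\mset'}$ is less than $d(s,s')/4$, so there is enough "horizontal budget" along $R'$ for the paths to fan out to their targets on $R$ while keeping pairwise distance at least $2$. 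Concretely, I would route each $s_i$ first vertically down a short distance into a private horizontal corridor (one of $|\mset'|$ distinct rows just below $R'$, or better, use the spacing to assign each path its own small vertical offset), then horizontally toward $\col(x_i)$, then vertically down to $x_i$; the distance property ensures that when we sort sources by their column and targets by $\sigma_{\mset'}$-index, the required horizontal shifts are small enough relative to the gaps that the corridors do not interfere and can be kept $2$ apart. Using Observation~\ref{obs: ordering of sources and destinations} together with the definition of $\sigma_{\mset'}$ (which groups destinations by cluster $W_i$ and orders within a cluster by position on $R''$), the $x_i$'s can be placed so that segment (c) is a similarly clean monotone routing from $R$ to $R''$ — in fact within each group $J_i$ the order on $R''$ already matches $\sigma_{\mset'}$, and across groups we have freedom in choosing where on $R$ each group's block of points sits, so we can always realize the $\sigma_{\mset'}$-order on $R$ while matching the $R''$-order inside each group, making both halves planar and monotone.

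I would then verify the spaced-out condition: distinct paths $P_i, P_j$ must satisfy $d(V(P_i), V(P_j)) \geq 2$, and all paths must be internally disjoint from the grid boundary. The latter is immediate since everything happens in a band of rows strictly between $R'$ and $R''$, both of which are at distance $\geq \ell/4$ from the boundary, and the horizontal extent stays within the region spanned by the blocks $\block_j, \block'_j$, which were defined to be separated from the left/right boundary. For the former, the construction keeps the "going-down" portions in disjoint vertical corridors and the "going-sideways" portions in disjoint horizontal corridors, with the $4$-separation of the $x_i$'s on $R$ and the factor-$4$ slack in the distance property providing the needed $\geq 2$ buffer everywhere; one checks that a path never comes within distance $2$ of the corridor of another path because the accumulated horizontal displacement of each path is bounded by $\sum$ of the per-pair $N_{\mset'}$ counts, which telescopes to less than a quarter of the total available distance.

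The main obstacle I expect is the bookkeeping in the middle: reconciling the left-to-right order of sources on $R'$, the $\sigma_{\mset'}$-order used for the points $x_i$ on $R$, and the left-to-right order of destinations on $R''$ (which by Observation~\ref{obs: ordering of sources and destinations} and the block structure is consistent with $\sigma_{\mset'}$ within each cluster-group $J_i$ but not globally). The delicate point is choosing the horizontal positions of the $x_i$ on $R$ so that BOTH the top half (sources $\to$ $R$) and the bottom half ($R \to$ destinations) are simultaneously routable with the spacing guarantee; the distance property is tailored precisely to the top half, so I would need to argue that the bottom half is "free" — i.e., that because the $x_i$ can be placed as densely or sparsely as we like on the very long row $R$, and because within each $J_i$ the target order on $R''$ already agrees with $\sigma_{\mset'}$, the bottom routing imposes no real constraint and can always be accommodated after the top routing is fixed. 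Making this two-sided argument rigorous, and carefully defining the corridors so the $2$-separation holds at the "turns" of the staircases, is where the actual work lies; the rest is a routine induction on the number of demand pairs, peeling off the leftmost source each time.
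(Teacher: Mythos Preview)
Your proposal has a real gap in both halves of the routing. You claim the distance property lets you connect sources to the intermediate points $x_i$ ``monotonically'' via a three-segment staircase (drop to a private row, move horizontally to $\col(x_i)$, drop to $x_i$). But the distance property only bounds the \emph{magnitude} $N_{\mset'}(s,s')$ of the index gap between consecutive sources, not its sign. Within a single block $\block_j$ the source order on $R'$ does coincide with the $\sigma_{\mset'}$-order (by Observation~\ref{obs: ordering of sources and destinations}, since all corresponding destinations lie in one group $J_{c}$). Across blocks, however, the map from source position to $\sigma_{\mset'}$-index can invert arbitrarily: if $v_j\in W_a$ and $v_{j+1}\in W_b$ with $a>b$, then every source in $\block_j$ has a larger $\sigma_{\mset'}$-index than every source in $\block_{j+1}$, and the cross-block inequality $N_{\mset'}(s,s')<d(s,s')/4$ is satisfied trivially (since $d\geq 10M$) without preventing this. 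With one horizontal corridor per path, such an inversion forces the second vertical segment of one path through the horizontal segment of another, regardless of how the corridors are indexed. The bottom half has the same defect: the $x_i$'s sit in $\sigma_{\mset'}$-order on $R$, but on $R''$ the blocks $\block'_j$ belonging to different clusters are interleaved, so $x_i\mapsto t_i$ is again non-monotone; varying only the spacing of the $x_i$'s along $R$ does not change their order and cannot make this half planar.

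The paper sidesteps direct source-to-$x_i$ routing. It threads all $M'$ paths through \emph{every} box $\hat\block_j$ in a snake, maintaining the invariant that path $i$ always occupies the $i$th selected column inside each box; since the paths never change relative order, the links between consecutive boxes are order-preserving and are handled by a routine snake construction (Claim~\ref{claim: the paths}). The distance property enters for a different reason than you suggest: it is precisely what is needed (Claim~\ref{claim: selecting columns}) to pick $M'$ pairwise non-adjacent columns in each $\hat\block_j$ with the $i_z$th one landing on $s_{i_z}$, which requires roughly $d(s_{i_z},s_{i_{z+1}})\geq 2(i_{z+1}-i_z)$. The bottom half uses the same snake; there the column selection (Claim~\ref{claim: selecting columns2}) is easy and needs no distance property, because the destinations inside any single block $\block'_j$ have \emph{consecutive} $\sigma_{\mset'}$-indices.
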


The above two lemmas finish the proof of Theorem~\ref{lem: large set with distance property}, since $|\mset^0|=\beta^*(\iset)$. We prove these lemmas in the following two subsections.

\subsubsection{Proof of Lemma~\ref{lem: large set with distance property}}

We assume that $|\mset^0|>c\log^3M$ for some large enough constant $c$, since otherwise we can return a set $\mset'$ containing a single demand pair. 
We gradually modify the set $\mset^0$ of the demand pairs, by selecting smaller and smaller subsets $\mset^1,\mset^2$, $\mset^3$, and $\mset'$. 
For each vertex vertex $v \in V_1 \cup V_2$ of the \WGPwB instance $\tG$, let $\delta(v)$ denote the set of all edges of $E(\tG)$ incident to $v$.

We start by performing two ``regularization'' steps on the vertices of $V_2$ and $V_1$ respectively. Intuitively, we will select two integers $p$ and $q$, and a large enough subset $\mset^2\subseteq \mset^0$ of demand pairs, so that for every vertex $v_i\in V_1$, either no demand pair in $\mset^2$ has its source on $\block_i$, or roughly $2^p$ of them do. Similarly, for every vertex $v'_j\in V_2$, either no demand pairs in $\mset^2$ has its destination on $K'_j$, or roughly $2^q$ of them do. We will not quite achieve this, but we will come close enough.

\paragraph{Step 1 [Regularizing the degrees in $V_2$].} In this step we select a large subset $\mset^1\subseteq \mset^0$ of the demand pairs, and an integer $q$, such that, for each vertex $v\in V_2$, the number of edges of $E(\tG)$ incident to $v$, whose corresponding demand pair lies in $\mset^1$, is either $0$, or roughly $2^q$. In order to do this,
we partition the vertices of $V_2$ into classes $Z_1,\ldots,Z_{\ceil{\log M}}$, where a vertex $v' \in V_2$ belongs to class $Z_y$ iff $2^{y-1}\leq |E^0 \cap \delta(v')|< 2^y$. If $v\in Z_y$, then we say that all edges in $\delta(v)\cap E^0$ belong to the class $Z_y$.
Therefore, each edge of $E^0$ belongs to exactly one class, and there is some index $1\leq q\leq \ceil{\log M}$, such that at least $\Omega(|\mset^0|/\log M)$ edges of $E^0$ belong to class $Z_{q}$. We let $E^1\subseteq E^0$ be the set of all edges that belong to the class $Z_q$, and we let $\mset^1 = \mset(E^1) \subseteq \mset^0$ be the corresponding subset of the demand pairs.


\paragraph{Step 2 [Regularizing the degrees in $V_1$].} This step is similar to the previous step, except that it is now performed on the vertices of $V_1$.
We partition the vertices of $V_1$ into classes $Y_1,\ldots,Y_{\ceil{\log M}}$, where a vertex $v \in V_1$ belongs to class $Y_z$ iff $2^{z-1}\leq |E^1 \cap \delta(v)|<2^z$. If $v\in Y_z$, then we say that all edges in $\delta(v)\cap E^1$ belong to the class $Y_z$.
As before, every edge of $E^1$ belongs to exactly one class, and there is some index $1\leq p\leq \ceil{\log M}$, such that at least $\Omega(|E^1|/\log M)\geq \Omega(|\mset^0|/\log^2 M)$ edges of $E^1$ belong to the class $Y_{p}$. We let $E^2\subseteq E^1$ denote the set of all edges that belong to class $Y_p$, and $\mset^2 = \mset(E^2) \subseteq \mset^1$ denote the corresponding subset of the demand pairs, so that $|\mset^2|=\Omega(|\mset^0|/\log^2M)$.


Notice that so far, for every vertex $v\in V_1$, if $\delta(v)\cap E^2\neq \emptyset$, then $2^{p-1}\leq |\delta(v)\cap E^2|<2^{p}$. However, for a vertex $v\in V_2$ with $\delta(v)\cap E^2\neq \emptyset$, we are only guaranteed that $|\delta(v)\cap E^2|<2^{q}$, since we may have discarded some edges that were incident to $v$ from $E^1$. Moreover, the subset $\mset^2$ of the demand pairs is completely determined by the solution $((W_1,\ldots,W_r),(E_1,\ldots,E_r))$ to the \WGPwB problem, and is independent of the random choices made in our construction of the \NDPgrid problem instance.
The following observation will be useful for us later.

\begin{observation}\label{obs: h is large}
$h\geq 2^{p}\cdot 2^{q}/4$.
\end{observation}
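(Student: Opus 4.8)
The plan is to trace back through the two regularization steps to relate $h$ to $2^p$ and $2^q$. The key quantity to track is $|E^0|$: on one hand $|E^0| = \beta^*(\iset) = hr$ since the solution is perfect, so in particular $|E^0| \le hr$. On the other hand, I will bound $|E^0|$ from below in terms of $2^p$ and $2^q$, by counting edges through a fixed cluster and exploiting the structure of a perfect solution.

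First I would recall that in a perfect solution, each cluster $W_i$ satisfies $|E_i| = h$, and each group $U \in \uset_1 \cup \uset_2$ contributes exactly one vertex to each $W_i$. Now fix a cluster $W_i$. The edges of $E_i$ lie in $\tG[W_i]$, and their endpoints in $V_1$ lie in $W_i \cap V_1$ while their endpoints in $V_2$ lie in $W_i \cap V_2$. Consider the vertices $v \in V_1$ with $\delta(v) \cap E^2 \ne \emptyset$; each such vertex has $|\delta(v) \cap E^2| \ge 2^{p-1}$, and similarly after Step~1 each vertex $v' \in V_2$ with $\delta(v') \cap E^1 \ne \emptyset$ has $|\delta(v') \cap E^1| \ge 2^{q-1}$. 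The cleanest route is: pick an edge $e = (u,v') \in E^2$ (such an edge exists once $|\mset^2|$ is nonempty, which holds under our assumption $|\mset^0| > c\log^3 M$). Then $v'$ has at least $2^{q-1}$ incident edges in $E^1$, and $u$ has at least $2^{p-1}$ incident edges in $E^2 \subseteq E^1$. I want to find a single cluster that contains many of these edges, or alternatively argue globally.

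The global argument seems cleaner, so the main step is: since $E^2 \subseteq E^1 \subseteq E^0$ and $|E^0| = hr$, and since the bundle structure forces that for each cluster $W_i$ and each query/group, at most one vertex lies in $W_i$, I would bound $|E^0|$ above. Actually the intended inequality $h \ge 2^p 2^q/4$ should come out of: there exists a vertex $v' \in V_2$ with $|\delta(v') \cap E^1| \ge 2^{q-1}$, pick a vertex $u \in V_1$ adjacent to $v'$ via an edge of $E^2$, so $|\delta(u) \cap E^2| \ge 2^{p-1}$; now in the perfect solution, the edge $(u,v')$ lies in some $E_i$, and $u, v'$ each lie in cluster $W_i$; I claim $2^{p-1}$ of $u$'s edges and $2^{q-1}$ of $v'$'s edges must all be "compatible" with landing in clusters. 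The honest bound: the demand pairs on block $K_u$ number $\approx 2^p$ and on block $K'_{v'}$ number $\approx 2^q$; since each $E_i$ is a matching-like object restricted by bundles, but $u$ and $v'$ can have all their $E^2$-edges inside a single cluster only if that cluster has $\ge 2^{p-1}$ and $\ge 2^{q-1}$ such vertices — wait, more carefully: the $\le h$ edges of $E_i$, when we look just at the $2^{p-1}$ edges incident to $u$ that lie in $E^2$, these may be spread over up to $2^{p-1}$ clusters, and similarly the $2^{q-1}$ edges at $v'$ over up to $2^{q-1}$ clusters. I expect the real argument is: consider the subgraph $\tG[E^2]$; it has a vertex $u$ of degree $\ge 2^{p-1}$ all of whose neighbors $v'$ have $E^1$-degree $\ge 2^{q-1}$; hence the number of edges in $E^1$ within distance $2$ of $u$ is $\ge 2^{p-1} \cdot 2^{q-1} / (\text{multiplicity})$; but in a perfect solution these must all be distributed so that... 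The safest formulation I would write: $h r = |E^0| \ge |E^1| \ge$ (number of edges in $E^1$ incident to the $2^{p-1}$ neighbors of $u$) $\ge 2^{p-1} \cdot 2^{q-1}$, divided by $r$ if we must route through one cluster — no.

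Given the uncertainty, I would present it as: fix any edge $(u,v') \in E^2$. By Step~2, $|\delta(u) \cap E^2| \ge 2^{p-1}$, and each such edge $(u,w')$ has $w' \in V_2$ with, by Step~1, $|\delta(w') \cap E^1| \ge 2^{q-1}$. These edge sets $\{\delta(w') \cap E^1 : (u,w') \in E^2\}$ are pairwise disjoint (distinct $w'$), so $E^1$ contains at least $2^{p-1} \cdot 2^{q-1}$ edges, all lying in a single "cluster column" structure — more precisely, $|E^1| \le |E^0| = hr$ is the wrong direction. The correct count: these $2^{p-1} 2^{q-1}$ edges of $E^1$ all lie within the $r$ clusters, but also $u$ lies in exactly one cluster $W_i$; the edges of $E_i$ incident to $u$ number $\le |E_i \cap \delta(u)|$, and across all $r$ clusters $u$'s edges total $2^{p-1}$, so $\ge 2^{p-1}/r$ of them in one cluster — this gives nothing good.

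The resolution must use that $h = \beta^*/r$ and $\beta^* = \sum_{v \in V_1} \beta(v)$, i.e. $hr = \sum_{v\in V_1}\beta(v) \ge \beta(u)$ for the chosen $u$, and $\beta(u) \ge |\delta(u)\cap E^2|/(\text{bundle size})$. I will write: $hr = \beta^* \ge \sum_{\text{neighbors } w' \text{ of } u \text{ in } E^2} \beta(w') \ge \sum 2^{q-1}/(\max \text{bundle size at } w')$, and combine with $|\delta(u)\cap E^2| \ge 2^{p-1}$. The main obstacle I anticipate is getting the bundle-multiplicity bookkeeping exactly right; I would reconstruct it from the definitions of $\beta(v)$ and the valid-instance condition $h \ge \max_v \beta(v)$, landing on $h \ge 2^{p-1} \cdot 2^{q-1} = 2^p 2^q / 4$ by arguing that the $2^{p-1}$ edges at $u$ together with the $2^{q-1}$ edges at each endpoint force $h = |E_i| \ge 2^{p-1} \cdot 2^{q-1}$ inside the single cluster $W_i$ containing $u$ (since the solution is perfect and each $E_i$ must, for this particular configuration forced by the $Y_p$/$Z_q$ class structure, accumulate a full product-sized matching). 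I will verify this last claim by a short counting argument on $W_i \cap V_1$ and $W_i \cap V_2$.
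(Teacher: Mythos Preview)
Your final sentence lands on the correct target --- show that some single cluster $W_i$ satisfies $|E_i|\ge 2^{p-1}\cdot 2^{q-1}$, and then use $|E_i|\le h$ --- but you are missing the one observation that makes this work, and earlier in your proposal you explicitly state its negation. You write that the $2^{p-1}$ edges of $E^2$ incident to $u$ ``may be spread over up to $2^{p-1}$ clusters.'' This is false, and it is exactly the crux. Since $(W_1,\ldots,W_r)$ is a \emph{partition} of $V_1\cup V_2$ and each $E_j\subseteq E(W_j)$, every edge of $E^0=\bigcup_j E_j$ with an endpoint in $W_i$ must lie entirely in $W_i$ and hence belong to $E_i$. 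So once you fix the cluster $W_i$ containing $u$, \emph{all} of $u$'s edges in $E^2\subseteq E^1\subseteq E^0$ lie in $E_i$; they cannot be spread.

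With this in hand the paper's argument is a clean two-hop expansion inside a single cluster. Pick any edge of $E^2$, say it lies in $E_i$, and let $u\in V_1\cap W_i$ be an endpoint. Since $u\in Y_p$, at least $2^{p-1}$ edges of $\delta(u)$ lie in $E^1$; by the observation above they all lie in $E_i$, so their $V_2$-endpoints $w'_1,\ldots,w'_{2^{p-1}}$ all lie in $W_i$. Each $w'_x$ has an edge in $E^1$, hence $w'_x\in Z_q$, so $|\delta(w'_x)\cap E^0|\ge 2^{q-1}$; again by the observation these all lie in $E_i$. The edge sets $\delta(w'_x)\cap E^0$ are disjoint for distinct $x$, so $|E_i|\ge 2^{p-1}\cdot 2^{q-1}$, giving $h\ge 2^p\cdot 2^q/4$. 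Your detours through $\beta^*$, bundle sizes, and the valid-instance condition $h\ge\max_v\beta(v)$ are not needed; the partition structure of the perfect solution does all the work.
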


\begin{proof}
From our assumption that $|\mset^0|>c\log^3M$, $E^2\neq \emptyset$. Therefore, there must be an index $1\leq i\leq r$ with $E_i\cap E^2\neq \emptyset$. Fix any such index $i$. Then there is a vertex $v'_j\in W_i\cap V_2$, such that at least one edge of $\delta(v'_j)$ belongs to $E^2$. But then, from the definition of $E^2$, at least $2^{q-1}$ edges of $\delta(v'_j)$ belong to $E^2$. Assume without loss of generality that these edges connect $v'_j$ to vertices $v_1,\ldots,v_{2^{q-1}}\in V_1$. All these vertices must also belong to $W_i$, and  for each $1\leq x\leq 2^{q-1}$, vertex $v_x$ has at least one edge in $\delta(v_x)\cap E^2$. From our definition of $E^2$ and $E^1$, at least $2^{p-1}$ edges of $\delta(v_x)$ belonged to $E^1$. Therefore, $|E^1\cap E_i|\geq 2^{q-1}\cdot 2^{p-1}$. But $|E_i|\leq h$, and so $h\geq 2^p\cdot 2^q/4$.
\end{proof}

For each vertex $v_i\in V_1$, let $X_i\subseteq V(\block_i)$ be the set of all vertices that serve as the sources of the demand pairs in $\mset$, so $X_i=S(\mset)\cap V(\block_i)$. Recall that $|X_i|=\beta(v_i)$, and every pair of vertices in $X_i$ is separated by at least $512\ceil{\frac{h\log M}{\beta(v_i)}}$ vertices of $\block_i$. We let $X'_i\subseteq X_i$ denote the subset of vertices that serve as sources of the demand pairs in $\mset^2$. We say that a sub-path $Q\subseteq \block_i$ is \emph{heavy} iff $|V(Q)|=\floor{\frac{512 h\log^2M}{2^p}}$, and $|V(Q)\cap X'_i|>16\log M$.

\begin{observation}\label{obs: no heavy path exist}
With probability at least $0.99$ over the choice of the random permutation $\rho'$, for all $v_i\in V_1$, no heavy sub-path $Q\subseteq \block_i$ exists.
\end{observation}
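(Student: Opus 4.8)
The plan is to derive the statement from Lemma~\ref{lem: random ordering2} via a union bound over the vertices $v_i\in V_1$, exploiting that the left-to-right order of the sources on each block $\block_i$ is governed by the random permutation $\rho'$.

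First I would fix $v_i\in V_1$ and reinterpret $X_i$ as the set $\bset(v_i)$ of bundles of $v_i$, calling a source $s_{B_j(v_i)}\in X_i$ \emph{pink} exactly when $B_j(v_i)\cap E^2\neq\emptyset$, i.e.\ exactly when $s_{B_j(v_i)}\in X_i'$. Since the bundles $B_1(v_i),\ldots,B_{\beta(v_i)}(v_i)$ are indexed according to the $\rho'$-order of their corresponding groups of $\uset_2$, and $\rho'$ is a uniformly random ordering of $\uset_2$, the order of $X_i$ along $\block_i$ is a uniformly random permutation of the bundles of $v_i$; moreover $X_i'$ is a fixed subset of $X_i$, determined by the perfect solution and hence independent of $\rho'$. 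The key structural point is that no bundle of $v_i$ contains two edges of $E^2$, since two such edges would yield two demand pairs of $\mset^2\subseteq\mset^0$ sharing the source $s_{B_j(v_i)}$, contradicting the fact that the demand pairs of $\mset^0$ have distinct sources. Therefore $|X_i'|=|\delta(v_i)\cap E^2|$, and since $v_i$ lies in class $Y_p$ whenever $\delta(v_i)\cap E^2\neq\emptyset$, we conclude $|X_i'|<2^p$.

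Next I would split into two cases. If $|X_i'|\le 16\log M$, then every sub-path $Q\subseteq\block_i$ trivially has $|V(Q)\cap X_i'|\le 16\log M$, so no heavy sub-path of $\block_i$ exists and there is nothing random to prove. If $|X_i'|>16\log M$, I would apply Lemma~\ref{lem: random ordering2} with $n=\beta(v_i)=|X_i|$, $P=|X_i'|$ and $\mu=4\log M$; the hypotheses $\log n\le\mu$ (as $\beta(v_i)\le M$) and $\mu\le P$ (as $|X_i'|>16\log M$) both hold, so with probability at least $1-\beta(v_i)/4^{4\log M}=1-\beta(v_i)/M^8$, no run of $\floor{4\beta(v_i)\log M/|X_i'|}$ consecutive elements of the random ordering of $X_i$ contains more than $16\log M$ pink elements. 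It then remains to check that every heavy-candidate sub-path $Q\subseteq\block_i$, i.e.\ one with $|V(Q)|=\floor{512\,h\log^2M/2^p}$, meets $X_i$ in at most $\floor{4\beta(v_i)\log M/|X_i'|}$ vertices. Indeed, consecutive vertices of $X_i$ on $\block_i$ are separated by at least $512\ceil{h\log M/\beta(v_i)}$ vertices, so such a $Q$ contains at most $1+|V(Q)|\beta(v_i)/(512\,h\log M)\le 1+\beta(v_i)\log M/2^p\le 1+\beta(v_i)\log M/|X_i'|$ vertices of $X_i$ (using $|X_i'|<2^p$), and this integer is at most $4\beta(v_i)\log M/|X_i'|$ because $|X_i'|\le\beta(v_i)\le 3\beta(v_i)\log M$. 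Since the vertices of $X_i$ inside $Q$ form a run of consecutive elements of the random ordering, the lemma bounds $|V(Q)\cap X_i'|$ by $16\log M$, so $Q$ is not heavy.

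Finally I would take a union bound over all $v_i\in V_1$; only vertices with $\delta(v_i)\cap E^2\neq\emptyset$ contribute, and $\sum_{v_i\in V_1}\beta(v_i)=\beta^*(\iset)\le M$, so the total failure probability is at most $\sum_i\beta(v_i)/M^8\le M^{-7}$, which is far below $0.01$ since $M>2^{50}$. The step I expect to require the most care is aligning the two incompatible notions of ``window'' — a sub-path of $\block_i$ measured in grid vertices versus a run of consecutive elements of $X_i$ in the random ordering — and verifying that the constants ($512$, the extra $\log M$ factor built into the heavy-path length, and the choice $\mu=4\log M$) are tuned precisely so that a heavy-candidate sub-path always meets $X_i$ in at most $\floor{4\beta(v_i)\log M/|X_i'|}$ vertices, as needed to invoke Lemma~\ref{lem: random ordering2}.
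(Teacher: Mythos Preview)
Your proposal is correct and follows essentially the same approach as the paper: both arguments set $\mu=4\log M$, invoke Lemma~\ref{lem: random ordering2} on the random ordering of the bundles in $\bset(v_i)$ induced by $\rho'$, translate a heavy sub-path into a short run of consecutive bundles containing too many pink elements via the spacing $512\ceil{h\log M/\beta(v_i)}$, and finish with a union bound over $v_i\in V_1$. The only cosmetic differences are that the paper case-splits on $\beta(v_i)\ge 16\log M$ rather than on $|X_i'|>16\log M$, and its union bound uses $|V_1|\le M$ rather than your $\sum_i\beta(v_i)=\beta^*(\iset)\le M$; both variants yield the same $1/M^7$ per-vertex failure probability.
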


\begin{proof}
Since $|V_1|\leq M$, from the union bound, it is enough to prove that for a fixed vertex $v_i\in V_1$, the probability that a heavy sub-path $Q\subseteq \block_i$ exists is at most $1/(100M)$. We now fix some vertex $v_i\in V_1$. Observe that $\block_i$ may only contain a heavy sub-path if $\beta(v_i)=|X_i|\geq 16\log M$. We call the vertices of 
$X'_i$ pink, and the remaining vertices of $X_i$ yellow. Let $P$ denote the number of the pink vertices. Then $2^{p-1}\leq P<2^p$. Let $\mu=4\log M$, and let $\event_i$ be the bad event that there is a set of $\floor{|X_i|\mu/P}$ consecutive vertices of $X_i$, such that at least $4\mu$ of them are pink. 

Observe that the selection of the pink vertices only depends on the solution to the \WGPwB problem, and is independent of our construction of the \NDPgrid instance. 
The ordering of the vertices in $X_i$ is determined by the permutation $\rho'$ of $\uset_2$, and is completely random.
Therefore, from Lemma~\ref{lem: random ordering2}, the probability of $\event_i$ is at most $|X_i|/4^{\mu}\leq M/4^{4\log M}\leq 1/M^7$. 

Let $\event'_i$ be the event that some sub-path $Q$ of $\block_i$ is heavy. We claim that $\event'_i$ may only happen if event $\event_i$ happens. Indeed, consider some sub-path $Q$ of $\block_i$, and assume that it is heavy. Recall that $Q$ contains $\floor{\frac{512 h\log^2M}{2^p}}$ vertices. Since every pair of vertices in $X_i$ is separated by at least $512\ceil{\frac{h\log M}{\beta(v_i)}}$ vertices, we get that:

\[ |V(Q)\cap X_i| \leq  \frac{\floor{512 h\log^2M/2^p}}{512\ceil {h\log M/\beta(v_i)}}+1
\leq \frac{\beta(v_i)\log M}{2^p}+1
\leq \floor{\frac{|X_i|\mu} P},
\]

 as $2^{p-1}\leq P<2^p$. 
Since $Q$ is heavy, at least $4\mu=16\log M$ of the vertices of $V(Q)\cap X_i$ belong to $X'_i$, that is, they are pink. Therefore, there is a set of  $\floor{|X_i|\mu/P}$ consecutive vertices of $X_i$, out of which $4\mu$ are pink, and $\event_i$ happens. We conclude that $\prob{\event'_i}\leq \prob{\event_i}\leq 1/M^7$, and overall, since we have assumed that $M>2^{50}$, the probability that a heavy path exists in any block $\block_i$ is bounded by $0.99$ as required.
\end{proof}

Let $\badevent$ be the bad event that for some $v_i\in V_1$, block $\block_i$ contains a heavy path. From Observation~\ref{obs: no heavy path exist}, the probability of $\badevent$ is at most $0.01$.

The following claim will be used to bound the values $N_{\mset^2}(s,s')$.

\begin{claim}\label{claim: dist prop}
Consider some vertex $v_j\in V_1$ in graph $\tilde G$, and the block $\block_j$ representing it.
Then with probability at least $(1-1/M^3)$, for every pair $s,s'\in S(\mset^2)\cap V(\block_j)$ of source vertices that are consecutive with respect to $\mset^2$, $N_{\mset^2}(s,s')\leq 128h\log M/2^p$.
\end{claim}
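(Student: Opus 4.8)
Fix $v_j\in V_1$ and let $W_{i(j)}$ be the unique cluster of the perfect solution $((W_1,\dots,W_r),(E_1,\dots,E_r))$ that contains $v_j$ (if $v_j$ has at most one incident edge in $E^2$ the statement is vacuous, so we may assume $v_j$ lies in some class $Y_p$ and write $P_j:=|\delta(v_j)\cap E^2|$, so $2^{p-1}\le P_j<2^p$). The plan is to first extract, from the fact that the solution is \emph{perfect}, a purely combinatorial description of $N_{\mset^2}(s,s')$ that does not involve the random ordering $\rho'$, and only afterwards invoke the random-ordering estimate of Lemma~\ref{lem: random ordering}. The structural facts I would establish first are: (a) in a perfect solution every bundle contains at most one edge of $E^0$ -- a bundle $B\in\bset(v)$ collects all edges from $v$ to a single group $U\in\uset_1\cup\uset_2$, and since $v$ lies in exactly one cluster $W_i$, only the edge of $B$ reaching the unique vertex of $U\cap W_i$ can lie in any $E_{i'}$, necessarily with $i'=i$; hence no two demand pairs of $\mset^0$ (and so of $\mset^2$) share a source or a destination; (b) for every edge $(v_j,w)\in E^2\subseteq E^0$ both endpoints lie in the same cluster, so $w\in W_{i(j)}$; consequently every demand pair of $\mset^2$ with source on $\block_j$ has its destination on a block $\block'_c$ with $v'_c\in W_{i(j)}\cap V_2$, i.e.\ in $J_{i(j)}$; (c) the sources of $\mset^2$ on $\block_j$ are in bijection with the groups $U\in\uset_2$ that are \emph{active for $v_j$} (meaning $v_j$ has an $E^2$-edge into $U$), there are exactly $P_j$ of them, and -- since both the source order on $\block_j$ and the block order on $R''$ follow the $\rho'$-order of their $\uset_2$-groups -- the left-to-right order of these sources on $\block_j$ coincides with the order in which their destinations appear inside $J_{i(j)}$ in $\sigma_{\mset^2}$.

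Given (a)--(c), let $s=s_B,s'=s_{B'}$ be two sources of $\mset^2$ consecutive on $\block_j$, with destinations $t,t'$. Then $U(B),U(B')\in\uset_2$ are active for $v_j$ and consecutive in $\rho'$ among the active-for-$v_j$ groups, $t,t'\in J_{i(j)}$, and $N_{\mset^2}(s,s')$ is exactly the number of destinations of $\mset^2$ in $J_{i(j)}$ strictly between $t$ and $t'$ in the $R''$-order. Every such destination corresponds (via the unique-edge-per-bundle property) to an edge $e=(v,w)\in E^2\cap E_{i(j)}$ with $v\in W_{i(j)}\cap V_1$ and $w$ whose $\uset_2$-group has $\rho'$-position strictly between those of $U(B)$ and $U(B')$, with the two ``boundary'' blocks $\block'_{u_B},\block'_{u_{B'}}$ (where $u_B,u_{B'}$ are the vertices of $U(B)\cap W_{i(j)}$, $U(B')\cap W_{i(j)}$) contributing only a partial count of at most $|\delta(u_B)\cap E^2|+|\delta(u_{B'})\cap E^2|<2\cdot 2^q$ (each such vertex lies in class $Z_q$). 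Grouping the remaining destinations by their $V_1$-endpoint $v$, I get
\[
N_{\mset^2}(s,s')\ <\ 2\cdot 2^q\ +\ \sum_{v}\bigl|\{\,U\in\uset_2:\ U\text{ active for }v,\ \pi'(U(B))<\pi'(U)<\pi'(U(B'))\,\}\bigr|,
\]
where $\pi'$ denotes position in $\rho'$ and the sum is over $v\in W_{i(j)}\cap V_1$ with $\delta(v)\cap E^2\ne\emptyset$. Since every such $v$ lies in $Y_p$ and hence has $E^2$-degree at least $2^{p-1}$, there are at most $|E^2\cap E_{i(j)}|/2^{p-1}\le 2h/2^p$ terms in the sum; and by Observation~\ref{obs: h is large} the boundary term satisfies $2\cdot 2^q\le 8h/2^p$, which is absorbed into the target bound.

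It remains to bound a single term of the sum by $O(\log M)$, and this is where the randomness of $\rho'$ is used. Fix $v$ as above, with $P_v:=|\delta(v)\cap E^2|<2^p\le 2P_j$ active groups. Both the collection of active-for-$v_j$ groups and the collection of active-for-$v$ groups are determined by the perfect solution alone, while the ordering they inherit from $\rho'$ is uniformly random. I would apply Lemma~\ref{lem: random ordering} to the $\rho'$-restriction to (active-for-$v_j$)$\cup$(active-for-$v$), with the $P_j$ active-for-$v_j$ groups coloured pink and the (at most $P_v$) remaining groups coloured yellow: since $s,s'$ are consecutive on $\block_j$ there is no active-for-$v_j$ group between $U(B)$ and $U(B')$, so the quantity we want is a run of yellow items between two consecutive pink items. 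With $\mu=5\lceil\log M\rceil$ the lemma gives, with probability at least $1-1/M^{6}$, no yellow run longer than $\lceil 4(P_j+P_v)\mu/P_j\rceil\le\lceil 12\mu\rceil\le 60\log M$; a union bound over the (at most $2M$) choices of $v$ and the (at most $2^p\le 2M$) consecutive pairs $s,s'$ keeps the total failure probability below $1/M^3$. Substituting into the displayed inequality yields $N_{\mset^2}(s,s')<8h/2^p+(2h/2^p)\cdot 60\log M\le 128h\log M/2^p$, as required.

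The main technical care needed is in this last calibration: Lemma~\ref{lem: random ordering} trades the run length $4(P_j+P_v)\mu/P_j$ against the failure probability $(P_j+P_v)/e^{\mu}$, and to land the constant $128$ exactly one must use $P_v<2P_j$ to make $P_j+P_v<3P_j$ (keeping the run below $60\log M$) while simultaneously pushing $\mu$ high enough that the union bound over all $v_j$, all $v$ and all consecutive pairs costs at most $1/M^3$. A secondary point of care is the accounting for the two boundary blocks $\block'_{u_B},\block'_{u_{B'}}$, where destinations whose sources do not lie on $\block_j$ can contribute; this is precisely the place where Observation~\ref{obs: h is large} ($h\ge 2^p2^q/4$) is invoked, to fold the resulting $O(2^q)$ term into $O(h/2^p)$.
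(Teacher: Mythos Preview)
Your proof is correct, but it follows a genuinely different (and in some sense dual) route from the paper's.

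The paper fixes $v_j\in W_i$ and works on the $V_2$ side: it restricts $\rho'$ to the set $\uset'$ of all groups $U\in\uset_2$ whose representative in $W_i$ carries an $E^2$-edge, colours the groups in $\uset''$ (those carrying an $E^2$-edge to $v_j$) pink and the rest of $\uset'$ yellow, and applies Lemma~\ref{lem: random ordering} \emph{once} to this single universe. The key counting input is $|\uset'|\le h/2^{q-1}$ (which uses the Step~1 lower bound $|E^1\cap\delta(v')|\ge 2^{q-1}$), and each group contributes at most $2^q$ destinations, giving the factor $(h/2^{q-1})\cdot 2^q\cdot O(\log M)/2^{p-1}$.

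You instead work on the $V_1$ side: you split the destinations between $t$ and $t'$ according to the $V_1$-endpoint $v$ of the corresponding $E^2$-edge, bound the number of relevant $v$'s by $2h/2^p$ (using the Step~2 lower bound $|E^2\cap\delta(v)|\ge 2^{p-1}$), and then for each such $v$ apply Lemma~\ref{lem: random ordering} to the universe $(\text{active-for-}v_j)\cup(\text{active-for-}v)$, followed by a union bound over $v$. The resulting product $(2h/2^p)\cdot O(\log M)$ matches the paper's, and the boundary-block term $2\cdot 2^q$ is absorbed via Observation~\ref{obs: h is large} exactly as in the paper.

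Both arguments are valid. The paper's has the aesthetic advantage of a single invocation of Lemma~\ref{lem: random ordering} (so no union bound and a smaller $\mu=4\log M$ suffices), whereas your approach pays for the union bound with $\mu=5\lceil\log M\rceil$. Two minor calibration points in your write-up: the hypothesis $\mu\le Y$ of Lemma~\ref{lem: random ordering} may fail when few groups are active-for-$v$ but not active-for-$v_j$, but then the yellow run is trivially below $\mu<60\log M$ anyway; and $\lceil 12\mu\rceil=60\lceil\log M\rceil$ can exceed $60\log M$ by an additive constant, which is harmless under the standing assumption $M>2^{50}$. Neither affects correctness.
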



\begin{proof}
Fix some vertex $v_j\in V_1$ and consider the block $\block_j$ representing it. Assume that $v_j$ belongs to the cluster $W_i$ in our solution to the \WGPwB problem.
Let $s,s'\in S(\mset^2)\cap V(\block_j)$ be a pair of source vertices  that are consecutive with respect to $\mset^2$.  Recall that we have defined a subset $J_i\subseteq T(\mset^2)$ of destination vertices, that appear consecutively in the ordering $\sigma_{\mset^2}$, and contain all vertices of $T(\mset^2)$, that lie in blocks $\block'_{j'}$, whose corresponding vertices $v_{j'}\in V_2\cap W_i$.


Let $A= W_i\cap V_2$; let $A'\subseteq A$ contain all vertices that have an edge of $E^2\cap E_i$ incident to them; and let $A''\subseteq A'$ contain all vertices that have an edge of $E^2\cap E_i$ connecting them to $v_j$. Since the solution to the \WGPwB problem instance is perfect, every vertex of $A$ (and hence $A'$ and $A''$) belongs to a distinct group of $U\in \uset_2$. We denote by $\uset'\subseteq \uset_2$ the set of all groups to which the vertices of $A'$ belong, and we define $\uset''\subseteq \uset'$ similarly for $A''$. Consider now some group $U\in \uset'$, and let $v'_a$ be the unique vertex of $U$ that belongs to $A'$. We denote by $C(U)$ the set of all vertices of the corresponding block $\block'_a$ that belong to $T(\mset^2)$. Therefore, we now obtain a partition $\set{C(U)}_{U\in \uset'}$ of all vertices of $J_i$ into subsets, where each subset contains at most $2^q$ vertices. Moreover, in the ordering $\sigma_{\mset^2}$, the vertices of each such set $C(U)$ appear consecutively, in the order of their appearance on $R''$, while the ordering between the different sets $C(U)$ is determined by the ordering of the corresponding groups $U$ in $\rho'$.  Let $\rho''$ be the ordering of the groups in $\uset'$ induced by $\rho'$, so that $\rho''$ is a random ordering of $\uset'$. Observe that, since the choice of the set $\mset^2$ is independent of the ordering $\rho'$ (and only depends on the solution to the \WGPwB problem instance), so is the choice of the sets $\uset'$ and $\uset''$.

Let $t$ and $t'$ be the destination vertices that correspond to $s$ and $s'$, respectively, that is, $(s,t),(s',t')\in \mset^2$. Assume that $t\in \block'_z$ and $t'\in \block'_{z'}$, where $v'_z$ and $v'_{z'}$ are vertices of $V_2$. From our definition, both $v'_z$ and $v'_{z'}$ must belong to the set $A''$. Assume that $v'_z$ belongs to the group $U'$ in $\uset_2$, while $v'_{z'}$ belongs to group $U''$. Again, from our definitions, both $U',U''\in \uset''$. From the above discussion, if the number of groups $U\in \uset'$ that fall between $U'$ and $U''$ is $\gamma$, then the number of destination vertices lying between $t$ and $t'$ in $\sigma_{\mset^2}$ is at most $2^q\cdot (\gamma+2)$. Therefore, it is now enough to bound the value of $\gamma$. In order to do so, we think of the groups of $\uset''$ as pink, and the remaining groups of $\uset'$ as yellow.
Let $P$ denote the total number of all pink groups, and let $n^*=|\uset'|$. From the construction of $\mset^2$, $P=|S(\mset^2)\cap V(\block_j)|\geq 2^{p-1}$. We use the following observation to upper-bound $n^*$.

\begin{observation} \label{obs: nset is small}
$n^* \leq h/2^{q-1}$.
\end{observation}
\begin{proof}
Since we have started with a perfect solution to $\iset$, for each group $U \in \uset'$, there is exactly one vertex of $U$ in $W_i$.
Due to Step 1 of regularization, each such vertex contributed at least $2^{q-1}$ edges to $E^1\cap E_i$, while $|E^1\cap E_i|\leq h$.
Therefore, $n^*\leq h/2^{q-1}$.
\end{proof}

Let $\mu=4\log M$, and let $\event_Y$ be the event that there are at least  $\ceil{4n^*\mu/P}$ consecutive yellow groups in the ordering $\rho''$ of $\uset'$. From Lemma~\ref{lem: random ordering}, the probability of $\event_Y$ is at most  $n^*/e^{\mu}\leq M/e^{\mu}\leq 1/M^3$.  If event $\event_Y$ does not happen, then  the length of the longest consecutive sub-sequence of $\rho''$ containing only yellow groups is bounded by:

\[\ceil{\frac{4n^*\mu}{P}}\leq \frac{64h\log M}{2^{q} 2^p}.\]

Assume now that $\event_Y$ does not happen, and consider any two vertices $s,s'\in  S(\mset^2)\cap V(\block_j)$ that are consecutive with respect to  $\mset^2$. Assume that their corresponding destination vertices are $t$ and $t'$ respectively, and that $t$ and $t'$ belong to the groups $U$ and $U'$, respectively. Then $U$ and $U'$ are pink groups. Moreover, since the ordering of the vertices of $S(\mset)\cap V(\block_j)$ on $R'$ is identical to the ordering of the groups of $\uset''$ to which their destinations belong in $\rho'$, no other pink group appears between $U$ and $U'$ in $\rho''$. Therefore, at most $\frac{64 h\log M}{2^{q} 2^p}$ groups of $\uset'$ lie between $U$ and $U'$ in $\rho''$. Recall that for each group $U''\in \uset'$, at most one vertex $v'_z\in U''$ belongs to $W_i$, and that the corresponding block $\block'_z$ may contribute at most $2^q$ destination vertices to $T(\mset^2)$. We conclude that the number of vertices separating $t$ from $t'$ in $\sigma_{\mset_3}$ is bounded by:

\[\left(\frac{64h\log M}{2^{q} 2^p}+2\right )\cdot 2^q\leq \frac{128h\log M}{2^p}.\]

(We have used Observation~\ref{obs: h is large}).
Therefore, if event $\event_Y$ does not happen, then  for every pair $s,s'\in  S(\mset^2)\cap V(\block_j)$ of vertices that are consecutive with respect to $\mset^2$, $N_{\mset^2}(s,s')\leq  \frac{128h\log M}{2^p}$.
\end{proof}

Let $\badevent'$ be the bad event that for some vertex $v_j\in V_1$, for some pair $s,s'\in S(\mset^2)\cap V(\block_j)$ of source vertices that are consecutive with respect to $\mset^2$, $N_{\mset^2}(s,s')> 128h\log M/2^p$. By applying the Union Bound to the result of Claim~\ref{claim: dist prop}, we get that the probability of $\badevent'$ is at most $1/M^2$. 
Notice that the probability that neither $\badevent$ nor $\badevent'$ happen is at least $1/2$. We assume from now on that this is indeed the case, and show how to compute a large subset $\mset'\subseteq \mset^2$ of the demand pairs that has the distance property. This is done in the following two steps.

\paragraph{Step 3 [Sparsifying the Sources.]}

Assume that $\mset^2=\set{(s_1,t_1),\ldots,(s_{|\mset^2|},t_{|\mset^2|})}$, where the demand pairs are indexed according to the left-to-right ordering of their sources on $R'$, that is $s_1,s_2,\ldots,s_{|\mset^2|}$ appear on $R'$ in this order. We now define:

\[\mset^3=\set{(s_i,t_i)\mid i=1\mod \ceil{32\log M}}.\]

Let $E^3\subseteq E^2$ be the set of edges of $\tG$ whose corresponding demand pairs belong to $\mset^3$.
It is easy to verify that $|\mset^3|\geq \Omega(|\mset^2|/\log M)=\Omega(|\mset^0|/\log^3 M)$.

 We also obtain the following claim.
 
\begin{claim}\label{claim: almost distance property}
Assume that events $\badevent$ and $\badevent'$ did not happen. Then
for each $1\leq j\leq N_1$, for every pair  $s,s'\in S(\mset^3)\cap V(\block_j)$ of source vertices, $N_{\mset^3}(s,s')\leq 128 d(s,s')$.
\end{claim}
\begin{proof}
Fix some $1\leq j\leq N_1$, and some pair $s,s'\in S(\mset^3)\cap V(\block_j)$ of vertices. Let $S'=\set{s_1,s_2,\ldots,s_z}$ be the set of all vertices of $S(\mset^2)$ that appear between $s$ and $s'$ on $R'$. Assume w.l.o.g. that $s$ lies to the left of $s'$ on $R'$, and denote $s_0=s$ and $s_{z+1}=s'$. Assume further that the vertices of $S'$ are indexed according to their left-to-right ordering on $R'$. Note that, from the definition of $\mset^3$, $z\geq \ceil{32\log M}$ must hold. 

Let $I\subseteq \block_j$ be the sub-path of $\block_j$ between $s$ and $s'$. We partition $I$ into paths containing $\floor{512 h\log^2M/2^p}$ vertices each, except for the last path that may contain fewer vertices. Since no such path may be heavy, we obtain at least $n'=\floor{\frac{z}{16\log M}}$ disjoint sub-paths of $I$, each of which contains $\floor{512 h\log^2M/2^p}$ vertices. We conclude that:

\[d(s,s')\geq n'\cdot \floor{512 h\log^2M/2^p}-1\geq \floor{\frac{z}{16\log M}}\cdot \floor{512 h\log^2M/2^p}-1\geq \frac{2zh\log M}{2^p}.\]

On the other hand, since $\sigma_{\mset^3}$ is the same as the ordering of $T(\mset^3)$ induced by $\sigma_{\mset^2}$, we get that:

\[N_{\mset^3}(s,s')\leq N_{\mset^2}(s,s')\leq \sum_{i=0}^zN_{\mset^2}(s_i,s_{i+1})\leq (z+1)\cdot \frac{128 h\log M}{2^p}\leq 128 d(s,s').\]
\end{proof}

\paragraph{Step 4 [Decreasing the Values $N_{\mset^3}(s,s')$].}
We are now ready to define the final set $\mset'\subseteq \mset^3$ of demand pairs. 

Assume that $\mset^3=\set{(s_1,t_1),\ldots,(s_{|\mset^3|},t_{|\mset^3|})}$, where the demand pairs are indexed according to the ordering of their destinations in $\sigma_{\mset^3}$ (notice that this is the same as the ordering of $T(\mset^3)$ induced by $\sigma_{\mset^2}$). We now define:

\[\mset'=\set{(s_i,t_i)\mid i=1\mod 512}.\]

We claim that $\mset'$ has the distance property. Indeed, consider any two pairs $(s,t),(s',t')\in \mset'$, such that $s$ and $s'$ are consecutive with respect to $\mset'$. If $s$ and $s'$ lie in different blocks $\block_j$, then, since the distance between any such pair of blocks is at least $10M$, while $|\mset'|\leq M$, we get that $N_{\mset'}(s,s')\leq d(s,s')/4$. Otherwise, both $s$ and $s'$ belong to the same block $\block_j$. But then it is easy to verify that $N_{\mset'}(s,s')\leq N_{\mset^3}(s,s')/512\leq d(s,s')/4$ from Claim~\ref{claim: almost distance property}.
We conclude that with probability at least $1/2$, neither of the events $\badevent$, $\badevent'$ happens, and in this case, $\mset'$ has the distance property.


\subsubsection{Proof of Lemma~\ref{lem: can find routing}}

Recall that $R'$ and $R''$ are the rows of $\hat G$ containing the vertices of $S(\mset)$ and $T(\mset)$ respectively, and that $R'$ and $R''$ lie at distance at least $\ell/4$ form each other and from the top and the bottom boundaries of $\hat G$, where $\ell>M^2$ is the dimension of the grid. Let $R$ be any row lying between $R'$ and $R''$, within distance at least $\ell/16$ from each of them.

We denote $M'=|\mset'|$, and $\mset'=\set{(s_1,t_1),\ldots,(s_{M'},t_{M'})}$, where the pairs are indexed according to the ordering of their destination vertices in $\sigma_{\mset'}$. To recap, the ordering $\sigma_{\mset'}$ of $T(\mset')$ was defined as follows. We have defined a partition $(J_1,\ldots,J_r)$ of the vertices of $T(\mset')$ into subsets, where each set $J_i$ represents a cluster $W_i$ in our solution to the \WGPwB problem instance, and contains all destination vertices $t\in T(\mset')$ that lie in blocks $\block'_j$, for which the corresponding vertex $v_j\in V_2$ belongs to $W_i$. The ordering of the destination vertices inside each set $J_i$ is the same as their ordering on $R''$, and the different sets $J_1,\ldots,J_r$ are ordered in the order of their indices.

Let $X=\set{x_i\mid 1\leq i\leq M'}$ be a set of vertices of $R$, where $x_i$ is the $(2i)$th vertex of $R$ from the left. We will construct a set $\pset^*$ of spaced-out paths routing all demand pairs in $\mset'$, so that the path $P_i$ routing $(s_i,t_i)$ intersects the row $R$ at exactly one vertex --- the vertex $x_i$.
Notice that row $R$ partitions the grid $\hat G$ into two sub-grids: a top sub-grid $G^t$ spanned by all rows that appear above $R$ (including $R$), and a bottom sub-grid $G^b$ spanned by all rows that appear below $R$ (including $R$).

It is now enough to show that there are two sets of spaced-out paths: set $\pset^1$ routing all pairs in $\set{(s_i,x_i)\mid 1\leq i\leq M'}$ in the top grid $G^t$, and  set $\pset^2$ routing all pairs in $\set{(t_i,x_i)\mid 1\leq i\leq M'}$ in the bottom grid $G^b$, so that both sets of paths are internally disjoint from $R$.

\paragraph{Routing in the Top Grid.}Consider some vertex $v_j\in V_1$, and the corresponding block $\block_j$. We construct a sub-grid $\hat \block_j$ of $\hat G$, containing $\block_j$,  that we call a box, as follows. Let $\cset_j$ be the set of all columns of $\hat G$ intersecting $\block_j$. We augment $\cset_j$ by adding $2M$ columns lying immediately to the left and $2M$ columns lying immediately to the right of $\cset_j$, obtaining a set $\hat{\cset}_j$ of columns. Let $\hat \rset$ contain three rows: row $R'$; the row lying immediately above $R'$; and the row lying immediately below $R'$. Then box $\hat \block_j$ is the sub-grid of $\hat G$ spanned by the rows in $\hat \rset$ and the columns in $\hat {\cset}_j$. Since every block is separated by at least $10M$ columns from every other block, as well as the left and the right boundaries of $\hat G$, the resulting boxes are all disjoint, and every box is separated by at least $2M$ columns of $\hat G$  from every other box, and from the left and the right boundaries of $\hat G$. 

We will initially construct a set $\pset^1$ of spaced-out paths in $G^t$, such that each path $P_i\in \pset^1$, for $1\leq i\leq M'$, originates from the vertex $x_i$, and visits the boxes $\hat \block_1,\hat \block_2,\ldots,\hat \block_{N_1}$ in turn. We will ensure that each such path $P_i$ contains the corresponding source vertex $s_i$. Eventually, by suitably truncating each such path $P_i$, we will ensure that it connects $x_i$ to $s_i$.

\begin{claim}\label{claim: selecting columns}
Consider some vertex $v_j\in V_1$, and the corresponding box $\hat \block_j$. Denote $Y_j=S(\mset')\cap V(\block_j)$, $M_j=|Y_j|$, and assume that $Y_j=\set{s_{i_1},s_{i_2},\ldots,s_{i_{M_j}}}$, where the indexing of the vertices of $Y_j$ is consistent with the indexing of the vertices of $S(\mset')$ that we have defined above, and $i_1<i_2<\ldots<i_{M_j}$. 
Then there is a set $\wset_j$ of $M'$  columns of the box $\hat \block_j$, such that:

\begin{itemize}
\item set $\wset_j$ does not contain a pair of consecutive columns; and
\item for each $1\leq z\leq M_j$, the $i_z$th column of $\wset_j$ from the left contains the source vertex $s_{i_z}$.
\end{itemize}
\end{claim}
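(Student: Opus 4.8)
The plan is to build $\wset_j$ explicitly, one column at a time: first force the $i_z$-th column of $\wset_j$ to be the column of $s_{i_z}$, and then fill in the remaining columns in the gaps, using that consecutive sources of $Y_j$ are widely separated on $\block_j$ while the distance property of $\mset'$ bounds how many columns must be squeezed between them.

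\textbf{Step 1: the geometric order of the $s_{i_z}$.} I would first show that $s_{i_1},s_{i_2},\ldots,s_{i_{M_j}}$ occur on $\block_j$ in this left-to-right order. Let $W_i$ be the cluster of the perfect solution containing $v_j$. Every edge of $\tG$ incident to $v_j$ whose demand pair lies in $\mset'$ belongs to $E^0=\bigcup_i E_i$, say to $E_{i'}$; since $E_{i'}\subseteq E(W_{i'})$ and $v_j$ lies in a unique cluster, $i'=i$, so the $V_2$-endpoint of each such edge lies in $W_i$. Hence the destinations of all demand pairs of $\mset'$ with source on $\block_j$ lie in the single set $J_i$ used in the definition of $\sigma_{\mset'}$. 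Within $J_i$ the order of $\sigma_{\mset'}$ agrees with the left-to-right order on $R''$, and by Observation~\ref{obs: ordering of sources and destinations} (applied to the demand pairs whose sources lie on $\block_j$) the left-to-right order of these destinations on $R''$ agrees with the left-to-right order of the corresponding sources on $\block_j$. Since $i_1<i_2<\cdots<i_{M_j}$ is exactly the $\sigma_{\mset'}$-order, this gives the claim. Write $c_z$ for the index of the column of $\hat\block_j$ containing $s_{i_z}$, counted from the left boundary of the box; thus $c_1<c_2<\cdots<c_{M_j}$.

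\textbf{Step 2: separations and margins.} Since $Y_j\subseteq S(\mset)\cap V(\block_j)$ and consecutive sources of $S(\mset)\cap V(\block_j)$ are separated by $512\ceil{h\log M/\beta(v_j)}\ge 512$ grid columns, we get $c_{z+1}-c_z\ge 512$ for every $z$. Moreover $\block_j$ lies $2M$ columns to the right of the left boundary of $\hat\block_j$ and $2M$ columns to the left of its right boundary, so $\hat\block_j$ has at least $2M$ columns strictly to the left of $c_1$ and at least $2M$ strictly to the right of $c_{M_j}$. Next I invoke the distance property: $s_{i_z}$ and $s_{i_{z+1}}$ are in fact consecutive in all of $S(\mset')$ (any source of $\mset'$ between them on $R'$ lies on the contiguous path $\block_j$, hence in $Y_j$, contradicting their $Y_j$-consecutiveness), so $i_{z+1}-i_z-1=N_{\mset'}(s_{i_z},s_{i_{z+1}})<d(s_{i_z},s_{i_{z+1}})/4=(c_{z+1}-c_z)/4$.

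\textbf{Step 3: the construction.} Set $w_{i_z}:=c_z$ for all $z$. Between $c_z$ and $c_{z+1}$ I must place the $i_{z+1}-i_z-1$ columns $w_{i_z+1},\ldots,w_{i_{z+1}-1}$, all inside the integer interval $[c_z+2,c_{z+1}-2]$ (so that no two selected columns, including $c_z$ and $c_{z+1}$, are consecutive); taking every other integer of this interval yields $\lceil (c_{z+1}-c_z-3)/2\rceil$ admissible columns, and $\lceil (c_{z+1}-c_z-3)/2\rceil\ge i_{z+1}-i_z-1$ holds because $i_{z+1}-i_z-1<(c_{z+1}-c_z)/4\le (c_{z+1}-c_z-3)/2$ whenever $c_{z+1}-c_z\ge 6$, which is guaranteed by Step 2. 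For $w_1,\ldots,w_{i_1-1}$ (at most $M'-1<M$ columns) I use the $\ge 2M$ columns of $\hat\block_j$ strictly to the left of $c_1$, again taking every other one and leaving the column adjacent to $c_1$ unused; this fits since $2(M-1)+1\le 2M$. Symmetrically for $w_{i_{M_j}+1},\ldots,w_{M'}$ on the right margin. The resulting $\wset_j=\{w_1<w_2<\cdots<w_{M'}\}$ has no two consecutive columns and its $i_z$-th column contains $s_{i_z}$ (which lies on $R'$, a row of $\hat\block_j$), as required.

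\textbf{Main obstacle.} The delicate point is Step 2: establishing that $Y_j$-consecutive sources are $S(\mset')$-consecutive so that the distance property actually applies, and translating the inequality $N_{\mset'}(s_{i_z},s_{i_{z+1}})<d(s_{i_z},s_{i_{z+1}})/4$ into a statement about $i_{z+1}-i_z$; the remaining counting is routine once the separations $c_{z+1}-c_z\ge 512$ and the $2M$-column margins are in hand.
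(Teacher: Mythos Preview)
Your proposal is correct and follows essentially the same approach as the paper's proof: force the columns through the $s_{i_z}$'s, then use the distance property to count that enough spaced columns fit between consecutive $c_z$'s and in the $2M$-wide margins of $\hat\block_j$. You are in fact more careful than the paper on two points it leaves implicit: (i) that the $\sigma_{\mset'}$-order of the $t_{i_z}$'s coincides with their $R''$-order (because all of them land in a single $J_i$, as $v_j$ lies in a unique cluster $W_i$), which is what actually lets you invoke Observation~\ref{obs: ordering of sources and destinations}; and (ii) that $Y_j$-consecutive sources are globally $S(\mset')$-consecutive, so the distance property applies to them.
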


\begin{proof}
Observe that from Observation~\ref{obs: ordering of sources and destinations}, the vertices $s_{i_1},s_{i_2},\ldots,s_{i_{M_j}}$ must appear in this left-to-right order on $R'$, while the vertices $x_{i_1},x_{i_2},\ldots,x_{i_{M_j}}$ appear in this left-to-right order on $R$. Moreover, for all $1\leq z<M_j$, $i_{z+1}-i_z-1=N_{\mset'}(s_{i_z},s_{i_{z+1}})\leq d(s_{i_z},s_{i_{z+1}})/4$. We add to $\wset_j$ all columns of $\hat \block_j$ where the vertices of $Y_j$ lie. For each $1\leq z<M_j$, we also add to $\wset_j$ an arbitrary set of $(i_{z+1}-i_z-1)$ columns lying between the column of $s_{i_z}$ and the column of $s_{i_{z+1}}$, so that no pair of columns in $\wset_j$ is consecutive. Finally, we add to $\wset_j$ $(i_1-1)$ columns that lie to the left of the column of $s_{i_1}$, and $(M'-i_{M_j})$ columns that lie to the right of the column of $s_{i_{M_j}}$. We make sure that no pair of columns in $\wset_j$ is consecutive -- it is easy to see that there are enough columns to ensure that. 
\end{proof}

\begin{claim}\label{claim: the paths}
There is a set $\pset^1=\set{P_1,\ldots,P_{M'}}$ of spaced-out paths in $G^t$, that are internally disjoint from $R$, such that for each $1\leq i\leq M'$, path $P_i$ originates from vertex $x_i$, and for all  $1\leq j\leq N_1$, it contains the $i$th column of $\wset_j$; in particular, it contains $s_i$.
\end{claim}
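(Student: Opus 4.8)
The plan is to realize each $P_i$ as a ``snake'' that threads through the boxes $\hat\block_1,\dots,\hat\block_{N_1}$ from left to right, using two large empty horizontal bands of $G^t$ — one lying just \emph{above} the row $R'$ (towards the top boundary) and one lying just \emph{below} it (down towards $R$) — to carry the paths between consecutive boxes. Recall that $R'$ is at distance at least $\ell/4$ from the top boundary of $\hat G$ and at distance at least $\ell/16$ from $R$, so the three rows of every box (row $R'$ and its two neighbours) have $\Omega(\ell)$ empty rows above them and $\Omega(\ell)$ empty rows below them; since there are only $M'\le M$ paths and the construction already places $N_1$ blocks along $R'$ (so $\ell$ is polynomially larger than $N_1$ and $M'$), there is ample room to reserve, for every path and every ``leg'' of its journey, a private horizontal row, and to do so using only pairwise non-adjacent rows.

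Concretely, reserve in the lower empty region a collection of pairwise non-adjacent \emph{lower highways}, and in the upper empty region a collection of pairwise non-adjacent \emph{upper highways}, leaving one buffer row immediately above and below each box. Assign to $P_i$ a lower highway $\lambda_i^{(0)}$ for ``takeoff'' from $x_i$, and, for every $j$, a highway for the leg joining $\hat\block_j$ to $\hat\block_{j+1}$: a lower highway $\lambda_i^{(j)}$ if $j$ is even and an upper highway $\mu_i^{(j)}$ if $j$ is odd. Within each \emph{group} of highways sharing the same band and the same leg index, order the rows so that a larger path index is assigned a row lying strictly (and at distance $\ge 2$) below a smaller path index. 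The path $P_i$ is then the concatenation: start at $x_i$; go straight up its column to $\lambda_i^{(0)}$; travel right along $\lambda_i^{(0)}$ to the column of the $i$th element of $\wset_1$; go straight up that column, through the three rows of $\hat\block_1$, to $\mu_i^{(1)}$; travel right to the column of the $i$th element of $\wset_2$; go straight down, through $\hat\block_2$, to $\lambda_i^{(2)}$; and so on, alternating up and down through $\hat\block_3,\dots,\hat\block_{N_1}$. By Claim~\ref{claim: selecting columns}, whenever $s_i$ lies in block $\block_j$ it is precisely the middle vertex (on $R'$) of the column of the $i$th element of $\wset_j$, so $P_i$ contains $s_i$ and the full $i$th column of every $\wset_j$; it is visibly a simple path originating at $x_i$, it meets $R$ only at $x_i$, and since the outermost highways stay $\Omega(\ell)$ from the grid boundary while the leftmost used column is $2$, every $P_i$ avoids $\partial\hat G$. (The later truncation of each $P_i$ at $s_i$ is performed outside this claim.)

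It remains to verify that the set $\pset^1$ is spaced-out, and this is the technical heart of the argument: for $i\ne i'$ we must rule out not only collisions but also vertices of $P_i$ at distance exactly $1$ from $P_{i'}$. The verification is a case analysis over which pairs of (horizontal highway / vertical takeoff / vertical-through-box) segments of the two paths can come near one another, resting on four facts: (i) the elements of each $\wset_j$ are pairwise non-adjacent columns (Claim~\ref{claim: selecting columns}), so the vertical segments of $P_i,P_{i'}$ inside a common box, and in the buffer rows bordering it, are $\ge 2$ apart; (ii) consecutive blocks (hence boxes) are separated by $\ge 10M$ columns, while all the relevant columns — the $2i$ of the takeoffs, the $i$th elements of each $\wset_j$, and the boxes themselves — occur in one consistent left-to-right order indexed by $i$, so a vertical segment near $\hat\block_j$ is far from any segment near a different box, and a highway's horizontal span $[c_i^{(k)},c_i^{(k+1)}]$ reaches the column of another path's vertical segment near $\hat\block_j$ only when $j\in\{k,k+1\}$ and $i,i'$ are ordered in the corresponding way; (iii) distinct highway rows are $\ge 2$ apart; and (iv) by the ``larger index $\Rightarrow$ lower row'' convention, whenever such an incidence of type (ii) does occur, the offending highway row lies strictly below — in fact $\ge 2$ below — the row-interval spanned by the vertical segment of the other path.

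I expect the main obstacle to be organizing this enumeration cleanly rather than any single hard step: there are several segment types, the ``distance-$1$'' condition must be checked and not merely disjointness, and the key non-trivial incidence — a vertical-through-box segment of $P_i$ on the $i$th column of $\wset_j$ versus the ``leg after $\hat\block_{j-1}$'' highway of $P_{i'}$ — requires combining (ii) (that highway spans exactly the columns from the $i'$th element of $\wset_{j-1}$ to the $i'$th element of $\wset_j$, so it can cover $P_i$'s column only if $i<i'$) with (iv) (in that regime it then lies below $P_i$'s own matching highway, hence below $P_i$'s vertical segment). A secondary, purely bookkeeping task is to fix explicit coordinates for the two bands, the highway rows, and the buffer rows, and to confirm they fit inside the $\ell\times\ell$ grid with the required slack using $\ell=\twiceConstantForSizeOfBlocks\cdot\ceil{M^2\log M}$, $h\le M$, $N_1,M'\le O(M)$.
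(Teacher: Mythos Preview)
Your high-level picture is the same as the paper's --- thread each $P_i$ through the boxes in order --- but the specific ``highway'' construction you describe does not work, and the key case~(iv) in your analysis is incorrect.

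Take just two boxes and two paths. With your convention (``larger index $\Rightarrow$ lower row''), the upper highway $\mu_2^{(1)}$ lies strictly between $\mu_1^{(1)}$ and the boxes. Now look at $P_1$'s vertical segment at column $c_1^{(2)}$ (the $1$st column of $\wset_2$): it runs from row $\mu_1^{(1)}$ straight down through box~$2$, hence it contains the vertex $(c_1^{(2)},\mu_2^{(1)})$. But $P_2$'s highway $\mu_2^{(1)}$ spans columns $[c_2^{(1)},c_2^{(2)}]\ni c_1^{(2)}$ (since $c_2^{(1)}$ is in box~$1$, $c_1^{(2)}<c_2^{(2)}$ are both in box~$2$). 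So $P_1$ and $P_2$ share a vertex. More generally, at the \emph{end} column $c_i^{(j+1)}$ of any upper leg, the paths $P_{i'}$ with $i'>i$ have their leg-$j$ highway covering that column \emph{and} lying between $\mu_i^{(j)}$ and the box --- exactly inside $P_i$'s descending segment. No single ordering of the $\mu$-rows can avoid this: at the start column $c_i^{(j)}$ you need $\mu_{i'}^{(j)}$ above $\mu_i^{(j)}$ for $i'<i$, while at the end column $c_i^{(j+1)}$ you need it above for $i'>i$, and both cannot hold.

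The paper sidesteps this by never alternating sides: each path enters every box from the \emph{top} and exits from the \emph{bottom}. The leg between $\hat\block_{j-1}$ and $\hat\block_j$ therefore goes from the bottom boundary $I''_{j-1}$ to the top boundary $I'_j$, which requires looping around; the paper packages each such leg as a width-$2M$ \emph{snake} (a sequence of internally disjoint corridors) and then invokes a single lemma giving spaced-out order-preserving routings inside any snake. Because the snakes are mutually disjoint and the endpoints on $I''_{j-1}$ and $I'_j$ are both indexed by $i$ in the same left-to-right order, the order-preservation is automatic and no highway/vertical crossing analysis is needed at all. If you want to salvage your explicit-row approach, you would need either to force the same enter-top/exit-bottom pattern (and then build the loop-around by hand), or to add an extra switchback in the upper band so that the ``up'' and ``down'' portions near a box use disjoint row-groups with opposite orderings.
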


We defer the proof of the claim to Section \ref{appdx: routing in yi} of the appendix; see Figure~\ref{fig: routing} for an illustration of the routing.

\begin{figure}[h]
\centering
\includegraphics[width = 12cm]{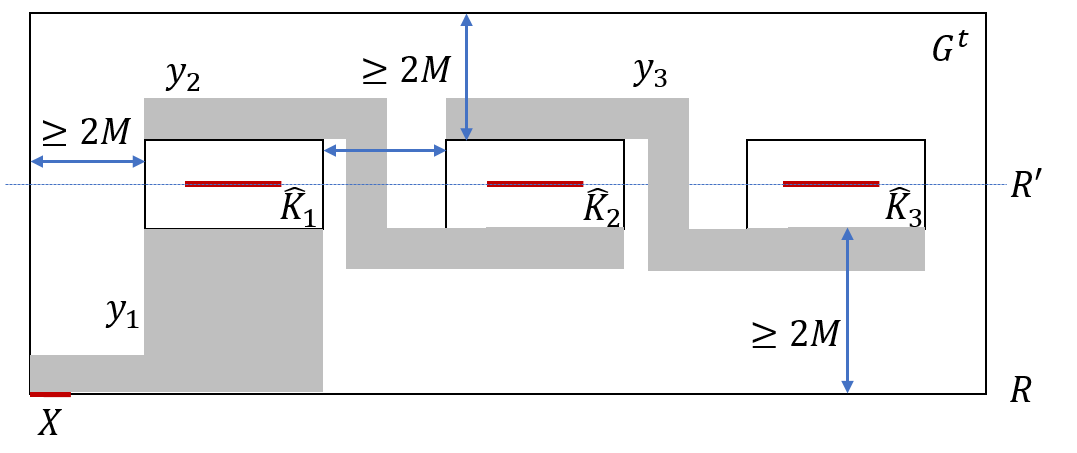} \label{fig: global-routing}
\caption{Routing in the top grid.}
\label{fig: routing}
\end{figure}

\paragraph{Routing in the Bottom Grid.}

Consider some vertex $v_j'\in V_2$, and the corresponding block $\block'_j$. We construct a box $\hat \block'_j$ containing $\block'_j$ exactly as before. As before, the resulting boxes are all disjoint, and every box is separated by at least $2M$ columns of $\hat G$  from every other box, and from the left and the right boundaries of $\hat G$. 

As before, we will initially construct a set $\pset^2$ of spaced-out paths in $G^b$, such that each path $P'_i\in \pset^2$, for $1\leq i\leq M'$, originates from the vertex $x_i$, and visits the boxes $\hat \block'_1,\hat \block'_2,\ldots,\hat \block'_{N_2}$ in turn. We will ensure that each such path $P_i'$ contains the corresponding destination vertex $t_i$. Eventually, by suitably truncating each such path $P'_i$, we will ensure that it connects $x_i$ to $t_i$.

\begin{claim}\label{claim: selecting columns2}
Consider some vertex $v_j'\in V_2$, and the corresponding box $\hat \block_j'$. Denote $Y_j'=T(\mset')\cap V(\block'_j)$, $M'_j=|Y'_j|$, and assume that $Y'_j=\set{t_{i_1},t_{i_2},\ldots,t_{i_{M_j}}}$, where the indexing of the vertices of $Y'_j$ is consistent with the indexing of the vertices of $T(\mset')$ that we have defined above, and $i_1<i_2<\ldots<i_{M'_j}$. 
Then there is a set $\wset'_j$ of $M'$  columns of the box $\hat \block_j'$, such that:

\begin{itemize}
\item set $\wset_j'$ does not contain a pair of consecutive columns; and
\item for each $1\leq z\leq M_j'$, the $i_z$th column of $\wset'_j$ from the left contains the destination vertex $t_{i_z}$.
\end{itemize}
\end{claim}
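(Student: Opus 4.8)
The plan is to mirror the proof of Claim~\ref{claim: selecting columns}, but to first observe that on the destination side the situation is in fact \emph{simpler}: the indices $i_1,\dots,i_{M'_j}$ turn out to be consecutive integers, so no analogue of the distance property of $\mset'$ is needed here.

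The first step is to record a structural fact about the ordering $\sigma_{\mset'}$. Since the solution $((W_1,\dots,W_r),(E_1,\dots,E_r))$ induces a partition of $V_1\cup V_2$, the vertex $v'_j\in V_2$ belongs to a unique cluster $W_i$, and hence every vertex of $Y'_j=T(\mset')\cap V(\block'_j)$ lies in the set $J_i$. Recall that inside $J_i$ the destination vertices are ordered by their position on $R''$, and that distinct blocks $\block'_{j_1},\block'_{j_2}$ are disjoint sub-paths of $R''$ whose indexing agrees with their left-to-right order on $R''$. Consequently the destinations that lie on $\block'_j$ occupy a contiguous stretch of positions inside $J_i$, and therefore a contiguous stretch inside $\sigma_{\mset'}$; that is, $\set{i_1,\dots,i_{M'_j}}$ is a set of consecutive integers with $i_{z+1}-i_z=1$ for every $z$, the vertices $t_{i_1},\dots,t_{i_{M'_j}}$ appear in this left-to-right order on $\block'_j\subseteq R''$, and $x_{i_1},\dots,x_{i_{M'_j}}$ appear in this left-to-right order on $R$.

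Given this, one constructs $\wset'_j$ directly. The destination vertices mapped to $\block'_j$ are pairwise separated by $512\ceil{h\log M/\beta(v'_j)}\geq 2$ grid columns, so the columns of $\hat G$ containing $t_{i_1},\dots,t_{i_{M'_j}}$ are pairwise non-consecutive; put all of them into $\wset'_j$. The box $\hat\block'_j$ contains $2M$ columns of $\hat G$ immediately to the left of $\block'_j$ and $2M$ columns immediately to its right, and $i_1-1\leq M'-1<M$ while $M'-i_{M'_j}\leq M'-1<M$; so one can select $i_1-1$ pairwise non-consecutive columns from the left padding (leaving a gap of at least two columns before the column of $t_{i_1}$) and $M'-i_{M'_j}$ pairwise non-consecutive columns from the right padding (leaving a gap after the column of $t_{i_{M'_j}}$), and add them to $\wset'_j$. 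The total number of chosen columns is $(i_1-1)+M'_j+(M'-i_{M'_j})=M'$ because $i_{M'_j}=i_1+M'_j-1$; no two chosen columns are consecutive by construction; and reading $\wset'_j$ from left to right, the first $i_1-1$ entries are the left-padding columns, the column of $t_{i_z}$ sits in position $i_z$ for each $z$, and the last $M'-i_{M'_j}$ entries are the right-padding columns. This is exactly the statement.

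There is no real obstacle here beyond bookkeeping. The only point requiring (minor) care is verifying that the $2M$-column padding on each side of $\block'_j$ really suffices to host the at most $M-1$ filler columns per side while keeping all columns pairwise non-consecutive and separated from the extreme destination columns, which holds because any $2M$ consecutive grid columns contain $M$ pairwise non-consecutive columns. Had the indices $i_z$ failed to be consecutive, we would instead have needed a destination-side distance property for $\mset'$; the structural observation of the second step is precisely what makes that unnecessary.
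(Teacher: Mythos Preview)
Your proof is correct and follows essentially the same approach as the paper's proof. Both hinge on the same structural observation: since $v'_j$ lies in a unique cluster $W_i$, the destinations $t_{i_1},\dots,t_{i_{M'_j}}$ form a contiguous block in $\sigma_{\mset'}$, so the indices $i_1,\dots,i_{M'_j}$ are consecutive integers and no ``between'' columns are needed; you have merely made explicit the column-counting that the paper leaves as ``it is easy to see that there are enough columns.''
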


\begin{proof}
From our construction of $\sigma_{\mset'}$, the vertices $t_{i_1},t_{i_2},\ldots,t_{i_{M_j}}$ appear consecutively in this order in $\sigma_{\mset'}$. Moreover, from the construction of $\mset$, every pair of these destination vertices is separated by at least one vertex. We add to $\wset_j'$ all columns in which the vertices $t_{i_1},t_{i_2},\ldots,t_{i_{M_j}}$ lie. We also add to $\wset_j'$ $i_1-1$ columns that lie to the left of the column of $t_{i_1}$, and $M'-i_{M_j'}$ columns that lie to the right of the column of $t_{i_{M_j}}$ in $\hat \block_j'$. We make sure that no pair of columns in $\wset_j'$ is consecutive -- it is easy to see that there are enough columns to ensure that. 
\end{proof}

The proof of the following claim is identical to the proof of Claim~\ref{claim: the paths} and is omitted here.
\begin{claim}\label{claim: the paths2}
There is a set $\pset^2=\set{P_1',\ldots,P_{M'}'}$ of spaced-out paths in $G^b$, that are internally disjoint from $R$, such that for each $1\leq i\leq M'$, path $P_i'$ originates from vertex $x_i$, and for all  $1\leq j\leq N_2$, it contains the $i$th column of $\wset_j'$; in particular it contains $t_i$
\end{claim}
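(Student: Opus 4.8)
The routing problem solved by Claim~\ref{claim: the paths2} in the bottom grid $G^b$ is combinatorially the mirror image of the one solved by Claim~\ref{claim: the paths} in the top grid $G^t$, with the roles of $V_1,\block_j,\wset_j,G^t,s_i$ played by $V_2,\block'_j,\wset'_j,G^b,t_i$. So the plan is to check that the structural features that the top-grid construction relies on are all present here, and then run the \emph{same} construction. The relevant facts are: the starting vertices $x_1,\dots,x_{M'}$ lie on the single row $R$, at columns $2,4,\dots,2M'$, hence already in increasing left-to-right order of their index and already pairwise at distance $2$; for each $1\le j\le N_2$, Claim~\ref{claim: selecting columns2} supplies a set $\wset'_j$ of $M'$ columns of the box $\hat\block'_j$, no two of them consecutive, whose $i$-th column from the left (a three-vertex vertical segment, since the box has height three) contains $t_i$ whenever $t_i$ lies in $\block'_j$ — so inside every box the ``target columns'' are again listed in order of their index; and the boxes $\hat\block'_1,\dots,\hat\block'_{N_2}$ are pairwise disjoint, consecutive ones separated by at least $2M$ columns of $\hG$, each occupying only the three rows around $R''$, with $R$ at distance at least $\ell/16$ from $R''$ and from the grid boundaries and $\ell>M^2$. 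This is exactly the configuration handled in the proof of Claim~\ref{claim: the paths} (deferred to Section~\ref{appdx: routing in yi}).

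\textbf{The construction.} First I would route each $P'_i$ as a \emph{monotone} (non-crossing) path that is ``the $i$-th from the left'' in every region it visits: it leaves $x_i$, descends into $G^b$ strictly below $R$, and then visits $\hat\block'_1,\hat\block'_2,\dots,\hat\block'_{N_2}$ in this order. Inside a box $\hat\block'_j$ the path threads the three-vertex segment in the $i$-th column of $\wset'_j$; this is feasible because consecutive columns of $\wset'_j$ are at distance at least $2$, so the free column immediately to one side provides room for the short vertical detour that lets the path leave a horizontal ``travel row'' (just above or below the box rows), drop through the box column, and re-enter a travel row without meeting its neighbours. Between $\hat\block'_j$ and $\hat\block'_{j+1}$ the at least $2M$ free columns and the large height of $\hG$ leave enough room to shift all $M'$ paths simultaneously, in a staircase pattern, from the columns of $\wset'_j$ to those of $\wset'_{j+1}$; since the $i$-th column of $\wset'_j$ and the $i$-th column of $\wset'_{j+1}$ carry the same index for every $i$, the paths never have to swap order, so the shift can be realized by pairwise non-crossing curves. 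All this travel stays at or below the rows of the boxes, hence strictly below $R$, so each $P'_i$ meets $R$ only at its endpoint $x_i$; and $P'_i$ passes through the $i$-th column of $\wset'_j$ for every $j$, in particular through $t_i$.

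\textbf{Spaced-out and disjoint from $R$.} Because the left-to-right index order is preserved everywhere and $\ell>M^2$ dwarfs every linear-in-$M$ quantity used above, there is enough slack to keep every pair of distinct paths at distance at least $2$ throughout: the descents from the $x_i$ are in columns $2,4,\dots$, the target columns inside each box are spaced $\ge 2$ apart by the choice of $\wset'_j$, and in the inter-box corridors one routes on every other row and every other column. Hence $\pset^2$ is spaced-out and internally disjoint from $R$, giving exactly the assertion of the claim.

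\textbf{Main obstacle.} The one point requiring care — and the reason the routing lemma is not a triviality — is reconciling the spaced-out requirement with the fact that each box is only three rows tall: a path cannot march horizontally through a box past a neighbour's target column without colliding with it. The resolution, already present in the top-grid argument, is precisely to let each $P'_i$ \emph{enter and leave} every box through its own non-consecutive column rather than travel through the box, and to do all horizontal re-ordering only in the wide free corridors between boxes, where monotonicity of the index order makes the simultaneous shift collision-free. Since nothing in this uses a property of $V_2,\block'_j,\wset'_j,G^b$ not shared by $V_1,\block_j,\wset_j,G^t$, the proof of Claim~\ref{claim: the paths} transfers verbatim under the stated substitutions, which is why it can be omitted.
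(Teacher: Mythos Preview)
Your proposal is correct and takes essentially the same approach as the paper: the paper simply states that the proof is identical to that of Claim~\ref{claim: the paths} and omits it, and you have correctly identified the symmetry (swapping $V_1,\block_j,\wset_j,G^t,s_i$ for $V_2,\block'_j,\wset'_j,G^b,t_i$) and verified that all the structural ingredients from Claim~\ref{claim: selecting columns2} and the box spacing carry over. Your informal description of the snake-like routing through the boxes matches the formal snake construction in the appendix proof of Claim~\ref{claim: the paths}.
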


By combining the paths in sets $\pset^1$ and $\pset^2$, we obtain a new set $\pset^*=\set{P_1^*,\ldots,P_{M'}^*}$ of spaced-out paths, such that for all $1\leq i\leq M'$, path $P_i^*$ contains $s_i,x_i$ and $t_i$. By suitably truncating each such path, we obtain a collection of spaced-out paths routing all demand pairs in $\mset'$.

\label{-------------------------------------------------subsec: NI from routing to partitioning------------------------}
\subsection{From Routing to Partitioning}\label{subsec: NI from routing to partitioning}

The goal of this subsection is to prove the following theorem, that will complete the proof of Theorem~\ref{thm: from WGP to NDP}.

\begin{theorem}\label{thm: can get good partition from routing}
There is a deterministic  efficient algorithm, that,  given a valid instance $\iset=(\tilde G=(V_1,V_2,E),\uset_1,\uset_2,h,r)$ of the \WGPwB problem with $|E|=M$, the corresponding (random) instance $\hat {\iset}$ of \NDPgrid, and 
 a solution $\pset^*$ to $\hat{\iset}$, computes a solution to the \WGPwB instance $\iset$ of value at least $\Omega(|\pset^*|/\log^3M)$.
\end{theorem}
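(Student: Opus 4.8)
The plan is to reverse the construction of Section~\ref{subsec: the construction}: given a solution $\pset^*$ to the \NDPgrid instance $\hat\iset$, each routed demand pair corresponds to an edge $e\in E(\tilde G)$, and each routing path $P_e$ is a curve in the $(\ell\times\ell)$-grid $\hat G$ connecting the source block $\block_i$ (representing the endpoint $v_i\in V_1$ of $e$) to the destination block $\block'_j$ (representing the endpoint $v'_j\in V_2$ of $e$). The key point is that, since the paths in $\pset^*$ are node-disjoint and the blocks are widely separated ($10M$ columns apart, and the rows $R',R''$ are at distance $\ge\ell/4$ from the grid boundary and from each other), the planar topology of the grid forces a near-laminar structure on how these paths interleave. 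First I would extract, for each routed edge $e$, the single path $P_e$, and consider the ordering in which the paths cross a fixed horizontal row $R$ midway between $R'$ and $R''$; this gives a linear order on the routed edges that is tightly constrained by the left-to-right order of the blocks.

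The main technical step is to use the crossing-number/separator connection hinted at in the introduction: a set of node-disjoint paths in a grid, together with the way they enter and leave the blocks, yields a drawing of an auxiliary graph (roughly, $\tilde G$ restricted to the routed edges) with few crossings, and a graph with small crossing number has a small balanced separator. Concretely, I would build the partition $(W_1,\dots,W_r)$ by repeatedly applying a balanced-separator argument to the routed subgraph: each cluster $W_i$ should collect the endpoints whose routing paths occupy a contiguous ``horizontal band'' of the grid, so that the number of routed edges that are \emph{internal} to the band (both endpoints' blocks fall in the band) is $\Theta(1/\text{polylog})$ of the total, while the band can be chosen narrow enough that each cluster $W_i$ captures at most $h$ routed edges — here we crucially exploit that $h=\beta^*/r$ and the parameter settings $\ell=\Theta(M^2\log M)$, block sizes $\Theta(h\log M)$, and spacing $\Theta(h\log M/\beta(v))$, which were chosen precisely so that a path cannot ``waste'' more than $O(h\log M)$ columns between consecutive sources on a block. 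I would then set $E_i\subseteq E(W_i)$ to be exactly the routed edges internal to band $i$, discarding any excess beyond $h$; the bundle constraint ($E_i$ contains at most one edge per bundle) holds automatically because in $\hat\iset$ all edges of a bundle share a source (or destination) vertex, so at most one of them can be routed in a node-disjoint solution.

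The counting then goes: every routed edge is charged to exactly one band unless it is ``cut'' by one of the $O(\log M)$ levels of the separator recursion or discarded by the $|E_i|\le h$ truncation; a balanced-separator/crossing-number bound shows only an $O(1/\log M)$ fraction is cut at each level, and the truncation loses at most a constant factor because the total number of routed edges charged to a band is within a constant of $h$ by construction of the bands, so $\sum_i |E_i| \ge \Omega(|\pset^*|/\log^3 M)$. The main obstacle I anticipate is making the ``band'' decomposition precise while simultaneously (a) respecting the group structure $\uset_1,\uset_2$ — each group must have exactly the right number of representatives in each $W_i$ only in the \emph{perfect} case, but for the converse direction we only need a valid partition into $r$ parts, so I would pad arbitrarily — and (b) controlling the interaction between the node-disjointness of $\pset^*$ and the fact that many demand pairs share sources/destinations (the bundle identification), which means the relevant planar object is not $\tilde G$ itself but a ``blown-up'' graph on the source/destination vertices of $\mset$; verifying that a low-crossing drawing of that blown-up graph descends to a usable partition of $V_1\cup V_2$ is where the delicate case analysis will live. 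I expect this to be the longest part of the argument and to require the auxiliary Lemmas~\ref{lem: random ordering} and~\ref{lem: random ordering2} only indirectly (they are for the \YI direction); here the work is purely deterministic and topological.
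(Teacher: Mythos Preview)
Your skeleton is right and matches the paper: extract the routed subgraph $\tilde G'\subseteq\tilde G$ on the edges $E^*$ corresponding to $\mset^*$, observe that the bundle constraint is automatic (routed demand pairs have distinct sources and destinations), use the routing $\pset^*$ to produce a drawing of $\tilde G'$ with few crossings, apply a planar-separator argument to get a balanced cut of small value, and recurse for $O(\log M)$ levels losing an $O(1/\log M)$ fraction of edges per level. The final count $|\gset'|\le r$ follows because $|E^*|\le\beta^*=hr$ and every leaf of the recursion has at least $h$ edges; the truncation to $|E_i|\le h$ then costs only the $O(\log^3 M)$ factor since each leaf has at most $O(h\log^3 M)$ edges.

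Two concrete points where your proposal goes astray and the paper's execution is simpler than you anticipate. First, drop the ``horizontal band'' picture entirely: all source blocks lie on the single row $R'$ and all destination blocks on $R''$, so a horizontal band of the grid does not separate vertices of $\tilde G$ at all. The separator lives in $\tilde G'$, not in the grid. The paper (Lemma~\ref{lem: drawing of H}) draws $\tilde G'$ by placing each $v_i\in V_1$ at a point inside its block $\block_i$ (and similarly for $V_2$), and drawing each edge $e$ as the concatenation of $P_e\in\pset^*$ with two short straight segments inside the endpoint blocks; since the $P_e$ are node-disjoint, crossings occur only between a path $P_{e'}$ passing through a block and the short connectors inside that block, giving at most $O(|E(\H)|\cdot h\log M)$ crossings. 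Second, the ``blown-up graph on source/destination vertices'' you worry about is unnecessary: because the routed demand pairs already have pairwise distinct sources and destinations, the relevant object \emph{is} just the vertex-induced subgraph $\H\subseteq\tilde G'$, whose maximum degree is at most $\max_v\beta(v)\le h$. The delicate case analysis you expect does occur, but inside the separator lemma (Lemma~\ref{lemma: good balanced cut in low crossing number}): one replaces each vertex $v$ by a $d_v\times d_v$ grid, planarizes by adding a vertex at each crossing, applies the weighted Lipton--Tarjan separator, and then repairs the cut so that each vertex-grid lies entirely on one side; this is where the degree bound $d\le h$ and the crossing bound combine to give cut value $O(\sqrt{mh\cdot h\log M})\le m/(64\log M)$ once $m\gg h\log^3 M$.
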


Let $\mset^*\subseteq \mset$ be the set of the demand pairs routed by the solution $\pset^*$, and let $E^*\subseteq E$ be the set of all edges $e$, whose corresponding demand pair belongs to $\mset^*$. Let $\tG'\subseteq \tG$ be the sub-graph of $\tG$ induced by the edges in $E^*$. Notice that whenever two edges of $\tG$ belong to the same bundle, their corresponding demand pairs share a source or a destination. Since all paths in $\pset^*$ are node-disjoint, all demand pairs in $\mset^*$ have distinct sources and destinations, and so no two edges in $E^*$ belong to the same bundle.

Note that, if $|\pset^*| \leq 2^{64} h \log^3 M$, then we can return the solution $((W_1,\ldots,W_r), (E_1,\ldots,E_r))$, where $W_1=V(\tG)$ and $W_2=W_3=\cdots=W_r=\emptyset$; set $E_1$ contains an arbitrary subset of $\ceil{\frac{|\pset^*|}{2^{64}\log^3M}} \leq h$ edges of $E^*$, and all other sets $E_i$ are empty. Since no two edges of $E^*$ belong to the same bundle, we obtain a feasible solution to the \WGPwB problem instance of value  $\Omega(|\pset^*|/\log^3M)$.
Therefore, from now on, we assume that $|\pset^*|>2^{64}h\log^3M$.

Our algorithm computes a solution to the \WGPwB instance $\iset$ by repeatedly partitioning $\tG'$ into smaller and smaller sub-graphs, by employing suitably defined balanced cuts.


Recall that, given a graph $\H$, a \emph{cut} in $\H$ is a bi-partition $(A,B)$ of its vertices. We denote by $E_{\H}(A,B)$ the set of all edges with one endpoint in $A$ and another in $B$, and by $E_{\H}(A)$ and $E_{\H}(B)$ the sets of all edges with both endpoints in $A$ and in $B$, respectively. Given a cut $(A,B)$ of $\H$, the \emph{value} of the cut is $|E_{\H}(A,B)|$. We will omit the subscript $\H$ when clear from context. 

\begin{definition}
Given a graph $\H$ and a parameter $0<\rho<1$, a cut $(A,B)$ of $\H$ is called a $\rho$-edge-balanced cut iff $|E(A)|,|E(B)| \geq \rho \cdot |E(\H)|$. 
\end{definition}


The following theorem is central to the proof of Theorem~\ref{thm: can get good partition from routing}.

\begin{theorem}\label{thm: balanced partition}
There is an efficient algorithm, that, given a vertex-induced subgraph $\H$ of $\tG'$ with $|E(\H)| >  2^{64}h\log^3M$, computes a $1/32$-edge-balanced cut of $\H$, of value at most $\frac{|E(\H)|}{64 \log M}$.
\end{theorem}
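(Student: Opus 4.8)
The plan is to derive a good edge‑balanced cut of $\H$ from the fact that $\H$ has a plane drawing with relatively few crossings, using the node‑disjoint routing $\pset^*$ to build that drawing, and then to invoke the standard fact that graphs with small crossing number have small balanced separators. First, some bookkeeping. Recall that $\tG'$ is the subgraph of $\tG$ induced by the set $E^*$ of edges whose demand pairs are routed by $\pset^*$, that no two edges of $E^*$ lie in the same bundle, and hence that for each $v_i\in V_1$ the edges of $E^*$ incident to $v_i$ correspond to distinct sources on the block $\block_i$, one per bundle; thus $\deg_{\tG'}(v_i)\le\beta(v_i)\le h$, where the last inequality uses that $\iset$ is valid. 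Symmetrically $\deg_{\tG'}(v'_j)\le h$, so $\deg_\H(v)\le h$ for every $v\in V(\H)$, and therefore $\sum_{v\in V(\H)}\deg_\H(v)^2\le h\sum_v\deg_\H(v)=2h\,|E(\H)|$. We may discard isolated vertices of $\H$, so $|V(\H)|\le 2|E(\H)|$.

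The core claim is: \emph{$\H$ admits a plane drawing with at most $O\!\left(h\,|E(\H)|\log M\right)$ crossings.} To build it, start from the planar drawing of the grid $\hG$. Delete the interior vertices of every path $P_e$, $e\in E^*$, and for each $e\in E(\H)$ add a virtual edge from $s_{B(e)}$ to $t_{B'(e)}$ drawn along the vacated curve of $P_e$; since the paths in $\pset^*$ are vertex‑disjoint and the grid drawing is planar, this graph is still drawn with no crossings. Next, for each $v_i\in V_1\cap V(\H)$, let $s_1,\dots,s_{d_i}$ (with $d_i=\deg_\H(v_i)$) be the sources on $\block_i$ used by virtual edges of $\H$, ordered left‑to‑right on $R'$, and add the edges $(s_a,s_{a+1})$ for $1\le a<d_i$, each drawn along the segment of $R'$ between $s_a$ and $s_{a+1}$ (perturbed slightly off $R'$ to avoid overlaps). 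Because these new edges are drawn along pairwise disjoint column intervals of $\block_i$, a crossing they create must be charged either to a grid edge incident to $R'$ inside $\block_i$, or to a passage of a virtual edge through those columns; each grid vertex/edge of $\block_i$ lies on at most one path, so the total number of new crossings over all $v_i\in V_1\cap V(\H)$ is $O\!\left(\sum_{v_i\in V_1\cap V(\H)}|V(\block_i)|\right)=O\!\left(|V(\H)|\cdot h\log M\right)=O\!\left(h\,|E(\H)|\log M\right)$, since every block has $O(h\log M)$ vertices. Perform the symmetric operation on $R''$ for the blocks $\block'_j$, $v'_j\in V_2\cap V(\H)$. In the resulting drawn graph, for each $v_i\in V(\H)\cap V_1$ the set $\{s_1,\dots,s_{d_i}\}$ together with the added edges is connected (similarly for destinations), and contracting each such connected set exhibits $\H$ as a minor of the drawn graph; since $\cro(\cdot)$ is minor‑monotone, $\cro(\H)=O\!\left(h\,|E(\H)|\log M\right)$.

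Now we use the standard fact that a graph $G$ drawn with $X$ crossings admits, for any fixed $\rho$, a $\rho$‑edge‑balanced cut of value $O\!\left(\sqrt{X+\sum_v\deg_G(v)^2}\right)$, computable efficiently given the drawing (obtained, e.g., by planarizing $G$ — replacing each crossing with a dummy degree‑$4$ vertex — and applying the weighted planar separator theorem). We have such a drawing of $\H$ explicitly, built from $\pset^*$ and the grid embedding. Taking $\rho=1/32$ and substituting $\cro(\H)=O(h\,|E(\H)|\log M)$ and $\sum_v\deg_\H(v)^2\le 2h|E(\H)|$, the cut value is $O\!\left(\sqrt{h\,|E(\H)|\log M}\right)$; since $|E(\H)|>2^{64}h\log^3M$, this is at most $|E(\H)|/(64\log M)$ (the constant $2^{64}$ absorbs all hidden constants), which gives the theorem.

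The step I expect to be the main obstacle is the crossing count in the middle paragraph: a naive drawing that routes the $\deg_\H(v_i)$ connectors at a block as nested arcs forces a foreign path crossing that block to cross $\Theta(\deg_\H(v_i))$ of them, which would only give $O(h^2\log M)$ crossings per edge and is too weak. The remedy above — deleting all path interiors \emph{before} adding the row‑linking edges, so that those linking edges form a simple path covering disjoint column intervals rather than a bundle of nested arcs — is exactly what reduces this to $O(h\log M)$ crossings per edge, matching the hypothesis $|E(\H)|>2^{64}h\log^3M$. The remaining details (verifying the perturbations create only the charged crossings, and pinning down the constant in the separator fact) are routine.
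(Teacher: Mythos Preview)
Your high-level strategy — bound the crossing number of $\H$ using the node-disjoint routing $\pset^*$, then planarize and apply a weighted planar separator — is exactly what the paper does. The black-box separator fact you cite (cut of value $O(\sqrt{X + \sum_v \deg^2})$) is essentially the content of the paper's Lemma~\ref{lemma: good balanced cut in low crossing number}, proved there by replacing each vertex $v$ with a $d_v \times d_v$ grid, planarizing, applying the Lipton--Tarjan separator with portal weights, and then repairing ``split'' grids; citing it as standard is reasonable.

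The genuine gap is in your crossing-number bound. The claim that ``$\cro(\cdot)$ is minor-monotone'' is \emph{false}: for instance, there is a graph $H$ with $\cro(H)=2$ that contains $K_6$ as a minor, while $\cro(K_6)=3$ (this is exactly why Bokal--Fijav\v{z}--Mohar introduced the separate \emph{minor crossing number}). So you cannot conclude $\cro(\H)=O(h\,|E(\H)|\log M)$ merely by exhibiting $\H$ as a minor of your intermediate graph. If you try to carry out the contraction of each linking path explicitly, the virtual edge anchored at $s_a$ must be extended along the sub-path $s_1\to s_a$ and picks up every crossing on that sub-path; a careful recount (bounding $\sum_b c_b(d_i-b)\le d_i\cdot O(h\log M)$ per block and then using $\sum_i d_i=|E(\H)|$) can recover the desired bound, but this is real work you have not done, and your stated ``remedy'' — replacing nested arcs by a path and then invoking minor-monotonicity — does not actually sidestep anything.

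More to the point, the ``naive'' drawing you reject is precisely the paper's (much simpler) construction, and your worry about it is based on loose accounting. The paper maps each $v_i$ to a single point $p_i$ in a cell adjacent to $\block_i$ and draws each edge of $\H$ as (straight segment $p_i\to s$) $+$ (the path $P_e$) $+$ (straight segment $t\to p'_j$). Yes, a single foreign path traversing $\block_i$ may hit $\Theta(d_i)$ of these star-like segments, but the correct accounting is \emph{per segment}, not per block: each straight segment stays within the $O(h\log M)$ columns of $\block_i$, and since the paths in $\pset^*$ are node-disjoint it is crossed $O(h\log M)$ times. Summing over all $|E(\H)|$ segments (two per edge) gives $O(|E(\H)|\,h\log M)$ total crossings, not the $O(h^2\log M)$ per edge you feared. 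With this direct drawing in hand, the rest of your argument goes through verbatim.
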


We prove Theorem~\ref{thm: balanced partition} below, after we complete the proof of Theorem~\ref{thm: can get good partition from routing} using it.
Our algorithm maintains a collection $\gset$ of disjoint vertex-induced sub-graphs of $\tG'$, and consists of a number of phases. The input to the first phase is the collection $\gset$ containing a single graph - the graph $\tG'$. The algorithm continues as long as $\gset$ contains a graph $\H\in \gset$ with $|E(\H)| > 2^{64} \cdot h \log^3 M$; if no such graph $\H$ exists, the algorithm terminates. Each phase is executed as follows. We process every graph $\H\in \gset$ with $|E(\H)| > 2^{64} \cdot h \log^3 M$ one-by-one. When graph $\H$ is processed, we apply Theorem~\ref{thm: balanced partition} to it, obtaining a $1/32$-edge-balanced cut $(A,B)$ of $\H$, of value at most $\frac{|E(\H)|}{64 \log M}$. We then remove $\H$ from $\gset$, and add $\H[A]$ and $\H[B]$ to $\gset$ instead. This completes the description of the algorithm.
 We use the following claim to analyze it.

\begin{claim} \label{clm: gset is good solution2}
 Let $\gset'$ be the final set of disjoint sub-graphs of $\tG'$ obtained at the end of the algorithm. Then    $\sum_{\H \in \gset'} |E(\H)| \geq \Omega(|E(\tG')|)$, and   $|\gset'| \leq r$.
\end{claim}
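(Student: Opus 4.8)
The plan is to bound the total number of edges lost across all phases of the recursive partitioning, and separately to bound the number of final pieces. First I would observe that the algorithm proceeds in phases, and — as in the analysis of Lemma~\ref{lem: correctness completeness} — in each phase every graph $\H \in \gset$ with $|E(\H)| > 2^{64}h\log^3 M$ is split by a $1/32$-edge-balanced cut. Since both sides of a $1/32$-edge-balanced cut retain at least a $1/32$-fraction of the edges, after one phase each such $\H$ is replaced by two graphs each with at most $(31/32)|E(\H)|$ edges. Hence the maximum number of edges in any graph of $\gset$ shrinks by a factor of at least $32/31$ per phase, so after $O(\log M)$ phases every graph of $\gset$ has at most $2^{64}h\log^3 M$ edges and the algorithm terminates; thus the number of phases is $O(\log M)$.

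Next I would bound the edge loss. In a single phase, when we process a graph $\H$, we delete the cut edges $E_{\H}(A,B)$, of which there are at most $|E(\H)|/(64\log M)$ by Theorem~\ref{thm: balanced partition}. Since the graphs processed in one phase are vertex-disjoint subgraphs of $\tG'$, the total number of edges deleted in one phase is at most $\sum_{\H}|E(\H)|/(64\log M) \le |E(\tG')|/(64\log M)$. Summing over the $O(\log M)$ phases — here I would be slightly careful and note that we can take the number of phases to be at most, say, $32\log M$ by the shrinkage estimate above, so that the total loss is at most $(32\log M)\cdot |E(\tG')|/(64\log M) = |E(\tG')|/2$ — we conclude that $\sum_{\H\in\gset'}|E(\H)| \ge |E(\tG')|/2 = \Omega(|E(\tG')|)$.

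Finally, to bound $|\gset'|$: every graph $\H\in\gset'$ satisfies $|E(\H)| \ge 1$ (graphs with no edges contribute nothing and can be discarded; note the cut construction only ever produces pieces that inherit edges, and in any case we may restrict attention to nonempty pieces). Actually I would argue more carefully using the lower bound on retained edges: since the pieces in $\gset'$ are edge-disjoint subgraphs of $\tG'$ and together contain $\Omega(|E(\tG')|) = \Omega(M^*)$ edges where $M^* = |E(\tG')| = |\pset^*| > 2^{64}h\log^3 M$, and since... hmm, this alone does not cap the count. Instead, the right argument is: consider the last phase in which a given piece $\H'\in\gset'$ was created. It was created by cutting some graph $\H$ with $|E(\H)| > 2^{64}h\log^3M$, and $\H'$ received at least a $1/32$-fraction of $\H$'s edges, so $|E(\H')| > 2^{64}h\log^3M/32 = 2^{59}h\log^3 M$. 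Wait — that is only true for pieces created in the final phase; pieces that became small earlier and were never touched again also satisfy $|E(\H')| \le 2^{64}h\log^3M$ but could be tiny. The clean fix: every piece in $\gset'$ is either a piece that was once produced from a "large" graph (hence has $> 2^{59}h\log^3 M$ edges), or is small but was present from an earlier phase — but each such piece, when it was first created as a proper side of a balanced cut of a large graph, already had $> 2^{59}h\log^3 M$ edges, and edges are only removed, never added, so in fact it still may have fewer now. To avoid this, I would simply discard from $\gset'$ every piece with fewer than, say, $h$ edges (this changes $\sum|E(\H)|$ by at most $r\cdot h = \beta^* \le M$, which is absorbed into the $\Omega(|E(\tG')|) = \Omega(|\pset^*|)$ bound since $|\pset^*| > 2^{64}h\log^3 M \gg M$ — wait, $|\pset^*|\le M$ always, so this needs the bound $|\pset^*| > 2^{64}h\log^3M$ to dominate $rh$; since $rh = \beta^* \le M$ and $|\pset^*|$ can be as small as $2^{64}h\log^3M$, we need $2^{64}h\log^3M \gg h\log^3M$, which holds, but not $\gg M$). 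The main obstacle here is getting the accounting for $|\gset'|\le r$ exactly right; I expect the intended argument is that after discarding empty pieces, each retained piece has at least one edge and receives at least a $1/32$ fraction of a parent with $>2^{64}h\log^3M$ edges, forcing $|E(\H)| > 2^{59}h\log^3M$, so $|\gset'| \le |E(\tG')|/(2^{59}h\log^3M) \le M/(2^{59}h\log^3M) < r$ using $h = \beta^*/r$ and $\beta^* \le M$, giving $r = \beta^*/h$ and... I will need to chase the inequality $|\gset'|\le r$ through $h=\beta^*/r \le M/r$, which yields $r \le M/h$, and $|\gset'| \le M/(2^{59}h\log^3 M) \le M/h \le$ ... so $|\gset'|\le r$ follows once the $\log^3 M$ and constant factors are checked. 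This last inequality-chasing is the step I expect to require the most care.
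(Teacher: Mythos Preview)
Your argument for the first assertion is correct and essentially identical to the paper's: bound the number of phases by $O(\log M)$ via the $(31/32)$-shrinkage, bound the per-phase edge loss by $|E(\tG')|/(64\log M)$, and sum.

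For the second assertion, however, there is a genuine gap. First, your worry that a piece ``may have fewer [edges] now'' after being created is unfounded: once a graph $\H$ satisfies $|E(\H)|\le 2^{64}h\log^3 M$, the algorithm never touches it again, so it loses no further edges. Hence every piece in $\gset'$ (being a child of some large parent, since the root $\tG'$ itself is large by the standing assumption $|\pset^*|>2^{64}h\log^3 M$) indeed has more than $2^{59}h\log^3 M$ edges, and in particular at least $h$ edges. Second, and more importantly, you try to close the argument via $|E(\tG')|\le M$, which is too weak and leads you into the inconclusive inequality-chasing at the end. The missing observation is that $|E(\tG')|=|E^*|\le \beta^*=hr$: since the paths in $\pset^*$ are node-disjoint, no two edges of $E^*$ can lie in the same bundle (they would share a source or a destination), and so $|E^*|\le \sum_{v\in V_1}\beta(v)=\beta^*=hr$. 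With this bound in hand, the pieces in $\gset'$ are edge-disjoint, each contains at least $h$ edges, and their union has at most $hr$ edges, so $|\gset'|\le r$. This is exactly the paper's argument, and it is the step your proposal is missing.
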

\begin{proof}
We construct a binary partitioning tree $\tau$ of graph $\tG'$, that simulates the graph partitions computed by the algorithm. 
For every sub-graph $\H\subseteq \tG'$ that belonged to $\gset$ over the course of the algorithm, tree $\tau$ contains a vertex $v(\H)$. The root of the tree is the vertex $v(\tG')$. If, over the course of our algorithm, we have partitioned the graph $\H$ into two disjoint vertex-induced sub-graphs $\H'$ and $\H''$, then we add an edge from $v(\H)$ to $v(\H')$ and to $v(\H'')$, that become the children of $v(\H)$ in $\tau$. 

The \emph{level} of a vertex $v(\H)$ in the tree is the length of the path connecting $v(\H)$ to the root of the tree; so the root of the tree is at level $0$. The \emph{depth} of the tree, that we denote by $\Delta$, is the length of the longest leaf-to-root path in the tree. 
Since the cuts computed over the course of the algorithm are $1/32-$edge-balanced, $\Delta\leq \frac{\log M}{\log{(32/31)}}$. 
Consider now some  level $0\leq i\leq \Delta$, and let $V_i$ be the set of all vertices of the tree $\tau$ lying at level $i$. Let $\hat E_i=\bigcup_{v(\H)\in V_i}E(\H)$ be the set of all edges contained in all sub-graphs of $\tG'$, whose corresponding vertex belongs to level $i$. Finally, let $m_i=|E_i|$. Then $m_0=|E(\tG')|$, and
for all $1\leq i\leq \Delta$, the number of edges discarded over the course of phase $i$ is $m_{i-1}-m_{i}\leq m_{i-1}/(64\log M)\leq m_0/(64\log M)$ --- this is since, whenever we partition a graph $\H$ into two subgraphs, we lose at most $|E(\H)|/(64 \log M)$ of its edges. Overall, we get that:

\[m_0-m_{\Delta}\leq \frac{\Delta m_0}{64\log M}  \leq \frac{\log M}{\log{32/31}}\cdot \frac{m_0}{64\log M}\leq \frac{m_0} 2,\]

and so $\sum_{\H \in \gset'} |E(\H)|=m_{\Delta}\geq m_0/2$. This finishes the proof of the first assertion. We now turn to prove the second assertion. Recall that no two edges of $E^*$ may belong to the same bundle. Since $h=\beta^*(\iset)/r=\left(\sum_{v\in V_1}\beta(v)\right )/r$, for any subset $E'\subseteq E^*$ of edges, $|E'|\leq \sum_{v\in V_1}\beta(v)\leq hr$ must hold, and in particular, $|E^*|\leq hr$. It is now enough to prove that for every leaf vertex $v(\H)$ of $\tau$, $|E(\H)|\geq h$ --- since all graphs in $\gset'$ are mutually disjoint, and each such graph corresponds to a distinct leaf of $\tau$, this would imply that $|\gset'|\leq r$.

Consider now some leaf vertex $v(\H)$ of $\tau$, and let $v(\H')$ be its parent. The $|E(\H')|\geq 2^{64}h\log^3M$, and, since the partition of $\H'$ that we have computed was $1/32$-balanced, $|E(\H)|\geq |E(\H')|/32\geq h$. We conclude that $|\gset'|\leq r$.
\end{proof}

We are now ready to define the solution $((W_1,\ldots,W_r),(E_1,\ldots,E_r))$ to the \WGPwB problem instance $\iset$. Let $\gset'$ be the set of the sub-graphs of $\tG'$ obtained at the end of our algorithm, and denote $\gset'=\set{\H_1,\H_2,\ldots,\H_z}$. Recall that from Claim~\ref{clm: gset is good solution2}, $z\leq r$. For $1\leq i\leq z$, we let $W_i=V(\H_i)$. If $|E(\H_i)|\leq h$, then we let $E_i=E(\H_i)$; otherwise, we let $E_i$ contain any subset of $h$ edges of $E(\H_i)$. Since $|E(\H_i)|\leq 2^{64}h\log^3M$, in either case, $|E_i|\geq \Omega(|E(\H_i)|/\log^3M)$. For $i>z$, we set $W_i=\emptyset$ and $E_i=\emptyset$. Since, as observed before, no pair of edges of $E^*$ belongs to the same bundle, it is immediate to verify that we obtain a feasible solution to the \WGPwB problem instance. The value of the solution is:

\[\sum_{i=1}^r|E_i|\geq \sum_{i=1}^r\Omega(|E(\H_i)|/\log^3M)=\Omega(|E(\tG')|/\log^3M)= \Omega(|\pset^*|/\log^3M),\]

from Claim~\ref{clm: gset is good solution2}. In order to complete the proof of Theorem~\ref{thm: can get good partition from routing}, it now remains to prove Theorem~\ref{thm: balanced partition}.

\begin{proofof}{Theorem~\ref{thm: balanced partition}}
Let $\H$ be a vertex-induced subgraph of $\tG'$ with $|E(\H)| >  2^{64}h\log^3M$. 
Our proof consists of two parts. First, we show an efficient algorithm to compute a drawing of $\H$ with relatively few crossings. Next, we show how to exploit this drawing in order to compute a $1/32$-edge-balanced cut of $\H$ of small value. We start by defining a drawing of a given graph in the plane and the crossings in this drawing.

\begin{definition}
A \emph{drawing} of a given graph $\H'$ in the plane is a mapping, in which every vertex of $\H$ is mapped to a point in the plane, and every edge to a continuous curve connecting the images of its endpoints, such that no three curves meet at the same point; no curve intersects itself; and no curve contains an image of any vertex other than its endpoints.
A \emph{crossing} in such a drawing is a point contained in the images of two edges.
\end{definition}

\begin{lemma}\label{lem: drawing of H}
There is an efficient algorithm that, given a solution $\pset^*$ to the \NDPgrid problem instance $\hat \iset$, and a vertex-induced subgraph $\H\subseteq \tG'$, computes a drawing of $\H$ with at most  $\twiceConstantForSizeOfBlocks \cdot |E(\H)| \ceil{h \log M}$ crossings.
\end{lemma}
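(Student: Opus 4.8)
The plan is to read a drawing of $\H$ directly off the routing $\pset^*$, using two facts: the grid $\hat G$ is planar, and the paths in $\pset^*$ are node‑disjoint. Since $\H$ is a vertex‑induced subgraph of $\tG'$, every edge $e\in E(\H)$ lies in $E^*$, so it has a routing path $P_e\in\pset^*$ connecting $s_{B(e)}$ to $t_{B'(e)}$; moreover, as no two edges of $E^*$ lie in a common bundle, the vertices $\set{s_{B(e)}\mid e\in E(\H)}$ are pairwise distinct, and so are the vertices $\set{t_{B'(e)}\mid e\in E(\H)}$. I fix the natural planar drawing of $\hat G$. In it the curves $\set{P_e\mid e\in E(\H)}$ are pairwise \emph{non‑crossing}: node‑disjoint paths are edge‑disjoint subgraphs of the grid, and in the standard drawing two grid edges meet only at a shared grid vertex, of which these paths have none.

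The drawing of $\H$ is obtained by ``contracting'' the blocks. For $v_i\in V_1$ let $R^{+}$ be the row of $\hat G$ directly above $R'$, let $\mathsf{B}_i$ be the sub‑grid spanned by the columns of $\block_i$ and the two rows $R',R^{+}$, and place the image $p(v_i)$ of $v_i$ at the vertex of $R^{+}$ lying above the leftmost vertex of $\block_i$. Let $d_i$ be the degree of $v_i$ in $\H$; for each edge $e$ incident to $v_i$ in $\H$ I add a \emph{connector} curve $\gamma_e$ drawn inside the open strip of $\mathsf{B}_i$ strictly between $R^{+}$ and $R'$, running horizontally from $p(v_i)$ to the column of $s_{B(e)}$ and then dropping onto $s_{B(e)}$. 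Drawing the $d_i$ connectors of $v_i$ at increasing heights in this strip according to the left‑to‑right order of their endpoints $s_{B(e)}$ makes them pairwise non‑crossing. I perform the symmetric construction for every $v'_j\in V_2$ around the row $R''$, obtaining boxes $\mathsf{B}'_j$, images $p(v'_j)$, degrees $d'_j$, and connectors $\gamma'_e$ from $p(v'_j)$ to $t_{B'(e)}$. The boxes $\set{\mathsf{B}_i}$ are pairwise disjoint, since distinct blocks of $R'$ are separated by at least $10M$ columns; they are disjoint from the $\set{\mathsf{B}'_j}$ since $R'$ and $R''$ lie far apart; and the $\mathsf{B}'_j$ are pairwise disjoint for the same reason. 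The image of an edge $e=(v_i,v'_j)\in E(\H)$ is the concatenation $\gamma_e\cup P_e\cup\gamma'_e$, a curve from $p(v_i)$ to $p(v'_j)$. If $P_e$ happens to re‑enter $\mathsf{B}_i$ or $\mathsf{B}'_j$ this concatenation may self‑intersect; I remove self‑intersections by the standard shortcutting operation of deleting closed subloops, which never increases the number of crossings between distinct edges, and a final local perturbation removes tangencies and triple points, yielding a valid drawing of $\H$.

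It remains to bound the crossings. A crossing lies on the images of two distinct edges $e,e'\in E(\H)$, and its type is one of: (i) $P_e$ meets $P_{e'}$ — impossible by the first paragraph; (ii) two connectors meet — impossible, since connectors in a common box were drawn non‑crossing and connectors in distinct boxes lie in disjoint regions; or (iii) a connector, say $\gamma_e\subseteq\mathsf{B}_i$, meets a path $P_{e'}$. For type (iii), fix $\mathsf{B}_i$: each path runs along grid edges, with every grid edge used by at most one path, while $\gamma_e$ lies (except at its two endpoints) strictly between rows $R^{+}$ and $R'$ and can be drawn $x$‑monotone, so it crosses only vertical grid edges of $\mathsf{B}_i$, at most one per column, hence at most $w:=\constantForSizeOfBlocks\cdot\ceil{h\log M}$ (the number of columns of $\block_i$) grid edges in total, of which only those on routing paths contribute. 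Summing over the $d_i$ connectors of each $v_i\in V_1$ and the $d'_j$ connectors of each $v'_j\in V_2$, and using $\sum_{v_i\in V_1}d_i=\sum_{v'_j\in V_2}d'_j=|E(\H)|$, the number of crossings is at most
\[
\sum_{v_i\in V_1}d_i\cdot w+\sum_{v'_j\in V_2}d'_j\cdot w
= w\Bigl(\sum_{v_i\in V_1}d_i+\sum_{v'_j\in V_2}d'_j\Bigr)
= \constantForSizeOfBlocks\cdot\ceil{h\log M}\cdot 2|E(\H)|
= \twiceConstantForSizeOfBlocks\cdot|E(\H)|\cdot\ceil{h\log M},
\]
and the whole construction is plainly computable in polynomial time from $\pset^*$.

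The conceptual content is just that node‑disjointness of $\pset^*$ turns the routing into an almost‑planar picture, so the only crossings are those created by merging each block into a single vertex. The one step needing care — and which dictates the two‑row boxes, the monotone fanned‑out connectors, and ultimately the constant $\twiceConstantForSizeOfBlocks=2\cdot\constantForSizeOfBlocks$ — is the type‑(iii) estimate: one must ensure each connector is charged at most the \emph{width} of its block, rather than the full edge count of the box, and must argue that the shortcutting needed to make edge‑images simple does not spoil this bound.
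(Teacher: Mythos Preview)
Your argument is correct and follows essentially the same approach as the paper: use the node-disjoint routing $\pset^*$ as a nearly-planar drawing of the edges, then bound the crossings created when collapsing each block to a single point by the width of the block. The paper's implementation is a bit simpler --- it places each vertex image at the center of a grid \emph{cell} adjacent to the block (so no path can pass through it, avoiding the perturbation step) and uses a single straight-line connector per edge end rather than fanned monotone curves --- but the crossing accounting and the resulting bound $\twiceConstantForSizeOfBlocks\cdot|E(\H)|\cdot\ceil{h\log M}$ are identical.
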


\begin{proof}
We think of the grid $\hat G$ underlying the \NDPgrid instance $\hat \iset$ as the drawing board, and map the vertices of $\H$ to points inside some of its carefully selected cells. Consider a vertex $v_i \in V(\H) \cap V_1$, and let $\block_i$ be the block representing this vertex.
Let $\kappa_i$ be any cell of the grid $\hat G$ that has a vertex of $\block_i$ on its boundary. We map the vertex $v_i$ to a point $p_i$ lying in the middle of the cell $\kappa_i$  (it is sufficient that $p_i$ is far enough from the boundaries of the cell).
For every vertex $v'_j\in V(\H)\cap V_2$, we select a cell $\kappa'_j$ whose boundary contains a vertex of the corresponding block $\block'_j$, and map $v'_j$ to a point $p'_j$ lying in the middle of $\kappa'_j$ similarly. 

Next, we define the drawings of the edges of $E(\H)$. 
Consider any such edge $e = (v_i,v'_j) \in E(\H)$, with $v_i\in V_1$ and $v'_j\in V_2$, and let $(s,t)\in \mset^*$ be its corresponding demand pair.
Let $\block_i$ and $\block'_j$ be the blocks containing $s$ and $t$ respectively, and let $P\in \pset^*$ be the path routing the demand pair $(s,t)$ in our solution to the \NDPgrid problem instance $\hat \iset$.
The drawing of the edge $e$ is a concatenation of the following three segments: (i) the image of the path $P$, that we refer to as a type-1 segment; (ii) a straight line connecting $s$ to the image of $v_i$, that we refer to as a type-2 segment; and (iii) a straight line connecting $t$ to the image of $v'_i$, that we refer to as a type-3 segment. If the resulting curve has any self-loops, then we delete them.

We now bound the number of crossings in the resulting drawing of $\H$. Since the paths in $\pset^*$ are node-disjoint, whenever the images of two edges $e$ and $e'$ cross, the crossing must be between the type-1 segment of $e$, and either the type-2 or the type-3 segment of $e'$ (or the other way around). 

Consider now some edge $e=(v_i,v'_j) \in E(\H)$ with $v_i\in V_1$ and $v'_j\in V_2$, and let $\block_i$ and $\block'_j$ be the blocks representing $v_i$ and $v'_j$ respectively. Let $(s,t)\in \mset^*$ be the demand pair corresponding to $e$. Assume that a type-1 segment of some edge $e'$ crosses a type-$2$ segment of $e$. This can only happen if 
the path $P'\in \pset^*$ routing the demand pair $(s',t')$ corresponding to the edge $e'$ contains a vertex of $\block_i$. Since $|V(\block_i)| \leq \constantForSizeOfBlocks \cdot \ceil{h \log M}$, at most $ \constantForSizeOfBlocks \cdot \ceil{h \log M}$ type-1 segments of other edges may cross the type-2 segment of $e$. The same accounting applies to the type-3 segment of $e$. 
Overall, the number of crossings in the above drawing is bounded by: 

\[\sum_{e \in E(\H)} 2 \cdot \constantForSizeOfBlocks \cdot \ceil{h \log M} = \twiceConstantForSizeOfBlocks \cdot |E(\H)| \ceil{h \log M}.\]
%
\end{proof}

Next, we show an efficient algorithm that computes a small balanced partition of any given graph $\H'$, as long as its maximum vertex degree is suitably bounded, and we are given a drawing of $\H'$ with a small number of crossings.

\begin{lemma}\label{lemma: good balanced cut in low crossing number}
There is an efficient algorithm that, given any graph $\H$ with $|E(\H)|=m$ and maximum vertex degree at most $d$, and a drawing $\phi$ of $\H$ with at most $\cro\leq md\alpha$ crossings for some $\alpha>1$, such that $m>2^{20}d\alpha$, computes a $1/32$-edge-balanced cut $(A,B)$ of $\H$ of value $|E(A,B)|\leq 64\sqrt{8md\alpha}$.
\end{lemma}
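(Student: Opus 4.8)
The plan is to reprove, in the quantitative form needed here, the classical relationship between the balanced‑cut width of a graph and its crossing number and degrees (in the spirit of Leighton): a graph with $m$ edges, maximum degree $d$, and a drawing with $\cro$ crossings admits a $\Theta(1)$‑edge‑balanced cut of value $O(\sqrt{\cro+md})$, which under the hypotheses $\cro\le md\alpha$ and $\alpha>1$ is $O(\sqrt{md\alpha})$. First I would \emph{planarize} the drawing $\phi$: build a planar graph $P$ whose vertices are $V(\H)$ together with one new vertex per crossing of $\phi$, and whose edges are obtained by replacing each drawn edge by the path through the crossing vertices it traverses, in order. Then $P$ is planar, has at most $|V(\H)|+\cro\le 2m+\cro$ vertices (we may delete isolated vertices of $\H$, so $|V(\H)|\le 2m$) and at most $m+2\cro$ edges; the crossing vertices have degree $4$, but the original vertices keep their $\H$‑degree, so $P$ may still be far from bounded‑degree.

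To apply a planar separator theorem I would next reduce $P$ to bounded degree: expand each vertex $v$ into a small planar gadget $\Gamma_v$ on $O(\deg_P(v))$ vertices and edges, attaching the $P$‑edges at $v$ to distinct boundary vertices of $\Gamma_v$ in the cyclic order prescribed by $\phi$, so that the resulting graph $P'$ stays planar and has maximum degree at most $4$. Since $\sum_v \deg_P(v)=2|E(P)|=O(m+\cro)$, both $|V(P')|$ and $|E(P')|$ are $O(m+\cro)\le O(md\alpha)$. I would place weight $1/(2m)$ on each vertex of $P'$ corresponding to an endpoint of an edge of $\H$ (exactly $2m$ weighted vertices, total weight $1$) and weight $0$ elsewhere; since $m>2^{20}d\alpha\ge 2^{20}$, every vertex has weight at most $2^{-21}$. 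Applying the weighted Lipton–Tarjan planar separator theorem to $P'$ gives a vertex set $S$ with $|S|\le 2\sqrt{2}\,\sqrt{|V(P')|}=O(\sqrt{md\alpha})$ whose removal splits $V(P')$ into two parts of total weight at most $2/3$ each and with no edges between them. Because $|S|=O(\sqrt{md\alpha})$ and each vertex carries weight $\le 1/(2m)$, the separator has weight $O(\sqrt{d\alpha/m})=o(1)$, so after assigning the vertices of $S$ greedily to the two sides we get two parts of weight $\Omega(1)$; translating weight back to edge‑endpoints, each side contains $\Omega(m)$ endpoints, hence — after subtracting the (small) number of cut edges — $\Omega(m)$ edges of $\H$ lying entirely inside it, which yields the $1/32$‑balance once the cut value is shown to be $O(\sqrt{md\alpha})\ll m$.

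Finally I would translate this to an edge cut of $\H$: put $\H$‑vertex $v$ on whichever side holds the majority (by weight) of $\Gamma_v$. An edge $e=(u,v)$ of $\H$ can be cut only if its associated path in $P'$ — which begins in $\Gamma_u$, passes through the gadgets of the crossing vertices it meets, and ends in $\Gamma_v$ — either hits $S$ or leaves the majority side of $\Gamma_u$ or $\Gamma_v$. Each separator vertex inside a crossing‑vertex gadget lies on the paths of at most two edges of $\H$, and each separator vertex on an inter‑gadget edge on at most one, so together these account for only $O(|S|)=O(\sqrt{md\alpha})$ cut edges. \textbf{The crux, and the step I expect to be the main obstacle}, is bounding the edges lost because their endpoint falls on the minority side of an original vertex's gadget $\Gamma_v$: a naïve argument charges up to $\deg(v)$ edges to a single separator vertex inside $\Gamma_v$, which would introduce a spurious factor of $d$. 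This is precisely the subtlety handled in the bisection‑width/crossing‑number literature, and I would control it by (i) using that $d\le m$, so any purely \emph{additive} loss of $O(d)$ edges is absorbed into the $O(\sqrt{md})\le O(\sqrt{md\alpha})$ budget, and (ii) choosing the gadget $\Gamma_v$ — for instance a $\lceil\sqrt{\deg_P(v)}\rceil\times\lceil\sqrt{\deg_P(v)}\rceil$ mesh, together with a local cleanup that re‑routes the separator around a gadget rather than letting it enter deeply — so that $k$ separator vertices inside $\Gamma_v$ strand only $O(k)$ edge‑endpoints in aggregate, keeping the total contribution $O(|S|)$ instead of $O(d\,|S|)$. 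Summing all contributions gives $|E(A,B)|=O(\sqrt{md\alpha})$, and tracking the constants through the $2\sqrt{2}$ of Lipton–Tarjan and the balance accounting yields the stated bound $64\sqrt{8md\alpha}$ together with a $1/32$‑edge‑balanced cut.
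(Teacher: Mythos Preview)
Your overall architecture matches the paper's exactly: planarize the drawing, blow up each vertex into a bounded-degree planar gadget, apply the weighted Lipton--Tarjan separator with weight~$1$ on each attachment point, and then reassign each gadget wholesale to one side. You have also correctly singled out the one nontrivial step. The gap is in your proposed resolution of it.

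A gadget $\Gamma_v$ on $O(\deg v)$ vertices cannot do what you need. First, a $\lceil\sqrt{\deg v}\rceil\times\lceil\sqrt{\deg v}\rceil$ mesh has only $O(\sqrt{\deg v})$ boundary vertices, so you cannot attach the $\deg v$ incident edges to distinct boundary vertices as you require. More importantly, \emph{any} planar gadget on $O(\deg v)$ vertices has a balanced vertex separator of size $O(\sqrt{\deg v})$, so $O(\sqrt{\deg v})$ separator vertices inside $\Gamma_v$ can strand $\Omega(\deg v)$ attachment points; ``$k$ separator vertices strand $O(k)$ endpoints'' is simply false for such gadgets, and no local re-routing can repair this, since the separator theorem gives you no control over where $S$ lands. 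Your item~(i) does not help either: the $O(d)$ additive loss is \emph{per vertex}, not global. The paper's fix is to make the gadget large: take $\Gamma_v$ to be a full $d_v\times d_v$ grid with the $d_v$ incident edges attached to the vertices of its top row. The key observation is then that if the top row is split into pieces of sizes $a\le b$, the grid contains $a$ vertex-disjoint paths between them, so at least $a$ grid edges already cross the separator. Hence for an ``evenly split'' gadget (each side holds $\ge d_v/8$ portals) the separator already pays $\ge d_v/8$, and moving the whole grid to one side adds at most $d_v$ special edges --- a factor-$8$ blowup; for an ``unevenly split'' gadget, moving to the majority side does not increase the cut at all. The resulting grids have $\sum_v d_v^2\le (2m+d)d$ vertices in total, so after adding the $\cro\le md\alpha$ crossing vertices the planar graph still has at most $4md\alpha$ vertices, and the Lipton--Tarjan bound $2\sqrt{2}\sqrt{4md\alpha}$ together with the degree-$4$ bound and the factor~$8$ gives exactly $64\sqrt{8md\alpha}$. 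Everything else in your sketch goes through unchanged once you use this larger gadget.
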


Before we complete the proof of the lemma, we show that the proof of Theorem~\ref{thm: balanced partition} follows from it. Let $m=|E(\H)|$. Note that the maximum vertex degree in graph $\H$ is bounded by $\max_{v\in V_1\cup V_2}\set{\beta(v)}$, as $E(\H)$ cannot contain two edges that belong to the same bundle. From the definition of valid instances, $h\geq \beta(v)$, and so the maximum vertex degree in $\H$ is bounded by $d=h$. From Lemma~\ref{lem: drawing of H}, the drawing $\phi$ of $\H$ has at most $2048 |E(\H)| \ceil{h \log M}\leq 2^{12}m d\log M$ crossings. Setting $\alpha=2^{12}\log M$, the number of crossings $\cro$ in $\phi$ is bounded by $md\alpha$. Moreover, since $m>2^{64}h\log^3M$, we get that $m>2^{20}d\alpha$. We can now apply Lemma~\ref{lemma: good balanced cut in low crossing number} to graph $\H$ to obtain a $1/32$-edge-balanced cut $(A,B)$ with $|E(A,B)|\leq 64\sqrt{8md\alpha}\leq 64\sqrt{2^{15}mh\log M}\leq 64\sqrt{m^2/(2^{49}\log^2M)}\leq \frac{|E(\H)|}{64\log M}$, since we have assumed that $|E(\H)|=m>2^{64}h\log^3M$. It now remains to prove Lemma~\ref{lemma: good balanced cut in low crossing number}.

\begin{proofof}{Lemma~\ref{lemma: good balanced cut in low crossing number}}
For each vertex $v \in V(\H)$, we denote the degree of $v$ in $\H$ by $d_v$. We assume without loss of generality that for all $v\in V(\H)$, $d_v\geq 1$: otherwise, we can remove all isolated vertices from $\H$ and then apply our algorithm to compute a $1/32$-edge-balanced cut $(A,B)$ in the remaining graph. At the end we can add the isolated vertices to $A$ or $B$, while ensuring that the cut remains $1/32$-balanced, and without increasing its value.

We construct a new graph $\hat \H$ from graph $\H$ as follows. For every vertex $v\in V(\H)$, we add a  $(d_v \times d_v)$-grid $Q_v$ to $\hat \H$, so that the resulting grids are mutually disjoint. We call the edges of the resulting grids \emph{regular edges}. 
Let $e_1(v), \ldots, e_{d_v}(v)$ be the edges of $\H$ incident to $v$, indexed in the clockwise order of their entering the vertex $v$ in the drawing $\phi$ of $\H$. We denote by $\Pi(v)=\set{p_1(v),\ldots,p_{d_v}(v)}$ the set of vertices on the top boundary of $Q_v$, where the vertices are indexed in the clock-wise order of their appearance on the boundary of $Q_v$. We refer to the vertices of $\Pi(v)$ as the \emph{portals} of $Q_v$ (see Figure~\ref{fig: v_to_gv}). Let $\Pi=\bigcup_{v\in V(\H)}\Pi(v)$ be the set of all portals. For every edge $e=(u,v)\in E(\H)$, we add a new \emph{special edge} to graph $\hat \H$, as follows. Assume that $e=e_i(v)=e_j(u)$. Then we add an edge $(p_i(v),p_j(u))$ to $\hat \H$. We think of this edge as the special edge representing $e$. This finishes the definition of the graph $\hat \H$. It is immediate to see that the drawing $\phi$ of $\H$ can be extended to a drawing $\phi'$ of $\hat \H$ without introducing any new crossings, that is, the number of crossings in $\phi'$ remains at most $\cro$. Note that every portal vertex is incident to exactly one special edge, and the maximum vertex degree in $\hat \H$ is $4$. We will use the following bound on $|V(\hat \H)|$:

\begin{figure}
\center
\includegraphics[width=8cm]{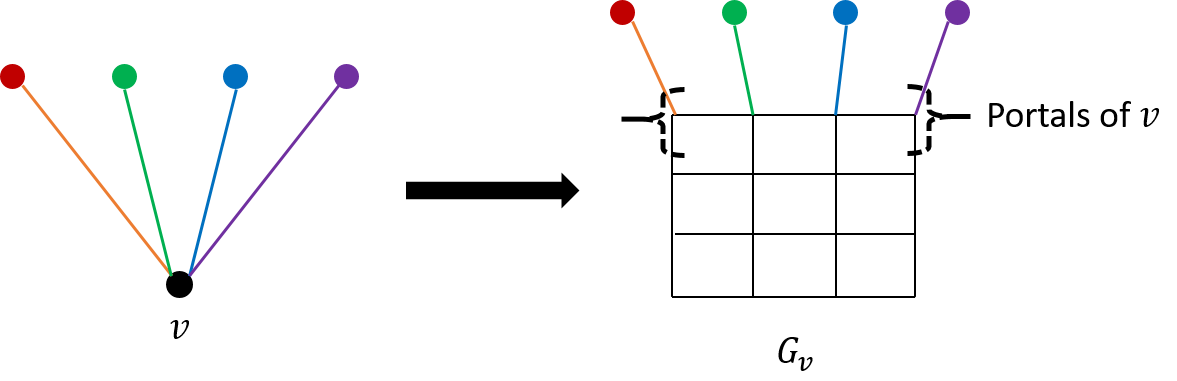}
\caption{Grid $Q_v$ obtained from $v$} \label{fig: v_to_gv}
\end{figure}

\begin{observation}\label{obs: number of vertices in hat H}
$|V(\hat \H)|\leq (2m+d)d$.
\end{observation}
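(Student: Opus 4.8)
The plan is a direct counting argument. First I would observe that the only vertices of $\hat \H$ are the vertices lying in the grids $Q_v$, since the special edges are added between pairs of portal vertices and do not create any new vertices. As the grids $\set{Q_v}_{v\in V(\H)}$ are mutually disjoint, and each $Q_v$ is a $(d_v\times d_v)$-grid, it has exactly $d_v^2$ vertices, so
\[|V(\hat \H)|=\sum_{v\in V(\H)}d_v^2.\]

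Next I would bound this sum. Recall that we have assumed $d_v\geq 1$ for every $v\in V(\H)$ (having discarded isolated vertices), and that the maximum vertex degree of $\H$ is at most $d$, so $d_v\leq d$ for all $v$. By the handshake lemma, $\sum_{v\in V(\H)}d_v=2m$, where $m=|E(\H)|$. Therefore
\[\sum_{v\in V(\H)}d_v^2\leq d\cdot\sum_{v\in V(\H)}d_v=2md\leq (2m+d)d,\]
where the last inequality uses $d\geq 1$ (so $d^2\geq 0$ and in fact $2md\le 2md+d^2$). Combining the two displays yields $|V(\hat \H)|\leq (2m+d)d$, as claimed.

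I do not expect any genuine obstacle here: the statement is a routine consequence of the construction of $\hat \H$ together with the handshake lemma and the degree bound. The only point that needs a moment of care is the bookkeeping that special edges contribute no vertices and that the grids $Q_v$ are vertex-disjoint, so that $|V(\hat\H)|$ is exactly $\sum_v d_v^2$ rather than something larger; everything else is a one-line estimate. (In fact the argument gives the slightly stronger bound $|V(\hat\H)|\le 2md$; the looser form $(2m+d)d$ stated in the observation is presumably the form most convenient for its later use.)
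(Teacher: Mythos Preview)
Your proof is correct and in fact cleaner than the paper's. Both arguments start from $|V(\hat\H)|=\sum_v d_v^2$, but the paper then applies a smoothing step (using $(a+1)^2+(b-1)^2\ge a^2+b^2$) to argue the sum is maximized when all but one summand equal $d$, yielding $\floor{2m/d}\cdot d^2+d^2\le(2m+d)d$. Your one-line estimate $d_v^2\le d\cdot d_v$ summed via the handshake lemma gives the sharper bound $2md$ directly and avoids the smoothing detour; the paper's extra $+d^2$ slack is an artifact of its argument rather than something it needs later.
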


\begin{proof}
Clearly, $|V(\hat \H)|=\sum_{v\in V(\H)}d^2_v$. Observe that for every pair $a\geq b\geq 0$ of integers, $(a+1)^2+(b-1)^2\geq a^2+b^2$. Since maximum vertex degree in $\H$ is bounded by $d$, the sum is maximized when all but possibly one summand are equal to $d$, and, since $\sum_{v\in \H}d_v=2m$, there are at most $\ceil{2m/d}$ summands. Therefore, $|V(\hat \H)|\leq \floor{2m/d}\cdot d^2+d^2\leq (2m+d)d$.
\end{proof}

Let $\hat \H'$ be the graph obtained from $\hat \H$ by replacing every intersection point in the drawing $\phi'$ of $\hat \H$ with a vertex. Then $\hat \H'$ is a planar graph with $|V(\hat \H')|\leq |V(\hat \H)|+\cro\leq  (2m+d)d+md\alpha\leq 4md\alpha$, as $\alpha\geq 1$. We assign weights to the vertices of $\hat \H'$ as follows: every vertex of $\Pi$ is assigned the weight $1$, and every other vertex is assigned the weight $0$. Note that the weight of a vertex is exactly the number of special edges incident to it, and the total weight of all vertices is $W=|\Pi|=2m$.
%
%
%
%
We will use the following version of the planar separator theorem~\cite{planar-separator-theorem2,planar-separator-theorem1,planar-separator-theorem3}.

\begin{theorem}[\cite{planar-separator-theorem1}]\label{thm: balanced separator}
There is an efficient algorithm, that, given a planar graph $G=(V,E)$  with $n$ vertices, and an assignment $w:V\rightarrow R^+$ of non-negative weights to the vertices of $G$, with $\sum_{v\in V}w(v)=W$, computes a partition $(A,X,B)$ of $V(G)$, such that:

\begin{itemize}
\item no edge connecting a vertex of $A$ to a vertex of $B$ exists in $G$;
\item $\sum_{v\in A}w(v),\sum_{v\in B}w(v)\leq 2W/3$; and
\item $|X|\leq 2\sqrt{2n}$.
\end{itemize}
\end{theorem}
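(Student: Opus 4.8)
The plan is to reprove this as a weighted version of the Lipton--Tarjan planar separator theorem, via BFS layering combined with a fundamental-cycle argument. First I would reduce to a clean instance. We may assume $G$ is connected: if some component has weight exceeding $2W/3$, a separator of that component (under the induced weights) extends to $G$ by appending all other components to its lighter side, and if every component has weight at most $2W/3$ then a greedy bin-packing of the components into two groups of weight at most $2W/3$, with $X=\emptyset$, already works. We may also add edges to make $G$ a triangulated plane graph, since a separator of a supergraph on the same vertex set is a separator of $G$. Fix an arbitrary root $r$ and a BFS tree of $G$; let $L_i$ be the set of vertices at BFS distance $i$ from $r$, for $0\le i\le h$. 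The two elementary facts used repeatedly are that $\sum_i|L_i|=n$ and that each single layer $L_i$ is a vertex separator (no edge joins $L_{<i}$ to $L_{>i}$).

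Next I would locate three layers $\ell_0\le \ell_1\le \ell_2$. Take $\ell_1$ to be the \emph{weight-median layer}, the least index with $\sum_{i\le \ell_1}w(L_i)\ge W/2$; then the layers strictly below $\ell_1$ have total weight less than $W/2$ and the layers strictly above $\ell_1$ have total weight at most $W/2$. To choose $\ell_0$, walk upward from $\ell_1$: a short counting argument using $\sum_i|L_i|=n$ shows there is an index $\ell_0\le \ell_1$ for which $|L_{\ell_0}|+2(\ell_1-\ell_0)$ is small (at most $2\sqrt n$ in the crude version; a sharper choice is needed to reach the stated constant), and symmetrically an $\ell_2\ge \ell_1$ with $|L_{\ell_2}|+2(\ell_2-\ell_1)$ small. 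The crucial structural point is that $\ell_1$ controls \emph{weight} while $\ell_0,\ell_2$ control \emph{vertex count}, and these two budgets are independent. Deleting $L_{\ell_0}\cup L_{\ell_2}$ splits $G$ into $P_{\mathrm{top}}$ (layers $<\ell_0$), $P_{\mathrm{mid}}$ (layers strictly between $\ell_0$ and $\ell_2$), and $P_{\mathrm{bot}}$ (layers $>\ell_2$), with $w(P_{\mathrm{top}}),w(P_{\mathrm{bot}})\le W/2\le 2W/3$.

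Now split into two regimes. If $w(P_{\mathrm{mid}})\le 2W/3$, I claim one of the three ways of putting a single part on its own side and the other two on the opposite side yields both sides of weight at most $2W/3$; this follows from a one-line case analysis using only $w(P_{\mathrm{top}})+w(P_{\mathrm{mid}})+w(P_{\mathrm{bot}})\le W$, and then $X=L_{\ell_0}\cup L_{\ell_2}$ finishes. The substantive case is $w(P_{\mathrm{mid}})>2W/3$, handled by a fundamental-cycle argument inside the middle region. Form $G'$ from $G[L_{\ell_0+1}\cup\cdots\cup L_{\ell_2-1}]$ by adding one new vertex $\hat r$ adjacent to every vertex of $L_{\ell_0+1}$; then $G'$ is planar and has a spanning tree $T'$ of depth at most $\ell_2-\ell_0-1$ (descend the BFS layers from $\hat r$). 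Assign $\hat r$ the weight $W-w(P_{\mathrm{mid}})$ of everything outside the middle, keep all other weights, so $w(G')=W$, and triangulate $G'$ keeping $T'$. Each non-tree edge $e=(u,v)$ determines a fundamental cycle $C_e$ — the $T'$-path from $u$ to $v$ together with $e$ — of length at most $2(\ell_2-\ell_0)$, and $C_e$ splits $V(G')\setminus V(C_e)$ into an interior and an exterior with no edges between them. The standard fundamental-cycle lemma (sweep across the triangular faces; the weight strictly enclosed changes by at most one vertex's worth per step) produces a non-tree edge $e$ with $w(\mathrm{int}(C_e)),w(\mathrm{ext}(C_e))\le 2W/3$, and, with a little extra care, one with $\hat r\notin V(C_e)$. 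Lifting back, set $X=L_{\ell_0}\cup L_{\ell_2}\cup V(C_e)$: the graph $G\setminus X$ breaks into $P_{\mathrm{top}}$, $P_{\mathrm{bot}}$, $P_{\mathrm{mid}}\cap\mathrm{int}(C_e)$, and $P_{\mathrm{mid}}\cap\mathrm{ext}(C_e)$, and because $\hat r$ (which carries exactly the total weight of $P_{\mathrm{top}}\cup P_{\mathrm{bot}}\cup L_{\ell_0}\cup L_{\ell_2}$) sits on the exterior side, grouping $P_{\mathrm{top}},P_{\mathrm{bot}}$ together with $P_{\mathrm{mid}}\cap\mathrm{ext}(C_e)$ gives two sides of weight at most $2W/3$. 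Finally $|X|\le |L_{\ell_0}|+|L_{\ell_2}|+2(\ell_2-\ell_0)$, which the choice of $\ell_0,\ell_2$ is designed to bound by $2\sqrt{2n}$.

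I expect the main obstacles to be, first, the fundamental-cycle balanced-separator lemma for a triangulated planar graph with a shallow spanning tree — pinning down the right monotone invariant for the face-sweep and arranging that the chosen cycle avoids $\hat r$ (or, if not, using the two $T'$-paths of $C_e$ directly in $G$) — and second, the constant bookkeeping: choosing $\ell_0,\ell_2$ and combining the four pieces so that $|L_{\ell_0}|+|L_{\ell_2}|+2(\ell_2-\ell_0)\le 2\sqrt{2n}$ and every weight-grouping step stays under $2W/3$ simultaneously. The weighted extension over the unweighted statement costs essentially nothing: weights enter only through the definition of the median layer $\ell_1$ and through the fundamental-cycle lemma, both of which are agnostic to whether one counts vertices or sums weights.
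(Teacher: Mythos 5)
This theorem is not proved in the paper at all --- it is quoted verbatim from the Lipton--Tarjan planar separator paper \cite{planar-separator-theorem1} --- and your sketch is exactly the standard Lipton--Tarjan argument (BFS layering, weight-median layer, two sparse layers chosen by the counting argument, and a fundamental-cycle separator in the contracted middle region), whose careful bookkeeping indeed yields the stated $\sqrt{8n}=2\sqrt{2n}$ bound. So your proposal is correct in outline and takes the same route as the cited source.
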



%

We apply Theorem~\ref{thm: balanced separator} to graph $\hat \H'$, to obtain a partition $(A,X,B)$ of $V(\hat \H')$, with $|X|\leq 2\sqrt{2|V(\hat \H')|}\leq 2\sqrt{8md\alpha}$. Since $W=\sum_{v\in V(\hat \H')}w(v)=2m$, we get that $|A\cap \Pi|= \sum_{v\in A}w(v)\leq 2W/3\leq 4m/3$, and similarly $|B\cap \Pi|\leq 4m/3$.
Assume without loss of generality that $|A\cap \Pi|\leq |B\cap \Pi|$. We obtain a bi-partition $(A',B')$ of $V(\hat \H')$ by setting $A'=A\cup X$ and $B'=B$. Since $|X|\leq 2\sqrt{8md\alpha}\leq m/3$ (as $m>2^{20}d\alpha$), we are guaranteed that $|A'\cap \Pi|,|B'\cap \Pi|\leq 4m/3$ holds. Moreover, as all vertex degrees in $\hat \H'$ are at most $4$, $|E(A',B')|\leq 4|X|\leq 8\sqrt{8md\alpha}$.

Unfortunately, the cut $(A',B')$ of $\hat H'$ does not directly translate into a balanced cut in $\H$, since for some vertices $v\in V(\H)$, the corresponding grid $Q_v$ may be split between $A'$ and $B'$. We now show how to overcome this difficulty, by moving each such grid entirely to one of the two sides.
Before we proceed, we state a simple fact about grid graphs.

\begin{observation}\label{obs: cuts in grids} Let $z>1$ be an integer, and let $Q$ be the $(z\times z)$-grid. Let $U$ be the set of vertices lying on the top row of $Q$, and let $(X,Y)$ be a bi-partition of $V(Q)$. Then $|E_Q(X,Y)|\geq \min\set{|U\cap X|,|U\cap Y|}$.
\end{observation}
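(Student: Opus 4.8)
The plan is to reduce the statement to a routing claim and then invoke vertex-disjointness together with the observation that a path whose endpoints lie on opposite sides of a cut must traverse a cut edge. Set $a=\min\{|U\cap X|,|U\cap Y|\}$ and assume without loss of generality that $a=|U\cap X|\le |U\cap Y|$; since $|U\cap X|+|U\cap Y|=|U|=z$ we have $a\le z/2\le z-1$. It then suffices to exhibit $a$ mutually vertex-disjoint paths in $Q$, each having one endpoint in $U\cap X$ and the other in $U\cap Y$: each such path contains at least one edge of $E_Q(X,Y)$, and since the paths are vertex-disjoint these edges are all distinct, giving $|E_Q(X,Y)|\ge a$.

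To build these paths I would first choose a suitable matching. Viewing $U=\{u_1,\dots,u_z\}$ as a linearly ordered set of points, each labelled by the side of the cut it belongs to, I would construct a non-crossing matching pairing every $X$-labelled point with a distinct $Y$-labelled point. Such a matching exists by a straightforward induction: as long as some $X$-point remains, there is a pair of consecutive points in the current sequence consisting of one $X$-point and one $Y$-point (take any transition between a maximal run of $X$-points and a neighbouring run of $Y$-points); match that pair, delete both points, and recurse. At every step the number of remaining $X$-points stays at most the number of remaining $Y$-points, so the recursion continues until all $a$ $X$-points are matched, and the matched pairs are pairwise non-crossing because each is formed from two currently adjacent points.

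Next I would realise the matching by vertex-disjoint arcs inside $Q$. For a matched pair corresponding to columns $p<q$, route its arc from $u_p$ straight down to some row $r$, then horizontally along row $r$ to column $q$, then straight up to $u_q$. Assign to each arc $A$ the row $r(A)=\lambda(A)+1$, where $\lambda(A)$ denotes the length of the longest chain of arcs nested one inside another with $A$ outermost; an arc strictly nested inside $A$ then receives a strictly smaller row index. A routine case analysis shows that the vertical and horizontal segments of any two arcs are disjoint: an inner arc uses only rows strictly above the turning row of the outer arc, while the outer arc's vertical segments lie in columns $p$ and $q$ which the inner arc's horizontal segment never reaches, and arcs spanning disjoint column intervals never meet. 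Since any chain of nested arcs has at most $a$ members, the deepest row used is at most $a+1\le z$, so every arc fits inside the $z\times z$ grid and all arcs are internally disjoint from $U$.

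Combining the two steps yields the desired $a$ vertex-disjoint $X$-$Y$ paths, and the conclusion follows as in the first paragraph. The one point requiring genuine care is the vertex-disjointness of the routed arcs — that is, checking that the depth assignment $r(A)=\lambda(A)+1$ really rules out all collisions between nested arcs and that the worst-case depth $a+1$ does not exceed $z$ — and this is precisely where the hypotheses "$Q$ is a $z\times z$ grid" and "$a=\min\{|U\cap X|,|U\cap Y|\}\le z/2$" are used.
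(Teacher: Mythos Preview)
Your proof is correct and follows the same high-level strategy as the paper: exhibit $a=\min\{|U\cap X|,|U\cap Y|\}$ vertex-disjoint paths in $Q$, each with one endpoint in $U\cap X$ and the other in $U\cap Y$, and observe that each such path contributes a distinct edge to the cut. The paper's own proof is far terser---it simply asserts that such a set of node-disjoint paths is ``easy to verify'' and then cites the max-flow/min-cut theorem---whereas you supply the explicit non-crossing matching and nested-arc routing; both arguments are valid, and yours fills in exactly the detail the paper omits.
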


\begin{proof}
It is easy to verify that for any bi-partition $(X',Y')$ of $U$ into two disjoint subsets, there is a set $\pset$ of $\min\set{|X'|,|Y'|}$ node-disjoint paths in $Q$ connecting vertices of $X'$ to vertices of $Y'$. The observation follows from the maximum flow -- minimum cut theorem.
\end{proof}

We say that a vertex $v \in V(\H)$ is \textit{split} by the cut $(A',B')$ iff $V(Q_v)\cap A'$ and $V(Q_v)\cap B'\neq \emptyset$. We say that it is \emph{split evenly} iff $|\Pi(v)\cap A'|,|\Pi(v)\cap B'|\geq d_v/8$; otherwise we say that it is \emph{split unevenly}.
We modify the cut $(A',B')$ in the following two steps, to ensure that no vertex of $V(\H)$ remains split.

\paragraph{Step 1 [Unevenly split vertices].}
We process each vertex $v\in V(\H)$ that is unevenly split one-by-one. Consider any such vertex $v$. If $|\Pi(v)\cap A'|>|\Pi(v) \cap B'|$, then we move all vertices of 
$Q_v$ to $A'$; otherwise we move all vertices of $Q_v$ to $B'$. Assume without loss of generality that the former happens.
Notice that from Observation~\ref{obs: cuts in grids}, $|E(A',B')|$ does not increase, since $E(Q_v)$ contributed at least $|\Pi(v)\cap B'|$ regular edges to the cut before the current iterations.  Moreover, $|\Pi(v)\cap A'|$ increases by the factor of at most $8/7$. Therefore, at the end of this procedure, once all unevenly split vertices of $\H$ are processed, $|A'\cap \Pi|,|B'\cap \Pi|\leq \frac 8 7 \cdot \frac 4 3 m=\frac{32}{21}m$ and $|E(A',B')|\leq 8\sqrt{8md\alpha}$.


\paragraph{Step 2 [Evenly split vertices].}
In this step, we process each vertex $v\in V(\H)$ that is evenly split one-by-one. Consider an iteration where some such vertex $v\in V(\H)$ is processed. If $|A'\cap \Pi|\leq |B'\cap \Pi|$, then we move all vertices of $Q_v$ to $A'$; otherwise we move them to $B'$. Assume without loss of generality that the former happened. Then before the current iteration $|A'\cap \Pi|\leq |\Pi|/2\leq m$, and,  since $|\Gamma(v)|\leq d<m/21$,  $|A'\cap \Pi|\leq \frac{32}{21}m$, while $|B'\cap \Pi|\leq \frac{32}{21}m$ as before. Moreover, from Observation~\ref{obs: cuts in grids}, before the current iteration, the regular edges of $Q_v$ contributed at least $d(v)/8$ edges to $E(A',B')$, and after the current iteration, no regular edges of $Q_v$ contribute to the cut, but we may have added up to $d(v)$ new special edges to it. Therefore, after all vertices of $\H$ that are evenly split are processed, $|E(A',B')|$ grows by the factor of at most $8$, and remains at most $64\sqrt{8md\alpha}$.

We are now ready to define the final cut $(A^*,B^*)$ in graph $\H$. We let $A^*$ contain all vertices $v\in V(\H)$ with $V(Q_v)\subseteq A'$, and we let $B^*$ contain all remaining vertices of $V(\H)$. Clearly, $|E_{\H}(A^*,B^*)|\leq |E_{\hat \H'}(A',B')|\leq 64\sqrt{8md\alpha}$. It remains to show that $|E_{\H}(A^*)|, |E_{\H}(B^*)| \geq |E(\H)|/32$. We show that $|E_{\H}(A^*)| \geq |E(\H)|/32$; the proof that $|E_{\H}(B^*)| \geq |E(\H)|/32$ is symmetric.
Observe that $\sum_{v\in B^*}d_v=|B'\cap \Pi|\leq \frac{32m}{21}$, while $\sum_{v\in V(\H)}d_v=2m$. Therefore, $\sum_{v\in A^*}d_v\geq 2m-\frac{32m}{21}=\frac{10m}{21}$. But $|E_{\H}(A^*,B^*)|\leq 64\sqrt{8md\alpha}\leq 64\sqrt{m^2/2^{17}}<m/4$ (since $m>2^{20}d\alpha$). Therefore,

\[|E_{\H}(A^*)|=\frac{\sum_{v\in A^*}d_v-|E_{\H}(A^*,B^*)|}{2}\geq \frac{5m}{21}-\frac m 8\geq\frac m {32}.\]

\end{proofof} \end{proofof}

\label{----------------------------------------sec: from NDP to EDP-------------------------------}
\section{Hardness of \NDP and \EDP on Wall Graphs}\label{sec: from NDP to EDP}
In this section we extend our results to \NDP and \EDP on wall graphs, completing the proofs of Theorem~\ref{thm: master NDP} and Theorem~\ref{thm: master EDP}. We first prove hardness of \NDPwall, and show later how to extend it to \EDPwall.
Let $\hat G=G^{\ell,h}$ be a grid of length $\ell$ and height $h$, where $\ell>0$ is an even integer, and $h>0$. 
We denote by $\hat G'$ the wall corresponding to $\hat G$, as defined in Section~\ref{sec: prelims}.
%
We prove the following analogue of Theorem~\ref{thm: from WGP to NDP}.

\begin{theorem}\label{thm: from WGP to NDP in walls}
There  is a constant $c^*>0$, and there is an efficient randomized algorithm, that, given a valid instance $\iset=(\tilde G, \uset_1,\uset_2,h,r)$ of \WGPwB with $|E(\tilde G)|=M$, constructs an instance $\hat{\iset}'=(\hat G',\mset)$ of \NDPwall with $|V(\hat G')|=O(M^4\log^2M)$, such that the following hold:

\begin{itemize}
\item If $\iset$ has a perfect solution (of value $\beta^*=\beta^*(\iset)$), then with probability at least $\half$ over the construction of $\hat \iset'$, instance $\hat\iset'$ has a solution $\pset'$ that routes at least $\frac{\beta^*}{c^*\log^3M}$ demand pairs via node-disjoint paths; and

\item There is a deterministic efficient algorithm, that, given a solution $\pset^*$ to the \NDPwall problem instance $\hat{\iset}'$, constructs a solution to the \WGPwB instance $\iset$, of value at least $\frac{|\pset^*|}{c^*\cdot \log^3M}$.
\end{itemize}
\end{theorem}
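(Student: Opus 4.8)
The plan is to reduce \NDPwall to \NDPgrid (and back) by exploiting the fact that a wall $\hat G'$ is obtained from a grid $\hat G$ by deleting roughly half the vertical edges and then deleting degree-$1$ vertices, so a wall of length $\ell/2$ and height $h$ ``contains'' a grid-like structure of comparable dimensions. Concretely, I would first invoke Theorem~\ref{thm: from WGP to NDP} to obtain, from the valid \WGPwB instance $\iset$, an \NDPgrid instance $(\hat G,\mset)$ with $|V(\hat G)|=O(M^4\log^2M)$ having the two stated properties, where crucially the ``perfect solution implies routing'' direction yields a \emph{spaced-out} set of paths. I would then take $\hat G'$ to be the wall naturally associated with a grid of slightly larger dimensions (say lengths and heights inflated by a small constant factor, which preserves the $O(M^4\log^2M)$ bound), and define the demand pairs of $\hat{\iset}'$ by mapping each terminal of $\mset$ to the nearby vertex of $\hat G'$ that survives the wall construction; since we control the placement of all sources and destinations in the construction of Section~\ref{subsec: the construction}, I would simply re-place them on rows of the wall, respecting the same left-to-right order within blocks (Observation~\ref{obs: ordering of sources and destinations} still applies).

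The key structural lemma I would prove is: any spaced-out set of paths in a grid $\hat G$ can be transformed into a node-disjoint set of paths with the same endpoints in the corresponding wall $\hat G'$ (after the endpoint re-mapping). The intuition is that every subpath of a grid path can be ``routed around'' within a $2\times 2$ neighborhood using only the edges present in the wall: a horizontal grid edge is present in the wall on every row, and a missing vertical grid edge can be detoured one unit to the side along a vertical edge that is present (walls keep, in each column, every other vertical edge, with the pattern shifting between adjacent columns, so a detour is always available within distance $2$). The spacing guarantee ($d(V(P),V(P'))\ge 2$ and internal disjointness from the boundary) is exactly what ensures that these local detours for different paths do not collide with each other. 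This gives the first bullet. For the second bullet, I would go the other direction: a wall $\hat G'$ is literally a subgraph of the grid $\hat G$ it came from, so any solution $\pset^*$ to the \NDPwall instance $\hat{\iset}'$ is \emph{already} a set of node-disjoint paths in $\hat G$ connecting the (re-mapped) terminals; composing with the short straight segments back to the original grid terminals, and then applying the second part of Theorem~\ref{thm: from WGP to NDP} (which only requires node-disjointness, not spacing), produces a solution to $\iset$ of value $\Omega(|\pset^*|/\log^3M)$ — adjusting the constant $c^*$ as needed.

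I would organize the write-up as: (i) fix the precise grid $\hat G$ and wall $\hat G'$, restating the parameters so that $|V(\hat G')|=O(M^4\log^2M)$; (ii) describe the terminal re-mapping and verify the block-ordering properties carry over; (iii) prove the spaced-out-paths-in-grid $\Rightarrow$ disjoint-paths-in-wall lemma via the local detour argument; (iv) combine with Theorem~\ref{thm: from WGP to NDP} for the completeness (perfect solution $\Rightarrow$ routing) direction; (v) observe $\hat G'\subseteq\hat G$ and combine with Theorem~\ref{thm: from WGP to NDP} for the soundness (routing $\Rightarrow$ partition) direction. The main obstacle I anticipate is step (iii): making the local detour argument fully rigorous, i.e.\ giving an explicit rule that, given the grid path and the column-parity pattern of the wall, produces a detoured path inside the wall, and then checking that the detours of two paths at grid-distance $\ge 2$ remain disjoint (including where several paths are forced to bend near the same region). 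This is where the ``spaced-out'' strengthening of Theorem~\ref{thm: from WGP to NDP} — which the authors flagged as being included precisely for the wall extension — does the real work, and the bookkeeping of which vertical edges are present in which columns is the fiddly part.
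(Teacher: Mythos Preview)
Your proposal is correct and follows essentially the same approach as the paper: invoke Theorem~\ref{thm: from WGP to NDP}, replace the grid by its associated wall, use the spaced-out property to detour each missing vertical edge $e=(v(i,j),v(i+1,j))$ through the adjacent column via $(v(i,j),v(i,j+1),v(i+1,j+1),v(i+1,j))$, and use $\hat G'\subseteq\hat G$ for the reverse direction. The paper's execution is simpler than yours in one respect: it keeps the \emph{same} grid and the \emph{same} set $\mset$ of demand pairs (no inflation, no terminal re-mapping), since all terminals lie on interior rows and therefore survive in the wall; this sidesteps the ``short straight segments back to the original grid terminals'' you would need in the soundness direction, which would otherwise require their own disjointness argument.
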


Notice that plugging Theorem~\ref{thm: from WGP to NDP in walls} into the hardness of approximation proof instead of Theorem~\ref{thm: from WGP to NDP}, we extend the hardness result to the \NDP problem on wall graphs and complete the proof of Theorem~\ref{thm: master NDP}.

\begin{proofof}{Theorem~\ref{thm: from WGP to NDP in walls}}
Let $\hat \iset=(\hat G,\mset)$ be the instance of \NDPgrid constructed in Theorem~\ref{thm: from WGP to NDP}. In order to obtain an instance $\hat \iset'$ of \NDPwall, we replace the grid $\hat G$ with the corresponding wall $\hat G'$ as described above; the set of the demand pairs remains unchanged. We now prove the two assertions about the resulting instance $\hat{\iset}'$, starting from the second one.

Suppose we are given a solution $\pset^*$ to the \NDPwall problem instance $\hat{\iset}'$. Since $\hat G'\subseteq \hat G$, and the set of demand pairs in instances $\hat \iset$ and $\hat \iset'$ is the same, $\pset^*$ is also a feasible solution to the \NDPgrid problem instance $\hat \iset$, and so we can use the deterministic efficient algorithm from Theorem~\ref{thm: from WGP to NDP} to construct  a solution to the \WGPwB instance $\iset$, of value at least $\frac{|\pset^*|}{c^*\cdot \log^3M}$.

It now remains to prove the first assertion. Assume that $\iset$  has a perfect solution. Let $\event$ be the good event that the instance $\hat \iset$ of \NDPgrid has a solution $\pset$  that routes at least $\frac{\beta^*}{c^*\log^3M}$ demand pairs via  paths that are spaced-out. From Theorem~\ref{thm: from WGP to NDP}, event $\event$ happens with probability at least $\half$. It is now enough to show that whenever event $\event$ happens, there is a solution of value $\frac{\beta^*}{c^*\log^3M}$ to the corresponding instance $\hat \iset'$ of \NDPwall.

Consider the spaced-out set $\pset$ of paths in $\hat G$. Recall that for every pair $P,P'$ of paths, $d(V(P),V(P'))\geq 2$, and all paths in $\pset$ are internally disjoint from the boundaries of the grid $\hat G$. For each path $P\in \pset$, we will slightly modify $P$ to obtain a new path $P'$ contained in the wall $\hat G'$, so that the resulting set $\pset'=\set{P'\mid P\in \pset}$ of paths is node-disjoint.

For all $1\leq i<\ell$, $1\leq j\leq \ell$, let $e_i^j$ denote the $i$th edge from the top lying in column $W_j$ of the grid $\hat G$, so that $e^i_j=(v(i,j),v(i+1,j))$.
Let $E^*=E(\hat G)\setminus E(\hat G')$ be the set of edges that were deleted from  the grid $\hat G$ when constructing the wall $\hat G'$. We call the edges of $E^*$ \emph{bad edges}.
Notice that only vertical edges may be bad, and, if $e^j_i\in E(W_j)$ is a bad edge, for $1<j<\ell$, then $e^{j+1}_i$ is a good edge. Consider some bad edge $e^j_i=(v(i,j),v(i+1,j))$, such that $1<j<\ell$, so $e^j_i$ does not lie on the boundary of $\hat G$. Let $Q^j_i$ be the path $(v(i,j),v(i,j+1),v(i+1,j+1),v(i+1,j))$. Clearly, path $Q^j_i$ is contained in the wall $\hat G'$. For every path $P\in \pset$, we obtain the new path $P'$ by replacing every bad edge $e^j_i\in P$ with the corresponding path $Q^j_i$. It is easy to verify that $P'$ is a path with the same endpoints as $P$, and that it is contained in the wall $\hat G'$. Moreover, since the paths in $\pset$ are spaced-out, the paths in the resulting set $\pset'=\set{P'\mid P\in \pset}$ are node-disjoint.
\end{proofof}

This completes the proof of Theorem~\ref{thm: master NDP}. In order to prove Theorem~\ref{thm: master EDP}, we show an approximation-preserving reduction from \NDPwall to \EDPwall.

\begin{claim} \label{NDPwall to EDP wall}
Let $\iset=(G,\mset)$ be an instance of \NDPwall, and let $\iset'$ be the instance of \EDPwall consisting of the same graph $G$ and the same set $\mset$ of demand pairs.
Let $\opt$ and $\opt'$ be the optimal solution values for $\iset$ and $\iset'$, respectively. Then $\opt'\geq \opt$, and there is an efficient algorithm, that, given any solution $\pset'$ to instance $\iset'$ of \EDPwall, computes a solution $\pset$ to instance $\iset$ of \NDPwall of value $\Omega(|\pset'|)$.
\end{claim}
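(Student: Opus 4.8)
The plan is to exploit the fact that in a wall graph every vertex has degree at most $3$, which makes edge-disjointness and node-disjointness almost equivalent. First I would observe that the inequality $\opt' \ge \opt$ is immediate: any node-disjoint set of paths is in particular edge-disjoint, so every feasible solution to $\iset$ is also a feasible solution to $\iset'$. The substantive direction is to convert an edge-disjoint solution $\pset'$ into a node-disjoint one of comparable size.

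The key structural observation is that in a wall graph $G'$, obtained from a grid by deleting some vertical edges and then all degree-$1$ vertices, the maximum vertex degree is $3$. Hence if two paths $P, Q \in \pset'$ share a vertex $v$, then since each of $P$ and $Q$ uses two of the (at most three) edges at $v$, and the solution is edge-disjoint, the two paths together use at least $3$ distinct edges at $v$ and at most $4$; in fact they use exactly the $3$ or $4$ edges available, and no third path of $\pset'$ can pass through $v$. Thus every vertex of $G'$ is shared by at most two paths of $\pset'$. I would then build an auxiliary conflict graph $\mathcal{C}$ on the vertex set $\pset'$, placing an edge between two paths whenever they share a vertex of $G'$. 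I would argue that $\mathcal{C}$ has bounded average degree (or better, can be made sparse after a small deletion): since each shared vertex contributes exactly one edge to $\mathcal{C}$ and uses up essentially all edges of that degree-$3$ vertex, the number of shared vertices, and hence $|E(\mathcal{C})|$, is small compared to $|\pset'|$ — concretely, one can charge each conflict to edges of $G'$ in a way that shows $|E(\mathcal{C})| = O(|\pset'|)$, or one can argue each path participates in only $O(1)$ conflicts on average. A standard greedy/independent-set argument (e.g. a graph with $m$ edges on $n$ vertices has an independent set of size $\ge n^2/(n + 2m)$, or simply iteratively delete an endpoint of each conflict edge) then yields an independent set $\pset \subseteq \pset'$ in $\mathcal{C}$ of size $\Omega(|\pset'|)$, and by construction the paths in $\pset$ are pairwise node-disjoint, giving a valid \NDPwall solution.

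The main obstacle I expect is controlling the number of conflicts, i.e. bounding $|E(\mathcal{C})|$: a priori a single path could be forced to cross many other paths, so the naive bound on conflicts per path is not constant. I would resolve this by the charging argument sketched above — each vertex $v$ of $G'$ that is shared by two paths is "saturated" in the sense that all edges incident to $v$ are used by those two paths, so distinct conflicts correspond to disjoint edge-sets of size $\ge 3$, yielding $|E(\mathcal{C})| \le |E(G')|/3 \le \ldots$, but this must then be related back to $|\pset'|$. A cleaner route, which I would pursue if the global bound is awkward, is to not argue a global sparsity bound at all: instead process the shared vertices one at a time, and for each vertex $v$ shared by paths $P$ and $Q$, discard whichever of $P, Q$ has so far been charged fewer times — but the simplest correct version is to note that since each shared vertex is saturated, the paths through it are "locally rigid", and one can reroute rather than delete. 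However, rerouting in a wall is delicate, so I believe the safest and cleanest argument is: show each path of $\pset'$ has at most $O(1)$ conflicts \emph{in expectation} is false in general, so instead directly bound $|E(\mathcal C)| = O(|E(G')|) = O(|V(G')|)$ — which is not directly $O(|\pset'|)$ — and therefore one genuinely needs the saturation argument to get that each conflict vertex consumes a constant number of edges \emph{that are private to the two conflicting paths}, so summing over all paths, $\sum_{P \in \pset'} \deg_{\mathcal C}(P) = 2|E(\mathcal C)| \le \frac{2}{3}\sum_{P}(\text{edges of }P) \le \frac{2}{3}\cdot 2|E(G')|$, which is still not obviously $O(|\pset'|)$ unless paths are long. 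The genuinely robust fix: observe that two edge-disjoint paths sharing a vertex must \emph{cross} there, and in a planar graph (a wall is planar) with all paths being simple, one can orient this into a bound — but ultimately I would fall back on the standard and fully rigorous statement that in any graph of maximum degree $3$, an edge-disjoint collection of paths can be partitioned into $O(1)$ node-disjoint sub-collections, by a direct argument: delete, for each vertex $v$ shared by two paths, one of those two paths, processing vertices so that each path is deleted at most once (possible because after deleting $P$ at $v$, $Q$ still passes through $v$ but $v$ is now used by only one surviving path, and no new conflicts are created). This shows $|\pset| \ge |\pset'|/2$, which suffices; I expect verifying that "each path is deleted at most once" — i.e. that a careful processing order exists — to be the one point requiring genuine care, and I would handle it by processing paths in a fixed order and, whenever a conflict is detected, deleting the path that is later in the order, arguing by induction that a path later in the order is deleted only the first time it is involved in an unresolved conflict.
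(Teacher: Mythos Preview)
Your proposal circles around the right idea but never lands on the key observation, and the final argument you settle on is incorrect.

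The crucial structural fact you are missing is this: in a graph of maximum degree $3$, if two edge-disjoint paths $P,P'$ share a vertex $v$, then $v$ must be an \emph{endpoint} of at least one of them. Indeed, if $v$ were interior to both, each would use two edges at $v$, forcing four distinct edges at a degree-$3$ vertex. You nearly say this when you write ``each of $P$ and $Q$ uses two of the (at most three) edges at $v$'', but you do not draw the conclusion, and your subsequent analysis (``they use exactly the $3$ or $4$ edges available'') is muddled. Once this observation is made, the paper's proof is immediate: build a \emph{directed} conflict graph $H$ with an arc $P\to P'$ whenever an endpoint of $P$ lies on $P'$. Since each endpoint of $P$ has at most two remaining incident edges, at most two other paths can meet $P$ there, so the out-degree of every vertex of $H$ is at most $4$. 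Hence every subgraph of $H$ has a vertex of total degree at most $8$, and a greedy peeling yields an independent set of size at least $|\pset'|/9$.

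Your final proposal---process conflicts in a fixed order of paths and always delete the later one---does not give $|\pset|\ge |\pset'|/2$. Consider one long path $P_1$ and many short paths $P_2,\ldots,P_k$, each having one endpoint at a distinct interior (degree-$3$) vertex of $P_1$; this is realizable in a wall with all paths edge-disjoint. The undirected conflict graph is a star centered at $P_1$. If $P_1$ comes first in your order, you delete $P_2,\ldots,P_k$ one by one and keep only $P_1$, retaining a $1/k$ fraction. The statement ``each path is deleted at most once'' is trivially true but does not bound the number of deletions. What controls the situation is precisely the bounded \emph{out}-degree coming from the endpoint observation, not any global bound on $|E(\mathcal C)|$ in terms of $|\pset'|$ (which, as you correctly suspected, need not hold).
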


The following corollary immediately follows from Claim~\ref{NDPwall to EDP wall} and completes the proof of Theorem~\ref{thm: master EDP}.
\begin{corollary}
If there is an $\alpha$-approximation algorithm for \EDPwall with running time $f(n)$, for $\alpha>1$ that may be a function of the graph size $n$, then there is an $O(\alpha)$-approximation algorithm for \NDPwall with running time $f(n)+\poly(n)$.
\end{corollary}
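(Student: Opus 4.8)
The plan is to prove Claim~\ref{NDPwall to EDP wall}, which asserts that on wall graphs, an edge-disjoint routing can be converted into a node-disjoint routing of comparable size. The key structural fact to exploit is that a wall graph has maximum degree $3$. In any graph of maximum degree $3$, at any vertex $v$ that is not a terminal, at most three paths of an edge-disjoint solution can pass through $v$; since each path uses two edges at $v$ and there are only three edges, in fact at most one path can properly pass through $v$ (using two of its edges), and there is no room for a second path to also use two edges at $v$. Hence in a wall, an edge-disjoint set of paths is automatically \emph{internally} node-disjoint except possibly at the terminals themselves. This immediately gives $\opt' \geq \opt$ (a node-disjoint solution is edge-disjoint), and reduces the problem to handling conflicts at shared terminal vertices.

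First I would set up the degree argument carefully. Given a solution $\pset'$ to the \EDPwall instance $\iset'$, observe that two distinct paths $P, Q \in \pset'$ can share a vertex $v$ only if $v$ is an endpoint of $P$ or of $Q$ (or both), because a vertex of degree at most $3$ that is an internal vertex of $P$ already has two of its incident edges used by $P$, leaving at most one free edge, which is not enough for $Q$ to pass through. So all intersections occur at source/destination vertices. Next, I would bound how many paths of $\pset'$ can share a single terminal $t$: since $\deg(t) \le 3$, at most $3$ paths of $\pset'$ can have an endpoint at $t$ (each uses a distinct incident edge). In fact a more refined count: a terminal appears in $\mset$ as a source or destination of possibly several demand pairs (this happens precisely when bundles were unified in the construction), so multiple paths in $\pset'$ may legitimately start at $t$; but all of them together use at most $3$ edges of $G$ incident to $t$, so at most $3$ paths touch $t$.

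Then I would describe the extraction step. Build a conflict graph (or simply greedily select): among the paths of $\pset'$, we want a subset $\pset \subseteq \pset'$ that is pairwise node-disjoint. Since every conflict between two paths occurs at a shared terminal, and each terminal is touched by at most $3$ paths, the conflict graph on $\pset'$ has maximum degree at most some constant (each path has two endpoints, each shared by $\le 3$ paths, so degree $\le 4$ in the conflict graph). A maximal independent set in a graph of maximum degree $d$ has size at least $|\pset'|/(d+1)$, so we obtain $\pset$ with $|\pset| = \Omega(|\pset'|)$ that is pairwise node-disjoint, and this is found efficiently by a greedy algorithm. The value $|\pset| = \Omega(|\pset'|)$ is exactly the claimed bound, and it gives the corollary: an $\alpha$-approximation for \EDPwall yields $\pset'$ of value $\opt'/\alpha \ge \opt/\alpha$, from which we extract a node-disjoint solution of value $\Omega(\opt/\alpha)$, i.e.\ an $O(\alpha)$-approximation for \NDPwall.

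The main obstacle I anticipate is being precise about the case where a single vertex legitimately serves as a terminal for several demand pairs (which is genuinely the case here, since the reduction unifies sources within a bundle into one vertex $s_B$ and destinations within a bundle into one vertex $t_{B'}$). In that situation several paths of the solution correctly emanate from the same vertex, and we must not discard them as ``conflicting'' with each other if they correspond to distinct demand pairs — actually in \EDP/\NDP the paths for distinct demand pairs that share a terminal \emph{are} allowed to share that terminal in \EDP but not in \NDP, so they do conflict for the \NDP instance. The bounded-degree argument still saves us: only $\le 3$ paths can touch any given terminal, so the conflict-graph degree is bounded by a constant regardless. I would just need to state the degree bound of the wall (maximum degree $3$) explicitly, confirm it from the definition in Section~\ref{sec: prelims}, and then the rest is the routine greedy independent-set argument.
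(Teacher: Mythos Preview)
Your overall approach matches the paper's: use the degree-$3$ structure of walls to argue that two edge-disjoint paths can only meet at an endpoint of one of them, then extract a large node-disjoint subfamily via a conflict graph and a greedy independent set. The inequality $\opt' \ge \opt$ and the derivation of the corollary from the claim are fine.

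There is, however, a genuine gap in your degree bound on the conflict graph. You argue that each path $P$ has two endpoints, each shared with at most two other paths, and conclude that the (undirected) conflict graph has maximum degree at most $4$. This only counts conflicts occurring at $P$'s own endpoints. You have not bounded the conflicts where an \emph{internal} vertex $x$ of $P$ is an endpoint of some other path $P'$: at such an $x$, $P$ uses two of the three incident edges and $P'$ uses the remaining one, so $P$ and $P'$ do share $x$ and conflict. A long path $P$ can pass through arbitrarily many such terminals, so the undirected conflict degree of $P$ is not bounded by any constant.

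The paper fixes exactly this by orienting the conflict graph: put a directed edge from $P$ to $P'$ whenever an endpoint of $P$ lies on $P'$. Then the \emph{out}-degree is at most $4$ (this is precisely your count), so the underlying undirected graph has at most $4|\pset'|$ edges, hence every subgraph has a vertex of total degree at most $8$, and the greedy deletion argument yields an independent set of size $\Omega(|\pset'|)$. Your proposal becomes correct once you replace the false ``maximum degree $\le 4$'' claim with this out-degree/degeneracy argument.
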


It now remains to prove Claim~\ref{NDPwall to EDP wall}.

\begin{proof}
The assertion that $\opt'\geq \opt$ is immediate, as any set $\pset$ of node-disjoint paths in the wall $G$ is also a set of edge-disjoint paths. 

Assume now that we are given a set $\pset'$ of edge-disjoint paths in $G$. We show an efficient algorithm to compute a subset $\pset\subseteq \pset'$ of $\Omega(|\pset'|)$ paths that are node-disjoint. Since the maximum vertex degree in $G$ is 3, the only way for two paths $P,P'\in\pset'$ to share a vertex $x$ is when $x$ is an endpoint of at least one of these two paths. If $x$ is an endpoint of $P$, and $x\in V(P')$, then we say that $P$ has a conflict with $P'$.

We construct a directed graph $H$, whose vertex set is $\set{v_P\mid P\in\pset'}$, and there is an edge $(v_P,v_{P'})$ iff $P$ has a conflict with $P'$. It is immediate to verify that the maximum out-degree of any vertex in $H$ is at most $4$, as each of the two endpoints of a path $P$ may be shared by at most two additional paths. Therefore, every sub-graph $H'\subseteq H$ of $H$ contains a vertex of total degree at most $8$. We construct a set $U$ of vertices, such that no two vertices of $U$ are connected by an edge, using a standard greedy algorithm: while $H\neq \emptyset$, select a vertex $v\in H$ with total degree at most $8$ and add it to $U$; remove $v$ and all its neighbors from $H$. It is easy to verify that at the end of the algorithm, $|U|=\Omega(|V(H)|)=\Omega(|\pset'|)$, and no pair of vertices in $U$ is connected by an edge. Let $\pset=\set{P\mid v_P\in U}$. Then the paths in $\pset$ are node-disjoint, and $|\pset|=\Omega(|\pset'|)$.
\end{proof}

\appendix

\label{--------------------------------------------Appendix-----------------------------------------------------}

\section*{Appendix}


\section{Proof of Theorem~\ref{thm: yi-partition-of-answers}}\label{appdx: proof of yi-partition-of-answers thm}

Suppose $G$ is a \yi, and let $\chi$ be a valid coloring of $V(G)$. Let $\pi_1,\ldots,\pi_6$ be $6$ different permutations of $\set{r,g,b}$. 
For each $1\leq i\leq 6$, permutation $\pi_i$ defines a valid coloring $\chi_i$ of $G$: for every vertex $v\in V(G)$, if $v$ is assigned a color $c\in \set\cset$ by $\chi$, then 
$\chi_i$ assigns the color $\pi_i(c)$ to $v$. Notice that for each vertex $v$ and for each color $c\in \cset$, there are exactly two indices $i\in \set{1,\ldots, 6}$, such that $\chi_i$ assigns the color $c$ to $v$. Notice also that for each edge $(u,v)$, if $c,c'\in \cset$ is any pair of distinct colors, then there is exactly one index $i\in \set{1,\ldots,6}$, such that $u$ is assigned the color $c$ and $v$ is assigned the color $c'$ by $\chi_i$.

Let $B$ be the set of all vectors of length $\ell$, whose entries belong to $\set{1,\ldots,6}$, so that $|B|=6^{\ell}$. For each such vector $b\in B$, we define a perfect global assignment $f_b$ of answers to the queries, as follows. Let $Q\in \qset^E$ be a query to the edge-player, and assume that $Q=(e_1,\ldots,e_{\ell})$. Fix some index $1\leq j\leq \ell$, and assume that $e_j=(v_j,u_j)$. Assume that $b_j=z$, for some $1\leq z\leq 6$. We assign to $v_j$ the color $\chi_z(v_j)$, and we assign to $u_j$ the color $\chi_z(u_j)$. Since $\chi_z$ is a valid coloring of $V(G)$, the two colors are distinct. This defines an answer $A\in \aset^E$ to the query $Q$, that determines $f_b(Q)$.

Consider now some query $Q'\in \qset^V$ to the vertex-player, and assume that $Q'=(v_1,\ldots,v_{\ell})$. Fix some index $1\leq j\leq \ell$, and assume that $b_j=z$, for some $1\leq z\leq 6$. We assign to $v_j$ the color $\chi_z(v_j)$. This defines an answer $A'\in \aset^V$ to the query $Q'$, that determines $f_b(Q')$. Notice that for each $1\leq j\leq \ell$, the answers that we choose for the $j$th coordinate of each query are consistent with the valid coloring $\chi_{b_j}$ of $G$. Therefore, it is immediate to verify that for each $b\in B$, $f_b$ is a perfect global assignment.

We now fix some query $Q\in \qset^E$ of the edge-prover, and some answer $A\in \aset^E$ to it. Assume that $Q=(e_1,\ldots,e_{\ell})$, where for $1\leq j\leq \ell$, $e_j=(v_j,u_j)$. Let $c_j,c'_j$ are the assignments to $v_j$ and $u_j$ given by the $j$th coordinate of $A$, so that $c_j\neq c'_j$. Recall that there is exactly one index $z_j\in \set{1,\ldots,6}$, such that $\chi_{z_j}$ assigns the color $c_j$ to $v_j$ and the color $c'_j$ to $u_j$. Let $b^*\in B$ be the vector, where for $1\leq j\leq \ell$, $b^*_j=z_j$. Then $f_{b^*}(Q)=A$, and for all $b\neq b^*$, $f_b(Q)\neq A$.

Finally, fix some query $Q'\in \qset^V$ of the vertex-prover, and some answer $A'\in \aset^V$ to it. Let $Q'=(v_1,\ldots,v_{\ell})$. Assume that for each $1\leq j\leq \ell$, the $j$th coordinate of $A'$ contains the color $c_j$. Recall that there are exactly two indices $z\in \set{1,\ldots,6}$, such that $\chi_z$ assigns the color $c_j$ to $v_j$. Denote this set of two indices by $Z_j\subseteq\set{1,\ldots,6}$. Consider now some vector $b\in B$. If, for all $1\leq j\leq \ell$, $b_j\in Z_j$, then $f_b(Q')=A'$; otherwise, $f_b(Q')\neq A'$. Therefore, the total number of vectors $b\in B$, for which $f_b(Q')=A'$ is exactly $2^{\ell}$.

\label{-------------------------------------------------sec: auxiliary lemmas------------------------}
\section{Auxiliary Lemmas}\label{sec: auxiliarly lemmas}

The goal of this section is to prove Lemma \ref{lem: random ordering} and Lemma \ref{lem: random ordering2}. We will use the following simple observation.

\begin{observation} \label{obs: decreasing ratios}
For any two positive integers $a$ and $b$, $\frac{b-1}{a+b-1} < \frac{b}{a+b}$.
\end{observation}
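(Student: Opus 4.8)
The plan is to prove this by a one-line cross-multiplication, since both sides have strictly positive denominators: $a+b-1 \geq 1$ and $a+b \geq 2$ whenever $a,b$ are positive integers. So the claimed inequality $\frac{b-1}{a+b-1} < \frac{b}{a+b}$ is equivalent to $(b-1)(a+b) < b(a+b-1)$.

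First I would expand both products: $(b-1)(a+b) = ab + b^2 - a - b$ and $b(a+b-1) = ab + b^2 - b$. Subtracting the former from the latter gives exactly $a$, which is strictly positive. Hence $(b-1)(a+b) < b(a+b-1)$, and dividing back through by the positive quantity $(a+b-1)(a+b)$ yields the desired strict inequality.

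Alternatively, and perhaps more conceptually, I would observe that the function $t \mapsto \frac{t}{a+t} = 1 - \frac{a}{a+t}$ is strictly increasing on $t > -a$, since $\frac{a}{a+t}$ is strictly decreasing there and $a>0$; applying this with $t = b$ and $t = b-1$ (both exceeding $-a$) immediately gives $\frac{b-1}{a+(b-1)} < \frac{b}{a+b}$. There is no real obstacle here — the only point worth stating explicitly is that the denominators are positive so that the cross-multiplication (or the division) is valid, which follows from $a,b$ being positive integers.
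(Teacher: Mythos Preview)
Your proof is correct. The paper states this observation without proof, treating it as self-evident; your cross-multiplication argument (and the monotonicity alternative) is exactly the kind of one-line verification the reader is expected to supply.
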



Recall that we are given a set $U$ of $n$ items, such that  $P$ of the items are pink, and the remaining $Y=n-P$ items are yellow. We consider a random permutation $\pi$ of these items. Given a set $S\subseteq\set{1,\ldots,n}$ of $\delta$ indices, we let $\event(S)$ be the event that for all $i\in S$, the item of $\pi$ located at the $i$th position is yellow.

\begin{claim}\label{claim: event delta}
$\prob{\event(S)}\leq \left(\frac{Y}{n}\right )^{|S|}.$
\end{claim}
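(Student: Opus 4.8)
The plan is to prove the bound $\prob{\event(S)}\le (Y/n)^{|S|}$ by exposing the items placed at the positions in $S$ one at a time and bounding the conditional probability that each is yellow. Write $|S|=\delta$ and let $i_1,i_2,\ldots,i_\delta$ be the elements of $S$ in some fixed order. Using the chain rule,
\[
\prob{\event(S)}=\prod_{j=1}^{\delta}\prob{\text{position }i_j\text{ is yellow}\mid \text{positions }i_1,\ldots,i_{j-1}\text{ are yellow}}.
\]
Conditioned on the event that $j-1$ specified positions already hold yellow items, the item at position $i_j$ is uniformly distributed among the remaining $n-(j-1)$ items, of which exactly $Y-(j-1)$ are yellow. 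Hence the $j$th conditional probability equals $\frac{Y-(j-1)}{n-(j-1)}$.

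The remaining step is purely arithmetic: I would invoke Observation~\ref{obs: decreasing ratios} with $a=P$ (the number of pink items, so $n-Y=P$) and $b=Y-(j-1)+1 = Y-j+2$, or more directly just note that for any integer $k\ge 0$ with $Y-k\ge 0$ we have $\frac{Y-k}{n-k}\le \frac{Y}{n}$, since $\frac{Y-k}{n-k}\le\frac{Y}{n}$ is equivalent to $n(Y-k)\le Y(n-k)$, i.e. $-nk\le -Yk$, i.e. $Yk\le nk$, which holds as $Y\le n$. (This is exactly the content of Observation~\ref{obs: decreasing ratios} applied repeatedly, so I would cite it rather than re-derive the inequality.) Applying this to each factor with $k=j-1$ for $j=1,\ldots,\delta$ gives
\[
\prob{\event(S)}=\prod_{j=1}^{\delta}\frac{Y-(j-1)}{n-(j-1)}\le \prod_{j=1}^{\delta}\frac{Y}{n}=\left(\frac{Y}{n}\right)^{\delta},
\]
as claimed. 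If $Y<\delta$ the left-hand side is $0$ (some factor vanishes) and the bound holds trivially, so no case distinction is really needed beyond observing the product is empty or zero when appropriate.

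There is essentially no main obstacle here — the statement is a routine consequence of the hypergeometric structure of a uniform random permutation. The only point requiring a little care is making sure the conditioning is set up cleanly: one should condition on the joint event that a specific set of positions are all yellow (not on the number of yellow items seen so far in some scan), so that at each step the conditional distribution of the next exposed item is genuinely uniform over the unexposed items. Once that is phrased correctly, the chain-rule computation and the termwise inequality $\frac{Y-k}{n-k}\le\frac{Y}{n}$ finish the proof immediately. This claim will then feed into the proof of Lemma~\ref{lem: random ordering} via a union bound over the $O(n)$ choices of a window $S$ of consecutive positions of the appropriate length.
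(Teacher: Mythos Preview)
Your proof is correct and essentially matches the paper's: the paper computes $\prob{\event(S)}=\binom{Y}{\delta}/\binom{n}{\delta}$ by a direct counting argument, expands it as the same product $\prod_{j=0}^{\delta-1}\frac{Y-j}{n-j}$ that you obtain via the chain rule, and then bounds each factor by $Y/n$ using Observation~\ref{obs: decreasing ratios}. The only difference is cosmetic---counting subsets versus sequential conditioning---and the final inequality step is identical.
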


\begin{proof}

As every subset of $\delta$ items of $U$ is equally likely to appear at the indices of $S$, we get that: 

\[\prob{\event(S)} = \frac{\binom{Y}{\delta}}{\binom{P+Y}{\delta}}=
 \frac{Y \cdot (Y-1)\cdots (Y-\delta+1)}{(P+Y) \cdot (P+Y-1)\cdots (P+Y-\delta+1)}\leq  \left (\frac{Y}{P+Y}\right )^\delta=\left(\frac{Y}{n}\right )^{|S|}.\]

(the last inequality follows from Observation \ref{obs: decreasing ratios}).
\end{proof}

We now turn to prove Lemma \ref{lem: random ordering}. 

\begin{lemma}
[Restatement of Lemma \ref{lem: random ordering}.] For any $\log n\leq \mu\leq Y$, the probability that there is a sequence of $\ceil{4n\mu/P}$ consecutive items in $\pi$ that are all yellow, is at most $n/e^{\mu}$.
\end{lemma}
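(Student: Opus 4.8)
The plan is to prove Lemma~\ref{lem: random ordering} by a union bound over all possible ``windows'' of consecutive positions, combined with the bound from Claim~\ref{claim: event delta}. Set $\delta=\ceil{4n\mu/P}$. A sequence of $\delta$ consecutive all-yellow items, if it exists, must start at some position $i$ with $1\le i\le n-\delta+1$, so there are at most $n$ candidate windows. For each fixed window $S$ of $\delta$ consecutive positions, $\prob{\event(S)}\le (Y/n)^{\delta}$ by Claim~\ref{claim: event delta}. The union bound then gives that the probability that some all-yellow window of length $\delta$ exists is at most $n\cdot (Y/n)^{\delta}$.

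The remaining task is purely arithmetic: show $n\cdot (Y/n)^{\delta}\le n/e^{\mu}$, i.e.\ $(Y/n)^{\delta}\le e^{-\mu}$. Writing $Y/n = 1 - P/n$ and using $1-x\le e^{-x}$, we get $(Y/n)^{\delta}\le e^{-\delta P/n}$. Since $\delta\ge 4n\mu/P$, we have $\delta P/n \ge 4\mu \ge \mu$, so $e^{-\delta P/n}\le e^{-\mu}$, which is exactly what is needed. (The factor $4$ is more than enough slack here; it is presumably there for uniformity with Lemma~\ref{lem: random ordering2}.) One should also note the hypothesis $\mu\le Y$ guarantees $\delta\le \ceil{4n\mu/P}$ is at most about $4n$, so windows of this length genuinely exist inside $U$ — though strictly speaking, if $\delta>n$ the event is vacuously impossible and the bound holds trivially, so this is a non-issue.

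I do not expect any real obstacle: the lemma is an immediate consequence of Claim~\ref{claim: event delta} plus a union bound plus the elementary inequality $1-x\le e^{-x}$. The only point requiring a moment's care is making sure the number of windows is bounded by $n$ (rather than $n-\delta+1$, which is fine, or something larger) and that the exponent manipulation uses $\delta\ge 4n\mu/P$ in the correct direction. The proof is three or four lines once Claim~\ref{claim: event delta} is in hand.

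\begin{proof}
Let $\delta=\ceil{4n\mu/P}$. If $\delta>n$ there is nothing to prove, so assume $\delta\le n$. For $1\le i\le n-\delta+1$, let $S_i=\set{i,i+1,\ldots,i+\delta-1}$, and let $\event(S_i)$ be the event that all items of $\pi$ at positions in $S_i$ are yellow. If some sequence of $\delta$ consecutive items of $\pi$ is entirely yellow, then $\event(S_i)$ holds for some such $i$. By Claim~\ref{claim: event delta}, $\prob{\event(S_i)}\le (Y/n)^{\delta}$ for each $i$, and there are at most $n$ choices of $i$, so by the union bound the probability that such a sequence exists is at most $n\cdot (Y/n)^{\delta}$. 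Finally, using $Y/n=1-P/n\le e^{-P/n}$ and $\delta\ge 4n\mu/P\ge \mu n/P$, we get $(Y/n)^{\delta}\le e^{-\delta P/n}\le e^{-\mu}$, and therefore the probability in question is at most $n/e^{\mu}$.
\end{proof}
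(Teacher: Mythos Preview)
Your proof is correct and essentially identical to the paper's: both apply Claim~\ref{claim: event delta} to each length-$\delta$ window, take a union bound over at most $n$ windows, and use $1-x\le e^{-x}$ to obtain $(Y/n)^{\delta}\le e^{-\mu}$. Your explicit handling of the degenerate case $\delta>n$ is a minor addition the paper omits, but otherwise the arguments match line for line.
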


\begin{proof}
Let $\delta=\ceil{\frac{4n\mu}{P}}$. Consider a set $S$ of $\delta$ consecutive indices of $\set{1,\ldots,n}$. 
%
%
%
From Claim~\ref{claim: event delta}, the probability that all items located at the indices of $S$ are yellow is at most:

\[\left (\frac{Y}{n}\right )^\delta =  \left (1-\frac{P}{n}\right )^{\ceil{4n\mu/P}}\leq \left (1-\frac{P}{n}\right )^{4n\mu/P}\leq e^{-\mu}.\]
 
Since there are at most $n$ possible choices of a set $S$ of $\delta$ consecutive indices, from the Union Bound, the probability that any such set only contains yellow items is bounded by $n/e^{\mu}$.
\end{proof}

\begin{lemma}
[Restatement of Lemma \ref{lem: random ordering2}.] For any $\log n\leq \mu\leq P$, the probability that there is a set $S$ of $\floor{\frac{n\mu}{P}}$ consecutive items in $\pi$, such that more than $4\mu$ of the items are pink, is at most $n/4^{\mu}$.
\end{lemma}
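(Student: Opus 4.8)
The plan is to mirror the argument for Lemma~\ref{lem: random ordering}: fix a window of $\delta=\floor{n\mu/P}$ consecutive positions, bound the probability that this window contains more than $4\mu$ pink items, and then take a union bound over the at most $n$ windows. So I first fix a set $S$ of $\delta$ consecutive indices. Rather than work directly with the random permutation, I would reduce to an i.i.d.\ computation: the marginal of the multiset of items at the positions of $S$ is exactly a uniformly random $\delta$-subset of $U$, so the number $X$ of pink items in $S$ is a hypergeometric random variable with parameters $(n,P,\delta)$. Its mean is $\mathbb{E}[X]=\delta P/n\le\mu$. The key step is to show $\Pr[X>4\mu]\le 4^{-\mu}/n$ for each fixed $S$; then the union bound over $\le n$ windows finishes the proof.

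For the tail bound on $X$ I would use the fact that the hypergeometric distribution is stochastically dominated (in the upper tail) by the binomial $\mathrm{Bin}(\delta, P/n)$ — this is classical (Hoeffding), and it lets me invoke a Chernoff-type bound. Concretely, with $t=4\mu\ge 4\mathbb{E}[X]$, the multiplicative Chernoff bound gives something like $\Pr[X\ge 4\mu]\le \left(\frac{e^{3}}{4^{4}}\right)^{\mu}\le 4^{-\mu-1}$ (since $e^{3}/256<1/8$), which is comfortably at most $4^{-\mu}/n$ once we also use $\mu\ge\log n$, i.e. $n\le 2^{\mu}\le 4^{\mu}$. Alternatively, and perhaps cleaner to present in keeping with the style of Claim~\ref{claim: event delta}, I would avoid quoting Chernoff and instead estimate directly: $\Pr[X\ge 4\mu]\le \binom{\delta}{4\mu}\left(\frac{P}{n}\right)^{4\mu}$ by a first-moment/union bound over which $4\mu$ of the $\delta$ positions are pink (using that any fixed set of positions is pink with probability $\le (P/n)^{4\mu}$, exactly as in Claim~\ref{claim: event delta} with the roles of pink and yellow swapped), and then $\binom{\delta}{4\mu}\left(\frac{P}{n}\right)^{4\mu}\le \left(\frac{e\delta}{4\mu}\right)^{4\mu}\left(\frac{P}{n}\right)^{4\mu}=\left(\frac{e\delta P}{4\mu n}\right)^{4\mu}\le\left(\frac{e}{4}\right)^{4\mu}\le 2^{-4\mu}\le 4^{-\mu}/n$, where I used $\delta P/n\le\mu$ and finally $\mu\ge\log n$.

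The main obstacle — really a minor one — is making the "any fixed set of $4\mu$ positions is all pink with probability $\le(P/n)^{4\mu}$" step fully rigorous in the permutation model, i.e.\ establishing the exact analogue of Claim~\ref{claim: event delta} for pink items; this follows the same computation $\frac{P(P-1)\cdots(P-4\mu+1)}{n(n-1)\cdots(n-4\mu+1)}\le(P/n)^{4\mu}$ via Observation~\ref{obs: decreasing ratios}, and requires $4\mu\le P$, which is where the hypothesis $\mu\le P$ would need to be strengthened slightly or the statement re-read; if only $\mu\le P$ is assumed one should check the borderline case $P/4\le\mu\le P$ separately (but then $\delta=\floor{n\mu/P}\ge n/4$ and more than $4\mu\ge P$ pink items would require $X>P$, impossible, so the event is empty). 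I would fold this edge case into a one-line remark. After the per-window bound, the union bound over the $\le n$ choices of $\delta$ consecutive indices gives the claimed $n/4^{\mu}$, completing the proof of Lemma~\ref{lem: random ordering2}.
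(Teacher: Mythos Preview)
Your second (``direct'') approach is exactly the paper's proof: fix a window of $x=\floor{n\mu/P}$ consecutive indices, union-bound over the $\binom{x}{4\mu}$ possible pink subsets using the pink analogue of Claim~\ref{claim: event delta}, bound $\binom{x}{4\mu}(P/n)^{4\mu}\le (e/4)^{4\mu}$, and finish with a union bound over the $\le n$ windows. One small slip: the inequality $(e/4)^{4\mu}\le 2^{-4\mu}$ is false since $e/4>1/2$; the correct (and sufficient) bound is $(e/4)^{4\mu}=((e/4)^4)^\mu<(1/4)^\mu$, after which the union bound over $n$ windows gives $n/4^\mu$ directly --- you do not need the stronger per-window target $4^{-\mu}/n$, nor the hypothesis $\mu\ge\log n$ at this step.
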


\begin{proof}
 
 Let $x=\floor{\frac{n\mu}{P}}$, and let $S$ be any set of $x$ consecutive indices of $\set{1,\ldots,n}$. Denote $\delta=\ceil{4\mu}$, and let $S'\subseteq S$ be any subset of $\delta$ indices from $S$. From Claim~\ref{claim: event delta} (by reversing the roles of the pink and the yellow items), the probability that all items located at the indices of $S'$ are pink is at most $\left (\frac{P}{n}\right )^\delta$. 
Since there are  ${x\choose \delta}$ ways to choose the subset $S'$ of $S$, from the Union Bound,  the probability that at least $\delta$ items of $S$ are pink is at most:

\[\begin{split}
\left (\frac{P}{n}\right )^\delta\cdot {x\choose \delta}&\leq \left (\frac{P}{n}\right )^\delta\cdot \left(\frac{ex}{\delta}\right )^\delta\\
&\leq  \left (\frac{P}{n}\right )^\delta\cdot \left(\frac{e\cdot n\mu}{P \delta}\right )^\delta\\
&= \left (\frac{e\cdot \mu}{ \ceil{4\mu}}\right )^{\ceil{4\mu}}\\
&\leq \left(\frac{e}{4}\right)^{4 \mu}<\frac{1}{4^{\mu}}.\end{split}\]

Since there are at most $n$ sets $S$ of $x$ consecutive items, taking the Union Bound over all such sets completes the proof.
\end{proof}

\section{Proof of Claim~\ref{claim: the paths}} \label{appdx: routing in yi}

The goal of this section is to prove Claim \ref{claim: the paths}.
We show an efficient algorithm to construct a set $\pset^1=\set{P_1,\ldots,P_{M'}}$ of spaced-out paths, that originate at the vertices of $X$ on $R$, and traverse the boxes $\hat \block_j$ in a snake-like fashion (see Figure \ref{fig: routing}).
We will ensure that for each $1 \leq i \leq M'$ and $1\leq j\leq N_1$, the intersection of the path $P_i$ with the box $\hat\block_j$ is the $i$th column of $\wset_j$, and that $P_i$ contains the vertex $x_i$.

Fix some index $1\leq j\leq N_1$, and consider the box $\hat \block_j$. Let $I'_j$ and $I''_j$ denote the top and the bottom boundaries of $\hat \block_j$, respectively. 
For each $1 \leq i \leq M'$, let $W_j^i$ denote the $i$th column of $\wset_j$, and let $x'(j,i)$ and $x''(j,i)$ denote the topmost and the bottommost vertices of $W_j^i$, respectively.

For convenience, we also denote  $I''_0 = R$, and, for each $1\leq i\leq M'$, $x''(0,i) = x_i$.
The following claim is central to our proof.

\begin{claim}\label{claim: sub-paths}
There is an efficient algorithm to construct, for each $1\leq j\leq N_1$, a set $\pset_j=\set{P_j^1,\ldots,P_j^{M'}}$ of paths, such that:

\begin{itemize}
\item For each $1\leq j\leq N_1$ and $1\leq i\leq M'$, path $P_j^i$ connects $x''(j-1,i)$ to $x'(j,i)$; 
\item The set $\bigcup_{j=1}^{N_1}\pset_j$ of paths is spaced-out; and
\item All paths in $\bigcup_{j=1}^{N_1}\pset_j$ are contained in $G^t$, and are internally disjoint from $R$.
\end{itemize}
\end{claim}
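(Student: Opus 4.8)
The plan is to build the sets $\pset_j$ greedily, one box at a time, moving from $R=I''_0$ upward toward $R'$, and at each stage routing through the ``corridor'' between consecutive boxes and then through the box $\hat\block_j$ itself. The two ingredients are: (i) a routing lemma for a rectangular region that connects a set of $M'$ ``entry'' columns on one horizontal boundary to a set of $M'$ ``exit'' columns on the opposite boundary, in index order, via spaced-out paths; and (ii) a verification that the available horizontal and vertical room suffices to invoke (i) at each step. For (i), note that inside $\hat\block_j$ we already have the prescribed column set $\wset_j$, which by Claim~\ref{claim: selecting columns} contains no two consecutive columns; so the portion of each $P_j^i$ inside $\hat\block_j$ is simply declared to be the $i$th column $W_j^i$ of $\wset_j$, and these are automatically spaced-out (pairwise distance $\ge 2$) and internally disjoint from $R$ since $R$ lies strictly below $R'$. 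The real work is routing from the bottom boundary $I''_{j-1}$ of the previous box (or from $R$ when $j=1$) up to the top boundary $I'_j$ of $\hat\block_j$: we must take $M'$ points in index order on $I''_{j-1}$ to $M'$ points in index order on $I'_j$, threading them past all the boxes $\hat\block_1,\dots,\hat\block_{j-1}$ that lie in between and to the side.

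The key steps, in order: First I would set up the geometry explicitly --- the boxes $\hat\block_1,\dots,\hat\block_{N_1}$ are three rows tall (centered on $R'$), pairwise separated by at least $2M$ columns, and separated by at least $2M$ columns from the left and right grid boundaries; $R$ lies at distance $\ge \ell/16$ below $R'$, and $\ell>M^2$ while $M'\le M$, so there is an enormous amount of vertical room (at least $\ell/16$ rows) and horizontal slack (at least $2M$ columns of buffer, versus only $M'\le M$ paths to route). Second, for each $j$ I would route in two phases: a \emph{vertical} phase that takes the $M'$ paths straight up from their current positions on $I''_{j-1}$ through empty rows until they are just below the level of row $R'$ (using distinct columns, which is possible because the paths arrive on $I''_{j-1}$ already in distinct, non-consecutive columns), and a \emph{horizontal re-alignment} phase in which, using a band of $\Theta(M)$ consecutive empty rows lying below $R'$ and to the exterior of all boxes, the paths are permuted sideways so that path $i$ ends up in column $W_j^i$; a standard "sorting network in a grid" argument shows $M'$ spaced-out paths can realize any such re-alignment in a grid strip of height $O(M')$ and the required width, because the target columns $W_j^1<\dots<W_j^{M'}$ are in the same left-to-right order as the entry points (Observation~\ref{obs: ordering of sources and destinations} and Claim~\ref{claim: selecting columns}). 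Third, having entered $\hat\block_j$ along the columns of $\wset_j$, the paths traverse $\hat\block_j$ vertically along these columns and exit on $I''_j$, ready for the next iteration; when $j=N_1$ we simply stop inside $\hat\block_{N_1}$. Finally I would observe that because consecutive boxes are separated by $\ge 2M$ columns and we only ever route $M'\le M$ spaced-out paths, the horizontal detours around earlier boxes never collide, and distinct boxes are processed in disjoint horizontal strips, so the union $\bigcup_j\pset_j$ is globally spaced-out.

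For bookkeeping I would state a small sublemma: \emph{In an $a\times b$ grid with $a\ge 4M'$ and $b\ge 4M'$, given any $M'$ distinct non-consecutive columns $c_1<\dots<c_{M'}$ meeting the top boundary and any $M'$ distinct non-consecutive columns $c_1'<\dots<c_{M'}'$ meeting the bottom boundary, there is a spaced-out set of $M'$ paths, internally disjoint from both boundaries except at their endpoints, connecting the $i$th top column to the $i$th bottom column.} This is proved by an explicit "staircase" construction: route path $i$ down column $c_i$ for a while, then shift it horizontally one unit at a time on successive rows toward column $c_i'$, giving each path its own dedicated set of rows for the horizontal shift so that spacing $\ge 2$ is preserved; the order-preservation $c_i\mapsto c_i'$ guarantees the staircases do not cross. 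With this sublemma, Claim~\ref{claim: sub-paths} follows by applying it (with appropriate shifts/rotations of coordinates) in each inter-box corridor, which has height $\ge \ell/16 \gg 4M'$ and width $\ge 2M \ge 4M'$.

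\textbf{Main obstacle.} The delicate point is the global spaced-out condition \emph{across} boxes: a path $P_i$ must travel up past boxes $\hat\block_1,\dots,\hat\block_{j-1}$ on its way to $\hat\block_j$, and all $N_1$ path-families share the same tall region $G^t$. The way to handle this cleanly is to make the vertical ``trunk'' of the routing lie entirely to one side --- say, to the left of all boxes, in the leftmost $2M$-column buffer --- and have each family $\pset_j$ peel off from this trunk at its own dedicated height, so that the $N_1$ families occupy horizontally disjoint strips once they leave the common trunk, while within the trunk the $M'\le M$ paths of a single family occupy distinct non-consecutive columns and the families are stacked at different heights. Verifying that $\ell$ is large enough to accommodate $N_1\le M$ families each needing $O(M')=O(M)$ rows of vertical room --- i.e. that $\ell/16 \ge c\cdot N_1 M' $ for the relevant constant --- is exactly where the choice $\ell = \twiceConstantForSizeOfBlocks\cdot\ceil{M^2\log M}$ is used, and this inequality holds comfortably. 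Once Claim~\ref{claim: sub-paths} is established, Claim~\ref{claim: the paths} is immediate: for each $i$, concatenate $P_1^i, (\text{column }W_1^i), P_2^i, (\text{column }W_2^i),\dots$, delete self-loops, and the resulting $P_i$ starts at $x_i=x''(0,i)$, passes through every $s_i$ (which lies on some $W_j^i$ by Claim~\ref{claim: selecting columns}), is spaced-out, and is internally disjoint from $R$.
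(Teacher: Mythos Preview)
Your high-level plan---route through corridors between the boxes using a basic order-preserving spaced-out routing primitive---is the same as the paper's. But there is a geometric misconception that makes your execution more complicated than it needs to be, and your rectangular sublemma as stated does not match the actual shape of the regions you must route through.

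The key point you are missing is that the boxes $\hat\block_1,\dots,\hat\block_{N_1}$ all sit on the \emph{same} three rows around $R'$, arranged left to right. Hence for $j\ge 2$ the endpoints you must connect, $x''(j-1,i)\in I''_{j-1}$ and $x'(j,i)\in I'_j$, lie within two rows of each other vertically and are separated only by the horizontal gap (of width $\ge 2M$) between boxes $j-1$ and $j$. So $\pset_j$ is a purely \emph{local} routing between two adjacent boxes; it never needs to ``thread past $\hat\block_1,\dots,\hat\block_{j-1}$'', and your far-left trunk construction is solving a problem that does not arise. Moreover, because $I''_{j-1}$ and $I'_j$ are at essentially the same height, the region you must route through is not a tall rectangle with one set of endpoints on the top boundary and one on the bottom: it is a U-shape (down from $I''_{j-1}$, across, up through the inter-box gap, across above box $j$, down onto $I'_j$). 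Your sublemma handles a single rectangle from top boundary to bottom boundary, which is one leg of this, but you never explain how to compose several rectangles while keeping the union spaced-out.

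The paper packages exactly this composition into the notion of a \emph{snake}: a sequence of pairwise internally-disjoint corridors, each sharing a boundary segment with the next. It proves (your sublemma, essentially) that inside any snake of width $w$ one can route $\lfloor w/2\rfloor-1$ spaced-out, order-preserving paths from a boundary edge of the first corridor to a boundary edge of the last. Then for each $j$ it exhibits a snake $\yset_j$ of width $\ge 2M$ joining $I''_{j-1}$ to $I'_j$, with all the $\yset_j$ pairwise disjoint; disjointness is immediate because $\yset_j$ lives entirely in the horizontal span between the left edge of box $j-1$ and the right edge of box $j$, using the $\ge 2M$-wide inter-box gap for its vertical legs. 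Applying the snake-routing corollary inside each $\yset_j$ gives the $\pset_j$, and disjointness of the snakes gives the global spaced-out condition for free---no trunk, no stacking of $N_1$ families, no $O(N_1M')$ vertical-room accounting. Your argument can be salvaged by replacing the single-rectangle sublemma with this snake version and dropping the trunk; what remains is then essentially the paper's proof.
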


Notice that by combining the paths in $\bigcup_{j=1}^{N_1}\pset_j$ with the set $\bigcup_{j=1}^{N_1}\wset_j$ of columns of the boxes, we obtain the desired set $\pset^1$ of paths, completing the proof of Claim~\ref{claim: the paths}. In the remainder of this section we focus on proving Claim~\ref{claim: sub-paths}.

Recall that each box $\hat \block_j$ is separated by at least $2M$ columns from every other such box, and from the left and right boundaries of $\hat G$. It is also separated by at least $4M$ rows from the top boundary of $\hat G$ and from the row $R$.
We exploit this spacing in order to construct the paths, by utilizing a special structure called a \emph{snake}, that we define next.

Given a set $\lset$ of consecutive rows of $\hat G$ and a set $\wset$ of consecutive columns of $\hat G$, we denote by $\Y(\lset, \wset)$ the subgraph of $\hat G$ spanned by the rows in $\lset$ and the columns in $\wset$; we refer to such a graph as a \emph{corridor}.
Let $\Y=\Y(\lset,\wset)$ be any such corridor.
Let $L'$ and $L''$ be the top and the bottom row of $\lset$ respectively, and let $W'$ and $W''$ be the first and the last column of $\wset$ respectively.
The four paths $\Y\cap L',\Y\cap L'',\Y\cap W'$ and $\Y\cap W''$ are called the top, bottom, left and right boundaries of $\Y$ respectively, and their union is called the \emph{boundary} of $\Y$. The width of the corridor $\Y$ is $w(\Y)=\min\set{|\lset|,|\wset|}$.
We say that two corridors $\Y,\Y'$ are \emph{internally disjoint}, iff every vertex $v\in \Y\cap \Y'$ belongs to the boundaries of both corridors.
We say that two internally disjoint corridors $\Y,\Y'$ are \emph{neighbors} iff $\Y\cap \Y'\neq \emptyset$.
We are now ready to define snakes.

A snake $\yset$ of length $z$ is a sequence $(\Y_1,\Y_2,\ldots,\Y_{z})$ of $z$ corridors that are pairwise internally disjoint, such that for all $1\leq z',z'' \leq z$, $\Y_{z'}$ is a neighbor of $\Y_{z''}$ iff $|z'-z''|=1$. The width of the snake is defined to be the minimum of two quantities: (i) $\min_{1\leq z'<z}\set{|\Y_{z'}\cap \Y_{z'+1}|}$; and (ii) $\min_{1\leq z'\leq z}\set{w(\Y_{z'})}$.
Notice that, given a snake $\yset$, there is a unique simple cycle $\sigma(\yset)$ contained in $\bigcup_{z'=1}^z \Y_{z'}$, such that, if $D$ denotes the disc on the plane whose boundary is $\sigma(\yset)$, then every vertex of $\bigcup_{z'=1}^z \Y_{z'}$ lies in $D$, while every other vertex of $\hat G$ lies outside $D$.
We call $\sigma(\yset)$ the \emph{boundary} of $\yset$.
We say that a vertex $u$ belongs to a snake $\yset$, and denote $u \in \yset$, iff $u \in \bigcup_{z'=1}^z \Y_{z'}$.
 We use the following simple claim, whose proof can be found, e.g. in~\cite{NDPhardness2017}.

\begin{claim}\label{claim: routing in a snake}
Let $\yset=(\Y_1,\ldots,\Y_{z})$ be a snake of width $w$, and let $A$ and $A'$ be two sets of vertices with $|A|=|A'|\leq w-2$, such that the vertices of $A$ lie on a single boundary edge of $\Y_1$, and the vertices of $A'$ lie on a single boundary edge of $\Y_{z}$. There is an efficient algorithm, that, given the snake $\yset$, and the sets $A$ and $A'$ of vertices as above, computes a set $\qset$ of node-disjoint paths contained in $\yset$, that  connect every vertex of $A$ to a distinct vertex of $A'$. 
\end{claim}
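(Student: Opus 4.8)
The plan is to route the paths corridor by corridor, reducing the whole problem to a single routing task inside one grid. First I would fix, for every consecutive pair $\Y_{z'},\Y_{z'+1}$ of corridors in the snake, an \emph{interface set} $A_{z'}$ of $|A|$ vertices lying on the shared boundary segment $\Y_{z'}\cap\Y_{z'+1}$; since the width of the snake is $w$, this shared segment contains at least $w\ge |A|+2$ vertices, so we may choose $A_{z'}$ to consist of $|A|$ vertices that are pairwise non-adjacent and avoid the two endpoints of the segment, listed in the natural order along the segment. Set $A_0=A$ (on a boundary edge of $\Y_1$) and $A_z=A'$ (on a boundary edge of $\Y_z$). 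It then suffices to show that for each $1\le z'\le z$ we can route $|A|$ node-disjoint paths inside $\Y_{z'}$ connecting the vertices of $A_{z'-1}$ to the vertices of $A_{z'}$, each path internally disjoint from both boundary segments except at its endpoints; concatenating these per-corridor path systems at the interface vertices yields the desired set $\qset$, and node-disjointness across corridors follows because distinct corridors meet only along their shared boundary, and there each path uses exactly one designated interface vertex.

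Second I would prove the single-corridor routing statement. A corridor $\Y$ is a $(p\times q)$-grid, $A_{z'-1}$ lies on one of its four boundary edges, $A_{z'}$ on another, and $|A_{z'-1}|=|A_{z'}|=\kappa\le \min\{p,q\}-2$. There are two cases. If the two edges are opposite (say the left and right columns of the corridor), index the vertices of each set $t_1,\dots,t_{\kappa}$ and $t_1',\dots,t_{\kappa}'$ from top to bottom; route each $t_i$ horizontally along its row until it reaches the column just inside the right boundary, then vertically within that column to the row of $t_i'$, and finally one step into $t_i'$. The vertical segments live in $\kappa$ distinct columns near the right edge and never collide since the horizontal segments are in distinct rows; a symmetric construction handles the top/bottom case. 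If the two edges are adjacent (a ``bend'' of the snake, say the bottom edge and the right edge), I would route the $i$-th vertex of the bottom set up along its column and the $i$-th vertex of the right set left along its row, and connect the two with an ``L''-shaped turn, choosing the order of the two indexings so that the turning points are nested and hence the paths are disjoint; the slack $\min\{p,q\}-\kappa\ge 2$ guarantees there is room for the turns. In all cases the construction is explicit and runs in polynomial time.

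Finally I would assemble the global routing: walking along the snake from $\Y_1$ to $\Y_z$, the output interface set of $\Y_{z'}$ is the input interface set of $\Y_{z'+1}$, so the per-corridor paths glue into $|A|$ connected paths from $A$ to $A'$. Since within each corridor the routing is node-disjoint and the corridors overlap only on the interface segments, where the paths pass through distinct interface vertices, the final collection is node-disjoint and contained in $\bigcup_{z'=1}^{z}\Y_{z'}=\yset$; moreover the construction respects the internal-disjointness requirement at $\Y_1$ and $\Y_z$ because the vertices of $A$ and $A'$ were assumed to lie on single boundary edges there.

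I expect the main obstacle to be the bend case: at a corner corridor we must connect a set on one boundary edge to a set on a perpendicular boundary edge, keeping the paths disjoint and entirely inside the corridor, and we must make the choice of which interface vertex of $\Y_{z'}$ matches which interface vertex of $\Y_{z'+1}$ consistent along the whole snake, so that no two global paths are forced to cross. This is handled by always using the order-preserving (nested) matching within each corridor, together with the freedom noted above that we only need a perfect matching between $A$ and $A'$ rather than a prescribed one; the width bound $w\ge\kappa+2$ is exactly what supplies the two extra rows/columns needed to bend $\kappa$ paths without collisions.
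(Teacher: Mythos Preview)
The paper does not prove this claim itself but defers to \cite{NDPhardness2017}, so there is no in-paper argument to compare against directly. Your corridor-by-corridor strategy is the standard one and is sound: fixing interface sets $A_{z'}$ on each shared boundary segment and gluing per-corridor routings at those interface vertices is exactly how such snake lemmas are proved, and your adjacent-edges (``bend'') construction with nested $L$-turns under the order-preserving matching is correct.

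There is, however, a genuine gap in your opposite-edges construction. The single-bend routing ``horizontal along row $r_i$, then vertical in one of $\kappa$ distinct columns near the right, then one step into $t_i'$'' does not in general produce node-disjoint paths, and the justification ``never collide since the horizontal segments are in distinct rows'' overlooks horizontal--vertical crossings. Concretely, in a $10\times 10$ corridor take $t_1=(2,1)$, $t_2=(3,1)$ on the left and $t_1'=(7,10)$, $t_2'=(8,10)$ on the right. If the vertical segment of path $1$ lies in column $8$ and that of path $2$ in column $9$, then path $1$'s vertical segment (column $8$, rows $2$--$7$) meets path $2$'s first horizontal segment (row $3$, columns $1$--$9$) at the vertex $(3,8)$. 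Reversing the column assignment rescues this instance but fails on the mirror instance $t_1=(7,1)$, $t_2=(8,1)$, $t_1'=(2,10)$, $t_2'=(3,10)$, so no fixed monotone choice of vertical columns works uniformly. The repair is easy but must be stated: either invoke Menger's theorem (any vertex cut separating $\kappa$ vertices on one boundary edge from $\kappa$ on another in a $p\times q$ grid with $\min\{p,q\}\ge\kappa$ has size at least $\kappa$, and a max-flow computation yields the paths efficiently), or use a two-phase explicit construction that first staircases the sources onto $\kappa$ dedicated rows within the leftmost $O(\kappa)$ columns, routes straight across, and then un-staircases near the right boundary.
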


 Let $L, L'$ be any pair of rows of $G$.
 Let $\qset$ be a set of node-disjoint paths connecting some set of vertices $B \subseteq L$ to $B' \subseteq L'$.
 We say that the paths in $\qset$ are \emph{order-preserving} iff the left-to-right ordering of their endpoints on $L$ is same as that of their endpoints on $L'$.

 \begin{corollary} \label{cor: spaced out snakes}
Let $\yset=(\Y_1,\ldots,\Y_{z})$ be a snake of width $w$, and let $B$ and $B'$ be two sets of $r \leq \floor{w/2}-1$ vertices each, such that the vertices of $B$ lie on the bottom boundary edge of $\Y_1$, the vertices of $B'$ lie on the top boundary edge of $\Y_{z}$ and for every pair $v,v'\in B\cup B'$ of vertices, $d_{\hat G}(v,v')\geq 2$.
There is an efficient algorithm, that, given the snake $\yset$, and the sets $B$ and $B'$ of vertices as above, computes a set $\hat \qset$ of spaced-out order-preserving paths contained in $\yset$.
 \end{corollary}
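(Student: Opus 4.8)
The plan is to reduce the spaced-out order-preserving routing to the node-disjoint routing guaranteed by Claim~\ref{claim: routing in a snake}, by ``thinning out'' both the snake and the terminal sets so that every second row and every second column of each corridor is reserved as buffer space. First I would observe that since $|B|=|B'|=r\le \lfloor w/2\rfloor-1$ and the vertices in $B\cup B'$ are pairwise at distance at least $2$ in $\hat G$, the vertices of $B$ occupy at most $r$ distinct columns on the bottom boundary of $\Y_1$, no two of them consecutive, and similarly for $B'$ on the top boundary of $\Y_z$. From the snake $\yset=(\Y_1,\ldots,\Y_z)$ I would build a ``sub-snake'' $\yset'=(\Y_1',\ldots,\Y_z')$ by deleting, inside each corridor $\Y_{z'}$, every other row and every other column, retaining only those rows/columns that either contain a vertex of $B\cup B'$ or are needed to keep the corridors overlapping; concretely, keep the odd-indexed rows and odd-indexed columns of each $\Y_{z'}$ (indexed within $\Y_{z'}$), adjusting parity locally so that the retained portions of neighboring corridors still intersect. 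This yields a snake $\yset'$ of width $w'\ge \lfloor w/2\rfloor - O(1)$, in which any two retained rows of a corridor are separated by a deleted row of $\hat G$, and likewise for columns. Crucially, $B$ and $B'$ still lie on the appropriate boundary edges of $\yset'$ (after possibly shifting each terminal to the retained row/column adjacent to it, which costs at most one extra row/column of width and is absorbed into the $O(1)$ slack), and now $|B|=|B'|=r\le w'-2$.

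Next I would apply Claim~\ref{claim: routing in a snake} to the snake $\yset'$ with the vertex sets $B$ and $B'$, obtaining a set $\qset$ of node-disjoint paths in $\yset'$ connecting every vertex of $B$ to a distinct vertex of $B'$. Since $\yset'$ lives in the ``sparsified'' portion of $\hat G$, any two vertices lying on distinct paths of $\qset$ are separated by a row or column of $\hat G$ that was deleted from $\yset'$, and hence $d_{\hat G}(V(P),V(P'))\ge 2$ for every pair $P,P'\in\qset$. To also guarantee that the paths are internally disjoint from the boundaries of $\hat G$: the snakes we will actually use in Claim~\ref{claim: sub-paths} are constructed strictly in the interior of $G^t$ (the boxes $\hat\block_j$ and the connecting corridors are separated from the grid boundary by at least $2M$ rows/columns), so a single row/column of interior slack added to $\yset'$ ensures that the routed paths avoid $\partial\hat G$; this is again absorbed into the $O(1)$ loss in width. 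Thus $\hat\qset:=\qset$ is a spaced-out set of paths contained in $\yset$.

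Finally, for the order-preserving property: the routing produced by Claim~\ref{claim: routing in a snake} inside a snake is obtained corridor-by-corridor, and within a single corridor a node-disjoint routing between two parallel boundary edges can always be taken to be planar, hence non-crossing, hence order-preserving; concatenating order-preserving routings across consecutive corridors of a snake preserves the property since consecutive corridors share a boundary edge whose induced vertex order is consistent. (If the version of Claim~\ref{claim: routing in a snake} we cite does not literally state this, I would note that its standard proof — routing through each corridor in ``straight'' fashion — yields non-crossing paths, so the left-to-right order of $B$ on the bottom of $\Y_1'$ equals that of the corresponding endpoints of $B'$ on the top of $\Y_z'$.) This gives the claimed efficient algorithm. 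The main obstacle is the bookkeeping in the first step: one must choose the parities of the retained rows and columns in each corridor so that (i) each retained boundary edge still carries all the relevant terminals of $B$ or $B'$, (ii) neighboring corridors of the thinned snake still overlap in at least $w'$ vertices, and (iii) the resulting object is genuinely a snake in the sense of the definition; verifying (ii) is where the constant loss in width comes from and where a careful but routine case analysis is needed.
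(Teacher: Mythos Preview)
Your approach has a genuine gap in the first step. When you ``thin out'' each corridor $\Y_{z'}$ by retaining only the odd-indexed rows \emph{and} the odd-indexed columns, the retained vertices are not connected to one another in $\hat G$: two vertices $v(2i-1,2j-1)$ and $v(2i-1,2j+1)$ are at distance $2$ in $\hat G$, not $1$. So $\yset'$ is not a snake (nor even a connected subgraph) of $\hat G$, and Claim~\ref{claim: routing in a snake} cannot be applied to it. If instead you delete only every other row \emph{or} every other column in each corridor, you still face the same issue at the interface between two neighboring corridors, where one corridor is ``horizontal'' and the other ``vertical''; and in any case the paths you get would then share rows (or columns), so the spacing argument ``separated by a deleted row or column'' breaks down. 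Making this thinning idea work would require routing in an abstract half-scale snake and then carefully embedding each abstract edge as a length-$2$ path in $\hat G$ --- which is doable but is real work you have not done, and the embedding itself needs its own argument for why distinct paths stay at distance $\ge 2$.

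The paper's proof avoids all of this with a much simpler trick: instead of thinning the snake, it \emph{pads the terminal sets}. Between every two consecutive vertices of $B$ it inserts one extra ``dummy'' vertex (possible since consecutive vertices of $B$ are at distance $\ge 2$), obtaining $A$ with $|A|=2r-1\le w-2$; similarly for $B'$ to get $A'$. It then applies Claim~\ref{claim: routing in a snake} directly to $\yset$ with $A,A'$, getting node-disjoint paths $Q_1,\ldots,Q_{2r-1}$. A short topological argument (any path $Q_i$ separates the snake into two pieces, so the routing must be order-preserving) shows the paths are non-crossing, and then the even-indexed path $Q_{2i}$ physically separates $Q_{2i-1}$ from $Q_{2i+1}$, forcing $d(Q_{2i-1},Q_{2i+1})\ge 2$. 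Returning only the odd-indexed paths gives the spaced-out order-preserving family. Note also that the paper's order-preserving argument does not rely on the internal structure of the proof of Claim~\ref{claim: routing in a snake}; it works for \emph{any} node-disjoint routing the claim might return, whereas your argument assumes a particular corridor-by-corridor construction.
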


\begin{proof}
Let $B=\set{b_1, b_2, \ldots, b_r}$ and $B'=\set{b'_1, b'_2, \ldots, b'_r}$. Assume that the vertices in both sets are indexed according to their left-to-right ordering on their corresponding rows of the grid.
Since set $B$ does not contain a pair of neighboring vertices, we can augment it to a larger set $A$, by adding a vertex between every consecutive pair of vertices of $B$. In other words, we obtain $A=\set{a_1, \ldots, a_{2r-1}}$, such that for  all $1 \leq i \leq r$, $a_{2i-1} = b_i$, and the vertices of $A$ are indexed according to their left-to-right ordering on the bottom boundary of $\Y_1$. Similarly, we can augment the set $B'$ to a set $A' = \set{a'_1, \ldots, a'_{2r-1}}$ of vertices, such that for all $1 \leq i \leq r$, $a'_{2i-1} = b'_i$, and the vertices of $A'$ are indexed according to their left-to-right ordering on the top boundary of $\Y_z$.

We apply Claim \ref{claim: routing in a snake} to the sets $A,A'$ of vertices, obtaining a set $\qset$ of node-disjoint paths, that are contained in $\yset$, and connect every vertex of $A$ to a distinct vertex of $A'$.  For all $1\leq i\leq r$, let $Q_i$ be the path originating from $a_i$.  We claim that $\qset$ is an order-preserving set of paths.
Indeed, assume for contradiction that some path $Q_i \in \qset$ connects $a_i$ to $a'_{i'}$, for $i \neq i'$. Notice that the path $Q_i$ partitions the snake $\yset$ into two sub-graphs: one containing $(i-1)$ vertices of $A$ and $(i'-1)$ vertices of $A'$; and the other containing the remaining vertices of $A$ and $A'$ (excluding the endpoints of $Q_i$). Since $i \neq i'$, there must be a path $Q_{i''} \in \qset$ intersecting the path $Q_i$, a contradiction to the fact that $\qset$ is a  set of node-disjoint paths.

Similarly, it is easy to see that for all $1\leq i<r$, $d(Q_{2i-1},Q_{2i+1})\geq 2$. This is since the removal of the path $Q_{2i}$ partitions the snake $\yset$ into two disjoint sub-graphs, with path $Q_{2i-1}$ contained in one and path $Q_{2i+1}$ contained in the other.

Our final set of path is $\hat \qset = \set{Q_{2i-1} : 1 \leq i \leq r}$. From the above discussion, it is a spaced-out set of paths contained in $\yset$, and for each $1\leq i\leq r$, path $Q_{2i-1} \in \hat \qset$ connects $b_i$ to $b'_{i}$. 
\end{proof}

In order to complete the proof, we need the following easy observation.

\begin{observation}\label{obs: snakes}
There is an efficient algorithm that constructs, for each $1\leq j\leq N_1$, a snake $\yset_j$ of width at least $2M$ in $G^t$, such that all resulting snakes are mutually disjoint, and for each $1\leq j\leq N_1$:

\begin{itemize} 
\item the bottom boundary of the first corridor of $\yset_j$ contains $I''_{j-1}$; 
\item the top boundary of the last corridor of $\yset_j$ contains $I'_j$; and
\item all snakes are disjoint from $R$, except for $\yset_1$, that contains $I_0''\subseteq R$ as part of is boundary, and does not contain any other vertices of $R$.
\end{itemize}
\end{observation}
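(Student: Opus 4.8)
The plan is to construct the snakes explicitly by exploiting the large spacing built into the grid $\hat G$ around the boxes. Recall that each box $\hat\block_j$ occupies three consecutive rows centered on $R'$, and is separated horizontally by at least $2M$ columns from every other box and from the side boundaries of $\hat G$, and that $R'$ is separated by at least $\ell/4 \geq 2M$ rows from $R$ and from the top boundary. The index vertices $x_i \in X$ lie on $R$, equally spaced (every other vertex of $R$). So for each $1\le j\le N_1$ I would build $\yset_j$ out of a small constant number of ``straight'' corridors glued at right angles, in the spirit of an L-shape or staircase routed through the free region of $\hat G$.

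Concretely, first I would describe $\yset_1$: it needs to start from $I_0''=R$ (a contiguous set of columns on $R$ containing all of $X$ — note $X$ spans a contiguous block of columns since the $x_i$ are consecutive even-indexed vertices of $R$), and end on the top boundary $I_1'$ of the first box $\hat\block_1$. I would take a vertical corridor of width $\ge 2M$ rising from $R$ up to just below row $R'$, then (if the column-range of $X$ does not already align with that of $\hat\block_1$) a horizontal corridor of width $\ge 2M$ running leftward/rightward through the band of rows lying just above $R'$ — a band that exists and is free because $R'$ is far from the top boundary and the boxes are only three rows tall — and finally a short vertical corridor dropping onto $I_1'$. The intersection of consecutive corridors is an axis-aligned square of side $\ge 2M$, so the snake has width $\ge 2M$ as required. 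For $\yset_j$ with $j\ge 2$, the construction is analogous but entirely contained in the free band of rows above $R'$ and to one side: it starts from $I''_{j-1}$ (the bottom boundary of $\hat\block_{j-1}$), routes up out of the box, over horizontally to above $\hat\block_j$, and down onto $I'_j$, using the $\ge 2M$-column gaps between consecutive boxes as the vertical channels. Since every box is separated from every other by $\ge 2M$ columns and the boxes themselves are far from $R$, I can choose disjoint column-intervals and row-intervals for the different snakes so that the $\yset_j$'s are mutually disjoint and (for $j\ge 2$) disjoint from $R$, while $\yset_1$ touches $R$ only along $I_0''$.

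With the snakes in hand, Claim~\ref{claim: sub-paths} follows by applying Corollary~\ref{cor: spaced out snakes} to each snake $\yset_j$ in turn, with $B = I''_{j-1}$-side vertex set $\set{x''(j-1,i)}_{i=1}^{M'}$ and $B' = \set{x'(j,i)}_{i=1}^{M'}$: these are $M'$ vertices on a single boundary edge of the first corridor and $M'$ vertices on a single boundary edge of the last corridor, pairwise at distance $\ge 2$ (the $x''(j-1,i)$ are the bottom endpoints of the columns of $\wset_{j-1}$, which form a non-consecutive set by Claim~\ref{claim: selecting columns}, and the $x''(0,i)=x_i$ are every-other vertex of $R$), and $M' \le M \le \floor{2M/2}-1$ easily holds for $M$ large enough (we assumed $M>2^{50}$). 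The corollary gives an order-preserving spaced-out set $\pset_j$ of paths inside $\yset_j$ connecting $x''(j-1,i)$ to $x'(j,i)$ for all $i$; since the snakes are mutually disjoint, $\bigcup_j \pset_j$ is spaced-out, and since all snakes except $\yset_1$ avoid $R$ and $\yset_1$ meets $R$ only along $I_0''$ which consists exactly of the path endpoints, the whole family is internally disjoint from $R$ and contained in $G^t$. Combining $\bigcup_j\pset_j$ with the columns $\bigcup_j\wset_j$ yields $\pset^1$, proving Claim~\ref{claim: the paths}.

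The main obstacle I anticipate is purely the bookkeeping of the horizontal channels: I must verify that there is enough vertical room \emph{above} $R'$ — i.e. enough consecutive free rows of width $\ge 2M$ — to route all $N_1$ horizontal corridors without collisions, given that $N_1$ can be as large as $M$. This is where the generous dimension $\ell = \Theta(M^2\log M)$ of the grid is used: the band of rows strictly between the top of $\hat G$ and $R'$ has height $\ge \ell/4 = \Omega(M^2\log M) \gg 2M\cdot N_1$, so I can stack the $N_1$ horizontal corridors in disjoint row-intervals, each of height $\ge 2M$. The only genuinely delicate point is making the corridors of a single snake glue together correctly at the corners (each junction a square of side $\ge 2M$) and making the column-intervals used by distinct snakes disjoint — both are arranged by simply allotting, to the $j$-th snake, its own reserved horizontal strip and its own reserved vertical channels in the inter-box gaps, which the spacing guarantees suffice. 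A picture (Figure~\ref{fig: routing}) makes all of this transparent, so I would keep the written argument short and defer the routine verification of disjointness to the reader, exactly as the excerpt does.
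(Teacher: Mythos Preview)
Your proposal is correct and matches the paper's own approach, which is literally the single sentence ``The construction of the snakes is immediate and exploits the ample space between the boxes $\hat\block_j$; (see Figure~\ref{fig: routing} for an illustration).'' You have simply fleshed out what the paper leaves to the figure: build each $\yset_j$ from a constant number of axis-aligned corridors glued at right angles, using the $\Omega(M^2\log M)$ rows of vertical slack and the $\geq 2M$-column inter-box gaps to make the corridors wide enough and mutually disjoint.

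One small caution on your $j\geq 2$ sketch: the phrase ``routes up out of the box'' from $I''_{j-1}$ (the \emph{bottom} of $\hat\block_{j-1}$) would send the first corridor through the interior of $\hat\block_{j-1}$, risking overlap with $\yset_{j-1}$ (whose last corridor also abuts that box) and with the columns $\wset_{j-1}$ you later concatenate. The clean construction---and the one the figure depicts---drops \emph{down} from $I''_{j-1}$ into the slack between $R'$ and $R$, slides right, rises through the inter-box gap past $R'$, slides right again above $R'$, and drops onto $I'_j$. The paper's Observation is stated with ``bottom boundary of the first corridor'' and ``top boundary of the last corridor'' only so that Corollary~\ref{cor: spaced out snakes} applies verbatim; the underlying Claim~\ref{claim: routing in a snake} works for any single boundary edge, so this is purely cosmetic. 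Your accounting of the vertical room ($\ell/4\gg 2M\cdot N_1$) and your downstream application of Corollary~\ref{cor: spaced out snakes} are exactly right.
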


The construction of the snakes is immediate and exploits the ample space between the boxes $\hat \block_j$;  (see Figure \ref{fig: routing} for an illustration).
From Corollary \ref{cor: spaced out snakes}, for each $1 \leq j \leq N_1$, we obtain a set $\pset_j$ of spaced-out paths contained in $\yset_j$, such that for each $1 \leq i \leq M'$, there is a path $P^i_j \in \pset_j$ connecting $x''(j-1,i)$ to $x'(j,i)$.
For each $1 \leq i \leq M'$, let $P_i$ be the path obtained by concatenating the paths $\set{P^i_1, \wset^i_1, P^i_2, \ldots, P^i_{N_1}, \wset^i_{N_1}}$.
The final set of paths is $\pset^1 = \set{P_1, \ldots, P_{M'}}$.

\bibliography{NDP-in-grids-hardness}
 \bibliographystyle{alpha}
\end{document}